\definecolor{darkred}{RGB}{250,0,0}
\definecolor{darkgreen}{RGB}{0,150,0}
\definecolor{myblue}{RGB}{0,0,250}
\definecolor{darkblue}{RGB}{0,0,200}
\newtheorem*{assumption*}{\assumptionnumber}
\providecommand{\assumptionnumber}{}
\newcommand{\ehsan}[1]{  \ifthenelse{\boolean{showcomments}}
{ \textcolor{blue}{(Ehsan says:  #1)}} {}  }
\newcommand{\chris}[1]{\ifthenelse{\boolean{showcomments}}
{ \textcolor{magenta}{(Chris says: #1)} } {} }
\newcommand{\babak}[1]{\ifthenelse{\boolean{showcomments}}
{ \textcolor{green}{(Babak says:  #1)}}{}}
\newcommand{\distS}[2]{\mathrm{dist}^2_{#1}\left(#2\right)}
\newcommand{\dist}[2]{\mathrm{dist}_{#1}\left(#2\right)}
\newcommand{\Cauchy}{Cauchy-Schwarz }
\newcommand{\simiid}{\stackrel{\text{iid}}{\sim}}
\newcommand{\Dcall}{\Dc(\alpha,\taug,\beta,\tauh)}
\newcommand{\DKP}{D_{\Kc}}
\newcommand{\DKPo}{\overline{D}_{\Kc}}
\newcommand{\Pro}{\mathbb{P}}
\newcommand{\psiubw}{\psi(\w,\ub)}
\newcommand{\Scc}{{\Sc^c}}
\newcommand{\tauh}{{\tau_h}}
\newcommand{\taug}{{\tau_g}}
\newcommand{\env}[3]{\mathrm{e}_{{#1}}\left({#2};{#3}\right)}
\newcommand{\prox}[3]{\mathrm{prox}_{{#1}}\left({#2};{#3}\right)}
\newcommand{\chii}{\chi}
\newcommand{\cost}{\mathrm{C}_*}
\newcommand{\GZ}{\al G + Z}
\newcommand{\Lm}[2]{L\left({#1},{#2}\right)}
\newcommand{\FFm}[2]{F\left({#1},{#2}\right)}
\newcommand{\Rcn}{\Rc_n}
\newcommand{\vaG}{\hat{v}_{\al G+Z,\tau}}
\newcommand{\vaGar}[2]{\hat{v}_{\al #1+#2,\tau}}
\newcommand{\vaxGar}[2]{\hat{v}_{(\al+x)#1+#2,\tau+y}}
\newcommand{\ellaG}{\ell'_{\al G+Z,\tau}}
\newcommand{\ellaxG}{\ell'_{(\al+x)G+Z,\tau+y}}
\newcommand{\ellaGar}[2]{\ell'_{\al #1+#2,\tau}}
\newcommand{\ellaxGar}[2]{\ell'_{(\al+x)#1+#2,\tau+y}}
\newcommand{\xio}{\overline{\phi}}
\newcommand{\phio}{\overline{\phi}}
\theoremstyle{theorem}
\newtheorem{thm}{Theorem}[section]
\newtheorem{lem}{Lemma}[section]
\newtheorem{cor}{Corollary}[section]
\newtheorem{ass}{Assumption}
\theoremstyle{remark}
\newtheorem{remark}{Remark}[subsection]
\theoremstyle{definition}
\newcommand{\consist}{\cite[Thm.~2.7]{NF36} }
\newcommand{\uniform}{\cite[Cor..~II.1]{AG1982} }
\newcommand{\al}{\alpha}
\newcommand{\eps}{\epsilon}
\newcommand{\sign}{\mathrm{sign}}
\newcommand{\I}{\mathbf{I}}
\newcommand{\xx}{\overline{\x}}
\newcommand{\mn}{{}}
\newcommand{\Exp}{\mathbb{E}}               
\newcommand{\E}{\mathbb{E}}                    
\newcommand{\la}{{\lambda}}                     
\newcommand{\sig}{\sigma}
\newcommand{\nn}{\notag}
\newcommand{\R}{\mathbb{R}}
\newcommand{\Z}{{Z}}
\newcommand{\G}{\mathbf{G}}
\newcommand{\X}{\mathbf{X}}
\newcommand{\A}{\mathbf{A}}
\newcommand{\x}{\mathbf{x}}
\newcommand{\w}{\mathbf{w}}
\newcommand{\ub}{\mathbf{u}}
\newcommand{\g}{\mathbf{g}}
\newcommand{\vb}{\mathbf{v}}
\newcommand{\tb}{\mathbf{t}}
\newcommand{\y}{\mathbf{y}}
\newcommand{\s}{\mathbf{s}}
\newcommand{\z}{\mathbf{z}}
\newcommand{\ab}{\mathbf{a}}
\newcommand{\h}{\mathbf{h}}
\newcommand{\deltab}{\boldsymbol{\delta}}
\newcommand{\Sc}{{\mathcal{S}}}
\newcommand{\Yc}{\mathcal{Y}}
\newcommand{\Dc}{\mathcal{D}}
\newcommand{\Xc}{\mathcal{X}}
\newcommand{\Kc}{\mathcal{K}}
\newcommand{\Nc}{\mathcal{N}}
\newcommand{\Rc}{\mathcal{R}}
\newcommand{\Nn}{\mathcal{N}}
\newcommand{\Cc}{\mathcal{C}}
\newcommand{\Gc}{{\mathcal{G}}}
\newcommand{\Ac}{\mathcal{A}}
\newcommand{\Ec}{\mathcal{E}}
\newcommand{\Hc}{\mathcal{H}}
\newcommand{\Jc}{\mathcal{J}}
\newcommand{\Ic}{\mathcal{I}}
\newcommand{\vh}{\hat{v}}
\newcommand{\beq}{\begin{equation}}
\newcommand{\eeq}{\end{equation}}
\newcommand{\bea}{\begin{align}}
\newcommand{\eea}{\end{align}}
\newcommand{\vp}{\vspace{4pt}}
\newcommand{\rP}{\xrightarrow{P}}
\newcommand{\loss}{\mathcal{L}}
\title{
Precise Error Analysis of Regularized $M$-estimators \\ in High-dimensions
}
\author{\vspace{10pt}
Christos Thrampoulidis,\hspace{10pt}Ehsan Abbasi,\hspace{10pt}Babak Hassibi\vspace{10pt}
\thanks{Department of Electrical Engineering, Caltech, Pasadena -- 91125, 
emails: {\tt(cthrampo, eabbasi, hassibi)@caltech.edu}. }
}
\begin{document}

\maketitle

\begin{abstract}

A popular approach for estimating an unknown signal $\x_0\in \R^n$ from noisy, linear measurements $\y=\A\x_0+\z\in \R^m$ is via solving a so called \emph{regularized} M-estimator: $\hat\x:=\arg\min_\x \loss(\y-\A\x)+\lambda f(\x)$. Here, $\loss$ is a convex loss function, $f$ is a convex (typically, non-smooth) regularizer, and, $\lambda> 0$ is a regularizer parameter. We analyze the squared error performance $\|\hat\x-\x_0\|_2^2$ of such estimators in the \emph{high-dimensional proportional regime} where $m,n\rightarrow\infty$ and $m/n\rightarrow\delta$. The design matrix $\A$ is assumed to have entries iid Gaussian; only minimal and rather mild regularity conditions are imposed on the loss function, the regularizer, and on the noise and signal distributions. 
We show that the squared error converges in probability to a nontrivial limit that is given as the solution to a minimax convex-concave optimization problem on four scalar optimization variables. We identify a new summary parameter, termed the Expected Moreau envelope to play a central role in the error characterization. The \emph{precise} nature of the results permits an accurate performance comparison
between different instances of regularized M-estimators  and allows to optimally tune the involved parameters (e.g. regularizer parameter, number of measurements). 
The key ingredient of our proof is the \emph{Convex Gaussian Min-max Theorem} (CGMT) which is a tight and strengthened version of a classical Gaussian comparison inequality that was proved by Gordon in 1988.
\end{abstract}


\section{Introduction}

\subsection{Motivation}

\vp
\noindent\textbf{Structured signals in high-dimensions }.
We consider the standard problem of recovering an unknown signal $\x_0\in\R^n$ from a vector $\y\in\R^m$ of $m$ noisy, linear observations given by
$
\y = \A \x_0 + \z\in \R^m.
$
Here, $\A\in\R^{m\times n}$ is the (known) measurement matrix, and, $\z\in\R^m$ is the noise vector; the latter is generated from some distribution density in $\R^m$, say $p_\z$. 
Our focus is on the \emph{high-dimensional regime}  where both the dimensions of the ambient space $n$ and the number of measurements $m$ are large \cite{serdobolskii2013multivariate,donoho2000high}. This is  different than the  classical one, where $n$ is small and fixed and only $m$ is assumed large. Of special interest is  
the scenario of \emph{compressed} measurements, in which  $m<n$.
In principle, such inverse problems are ill-posed, unless the unknown vector is somehow structurally constrained to only have very few degrees of freedom relative to its ambient space. Such signals are called \emph{structured} signals; popular examples include sparsity, block-sparsity, low-rankness, etc. \cite{bach2010structured,Cha}. We model such structural information on $\x_0$ by assuming that it is sampled from an $n$-dimensional probability density $p_{\x_0}$. 

\vp
\noindent\textbf{Regularized M-estimators.}
The most widely used  approach to obtain an estimate $\hat\x$ of the unknown $\x_0$ from the vector $\y$ of observations is via solving the  \emph{convex}  program
\begin{align}\label{eq:genM}
\hat\x:=\arg\min_\x~\loss(\y-\A\x) + \la f(\x).
\end{align}
The \emph{loss function} $\loss:\R^m\rightarrow\R$ measures the  deviation of $\A\hat\x$ from the observations $\y$,  the \emph{regularizer} $f:\R^n\rightarrow\R$ aims to promote the particular structure of $\x_0$, and, the regularizer parameter $\la>0$ balances between the two. Henceforth, both $\loss$ and $f$ are assumed to be convex. Also, $f$ will typically be non-smooth.
We refer to the minimization problems of the form in \eqref{eq:genM} as  \emph{regularized M-estimators}. Different choices of the loss function and of the regularizer give rise to a number of well-known estimators. 
A few concrete examples might suffice:
(i) 
the \emph{LASSO} \cite{TibLASSO} corresponds to \eqref{eq:genM} with $\loss(\vb)=\frac{1}{2}\|\vb\|_2^2$ and $f(\x) = \|\x\|_1$. General choices of the regularizer for the same loss function lead to the Generalized LASSO \cite{OTH13,plan2015generalized} 
(ii) The \emph{regularized-LAD} \cite{wang2013} minimizes an $\ell_1$-loss function. 
(iii)
The (generalized) \emph{square-root LASSO} \cite{Belloni} solves \eqref{eq:genM} for $\loss(\vb)=\|\vb\|_2$.  
In the first two examples  the loss function is \emph{separable} over its entries, i.e. $\loss(\vb) = \sum_{j=1}^m\ell(\vb_j)$ for convex $\ell:\R\rightarrow\R$; in contrast, the square-root LASSO does not belong to this category. Accordingly, the regularizer function might be separable (e.g. $\ell_1$-norm) or not (e.g. nuclear-norm). 

\vp\noindent
\noindent\textbf{Challenge}.
A popular way to compare performance among different instances of \eqref{eq:genM} is by the \emph{squared-error} $\|\hat\x-\x_0\|_2^2$. In the absence of the regularizer function $f$, the family of estimators in \eqref{eq:genM} corresponds to the ``plain-vanilla"
regression M-estimators and there is a complete, practical and elegant theory developed in the statistics
literature that analyzes its asymptotic performance.
This theory includes some of the most popular notions and results in statistics, such as conditions on
the optimality of Maximum Likelihood (ML) estimators, the theory of robust statistics \cite{huber2011robust}, etc.. Unfortunately,
it only holds under an assumption of many observations (large m) of only a few well-chosen variables to be
estimated (small $n$), and thus, it fails to capture the following prevailing features of modern applications: (a)
large number of variables to be estimated (large $n$); (b) (often) fewer observations than variables ($m < n$); (c)
the unknown signal $\x_0$ is structured.
%
%
Therefore, an \emph{extension of the theory to the high-dimensional regime is of interest}. In fact, the roots of such a question are quite old and date back to the works of Huber, Kolmogorov, and others (see \cite{huber2011robust,serdobolskii2013multivariate} and references therein). Nonetheless, and despite several remarkable recent advances, we still lack a general and clear theory that would resemble that of the traditional regime.


\subsection{Contribution}\label{sec:contr}

\vp
\noindent\textbf{Error Prediction}. 
In this work, we characterize the (mean) squared-error performance of the generalized M-estimator in \eqref{eq:genM} under the following setting:

--~ \emph{high-dimensional proportional regime}: $m,n\rightarrow\infty$ with $m/n\rightarrow\delta\in(0,\infty)$,

--~ \emph{Gaussian design}: $\A$ has entries iid Gaussian,

--~ \emph{Regularity conditions}:  only \emph{minimal} and \emph{generic} conditions are imposed on the loss function, the regularizer, and the noise and signal statistics.

We show that  the squared error converges in  probability to a nontrivial limit  which is given as the unique minimizer to a deterministic convex optimization problem that only involves four scalar optimization variables.
%
 The normalized number of measurements $\delta$ and the regularizer parameter $\la$ appear in the objective function of the optimization explicitly. In contrast, the loss function $\loss$ and the noise distribution $p_\z$ appear through a summary functional, which we call the \emph{Expected  Moreau Envelope}. The same holds for the regularizer $f$ and the distribution of the signal $p_{\x_0}$.

\vp
\noindent\textbf{Expected Moreau Envelope}. The Expected Moreau Envelope $L(c,\tau) := L_{\ell,p_\z}(c,\tau)$ is defined for all $c\in\R, \tau>0$ as the converging limit of $\frac{1}{n}\env{\loss}{c\g+\z}{\tau}$,
where $\env{\loss}{\ub}{\tau}:=\min_{\vb}\frac{1}{2\tau}\|\ub-\vb\|_2^2+\loss(\vb)$ denotes the Moreau-envelope approximation of $\loss$ at $\ub$ with parameter $\tau$ and  $\g$ is a vector with entries iid standard normal. This is  the critical parameter that determines the role of the loss function and that of the regularizer in the error performance of \eqref{eq:genM}. It has some key properties: it is smooth irrespective of any smoothness assumptions on $\loss$, and, is strictly convex under mild assumptions. Also, it is insightful to view it as a \emph{generalization} of  corresponding summary parameters, such as the ``gaussian width"\cite{Cha} and the ``statistical dimension"\cite{TroppEdge},  that are geometric in nature, and, which play a fundamental role in the study of phase transitions in noiseless linear inverse problems.

\vp
\noindent\textbf{Generality}.~ A key feature of our result is that it holds under very general settings. 
%
All existing results in the literature on the performance of specific instances of M-estimators can be seen as special cases of the main theorem of this work (Theorem \ref{thm:master}). Beyond those, the theorem can be used to derive a wide range of novel results, including instances where the loss function and the regularizer may be non-smooth or non-separable, and where the noise distribution may have unbounded moments. 

\vp
\noindent\textbf{Opportunities}.~ 
The \emph{precise} characterization of the squared error permits an accurate performance comparison
between different instances of \eqref{eq:genM}. Hence, the main theorem of this work lays the groundwork towards developing a complete theory of regularized M-estimators in the high-dimensional regime. This involves providing rigorous answers to  optimality questions regarding the choice of the involved parameters:
%
%
\begin{itemize}
\item[--] What is the optimal loss function and regularizer, under different settings, e.g. in the presence of outliers, particular structure of $\x_0$, etc.?
\item[--] What is the minimum achievable squared error in each one of those scenarios? Do there exist consistent M-estimators, i.e. instances for which $\|\hat\x-\x_0\|_2\rightarrow0$?
\item[--] How to optimally tune the regularizer parameter $\lambda$?
\item[--] How does the sampling ratio $\delta=m/n$ affect the error?
\end{itemize}
Given the popularity of M-estimators, the questions above are clearly of both theoretical and practical interest. 
Only partial answers that apply to special cases and to only few of them are known in the literature, while most remain open and challenging. We envision that the main theorem of this work gets us a step closer to overcoming the challenge and to exploring phenomena that are new when compared to what is known in the classical statistics regime. Although, this goes beyond the scope of the current paper, we have included some preliminary results and discussions to illustrate those potentials.


\vp
\noindent\textbf{Convex Gaussian Min-max Theorem}.~ The main ingredient of the proof of our main result is the Convex Gaussian Min-max Theorem (CGMT). The CGMT is a generalization and a strengthened version of a classical Gaussian comparison inequality due to Gordon, which dates back to 1988 \cite{GorLem,GorThm}. While Gordon's original result only provides lower bounds, the CGMT shows that the results become \emph{tight} when additional convexity assumptions are imposed. 
The idea of combining Gordon's inequality with convexity is attributed to Stojnic, who used it to analyze the high-SNR performance of the constrained LASSO \cite{StoLASSO}. The CGMT solidifies and adds upon this initial idea. The final result leads to a transparent and readily applicable framework which is  powerful enough to be  useful under  the general framework of the current paper. In fact, the CGMT in the generality that it appears here\footnote{An early version appears by subset of the authors in \cite{COLT15}.}, might be of independent interest and may have applications that go beyond the scope of our work. Finally, it should be noted, that the successful application of the CGMT to the analysis of regularized M-estimators involves a number of new ideas that are introduced as part of this work.



\subsection{Related Work}\label{sec:rel}
With the advent of Compressed Sensing there is a very large number of theoretical results  that have appeared in recent years  in place for various types of regularized M-estimators. The vast majority of those results  hold under standard incoherence or restricted eigenvalue conditions on the measurement matrix $\A$ \footnote{Such conditions have been shown to be satisfied by a wide class of randomly designed measurement matrices, (e.g. \cite{foucart2013mathematical,eldar2012compressed,davenport2011introduction} and references therein). A more recent line
of works obtains similar order-wise bounds under even weaker assumptions on the randomness properties of $\A$ \cite{lecue2014sparse,tropp2014convex,sivakumar2015beyond}.},
but they are \emph{order-wise} in nature, i.e., they characterize the error performance only up to {loose} constants. While this line of work includes unifying frameworks for the analysis of general instances of \eqref{eq:genM}, the loose constants involved in the error bounds do \emph{not} permit any accurate comparisons among the different instances (e.g. \cite{negahban2012unified, wainwright2014structured, banerjee2014estimation,li2015geometric} and references therein); therefore, they cannot be used to answer optimality questions of the nature discussed in Section \ref{sec:contr}.

This paper derives precise characterizations of the error behavior (ones that do not involve unknown constants). Results of this nature have appeared in the literature under the additional assumption of an iid Gaussian distribution imposed on the entries of  the matrix $\A$. The inspiration behind these studies can be traced back to the seminal work of Donoho \cite{donoho2006high,donoho2009counting} on the phase-transition of $\ell_1$-minimization in the Compressed Sensing problem. This and the extensive follow-up literature mostly focused on the \emph{noiseless} signal recovery problem. More recently, researchers have initiated the study of the exact reconstruction error of instances of \eqref{eq:genM} in the presence of \emph{noise}.
 Unfortunately, no unifying treatment that holds for general instances has hitherto been available. To the best of our knowledge, our work is the first to obtain \emph{precise} characterizations of the error performance of \eqref{eq:genM} for \emph{general} convex loss functions, convex regularizers, and noise and signal distributions under a Gaussian assumption on the random measurement matrix $\A$. In the rest of this section, we briefly outline the relevant literature. A far more detailed discussion appears  in Section \ref{sec:prior}. 

%
%
%


The first precise results on the performance of non-smooth convex optimization methods appear in the literature in the context of \emph{noiseless} linear inverse problems that arise in Compressed Sensing. Here, the vector of measurements of the unknown structured signal $\x_0$ takes the form $\y=\A\x_0\in\R^m$ and recovery is attempted via solving $\min_{\y=\A\x} f(\x)$, for an appropriately chosen convex regularizer $f$. In the absence of noise, the standard measure of performance becomes that of the minimum number of measurements required for \emph{exact recovery} of $\x_0$. By now, there is an elegant and complete theory that precisely characterizes this number when $\A$ has entries iid Gaussian. The theory was built in a series of recent papers \cite{donoho2009counting, Sto, Cha, bayati2015universality, TroppEdge, stojnic2013upper, OTH13}. See Section \ref{sec:prior} for details on the contribution of each reference.
Our work extends the analysis to the \emph{noisy} setting. In the presence of noise, the analysis is inherently more challenging since: (a) one
needs to characterize the precise value of the estimation error, rather than just discriminating between exact
recovery or not; (b) the performance depends not only on the number of measurements but also on the noise and signal statistics. Also, it naturally includes the results of the noiseless case as special instances. However, many of the ideas, analytical tools and concepts developed in the works \cite{Sto,Cha,TroppEdge,stojnic2013upper} have proved to be useful in extending the results to the noisy setting. 

In the noisy setting, the first precise results analyzed the error performance of regularized least-squared (a.k.a. generalized-LASSO) under an iid gaussianity assumption on the noise distribution \cite{DMM,montanariLASSO,StoLASSO,OTH13,ICASSP15,ell22,IRO}. 
It has been only very recently, that El Karoui \cite{karoui13,karoui15}, and, Donoho and Montanari \cite{montanari13,montanari15} were able to rigorously\footnote{The study of high-dimensional M-estimators has been previously considered in \cite{el2013robust,bean2013optimal}. However, those results are only based on heuristic arguments and simulations.}
predict the error performance of M-estimators under more general assumptions on the loss function and on the noise distribution. However, the papers by Donoho and Montanari assume \emph{no} ragularization and El Karoui considers the special case of ridge regularization. Finally, the very recent paper \cite{bradic2015robustness} builds upon \cite{montanari13} and extends the study to the case of $\ell_1$-regularization. 
We compare these results to ours in several places in the manuscript; also, see Section \ref{sec:prior} for a very detailed discussion. 
In short, our work achieves by several means a  more complete and transparent treatment of the subject, overcoming the limitations of previous endeavors as follows:  
(i) We consider arbitrary convex regularizers, (ii)  We identify minimal and generic assumptions under which the general result holds, (iii) We remove any smoothness and strong-convexity assumptions on the loss function, which are required in all previous works. Also, the loss function (and regularizer) need not be separable (e.g., we allow $\loss(\vb) = \|\vb\|_2$ or $\|\vb\|_\infty$), and, the distributions need not be iid. (iv) We remove boundedness assumptions on the moments of the noise distribution. 
Notably, our proof technique is fundamentally different than that of  \cite{karoui15} and \cite{montanari13}, and, it appears to be more direct and insightful in several ways.

\subsection{Organization}\label{sec:rel}
 The rest of the paper is organized as follows. In Section \ref{sec:pre}, we introduce some basic notions and set-up the problem. The main theorem (Theorem \ref{thm:master}) is presented next in Section \ref{sec:res}, where its features and implications are also discussed. 
 Theorem \ref{thm:master} is specialized to instances of M-estimators with separable loss and regularizer functions in Section \ref{sec:sep}. A number of examples of M-estimators and relevant numerical simulations are included in Section \ref{sec:sim} to illustrate the applicability and the premises of the result. In Section \ref{sec:outline}, we introduce the mechanics that lead to the proof of Theorem \ref{thm:master}; this includes the statement of the Convex Gaussian Min-max Theorem in Section \ref{sec:CGMT}. Section \ref{sec:rel} discusses the relevant literature in some detail. Finally, the paper concludes in Section \ref{sec:conc} with a discussion on several promising directions of future research. The proofs of the results of all the sections are deferred to Appendices \ref{sec:proof}-\ref{sec:proofs_4}

\section{Preliminaries}\label{sec:pre}

\subsection{Notation}
We gather here the basic notation that is used throughout the work. 

\vp
\noindent\emph{Convex Analysis:}~ For a convex function $f:\R^n\rightarrow\R$, we let $\partial f(\x)$ denote the subdifferential of $f$ at $\x$ and $f^*(\y)=\sup_{\x}\y^T\x-f(\x)$  its Fenchel conjugate. The Moreau envelope function of $f$ at $\x$ with parameter $\tau$ is defined by 
$$\env{f}{\x}{\tau}:=\min_{\vb}\frac{1}{2\tau}\|\x-\vb\|_2^2 + f(\vb).$$
The optimal value in the minimization above is denoted by $\prox{f}{\x}{\tau}$. When writing $\x_* = \arg\min_\x f(\x),$ we let the  operator $\arg\min$ return any one of
the possible minimizers of $f$.
%

\vp
\noindent\emph{Limits and Derivatives:}~ For a real-valued (not necessarily differentiable) convex function $f$ on $\R$ denote 
$$
f^\prime_+(v) := \sup_{s\in\partial f(v)} |s|.
$$
Also, write  $\lim_{x\rightarrow c^+}f(x)$ for the one-sided limit of $f$ at $c$, as $x$ approaches from above. For a  function $g(x,\tau)$ that is continuously differentiable on $\R^2$, we write $g^\prime(x,\tau)$ or $g_1(x,\tau)$ for the derivative with respect to the first variable, and, $g_2(x,\tau)$ for the derivative with respect to the second variable.

\vp
\noindent\emph{Probability:}~ The symbols $\Pro\left(\cdot\right)$ and $\Exp\left[\cdot\right]$ denote the probability of an event and the expectation of a random variable, respectively. For a sequence of random variables $\{\mathcal{X}^{(n)}\}_{n\in\mathbb{N}}$ and a constant $c\in \R$ (independent of $n$), we write $\{\mathcal{X}^{(n)}\}_{n\in\mathbb{N}}\rP c$, to denote convergence \emph{in probability}, i.e. $\forall\eps>0$, $\lim_{n\rightarrow\infty}\Pro\left(|\mathcal{X}^{(n)}-c|>\eps\right)=0.$ We write $\mathcal{X}\sim p_X$ to denote that the random variable $\mathcal{X}$ has a density $p_X$. If $\mathcal{X}$ is a vector random variable with entries iid, then we use $\simiid$. Also, $\mathcal{X}\sim\Nn(\mu,\sigma^2)$  denotes a Gaussian random variable with mean $\mu$ and variance $\sigma^2$.

We reserve the letters $\g$ and $\h$ to denote standard Gaussian vectors (with iid entries $\Nn(0,1)$) of dimensions $m$ and $n$, respectively. Similarly, $G$ and $H$ are reserved to denote (scalar) standard normal random variables.


\subsection{Setup} \label{sec:model}
\indent \emph{Linear Asymptotic Regime}:  Our study falls into the linear asymptotic regime in which the problem dimensions $m$ and $n$ grow proportionally to infinity with 
$${m}/{n}\rightarrow\delta\in(0,\infty).$$


\vspace{5pt}
\noindent\emph{Measurement matrix:} The entries of $\A\in\R^{m\times n}$ are i.i.d. $\Nn(0,\frac{1}{n})$. The normalization of the variance
ensures that the rows of $\A$ are approximately unit-norm; this is necessary in order
to properly define a signal-to-noise ratio.

\vp
\noindent\emph{{Unknown (structured) signal:}} Let $\x_0\in\R^n$ represent the unknown signal vector that is sampled from a probability density $p_{\x_0}\in\R^n$ with one dimensional marginals that are independent of $n$. Note, that we do \emph{not} necessarily require that the entries of $\x_0$ be iid. The signal $\x_0$ is assumed independent of $\A$. 

Information about the structure of  $\x_0$ is encoded in $p_{\xx_0}$. For instance, to study an $\x_0$ which is sparse, it is typical to assume that its entries are i.i.d. $\x_{0,i}\sim(1-\rho)\delta_0+\rho q_{\X_0}$, where $\rho\in(0,1)$ becomes the normalized sparsity level, $q_{\X_0}$ is a scalar p.d.f. and $\delta_0$ is the Dirac delta function\footnote{Such models in place for studying structured signals have been widely used in the relevant literature, e.g. \cite{DJ94,DMM,donoho2013accurate}. In fact, the results here continue to hold as long as the marginal distribution of $\x_0$ converges to a given distribution (as in \cite{montanariLASSO}).}.

\vspace{5pt}
\noindent\emph{{Regularizer:}} We consider regularizers $f:\R^n\rightarrow\R$ that are proper continuous \emph{convex}  functions. 


\vspace{5pt}
\noindent\emph{{Loss function:}} The loss function $\loss:\R^n\rightarrow\R$ is  proper continuous and convex. 
Without loss of generality, we assume for simplicity that
$
\min_{\vb}\loss(\vb) = 0.
$
Finally, we impose a  natural normalization condition as follows:
for all $n\in\mathbb{N}$ and all constants  $c>0$ there exists constant $C>0$, such that $\|\vb\|_2\leq c\sqrt{n} \implies \sup_{\s\in\partial\loss(\vb)}\|\s\|_2\leq C\sqrt{n}$. 


\vspace{5pt}
\noindent\emph{Noise vector:} The noise vector $\z\in\R^m$ follows a probability distribution $p_{\z}\in\R^m$ with one dimensional marginals that are independent of $n$. Also, it is independent of the measurement matrix $\A$.


\vspace{5pt}
\noindent\emph{Sequence of problem instances:} 
Formally, our result applies on a sequence of problem 
instances \\ $\{\x_0^\mn, \A^\mn, \z^\mn, \loss^\mn, f^\mn,  m^\mn\}_{n\in\mathbb{N}}$ indexed by $n$ such that 
the properties listed above hold for all members of the sequence and for all $n\in\mathbb{N}$. (We do not write out the subscripts $n$ for arguments of the sequence to not overload notation).
Every such sequence generates a sequence $\{\y^\mn,\hat\x^\mn\}_{n\in\mathbb{N}}$ where  $\y^\mn := \A^\mn\x_0^\mn + \z^\mn$, and,
\begin{align}\label{eq:genLASSO}
\hat\x^\mn := \arg\min_{\x} ~\loss^\mn\left(\y^\mn - \A^\mn\x\right) + \la f^\mn(\x) .
\end{align}
Here, $\la>0$ is a fixed regularizer parameter.


\vspace{5pt}
\noindent\emph{Estimation error:} Solving \eqref{eq:genLASSO} aims to recovering $\x_0^\mn$. 
We assess the quality of the estimator $\hat\x^\mn$ with the ``empirical squared error" (or simply, ``squared-error") defined as:
$\frac{1}{n}\|\hat\x^\mn-\x_0^\mn\|_2^2.$
Note, that this is a random quantity owing to the randomness of $\A^\mn,\z^\mn$ and $\x_0^\mn$. Our main theorem precisely evaluates its high probability limit as $n\rightarrow\infty$.

\section{General Result}\label{sec:res}

\subsection{Key Assumption}
As already hinted in the introduction the functions $\loss$, $f$ and the distributions $p_\z$ and $p_{\x_0}$ determine the error performance indirectly through ``summary functionals" related to the Moreau-envelope approximations. The assumption below is an in-probability convergence requirement on the sequence of Moreau-envelopes, and defines those summary functionals. It also involves a rather natural  growth restriction on the loss function in the presence of noise to handle instances where the noise may have unbounded moments.



\vspace{5pt}
\begin{ass}[Summary functionals $L$ and $F$]
\label{ass:tech}
We say that Assumption \ref{ass:tech} holds if:
\begin{enumerate}[(a)]
\item For all $c\in\R$ and $\tau>0$, there exist continuous functions $L:\R\times\R_{>0}\rightarrow\R$ and $F:\R\times\R_{>0}\rightarrow\R$ such that\footnote{The convergence above is in probability over $\z\sim p_\z, \x_0\sim p_{\x_0}, \g\sim\Nn(0,\I_m)$ and $\h\sim\Nn(0,\I_n)$.
}
$$
\frac{1}{m}\left\{\env{\loss}{c\g+\z}{\tau}-\loss(\z)\right\}\rP \Lm{c}{\tau} \qquad\text{ and }\qquad\frac{1}{n}\left\{\env{f}{c\h + \x_0}{{\tau}}-f(\x_0)\right\}\rP\FFm{c}{\tau},
$$

\item At least one of the following holds. There exists constant $C>0$ such that $\frac{\|\z\|_2}{\sqrt{m}}\leq C$ with probability approaching 1 (w.p.a.1), or,  $\sup_{\vb\in\R^m}\sup_{\s\in\partial\loss(\vb)}\|\s\|_2<\infty$ for all $m\in\mathbb{N}$.
\end{enumerate}
\end{ass}

Assumption \ref{ass:tech} is rather mild: as discussed later in Section \ref{sec:Ass1_rem}, it holds naturally under very generic settings. Yet, it is of key importance since it defines the functionals $L$ and $F$, which are necessary ingredients involved in the error prediction of \eqref{eq:genLASSO}. The main theorem in its most general form will require some extra  (continuity and growth) properties on the functionals $L$ and $F$. Those will most often be naturally inherited from corresponding easy-to-verify and in cases well-studied properties of the Moreau envelope functions.

\subsection{Theorem}

Assumption \ref{ass:tech} provides us with the basic terminology needed for the statement of the main theorem. Technically, a few additional mild constraint qualifications are required. We present those immediately after the statement of the main result (see Assumption \ref{ass:prop}). The proof of the theorem is deferred to Appendix \ref{sec:proof}. An outline is given earlier in Section \ref{sec:outline}.

\begin{thm}[Master Theorem]\label{thm:master}
Let $\hat\x$ be a minimizer of the Generalized $M$-estimator in \eqref{eq:genLASSO} for fixed $\la>0$. Further let Assumptions \ref{ass:tech} and \ref{ass:prop} hold. If the following convex-concave minimax scalar optimization
\begin{align}\label{eq:AO_det_thm}
\inf_{\substack{\alpha\geq0 \\ \taug>0}}~\sup_{\substack{\beta\geq0 \\ \tauh> 0}}~~\Dc(\alpha,\taug,\beta,\tauh):= \frac{\beta\taug}{2} + 
\delta\cdot\Lm{\alpha}{\frac{\taug}{\beta}}
- \frac{\alpha\tauh}{2} - \frac{\alpha\beta^2}{2\tauh} + \la\cdot \FFm{\frac{\alpha\beta}{\tauh}}{\frac{\alpha\la}{\tauh}}.
\end{align}
has a unique minimizer $\alpha_*$,
%
then, it holds in probability that
$$
\lim_{n\rightarrow\infty}\frac{1}{n}\|\hat\x-\x_0\|_2^2 = \alpha_*^2.  
$$
\end{thm}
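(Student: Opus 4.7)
Write $\w:=\hat\x-\x_0$; the goal is $\|\w\|_2^2/n\rP\al_*^2$. Dualize the loss via its Fenchel conjugate, $\loss(\vb)=\sup_{\ub}\ub^\top\vb-\loss^*(\ub)$, and write $\A=\G/\sqrt{n}$ with $\G\in\R^{m\times n}$ having iid $\Nn(0,1)$ entries, recasting \eqref{eq:genLASSO} as the \emph{primary optimization} (PO)
\[
\Phi=\min_\w\sup_\ub\Big[-\tfrac{1}{\sqrt{n}}\ub^\top\G\w+\ub^\top\z-\loss^*(\ub)+\la f(\x_0+\w)\Big],
\]
which is bilinear in $\G$ with the remaining $(\w,\ub)$-dependence convex-concave. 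A preliminary a priori step localizes the optimization to a compact convex set (say, $\|\w\|_2,\|\ub\|_2\le C\sqrt n$ for $C$ large enough); on this set the Convex Gaussian Min-max Theorem of Section~\ref{sec:CGMT} replaces $\Phi$ by the \emph{auxiliary optimization} (AO)
\[
\phi=\min_\w\sup_\ub\Big[\tfrac{\|\w\|_2}{\sqrt{n}}\g^\top\ub+\tfrac{\|\ub\|_2}{\sqrt{n}}\h^\top\w+\ub^\top\z-\loss^*(\ub)+\la f(\x_0+\w)\Big],
\]
with $\g\in\R^m,\h\in\R^n$ independent standard Gaussians, in the sense that strict deviation inequalities for $\phi$ transfer to strict deviation inequalities for $\Phi$.

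\textbf{Step 2 (Scalarize the AO).} Parametrize by $\al=\|\w\|_2/\sqrt n\ge 0$ and $\beta=\|\ub\|_2/\sqrt n\ge 0$, introducing the convex radial constraints $\|\w\|\le\al\sqrt n$, $\|\ub\|\le\beta\sqrt n$ (tight at the outer saddle). Dualize the radial constraint on $\ub$ with multiplier $\taug/\beta\ge 0$ and apply the Fenchel identity $\sup_\ub\ub^\top\vb-\loss^*(\ub)-\tfrac{\tau}{2}\|\ub\|^2=\env{\loss}{\vb}{\tau}$ at $\vb=\al\g+\z$; the inner $\sup$ collapses to
\[
\inf_{\taug>0}\Big[\env{\loss}{\al\g+\z}{\taug/\beta}+\tfrac{\beta\taug\, n}{2}\Big].
\]
Dualize the radial constraint on $\w$ with multiplier $\tauh/\al\ge 0$, complete the square in $\beta\h^\top\w+\tfrac{\tauh}{2\al}\|\w\|_2^2$, and shift by $\x_0$; the remaining inner $\min$ is a Moreau envelope of $f$ and yields
\[
\sup_{\tauh>0}\Big[\la\,\env{f}{\x_0-\tfrac{\al\beta}{\tauh}\h}{\tfrac{\al\la}{\tauh}}-\tfrac{\al\beta^2\|\h\|_2^2}{2\tauh}-\tfrac{\al\tauh\, n}{2}\Big].
\]
Combining both and using the symmetry $-\h\stackrel{d}{=}\h$ produces a scalar saddle problem in $(\al,\taug,\beta,\tauh)$ whose objective, after division by $n$, equals $\Dc$ in the limit.

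\textbf{Step 3 (Limit and transfer back).} Normalize by $n$ and invoke Assumption~\ref{ass:tech}(a) together with $\|\h\|_2^2/n\rP 1$, $m/n\to\delta$, and the growth control provided by Assumption~\ref{ass:tech}(b); the scalarized AO objective converges pointwise in probability to $\Dc(\al,\taug,\beta,\tauh)$. Since Moreau envelopes are jointly convex in $(\vb,\tau)$, a separate verification shows that $\Dc$ is convex in $(\al,\taug)$ and concave in $(\beta,\tauh)$, and the Rockafellar-type pointwise-to-uniform convergence theorem for convex functions upgrades the convergence on every compact subset; uniqueness of $\al_*$ then forces the $\al$-coordinate of the AO minimizer to concentrate at $\al_*$. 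The deviation form of the CGMT transfers this concentration to the PO: if $\big|\|\hat\w\|_2/\sqrt n-\al_*\big|\ge\eps$ held with non-vanishing probability for some $\eps>0$, the PO value on this event would strictly exceed the saddle value by a deterministic margin, contradicting the CGMT comparison. Therefore $\|\hat\x-\x_0\|_2^2/n\rP\al_*^2$.

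\textbf{Main obstacles.} (i) Producing the a priori boundedness of $\hat\w$ and of the dual optimizer needed before invoking the CGMT, especially in the minimal-regularity regime with possibly heavy-tailed noise, where Assumption~\ref{ass:tech}(b) must be leveraged to control the growth of $\loss^*$; (ii) upgrading pointwise Moreau-envelope convergence (Assumption~\ref{ass:tech}(a)) to joint uniform convergence of the four-parameter scalarized objective, including careful control of boundary behavior as $\taug,\tauh\to 0^+$ and as $\al,\beta\to\infty$; (iii) verifying the convex-concave saddle structure of $\Dc$ and justifying strong duality and the $\min$--$\sup$ swaps that the scalarization relies upon in the full non-smooth, non-separable generality allowed by the theorem.
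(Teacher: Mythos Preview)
Your overall strategy coincides with the paper's: dualize to expose the Gaussian bilinear form, apply the CGMT to pass to an (AO), scalarize the (AO) to four variables, take the deterministic limit via Assumption~\ref{ass:tech}, and feed the strict gap at $\al_*$ back through the CGMT. The convergence and transfer steps (your Step~3) match the paper's Lemma~\ref{lem:convergence} and Lemma~\ref{lem:CGMT_M} almost verbatim.

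The scalarization (your Step~2) takes a genuinely different route, and this is where care is needed. You eliminate the primal slack by writing $\loss(\z-\A\w)=\sup_\ub\,\ub^\top(\z-\A\w)-\loss^*(\ub)$ and then try to reach the Moreau envelope via Lagrangian duality of the radial constraint plus the Fenchel identity $\sup_\ub\,\ub^\top\vb-\loss^*(\ub)-\tfrac{\tau}{2}\|\ub\|^2=\env{\loss}{\vb}{\tau}$. The paper instead \emph{retains} the auxiliary variable $\vb$ (and introduces $\s$ via $f^*$) precisely so that $\ub$ and $\w$ appear only \emph{linearly} in the (AO) objective; the sphere maximization $\max_{\|\ub\|=\beta}\ub^\top\tb=\beta\|\tb\|_2$ is then exact and trivial, and the Moreau envelopes emerge afterward from the ``square-root trick'' $\|\tb\|_2=\inf_{\tau>0}\tfrac{\tau}{2}+\tfrac{1}{2\tau}\|\tb\|_2^2$ combined with optimization over $\vb$ (resp.\ $\s$). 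Your route arrives at the same four-variable objective, but your ``tight at the outer saddle'' step is delicate: once $-\loss^*(\ub)$ is present, replacing the sphere $\|\ub\|=\beta\sqrt n$ by the ball $\|\ub\|\le\beta\sqrt n$ while simultaneously substituting $\|\ub\|/\sqrt n\mapsto\beta$ in the cross term $\|\ub\|\h^\top\w$ is not automatic and has to be argued. The paper's choice to keep $\vb,\s$ sidesteps this entirely.

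The point you correctly flag as obstacle~(iii) is in fact the crux, and the paper's resolution is worth knowing. The (AO) objective is \emph{not} convex--concave in $(\w,\ub)$ (the terms $\|\w\|\g^\top\ub$ and $\|\ub\|\h^\top\w$ spoil it), so Sion's theorem does not license the min--max swap needed to optimize over the direction of $\w$ after that of $\ub$. The paper does not prove a true minimax equality; instead, Lemma~\ref{lem:CGMT_M} works with the \emph{reordered} problem $\max_{\beta,\s}\min_{\w,\vb}\max_{\|\ub\|=\beta}$ and observes, via weak duality, that this reordering is sandwiched between the ``primal'' and ``dual'' orderings of the standard (AO). Since the CGMT (Theorem~\ref{thm:CGMT}(i)--(ii)) only needs one-sided deviation bounds---the lower bound on $\Phi_{\Sc^c}$ uses the primal ordering, the upper bound on $\Phi$ uses the dual ordering---the sandwich is exactly what is required, and no genuine strong-duality statement for the (AO) is ever proved. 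Your sketch would benefit from making this asymmetric use of the two CGMT inequalities explicit rather than appealing to a saddle-point structure that the (AO) does not possess.
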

We will often refer to the optimization problem in \eqref{eq:AO_det_thm} as the \emph{Scalar Performance Optimization} (SPO) problem.

A few important remarks are in place here (a detailed discussion follows in Section \ref{sec:det_disc}): (i) The convergence in the theorem is over the randomness of the design matrix $\A$, of the noise vector $\z$ and of the unknown signal $\x_0$. (ii) As was discussed in Section \ref{sec:model} the result applies to a properly defined sequence of M-Estimators of growing dimensions $m$ and $n$ such that $m/n\rightarrow\delta\in(0,\infty)$. (We have dropped the dependence of $\hat\x$ and $\x_0$ on $n$ to simplify notation.) (iii) The terms involving division by $\alpha$ and $\beta$ are understood as taking their limiting values when  $\alpha=0$ and $\beta=0$,  i.e.
$
\Dc(0,\taug,\beta,\tauh) = \lim_{\alpha\rightarrow 0^+}\Dc(\alpha,\taug,\beta,\tauh)$ and $ \Dc(\alpha,\taug,0,\tauh) = \lim_{\beta\rightarrow 0^+}\Dc(\alpha,\taug,\beta,\tauh).
$

\vp
Before proceeding with a further discussion of the result, let us state Assumption \ref{ass:prop} on the functionals $L$ and $F$ as required by Theorem \ref{thm:master}.

\begin{ass}[Properties of $L$ and $F$]\label{ass:prop}
We say that Assumption \ref{ass:prop} holds if all the following are true.
\begin{enumerate}[(a)]
\item $\lim_{\tau\rightarrow0^+}F(\tau,\tau) = 0$~~ and ~~$\lim_{c\rightarrow+\infty}\left\{\frac{c^2}{2\tau}-F(c,\tau)\right\}=+\infty$ for all $\tau>0$.
%
%
\item $\lim_{\tau\rightarrow0^+}{L(\alpha,\tau)}<+\infty$, $\lim_{\tau\rightarrow0^+}{L(0,\tau)}=0$, and , $-\infty<L_{2,+}(0,0):=\lim_{\tau\rightarrow0^+}L_{2,-}(0,\tau)\leq 0.$
\item $\frac{1}{m}\loss(\z)\rP L_0\in [0,\infty]$.
Also, $L_0 = -\lim_{\tau\rightarrow+\infty} L(c,\tau)\geq -L(c,\tau^\prime)$ for all $c\in\R$,  $\tau^\prime>0$.
\item If $L_0=+\infty$, then $\lim_{\tau\rightarrow+\infty}\frac{L(c,\tau)}{\tau} = 0$, for all $c\in\R$.
%
%
%

\end{enumerate}
\end{ass}

A few remarks on the notation used in Assumption \ref{ass:prop} are as follows. In (b), $L_{2,-}(0,\tau)$ denotes the left derivative of $L$ with respect to its second argument evaluated at $(0,\tau)$. In (d), $L_0$ can take the value $+\infty$. For a sequence of  random variables $\{\mathcal{X}^{(n)}\}_{n\in\mathbb{N}}$, we write $\mathcal{X}^{(n)}\rP+\infty$, iff for all $M>0$, $\lim_{n\rightarrow\infty}\Pro\left(\mathcal{X}^{(n)}>M\right)=1$. 

\subsection{Separable M-estimators}
A special yet popular family of M-estimators involves separable loss/regularizer functions and iid noise/signal distributions. We refer to such instances as ``\emph{separable} M-estimators". To be concrete, consider solving
\begin{align}\label{eq:sep_inf}
\min_{\x}~ \sum_{j=1}^{m}\ell\left(\y_j-\ab_j^T \x\right)  + \la \sum_{i=1}^n f(\x_i),
\end{align}
where additionally, $\z_j\simiid p_z$ and ${\x_0}_i\simiid p_x.$
%
Popular choices for the (scalar) loss function $\ell(v)$ above, include $v^2$, $|v|$ , Huber-loss, etc.. 
In the separable case, the generic Assumptions \ref{ass:tech} and \ref{ass:prop}  translate to very primitive and naturally interpretable conditions. Also, the functionals $L$ and $F$ take here an explicit form, which we call the ``{Expected Moreau envelope}". The \emph{Expected Moreau envelope} associated with the loss function is given by 
$$
L(c,\tau) = \Exp_{\substack{G\sim\Nn(0,1)\\Z\sim p_Z}}\left[ \env{\ell}{c G + Z}{\tau} - \ell(Z) \right].
$$ 
The function $L$, above, has the following remarkable properties: (i) it is smooth regardless of the smoothness of $\ell$, and, (ii) it is strictly convex regardless of whether $\ell$ is itself strictly convex or not. In particular, the second property can be used to show that the uniqueness condition of Theorem \ref{thm:master} regarding the minimizer $\alpha_*$ of \eqref{eq:AO_det_thm} is satisfied.

 In order to get a better understanding of those issues before discussing Theorem \ref{thm:master} in its most generality, we state below a summary of the main result regarding separable M-estimators. (The formal statement will be given later in Section \ref{sec:sep}, which includes a detailed treatment of separable M-estimators.) 

\vp
\noindent\textbf{Summary of result for separable M-estimators }.~\emph{
 Let $\ell,f:\R\rightarrow\R$ be convex non-negative functions, and, $Z\sim p_Z$, $X_0\sim p_x$ such that for all $c\in\R$:
\begin{align}\label{eq:ass_main_inf}
\E\left[ 
|\ell_+^\prime( c G + Z)|^2
\right]<\infty
\quad\text{ and }\quad
\E\left[ 
|f_+^\prime( c H + X_0)|^2
\right]<\infty.
\end{align}
Further assume $\E X_0^2<\infty$, and, that  either $\Exp Z^2<\infty$~ or ~
$\sup_v\frac{|\ell(v)|}{|v|}<\infty$.
 Then, any minimizer $\hat\x$ of \eqref{eq:sep_inf} 
satisfies in probability, 
$$
\lim_{n\rightarrow\infty}\frac{1}{n}\|\hat\x-\x_0\|_2^2 = \alpha_*^2,
$$
where $\alpha_*$ is the unique minimizer to the (SPO) problem in \eqref{eq:AO_det_thm} with 
\begin{align*}
L(c,\tau) = \Exp\left[ \env{\ell}{c G + Z}{\tau} - \ell(Z) \right] \quad\text{ and }\quad F(c,\tau) = \Exp\left[ \env{f}{c H + X_0}{\tau} - \ell(X_0) \right].
\end{align*}
}


We defer most of the discussions to Section \ref{sec:sep}. We only note here that there is \emph{no} smoothness or strict convexity assumption imposed on $\ell$ or $f$. Neither is the noise distribution required to have bounded moments. For example, $\ell(v)=|v|$ with $z$ distributed iid Cauchy satisfies all the conditions.  The main condition of the theorem is the one in \eqref{eq:ass_main_inf}, which is very primitive, and,  easy to check. It essentially guarantees that $\env{\ell}{c G + Z}{\tau} - \ell(Z)$ is absolutely integrable, thus $L$ is well-defined . It turns out that this also suffices for all requirements of Assumption \ref{ass:prop} to be satisfied.

\subsection{Remarks}\label{sec:det_disc}

\subsubsection{On Assumption \ref{ass:tech}}\label{sec:Ass1_rem}

We have made an effort to identify technical assumptions required for the statement of Theorem \ref{thm:master} which are as  \emph{generic} and \emph{minimal} as possible. Assumption \ref{ass:tech}
 summarizes  those technical conditions that are essential for our result to hold in its most general form. 
In later sections, when we discuss special cases (e.g. separable M-estimators in Section \ref{sec:sep}), we show that these conditions translate to more \emph{primitive} sufficient conditions that are often easier to check. 

\begin{remark}[WLLN and Robust Statistics]
The most natural setting where Assumption \ref{ass:tech}(a) can be easily interpreted is that of separable functions. For instance, if $\loss(\vb)=\sum_{j=1}^m{\ell(\vb_j)}$ and $\z_j\simiid p_Z(Z)$, then, in view of the WLLN, the natural candidate for $L(c,\tau)$ is $\Exp[\env{\ell}{cG+Z}{\tau}-\ell(Z)]$. Of course, this requires the argument under the expectation be aboslutely integrable. This is naturally satisfied for most loss functions in the case of noise distributions with bounded moments. On the other hand, when the noise is (say) heavy-tailed, some extra caution is required on the choice of the loss function; this leads to \eqref{eq:ass_main_inf}. As a warning to this discussion, Assumption \ref{ass:tech} does \emph{not} require separability. For example, we use  Theorem  \ref{thm:master}  to analyze the error performance of the square-root lasso (for which $\loss(\vb)=\|\vb\|_2$) in Section \ref{sec:sq_LASSO}, and, that of another instance with a non-separable regularizer function  in Section \ref{sec:cone}.
\end{remark}

\begin{remark}[Convexity of $L$ and $F$]\label{rem:LF_convex}
We remark that if Assumption \ref{ass:tech} holds, then both the functions $F$ and $L$ defined therein are \emph{jointly convex} in their arguments. This follows from the facts that (a) the Moureau envelope of a convex function is jointly convex in its arguments (cf. Lemma \ref{lem:Mconvex}(ii)), (b) taking limits preserves convexity. In that sense, the continuity requirement of the assumption on $L$ and $F$ is rather mild, since convex functions are continuous on the interior of their domain \cite[Thm.~10.1]{Roc70}.
\end{remark}

\begin{remark}[Robust Statistics]
Assumption \ref{ass:tech}(b) is tailored to scenarios in which the noise distribution has unbounded moments (e.g. mean, variance); in this case $\|\z\|_2/\sqrt{n}$ is not bounded with high probability. It is not hard to see that condition \ref{ass:tech}(b) implies $\sup_{\vb}\frac{\|\loss(\vb)\|_2}{\|\vb\|_2}<\infty$; such a requirement that $\loss$ grows at most linearly at infinity is natural in the context of robust
statistics. 
\end{remark}

\subsubsection{On Assumption \ref{ass:prop}}

\begin{remark}[Continuity]
Conditions (a), (b) and (c) impose continuity and growth requirements on $L$ and $F$.  Those are rather naturally inherited by corresponding properties of the Moreau-envelope functions. In Appendix \ref{sec:M_prop} we have gathered such relevant and useful properties of Moreau-envelopes, which we use extensively throughout the text. For an illustration, it is not hard to see\footnote{Formally, this is a well-known continuity result on Moreau-envelopes. see Lemma \ref{lem:Mconvex}(ix)} that $\lim_{\tau\rightarrow0^+}\env{\loss}{\z}{\tau} = \loss(\z)$. This, of course is in line with Assumption \ref{ass:prop}(b) that $\lim_{\tau\rightarrow0^+}L(0,\tau)=0$. 
\end{remark}

\begin{remark}[Robust Statistics]
Assumption \ref{ass:prop}(d) is meant to deal with cases of noise with unbounded moments (this will often translate to $L_0=+\infty$). In such cases, we   require that $L(c,\tau)$ grows sub-linearly in $\tau$. Once more, this property is essentially inherited without any extra effort  by corresponding property of the Moreau-envelope.
\end{remark}

\subsubsection{On the theorem}

\begin{remark}[Limits]
In evaluating the objective function $\Dc$ of the (SPO) at $\alpha=0$ and $\beta=0$, Assumptions \ref{ass:prop}(a)-(b) turn out to be useful, giving
$$
\lim_{\beta\rightarrow 0^+} \Lm{\alpha}{\frac{\taug}{\beta}} = -L_0 \quad\text{ and }\quad\lim_{\alpha\rightarrow 0^+} \FFm{\frac{\alpha\beta}{\tauh}}{\frac{\alpha\la}{\tauh}} = 0.
$$
\end{remark}

\begin{remark}[Convexity]
An important property of the (SPO) is that it is \emph{convex}: its objective function $\Dcall$ is (jointly) convex in $\alpha,\taug$ and concave in $\beta,\tauh$.  As is well known, convexity translates to the ability to efficiently solve the optimization; see also Remark \ref{rem:alt_cha} below.
\end{remark}

\begin{remark}[Uniqueness of $\alpha_*$]
Theorem \ref{thm:master} assumes that the (SPO) problem has a unique minimizer $\alpha_*$. In most cases discussed in this paper, the uniqueness property is a consequence of the fact that the function $L(c,\tau)$ turns out to be (jointly) \emph{stricly} convex in its arguments.  
In the separable case, this translates to the strict convexity of the expected Moreau envelope function $\E[\env{\ell}{cG+Z}{\tau}-\ell(Z)]$, cf Remark \ref{rem:uni}.
\end{remark}

\subsubsection{Further Discussions}

%

\begin{remark}[The role of the parameters]
The role of the normalized number of measurement $m/n\rightarrow\delta$ and that of the regularizer parameter $\la$ are explicit in \eqref{eq:AO_det_thm}. On the other hand, the structure of $\x_0$ and the choice of the regularizer $f$ are implicit through $F$. Similarly, any prior knowledge on the noise vector $\z$ and the effect of the loss function $\loss$ are also implicit in \eqref{eq:AO_det_thm} through $L$. In the separable case, the role of those summary parameters is played by the Expected Moreau envelope function.
\end{remark}


\begin{remark}[An alternative characterization]\label{rem:alt_cha}
The (SPO) problem in \eqref{eq:AO_det_thm} is \emph{convex-concave} and only involves four \emph{scalar} variables. Thus, the optimal $\alpha_*$ can, in principle, be efficiently numerically computed. Equivalently, $\alpha_*$ can be expressed as the solution to the corresponding first-order optimality conditions, which offers an alternative to the current statement of Theorem \ref{thm:master}. In Section \ref{sec:sys_eq} we explicitly derive the system of stationary equations for the case of separable M-estimators. It is often possible to solve the stationary equations by means of simple iterative schemes (cf. Remark \ref{rem:num}). Furthermore, this alternative formulation might be easier to work with when deriving analytic properties of  $\alpha_*$. As an example, in Sections \ref{sec:no_reg}--\ref{sec:cone} for specific instances of M-estimators,  we start from the stationary equations, combine them in an appropriate way, and, derive insightful and practically useful properties, such as lower bounds  on $\alpha_*$, necessary conditions on the problem parameters such that $\alpha_*$ (correspondingly, the equated error) be bounded, etc..

\end{remark}



\begin{remark}[Optimal cost]
Although not stated as part of our main result, the analysis that leads to Theorem \ref{thm:master} further characterizes the limiting behavior of the optimal cost, say $\cost^\mn$, of the M-Estimator in \eqref{eq:genLASSO}.  Let $\mathrm{\Gamma}_*$  be the optimal cost of the (SPO), then
\begin{align}\label{eq:opt_cost}
\frac{1}{n} \min_{\x}\left\{ \loss(\y-\A\x)-\loss(\z) + \la( f(\x)-f(\x_0) ) \right\} \rP {\Gamma}_* 
\end{align}
\end{remark}
\begin{remark}[Asymptotics]
The statement of the theorem holds under an asymptotic setup in which the problem dimensions $m$ and $n$ grow to infinity. In Section \ref{sec:sim} we examine via simulations the validity of the prediction for finite values of $m$ and $n$. The results indicate that the asymptotic prediction becomes accurate for values of the problem parameters ranging on a few hundreds, and, in cases even on a few tens. 
\end{remark} 

\begin{remark}[Beyond Gaussian designs]~ Theorem \ref{thm:master} assumes that the entries of the design matrix $\A$ are iid Gaussian. In the proof of the result this assumption is crucial since the proof itself heavily relies on the CGMT, for which the gaussianity assumption is implicit. Yet, a few important remarks apply regarding the potential use of the results and the analysis of this work to cases beyond the gaussian design. Some examples include the cases of Elliptical Distributions \cite{karoui13} and Isotropically Random Orthogonal matrices to which the CGMT framework is still applicable. We discuss those in Section \ref{sec:conc}.
\end{remark}

\begin{remark}[Proof]\label{rem:CGMT}
The fundamental tool behind  our analysis is the \emph{Convex Gaussian Min-max Theorem (CGMT)}. The CGMT is a tight and extended version of a Gaussian comparison inequality due to Gordon \cite{GorLem}\footnote{Gordon's original result is often referred to as the Gaussian Min-max Theorem (GMT). It is classically used to establish non-asymptotic probabilistic lower bounds on the minimum singular value of Gaussian matrices (e.g. \cite{Ver}), and has a number of other applications in high-dimensional convex geometry (e.g. \cite{GorLem,Ledoux,Milman}). More recently,
 Vershynin and Rudelson  introduced  the idea of using the GMT (more specifically, a corollary of it known as the ``escape through the mesh" Lemma of \cite{GorLem}) to study the phase transition of $\ell_1$-minimization in compressed sensing\cite{rudelson2006sparse}. This idea was refined, clarified and extended to
general settings in the papers \cite{Cha,TroppEdge,stojnic2013upper}. See Section \ref{sec:prior} for details. 
%
%
  }, under additional convexity assumptions that arise in many practical applications. 
It associates with a primary optimization (PO) problem a simplified auxiliary optimization
(AO) problem from which we can tightly infer properties of the original (PO), such as the
optimal cost, the optimal solution, etc.. We manage to write the general $M$-estimator in \eqref{eq:genLASSO} as a (PO) problem so that CGMT is applicable. This leads to a corresponding (AO) problem. Next, we analyze the error of the (AO) and translate the result to the (PO) thanks to the CGMT. These ideas form the basic mechanics of the proof and are rather simple to explain; see Section \ref{sec:outline} for an outline.
The idea of combining Gordon's original result with convexity is attributed to Stojnic \cite{StoLASSO,stojnic2013meshes,stojnic2013upper}. Thrampoulidis, Oymak and Hassibi built and significantly extended on this idea arriving at the CGMT as it appears in \cite[Thm.~3]{COLT15}.
The version of the CGMT presented here in Theorem \ref{thm:CGMT} includes a further generalization which can significantly extend the scope and applicability of the Theorem beyond the squared error analysis of $M$-estimators (see Section \ref{sec:conc}).
\end{remark}

\begin{remark}[Why ``Master"?]
All existing results in the literature on the performance of specific instances of M-estimators can be seen as special cases of Theorem \ref{thm:master}. Beyond those, the theorem can be used to derive a wide range of novel results, including instances where the loss function and the regularizer may be non-smooth and non-separable, and where, the noise distribution may have unbounded moments. We discuss several examples in Section \ref{sec:sim}.
\end{remark}


\begin{remark}[Premises/Opportunities]
 Theorem \ref{thm:master} paves the way to answering optimality questions regarding the performance of M-estimators under different scenarios.
The first fundamental step in answering such optimality questions (see Section \ref{sec:contr} ) is characterizing the squared error in terms of the problem design parameters, i.e. $f,\ell,\la$ and $\delta$. And, of course, this is exactly what Theorem \ref{thm:master} achieves. Since the characterization differs from the corresponding results of classical statistics (where $n$ is considered fixed), the questions will not in general admit the same answers. In the high-dimensional regime, our knowledge on those issues is rather limited and there is an exciting potential for exploring new phenomena  and providing answers that are both of theoretical and of practical interest. We  provide a few preliminary results towards this direction in Section \ref{sec:sim}. 

\end{remark}
%



\section{Separable M-estimators}\label{sec:sep}


We specialize the general result of Section \ref{sec:res} to the popular case where the loss function $\loss$ and the regularizer $f$ are both separable, and, the noise vector and signal $\x_0$ both have entries iid. To make things concrete, assume\footnote{Note the slight abuse of notation here in using $f$ to denote both the vector-valued and scalar regularizer function.} 
$$\loss(\vb) = \sum_{j=1}^{m}\ell(\vb_j)
\qquad\text{ and }\qquad
\z_j\simiid p_Z,~ j=1,\ldots, m.
$$
$$
f(\x) = \sum_{i=1}^{n}f(x_i)
\qquad\text{ and }\qquad
{\x_0}_i\simiid p_x,~ i=1,\ldots, n.
$$
Henceforth, both $\ell$ and $f$ are proper closed convex functions. Also, it is further assumed 
\begin{align}
\ell(0)=0=\min_v\ell(v) \quad\text{ and }\quad f(0)=0.
\end{align}


%

\subsection{Satisfying Assumptions \ref{ass:tech} and \ref{ass:prop}}\label{sec:ass_sep}

To apply Theorem \ref{thm:master}, we first need to verify that Assumptions \ref{ass:tech} and \ref{ass:prop} hold for both the loss function and the noise distribution, and, for the regularizer and the signal distribution.

\subsubsection{Loss function and Noise Distribution}
In the separable case Assumptions \ref{ass:tech} and \ref{ass:prop} essentially translate to the following requirement on $\ell$ and $p_Z$:
\begin{align}\label{eq:ass_main}
\E\left[ 
|\ell_+^\prime( c G + Z)|^2
\right]<\infty,\quad \text{ for all } c\in\R.
\end{align}
where the expectation is over $Z\sim p_Z$ and $G\sim\Nn(0,1)$. This is shown in Lemma \ref{lem:ass_sep} below.

\begin{lem}[Expected Moreau envelope--Loss fcn]\label{lem:ass_sep}
If $\ell$ and $p_Z$ satisfy \eqref{eq:ass_main}, then, Assumptions \ref{ass:tech}(a) and  \ref{ass:prop}(b)-(d) hold with
\begin{align}\label{eq:sep_L}
L(c,\tau) = \Exp\left[ \env{\ell}{c G + Z}{\tau} - \ell(Z) \right].
\end{align}
\end{lem}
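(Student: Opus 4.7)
The lemma asserts four separate properties (Assumption~\ref{ass:tech}(a) and Assumption~\ref{ass:prop}(b)--(d)) for the expected Moreau envelope $L$, all under the single second-moment hypothesis \eqref{eq:ass_main}. My plan is to reduce each one to a one-dimensional statement using separability, then drive all of them off of a single two-sided pointwise bound that controls $\ell(u)-\env{\ell}{u}{\tau}$ in terms of $\ell^\prime_+(u)$.

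The workhorse estimate I would first establish is
\[
0\;\le\;\ell(u)-\env{\ell}{u}{\tau}\;\le\;\tfrac{\tau}{2}\bigl(\ell^\prime_+(u)\bigr)^2.
\]
The upper inequality comes for free by testing $v=u$ in the definition of the envelope. For the lower inequality, the sub-gradient inequality $\ell(v)\ge \ell(u)+s(v-u)$ (for any $s\in\partial\ell(u)$) bounds the $v$-infimum below by $\ell(u)-\tau s^2/2$; taking the infimum over $s$ yields the bound. Applying this at $u=cG+Z$, combined with the convexity-based Lipschitz inequality $|\ell(u)-\ell(v)|\le \max(\ell^\prime_+(u),\ell^\prime_+(v))|u-v|$ and Cauchy--Schwarz (using $\E G^2=1$), the hypothesis \eqref{eq:ass_main} delivers $\E|\env{\ell}{cG+Z}{\tau}-\ell(Z)|<\infty$. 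Since by separability $\env{\loss}{c\g+\z}{\tau}-\loss(\z)=\sum_j\{\env{\ell}{cG_j+Z_j}{\tau}-\ell(Z_j)\}$ is an iid sum, the WLLN gives Assumption~\ref{ass:tech}(a) with $L(c,\tau)$ as defined in \eqref{eq:sep_L}. Joint convexity of $L$ (needed implicitly later) is inherited from joint convexity of the Moreau envelope (Lemma~\ref{lem:Mconvex}(ii)) through the expectation.

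For Assumption~\ref{ass:prop}(b), the limits $\lim_{\tau\to 0^+}L(\alpha,\tau)<\infty$ and $\lim_{\tau\to 0^+}L(0,\tau)=0$ follow from pointwise convergence $\env{\ell}{u}{\tau}\to \ell(u)$ (Lemma~\ref{lem:Mconvex}(ix)) and DCT with the dominant $\frac{\tau_0}{2}(\ell^\prime_+(cG+Z))^2+|\ell(cG+Z)-\ell(Z)|$ on $\tau\in(0,\tau_0]$, which is integrable by the argument above. For $L_{2,+}(0,0)$, monotonicity of $\env$ in $\tau$ gives $L(0,\cdot)$ non-increasing so both one-sided derivatives are $\le 0$; and Danskin's theorem yields the explicit formula $L_{2}(0,\tau)=-\frac12\E[s_\tau^2]$ with $s_\tau=(Z-\prox{\ell}{Z}{\tau})/\tau\in\partial\ell(\prox{\ell}{Z}{\tau})$. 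Monotonicity of $\partial\ell$ together with $|\prox{\ell}{Z}{\tau}|\le|Z|$ (the prox is $1$-Lipschitz and fixes $0$) gives $|s_\tau|\le\ell^\prime_+(Z)$, so $|L_{2,\pm}(0,\tau)|\le\frac12\E[(\ell^\prime_+(Z))^2]<\infty$ uniformly in $\tau$, proving $L_{2,+}(0,0)\in(-\infty,0]$. For Assumption~\ref{ass:prop}(c), the WLLN (or standard divergence lemma when $\E\ell(Z)=\infty$) gives $\frac1m\loss(\z)\rP L_0:=\E\ell(Z)\in[0,\infty]$; monotone convergence applied to $\env{\ell}{u}{\tau}\searrow \inf\ell=0$ as $\tau\nearrow\infty$ yields $L(c,\tau)\to -L_0$, and monotonicity of $L$ in $\tau$ gives $L_0\ge -L(c,\tau^\prime)$.

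The delicate part---and the main obstacle---is Assumption~\ref{ass:prop}(d), the sublinear growth $L(c,\tau)/\tau\to 0$ when $L_0=+\infty$. A direct DCT on the difference $\env-\ell(Z)$ is useless because $\ell(Z)$ is itself non-integrable, and the pointwise bound from Step~1 only forces $L_2$ to be uniformly bounded, not vanishing. My plan is to sidestep this by controlling the derivative instead of $L$. The upper bound $L(c,\tau)\le c^2/(2\tau)$ (take $v=Z$ in the infimum defining the envelope) immediately gives $\limsup L(c,\tau)/\tau\le 0$. For the matching liminf, reuse Danskin: $L_2(c,\tau)=-\frac12\E[s_\tau^2]$ with $s_\tau=(cG+Z-\prox{\ell}{cG+Z}{\tau})/\tau$. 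As $\tau\to\infty$, the prox converges pointwise to $\arg\min\ell=0$, so $s_\tau\to 0$ a.s.; the uniform dominant $|s_\tau|\le\ell^\prime_+(cG+Z)$ has finite second moment by \eqref{eq:ass_main}, so DCT forces $\E[s_\tau^2]\to 0$, i.e.\ $L_2(c,\tau)\to 0$. Convexity of $L(c,\cdot)$ in $\tau$ (from joint convexity of the Moreau envelope) then implies $L(c,\tau)/\tau\to\lim_{\tau\to\infty}L_2(c,\tau)=0$. The conceptual key here is that the envelope/Danskin representation converts a non-integrability obstruction on $L$ itself into an integrability statement about the \emph{derivative}, where the hypothesis \eqref{eq:ass_main} is precisely designed to apply.
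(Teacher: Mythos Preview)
Your proposal is correct and follows essentially the same route as the paper: both reduce everything to the one-dimensional envelope via separability and the WLLN, both control integrability through the key domination $|s_\tau|\le \ell_+'(cG+Z)$ (your ``workhorse'' two-sided bound is equivalent to the paper's mean-value-theorem argument on $\tau$), and both prove~(d) by showing $L_2(c,\tau)\to 0$ via DCT on the derivative with that same dominant. The only cosmetic differences are that the paper invokes L'H\^opital where you use convexity of $L(c,\cdot)$ to pass from $L_2\to 0$ to $L/\tau\to 0$, and the paper uses DCT for~(c) where monotone convergence (as you suggest) is slightly cleaner when $L_0=+\infty$.
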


The condition in \eqref{eq:ass_main} is very primitive and is, in general, easy to check. It essentially guarantees that $\env{\ell}{c G + Z}{\tau} - \ell(Z)$ is absolutely integrable (for a proof see Appendix \ref{app:ass_sep}). Hence, $L$ in Lemma \ref{lem:ass_sep} is well-defined and it satisfies Assumption \ref{ass:tech}(a) as a result of applying the WLLN.  A few examples for which \eqref{eq:ass_main} is satisfied include:
\begin{enumerate}
\item $\ell(v)=v^2$ and $\E Z^2<\infty$,
\item \eqref{eq:ass_main} is trivially
satisfied if $\ell(v)=|v|$ for any noise distribution $p_Z$,
\item Huber-loss and $Z\sim \mathrm{Cauchy}(0,1)$. 
\end{enumerate}

Apart from \eqref{eq:ass_main}, we also need to satisfy Assumption \ref{ass:tech}(b), which here translates to the following requirement:
\begin{align}\label{eq:ass_main1}
\Exp Z^2<\infty \qquad \text{{ or }} \qquad \sup_{v\in\R}|\ell_+^\prime(v)|
<\infty.
\end{align}
The second condition above on boundedness of the sub-differential is equivalent to $\sup_{v}\frac{|\ell(v)|}{|v|}<\infty$. That is, if $Z$ has unbounded second moments then $\ell$ needs to grow to infinity at most linearly, e.g. $|\cdot|$, Huber-loss, etc.


\subsubsection{Regularizer and Signal Distribution}
Not surprisingly, following the results of Section  \ref{sec:ass_sep}, the required condition on $f$ and $p_x$ becomes
\begin{align}\label{eq:ass_main2}
\E\left[ 
|f_+^\prime( c H + X_0)|^2
\right]<\infty,\quad \text{ for all } c\in\R.
\end{align}
where the expectation is over $X_0\sim p_x$ and $H\sim\Nn(0,1)$.  Additionally,   the following mild assumptions are required:
 \begin{align}\label{eq:ass_main21}
\exists~ x_+>0,~x_-<0~ \text{ such that } ~0\leq f(x_{\pm})<\infty \qquad\text{ and }\qquad \E X_0^2 <\infty.
 \end{align}


\begin{lem}[Expected Moreau Envelope--Regularizer fcn]\label{lem:ass_sep2}
If $f$ and $p_x$ satisfy \eqref{eq:ass_main2} and \eqref{eq:ass_main21}, then, Assumptions \ref{ass:tech}(a) and \ref{ass:prop}(a) hold with
\begin{align}\label{eq:sep_F}
F(c,\tau) = \Exp\left[ \env{f}{c H + X_0}{\tau} - f(X_0) \right].
\end{align}
\end{lem}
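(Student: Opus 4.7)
The proof closely parallels that of Lemma \ref{lem:ass_sep}. Separability of the regularizer and the iid assumption on the coordinates of $\h$ and $\x_0$ reduce $\frac{1}{n}\{\env{f}{c\h + \x_0}{\tau} - f(\x_0)\}$ to the empirical mean $\frac{1}{n}\sum_{i=1}^n \{\env{f}{cH_i + X_{0,i}}{\tau} - f(X_{0,i})\}$ of iid scalar random variables. A direct application of the weak law of large numbers then yields convergence in probability to $F(c,\tau) := \E[\env{f}{cH+X_0}{\tau} - f(X_0)]$, provided this expectation is finite. Joint continuity of $F$ on $\R \times (0,\infty)$ will then be inherited from joint continuity of the scalar Moreau envelope (Lemma \ref{lem:Mconvex}) via dominated convergence with the integrable majorant constructed in the next step.

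The principal technical task is therefore to certify integrability of $\env{f}{cH+X_0}{\tau} - f(X_0)$ under hypotheses \eqref{eq:ass_main2} and \eqref{eq:ass_main21}. The plan is to split
$$\env{f}{cH+X_0}{\tau} - f(X_0) = \bigl(\env{f}{cH+X_0}{\tau} - f(cH+X_0)\bigr) + \bigl(f(cH+X_0) - f(X_0)\bigr),$$
and bound each piece. For the first piece, the standard envelope bound $|\env{f}{u}{\tau} - f(u)| \le \tfrac{\tau}{2}|f_+'(u)|^2$ (which follows by inserting a subgradient $s\in\partial f(u)$ with $|s|\le|f_+'(u)|$ into the convexity lower bound $f(v)\ge f(u)+s(v-u)$ and minimizing over $v$) is directly integrable under \eqref{eq:ass_main2}. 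For the second piece, the convexity estimate $|f(u) - f(X_0)| \le |u - X_0|\max(|f_+'(u)|,|f_+'(X_0)|)$ with $u = cH + X_0$, followed by Cauchy--Schwarz, reduces integrability to \eqref{eq:ass_main2} together with $\E X_0^2 < \infty$. These two bounds combine to yield the desired integrable dominating function, and this step is the technical heart of the proof and mirrors the corresponding argument in Lemma \ref{lem:ass_sep}.

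Assumption \ref{ass:prop}(a) is then verified in two parts. For $\lim_{\tau \to 0^+}F(\tau,\tau) = 0$, the pointwise convergence $\env{f}{\tau H + X_0}{\tau} \to f(X_0)$ as $\tau \to 0^+$ is a standard continuity property of the Moreau envelope (Lemma \ref{lem:Mconvex}), valid almost surely because the hypotheses force $\dom f = \R$, on which $f$ is continuous. Dominated convergence with the majorant from the previous step then closes the argument. For the growth condition, fix any $0 < \epsilon < \min(x_+, -x_-)$, so that $f(\pm\epsilon)$ are both finite by convexity and \eqref{eq:ass_main21}. Plugging $v = \epsilon\,\sign(u)$ into the definition of the envelope yields, for $|u| \ge \epsilon$,
$$\env{f}{u}{\tau} \;\le\; \frac{u^2}{2\tau} - \frac{\epsilon|u|}{\tau} + \frac{\epsilon^2}{2\tau} + \max\{f(\epsilon),\,f(-\epsilon)\}.$$
Substituting $u = cH + X_0$, taking expectations, and using $\E(cH+X_0)^2 = c^2 + \E X_0^2$, one arrives at
$$\frac{c^2}{2\tau} - F(c,\tau) \;\ge\; \frac{\epsilon}{\tau}\,\E\bigl[\,|cH+X_0|\,\mathbf{1}\{|cH+X_0| \ge \epsilon\}\bigr] \;-\; C_\epsilon,$$
for a constant $C_\epsilon$ independent of $c$. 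Since $\E|cH+X_0| \ge c\,\E|H| - \E|X_0| \to \infty$ as $c \to \infty$, the right-hand side diverges.

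The main obstacle I anticipate is the envelope--value bound $|\env{f}{u}{\tau} - f(u)| \le \tfrac{\tau}{2}|f_+'(u)|^2$ needed in the integrability step: although standard in convex analysis, pinning down the form required here with the right-sided subgradient $f_+'$ (rather than an arbitrary subgradient) calls for a careful choice of the minimizing $s\in\partial f(u)$ and a clean interaction with the envelope definition. Once this bound is in hand, continuity of $F$, the passage $\tau\to 0^+$ via DCT, and the growth bound via $x_\pm$ all proceed along routine lines.
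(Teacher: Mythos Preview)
Your proof is correct and, for Assumption~\ref{ass:tech}(a) and the limit $\lim_{\tau\to 0^+}F(\tau,\tau)=0$, follows essentially the same route as the paper (WLLN after establishing integrability via the subgradient-squared bound, then dominated convergence using Lemma~\ref{lem:Mconvex}(ix)). Your envelope bound $|\env{f}{u}{\tau}-f(u)|\le\tfrac{\tau}{2}|f_+'(u)|^2$ is in fact a cleaner packaging of the paper's mean-value-theorem argument via $\partial_\tau e_f$; both reduce to the same $L^2$ moment condition \eqref{eq:ass_main2}. One small simplification: continuity of $F$ need not be argued via dominated convergence at all, since $F$ is jointly convex (as a pointwise limit of the convex envelopes), and convex functions are automatically continuous on the interior of their domain (cf.\ Remark~\ref{rem:LF_convex}).

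Where you genuinely diverge from the paper is in the growth condition $\lim_{c\to\infty}\{\tfrac{c^2}{2\tau}-F(c,\tau)\}=+\infty$. The paper rewrites $\tfrac{u^2}{2\tau}-\env{f}{u}{\tau}=\env{f^*}{u/\tau}{1/\tau}$ via the Moreau decomposition (Lemma~\ref{lem:Moreau}(ii)), applies Jensen's inequality to the convex conjugate envelope, and then shows $\env{f^*}{c/\tau}{1/\tau}\to\infty$ using that finiteness of $f$ at $x_\pm$ forces $f^*(y)\to\infty$ as $|y|\to\infty$. Your approach bypasses the conjugate entirely: plugging the test point $v=\epsilon\,\sign(u)$ directly into the envelope definition yields the linear lower bound $\tfrac{\epsilon}{\tau}|u|$ (up to constants) on $\tfrac{u^2}{2\tau}-\env{f}{u}{\tau}$, after which $\E|cH+X_0|\to\infty$ finishes the job. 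Your argument is more elementary and makes the role of the hypothesis $f(x_\pm)<\infty$ more transparent; the paper's route is more structural and identifies the growth with divergence of the conjugate, which is conceptually natural but requires more machinery. Both are valid.
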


\subsection{The Expected Moreau Envelope}\label{sec:EME}

If conditions \eqref{eq:ass_main}, \eqref{eq:ass_main1} and \eqref{eq:ass_main2} are satisfied, then Theorem \ref{thm:master} is applicable with $L$ and $F$ given as in \eqref{eq:sep_L} and \eqref{eq:sep_F}, respectively. We call those functions, the \emph{Expected Moreau envelopes}. The important role they play in determining the error performance of the corresponding M-estimator is apparent from Theorem \ref{thm:master}. In this section, we discuss two key features that they possess, namely, \emph{smoothness} and \emph{strict convexity}. 

\begin{lem}[Smoothness]\label{lem:smooth_ell}
Suppose $\ell$ is a closed proper convex function and $p_Z$ a noise density such that   \eqref{eq:ass_main} holds.
Then, the function $
L(c,\tau):=
\E\left[\env{\ell}{c G+ Z}{\tau} - \ell(Z)\right]
$
is differentiable in $\R\times\R_{>0}$ with 
$$
\frac{\partial{L}}{{\partial}{c}} = \E\left[ \mathrm{e}_\ell^\prime\left(c G + Z;{\tau}\right) G \right] \quad\text{ and }\quad\frac{\partial{L}}{{\partial}{\tau}} = -\frac{1}{2}\E\left[ \left(\mathrm{e}_\ell^\prime\left(c G + Z;{\tau}\right) \right)^2 \right].
$$
\end{lem}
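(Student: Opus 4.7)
The plan is to differentiate under the expectation in each variable separately via dominated convergence, and then to upgrade the existence of partial derivatives to full differentiability by invoking the joint convexity of $L$ established in Remark \ref{rem:LF_convex} together with \cite[Thm.~25.2]{Roc70}. The analytic crux is a single pointwise bound on the Moreau derivative that is uniform in the envelope parameter $\tau$.

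Concretely, I would first record the standard pointwise identities $\mathrm{e}_\ell'(u;\tau)=(u-p_{u,\tau})/\tau\in\partial\ell(p_{u,\tau})$ and $\partial\env{\ell}{u}{\tau}/\partial\tau = -\tfrac12\mathrm{e}_\ell'(u;\tau)^2$, where $p_{u,\tau}:=\prox{\ell}{u}{\tau}$; the second is a short envelope-theorem consequence. The key inequality is then
\begin{equation*}
|\mathrm{e}_\ell'(u;\tau)|\;\leq\;\ell_+'(u),\qquad \forall(u,\tau)\in\R\times\R_{>0}.
\end{equation*}
Because $\ell$ is convex and attains its minimum at $0$, the prox $p_{u,\tau}$ always lies in the closed interval with endpoints $0$ and $u$: pushing the minimizer past either endpoint strictly increases both terms of the envelope objective. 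Monotonicity of $\partial\ell$ then forces any subgradient at $p_{u,\tau}$---in particular, $\mathrm{e}_\ell'(u;\tau)$ itself---to satisfy $|\mathrm{e}_\ell'(u;\tau)|\leq \ell_+'(p_{u,\tau})\leq \ell_+'(u)$, where the last step uses that $\ell_+'$ is nondecreasing as one moves away from $0$ on either side of the real line.

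Equipped with this bound, the $c$-derivative is handled as follows. For fixed $(c,\tau)$ and $|h|\leq\delta$, the mean value theorem applied to the $C^1$ map $u\mapsto\env{\ell}{u}{\tau}$ writes the difference quotient as $G\cdot\mathrm{e}_\ell'(\xi_h;\tau)$ for some $\xi_h$ lying between $cG+Z$ and $(c+h)G+Z$, and the pointwise bound combined with the monotonicity of $\ell_+'$ yields
\begin{equation*}
|G\cdot\mathrm{e}_\ell'(\xi_h;\tau)|\;\leq\;|G|\,\bigl(\ell_+'((c-\delta)G+Z)+\ell_+'((c+\delta)G+Z)\bigr),
\end{equation*}
which is integrable by Cauchy--Schwarz and hypothesis \eqref{eq:ass_main}. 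Dominated convergence then delivers $\partial L/\partial c=\E[\mathrm{e}_\ell'(cG+Z;\tau)\, G]$. The $\tau$-derivative is even simpler: its pointwise value $-\tfrac12 \mathrm{e}_\ell'(cG+Z;\tau)^2$ is already dominated, uniformly in $\tau>0$, by $\tfrac12\ell_+'(cG+Z)^2$, so \eqref{eq:ass_main} together with an analogous MVT + DCT argument supplies the claimed expression.

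Once both partial derivatives are shown to exist at every $(c,\tau)\in\R\times\R_{>0}$, the joint convexity of $L$ recalled in Remark \ref{rem:LF_convex} combined with \cite[Thm.~25.2]{Roc70} upgrades coordinate-wise differentiability to differentiability on the open domain. The main obstacle throughout is the $\tau$-uniform domination step: it is precisely what forces the introduction of the key bound $|\mathrm{e}_\ell'(u;\tau)|\leq\ell_+'(u)$, and it is this $\tau$-free control that makes the clean second-moment hypothesis \eqref{eq:ass_main} sufficient, rather than a stronger variant involving $\tau$.
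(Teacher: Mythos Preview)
Your proposal is correct and follows essentially the same route as the paper: both arguments rest on the $\tau$-uniform pointwise bound $|\mathrm{e}_\ell'(u;\tau)|\leq \ell_+'(u)$ (this is exactly the paper's Lemma~\ref{lem:Mconvex}(viii), proved via the same ``$\prox{\ell}{u}{\tau}$ lies between $0$ and $u$'' observation you give), and then pass the derivative inside the expectation by dominated convergence using assumption~\eqref{eq:ass_main}. Your write-up is a bit more explicit than the paper's in two places---you spell out the mean-value step needed to dominate the difference quotient over a $\delta$-neighborhood in $c$, and you invoke \cite[Thm.~25.2]{Roc70} together with the joint convexity of $L$ to upgrade the partials to full differentiability---whereas the paper simply asserts that DCT applies once the derivative is shown integrable at each point; but the underlying mechanism is the same.
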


\begin{remark}
Note that $L$ is smooth, regardless of any non-smoothness of $\ell$. This is a well-known fact about Moreau envelope approximations, and also, one of the primal reasons behind the important role those functions play  in convex analysis \cite{RocVar}.  The property is naturally inherited to the \emph{Expected} Moreau envelopes as revealed by the lemma above.
\end{remark}

\begin{lem}[Strict Convexity]
\label{lem:strict_ell}
Suppose $\ell$ is a closed proper convex function and $p_Z$ a noise density such that   \eqref{eq:ass_main} holds and the following are satisfied:
\begin{enumerate}[(a)]
\item Either there exists $x\in\R$ at which $\ell$ is \emph{not} differentiable, or, there exists interval $\Ic\subset\R$ where $\ell$ is differentiable with a strictly increasing derivative,
\item $\mathrm{Var}(Z)\neq0$ \footnote{We require that there exist at least two values of $z\in\R$ for which $p_Z(z)>0$. In particular, there is \emph{no} requirement that $\mathrm{Var}(Z)$ be defined, e.g. Cauchy distribution is allowed.}, and, at each $z\in\R$, $p_Z(z)$ is either a Dirac delta function or it is continuous.
\end{enumerate}
Then,  $
L(c,\tau):=
\E\left[\env{\ell}{c G+ Z}{\tau} - \ell(Z)\right]
$ is jointly strictly convex in $\R_{>0}\times\R_{>0}$.
\end{lem}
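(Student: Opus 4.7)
The plan is to establish joint strict convexity by contradiction, exploiting the rank-one structure of the pointwise Hessian of the Moreau envelope together with the independence of $G$ from $Z$ and the non-degeneracy of $Z$ from assumption (b).

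\textbf{Setup and reduction.} The pointwise Moreau envelope $M(x,\tau):=\env{\ell}{x}{\tau}=\inf_v\bigl[(x-v)^2/(2\tau)+\ell(v)\bigr]$ is jointly convex on $\R\times\R_{>0}$ (the $v$-integrand is jointly convex by the perspective construction applied to $(x-v)^2/(2\tau)$, and partial minimization in $v$ preserves joint convexity), hence $L$ is jointly convex. Suppose for contradiction that $L$ is \emph{not} strictly convex on $\R_{>0}\times\R_{>0}$: pick distinct $(c_1,\tau_1),(c_2,\tau_2)\in\R_{>0}^2$ and $\lambda\in(0,1)$ at which the convexity inequality is tight. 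Since the pointwise convexity of $M$ along the random segment $[(c_1G+Z,\tau_1),(c_2G+Z,\tau_2)]$ integrates to the linear interpolation, the inequality is an equality almost surely, and midpoint-equality of a convex function of $t\in[0,1]$ extends to affineness on the whole interval. Thus $M$ is affine along this random segment a.s. As $M$ is $C^1$ on $\R\times\R_{>0}$, the gradient $\nabla M=(u^*,-(u^*)^2/2)$, with $u^*:=(x-v^*)/\tau$ and $v^*:=\prox{\ell}{x}{\tau}$, is constant along the segment; denote its common value by $\bar u(G,Z)$.

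\textbf{Contradiction via assumption (a).} In the \emph{kink} case $\partial\ell(v_0)=[a,b]$ with $a<b$, I would work on the event $E=\{c_1G+Z\in(v_0+\tau_1 a,\,v_0+\tau_1 b)\}$: $P(E)>0$ because $c_1>0$ makes $c_1G+Z$ have a density positive on all of $\R$, and on $E$ one has $v_1^*=v_0$ and $\bar u\in(a,b)$. Monotonicity of $\partial\ell$ then forces $v_2^*=v_0$ also, and equating $u_i^*=(c_iG+Z-v_0)/\tau_i$ gives
\[
(c_1\tau_2-c_2\tau_1)\,G\;+\;(\tau_2-\tau_1)\,(Z-v_0)\;=\;0\quad\text{a.s.\ on }E.
\]
If $c_1\tau_2\neq c_2\tau_1$, this is a deterministic affine relation between $G$ and $Z$ on an event of positive probability, contradicting their independence together with the continuity of $G$. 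If $c_1\tau_2=c_2\tau_1$ (which forces $\tau_1\neq\tau_2$) the relation reduces to $Z=v_0$ on $E$; Fubini applied to any support point $z'\neq v_0$ of $Z$ (guaranteed by (b)) yields a positive-probability violation. In the \emph{smooth strictly convex} case ($\ell'$ strictly increasing on $\Ic$), I would instead localize to $E=\{v^*(c_1G+Z,\tau_1)\in\Ic\}$ (of positive probability by the same density argument). Invertibility of $\ell'|_\Ic$ forces $v_1^*=v_2^*$ after restricting $\bar u$ to the interior of $\ell'(\Ic)$, and the marginal sub-cases $\tau_1=\tau_2$ and $c_1=c_2$ are handled directly (using $P(G\neq0)=1$ and the positive density of $cG+Z$ respectively). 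In the remaining regime a short manipulation yields
\[
\ell'(\xi G+Z)=\mu G\ \text{a.s.\ on }E,\qquad \mu:=\tfrac{c_1-c_2}{\tau_1-\tau_2},\ \ \xi:=\tfrac{c_2\tau_1-c_1\tau_2}{\tau_1-\tau_2}.
\]
Slicing by $Z=z_j$ for two distinct support points $z_1\neq z_2$ of $Z$ (assumption (b)), each slice is a positive-length interval of $g$'s on which $\xi g+z_j$ sweeps out $\Ic$ and the identity requires $\ell'(w)=(\mu/\xi)w-(\mu/\xi)z_j$ on $\Ic$; since $\ell'$ cannot coincide with two distinct affine functions on $\Ic$ when $\mu\neq0$, we obtain the desired contradiction.

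\textbf{Main obstacle.} The principal difficulty is the bookkeeping around the pointwise rank-one degeneracy of $\nabla^2 M$: its null direction $(u^*,1)$ translates, in $(c,\tau)$-space, into the critical ``bad'' configuration $c_1\tau_2=c_2\tau_1$ (parameters on a common ray from the origin), along which the Moreau envelope is $1$-homogeneous in $(x-v_0,\tau)$ and $L$ would genuinely fail to be strictly convex if $Z$ were degenerate at $v_0$. Assumption (b), that $Z$ has at least two distinct support points, is precisely what breaks this homogeneity of $L$ in expectation, and carefully tracking its role across the Fubini decomposition, across atomic versus continuous components of $p_Z$, and across each parameter sub-case ($\tau_1=\tau_2$, $c_1=c_2$, $c_1\tau_2=c_2\tau_1$) is where the technical work lies.
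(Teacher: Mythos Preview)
Your approach is essentially the paper's: both reduce strict convexity of $L$ to the impossibility of $\ell'_{c_1G+Z,\tau_1}=\ell'_{c_2G+Z,\tau_2}$ holding almost surely, and both finish by slicing at two distinct support points of $Z$ under the same kink/strictly-monotone case split. The only methodological difference is the initial reduction: the paper first proves the quantitative pointwise bound $\Delta(x,y)\ge(\tau+y/2)\bigl(\ell'_{\chi+x,\tau+y}-\ell'_{\chi,\tau}\bigr)^2$ (its Lemma on Moreau-envelope properties) and then shows the squared difference has positive expectation, whereas you argue contrapositively via ``midpoint equality $\Rightarrow$ a.s.\ affineness $\Rightarrow$ constant gradient''.

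One imprecision in your smooth case: you conclude that $\ell'$ agrees with two different affine functions ``on $\Ic$'', but your slicing only yields this on subintervals $W_{z_j}=\xi E_{z_j}+z_j$ that need not overlap a priori, so ``two distinct affine functions on $\Ic$'' is not yet a contradiction. The clean closure (which your setup already contains) is to observe that on each slice you have, for $x=c_1g+z_j$ ranging over the \emph{same} interval $J=\{x:\prox{\ell}{x}{\tau_1}\in\Ic\}$, the identity $\prox{\ell}{x}{\tau_1}=(\xi/c_1)\,x+(1-\xi/c_1)\,z_j$; since the left side does not depend on $z_j$, comparing $j=1,2$ forces $\xi=c_1$, hence $c_1=c_2$. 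The paper packages this step (and its kink analogue) as a separate auxiliary lemma; also make explicit the $\xi=0$ sub-case, where $\ell'(z_j)=\mu g$ on a nondegenerate $g$-interval with $\mu\neq 0$ gives the contradiction directly.
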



\begin{remark}
The function $L$ is strictly convex, without requiring any strong or strict convexity assumption on $\ell$. Interestingly, this property is not in general true for Moreau envelope approximations, but, it turns out to be the case for the \emph{Expected} Moreau envelope $L$. The fact that the latter further involves taking an expectation over $cG+Z$, with $G$ having a nonzero density on the entire real line, turns out to be critical. 
\end{remark}

\begin{remark}[Strict convexity$\implies$Uniqueness of $\alpha_*$]\label{rem:uni}
The strict convexity property of $L$ is critical because it guarantees uniqueness of the minimizer $\alpha_*$ of the (SPO) problem in Theorem \ref{thm:master}. This implication is proved in Lemma \ref{lem:uni} in Appendix \ref{app:uni}.
\end{remark}

\subsection{Error Prediction}
We are now ready to state the main result of this section which characterizes the squared error of separable M-estimators. This is essentially a corollary  of Theorem \ref{thm:master}.

\begin{thm}[Separable M-estimators]\label{thm:sep}
Suppose $\ell$ and $p_Z$ satisfy \eqref{eq:ass_main}, \eqref{eq:ass_main1}, and, the two conditions of Lemma \ref{lem:strict_ell}. Further assume that $f,p_x$ satisfy \eqref{eq:ass_main2} and \eqref{eq:ass_main21}. Let $\hat\x$ be any minimizer of the separable M-estimator and consider the (SPO) problem in \eqref{eq:AO_det_thm} with $L$ and $F$ given as in \eqref{eq:sep_L} and \eqref{eq:sep_F}, respectively. If the set of minimizers of the (SPO) over $\alpha$ is bounded, then there is a unique such minimizer $\alpha_*$ for which it holds in probability that
$$
\lim_{n\rightarrow\infty}\frac{1}{n}\|\hat\x-\x_0\|_2^2 = \alpha_*^2.
$$
\end{thm}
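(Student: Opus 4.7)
My plan is to obtain Theorem \ref{thm:sep} as a direct corollary of the Master Theorem \ref{thm:master}, by verifying in the separable setting each of the hypotheses of that theorem with the concrete choices \eqref{eq:sep_L}--\eqref{eq:sep_F} of the summary functionals. There are three distinct things to check: (i) Assumption \ref{ass:tech} on $L$ and $F$; (ii) Assumption \ref{ass:prop} on their analytic behavior; and (iii) uniqueness of the minimizer $\alpha_*$ of the (SPO) problem \eqref{eq:AO_det_thm}.

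For (i) and (ii), I would simply invoke Lemmas \ref{lem:ass_sep} and \ref{lem:ass_sep2}. On the loss side, the primitive square-integrability condition \eqref{eq:ass_main} together with the growth/moment dichotomy \eqref{eq:ass_main1} gives via Lemma \ref{lem:ass_sep} precisely Assumption \ref{ass:tech}(a) (with $L$ as the pointwise expectation) and Assumption \ref{ass:prop}(b)--(d); Assumption \ref{ass:tech}(b) is immediate from \eqref{eq:ass_main1} since either $\|\z\|_2/\sqrt m\to (\E Z^2)^{1/2}$ in probability or $\partial\ell$ is uniformly bounded, whence $\partial\loss$ is too with norm $O(\sqrt m)$. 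On the regularizer side, Lemma \ref{lem:ass_sep2} similarly converts the primitive conditions \eqref{eq:ass_main2}--\eqref{eq:ass_main21} into Assumption \ref{ass:tech}(a) and Assumption \ref{ass:prop}(a) with $F$ as given. This collapses all generic hypotheses of Theorem \ref{thm:master} onto the clean primitive assumptions stated in Theorem \ref{thm:sep}.

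The remaining task, and the one I anticipate to be the main technical obstacle, is (iii): upgrading the hypothesis that the set of minimizers of the (SPO) over $\alpha$ is bounded into uniqueness of $\alpha_*$. For this I would rely on Lemma \ref{lem:strict_ell}: the two strict-convexity conditions (a) and (b) of Theorem \ref{thm:sep} imply that $L(c,\tau)$ is jointly strictly convex on $\R_{>0}\times\R_{>0}$. This property would then be combined with Lemma \ref{lem:uni} (invoked in Remark \ref{rem:uni}), which is the dedicated uniqueness statement in the appendix. The subtlety is that strict convexity only holds in the second argument of $L$ for $\tau>0$, while the (SPO) couples $\alpha$ and $\taug,\beta,\tauh$ nonlinearly through $L(\alpha,\taug/\beta)$; thus uniqueness of $\alpha_*$ does not follow from strict convexity alone, and is the reason the extra boundedness assumption on the minimizer set is imposed. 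Roughly, given two distinct minimizers, one partially optimizes over the dual variables $(\beta,\tauh)$ to produce a reduced objective in $(\alpha,\taug)$ that inherits strict convexity from $L$, and combines this with boundedness to rule out a continuum of optima.

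Once (i)--(iii) are in hand, Theorem \ref{thm:master} applies verbatim with the separable $L$ and $F$, and the conclusion $\tfrac{1}{n}\|\hat\x-\x_0\|_2^2\to \alpha_*^2$ in probability is immediate. Thus the proof reduces to a bookkeeping exercise linking the separable-case lemmas to the hypotheses of the master result, with the delicate uniqueness reduction being the only non-routine step.
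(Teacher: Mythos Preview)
Your proposal is correct and follows essentially the same approach as the paper: it derives the theorem as a corollary of the Master Theorem \ref{thm:master} by invoking Lemmas \ref{lem:ass_sep} and \ref{lem:ass_sep2} to verify Assumptions \ref{ass:tech} and \ref{ass:prop}, and then uses strict convexity of $L$ (Lemma \ref{lem:strict_ell}) together with the boundedness hypothesis via Lemma \ref{lem:uni} to obtain uniqueness of $\alpha_*$. Your informal description of the uniqueness argument (partially optimizing over the inner variables so that the reduced objective inherits strict convexity from $L$, with boundedness used to rule out the degenerate case $\beta_*=0$) matches the structure of the paper's Lemma \ref{lem:uni} and its auxiliary Lemmas \ref{lem:ptwstrict_inf}, \ref{lem:ptwstrict}, \ref{lem:beta=0}.
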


\vp
\begin{remark}[Boundedness]
Applying Theorem \ref{thm:sep} requires a few primitive and easy to check assumptions on $\ell,Z,f$ and $X_0$. In contrast to the general case in Theorem \ref{thm:master}, here, the uniqueness of $\alpha_*$ is guaranteed if the set of minimizers of the (SPO) over $\alpha$ is bounded.  The boundedness condition is essentially in one to one correspondence with the squared error of the M-estimator being (stochastically) bounded or not. We expect the boundedness assumption, which is generic in nature, to translate to necessary and sufficient primitive conditions on $\ell,f, p_Z, p_x$ and $\delta$. For example, in Remark \ref{rem:stable_f=0} we show that in the case of un-regularized M-estimators, a necessary such condition is that the normalized number of measurements be larger than 1, i.e. $\delta>1$ 
\footnote{ Besides, in Remark \ref{rem:stable_cone}, we show that with appropriate regularization, the necessary condition on the number of measurements becomes $\delta>\overline{D}_{f,\x_0}$, where $\overline{D}_{f,\x_0}$ is a (normalized) summary functional of $f$ and $\x_0$, which is geometric in nature and can in general be strictly less than one. In particular, this means that with an appropriate regularizer the signal $\x_0$ can be robustly estimated with a number of measurements that is less than the dimension of the signal. Of course, this  is one of the fundamental results in the compressive sensing literature. In particular, $\overline{D}_{f,\x_0}$ coincides with the phase-transition threshold of noiseless compressed sensing \cite{Cha,TroppEdge}. }. 
Identifying such conditions  that would guarantee bounded error is an important design issue, since it provides guarantees and guidelines on how the loss function, the regularizer and the number of measurements ought to be chosen. In the general case, this remains an open question. We expect that Theorem \ref{thm:sep} itself and the proof ideas behind it (in particular, see Lemma \ref{lem:convergence}(b)) can be used  to answer this question. Since this is not the main focus of the paper, we leave the rest for future work.
%
%
%
\end{remark}

\subsubsection{As a system of nonlinear equations}\label{sec:sys_eq}
Theorem \ref{thm:sep} predicts the error of the M-estimator as the optimizer $\alpha_*$ to a convex-concave optimization problem with four optimization variables. Equivalently, $\alpha_*$ can be expressed via the first-order optimality conditions (stationary equations) corresponding to this optimization. Recall from Lemma \ref{lem:smooth_ell} that $L$ and $F$ are differentiable (irrespective of smoothness of $\ell$ and $f$). The error of the M-estimator is then the unique $\alpha_*\geq 0$ for which there exist $\taug_*\geq0,\beta_*\geq 0$ and $\tauh_*\geq0$ satisfying
\begin{align}\label{eq:4eq_0}
\frac{\partial\Dc}{\partial\alpha}\Big|_{p_*}(\alpha-\alpha_*)\geq 0, \qquad \frac{\partial\Dc}{\partial\taug}\Big|_{p_*}(\taug-\taug_*)\geq 0, \qquad
\frac{\partial\Dc}{\partial\beta}\Big|_{p_*}(\beta-\beta_*)\leq 0, \qquad \frac{\partial\Dc}{\partial\tauh}\Big|_{p_*}(\tauh-\tauh_*)\leq 0,
\end{align}
for all $\alpha,\beta\geq 0,\taug,\tauh>0$ and $p_*=(\alpha_*,\taug_*,\beta_*,\tauh_*)$.
A similar remark as the one that follows Theorem \ref{thm:master} is in place regarding the values $\alpha=0$ and $\beta=0$. At these, the derivatives above should be interpreted as the corresponding (upper) limits as $\alpha\rightarrow0^+$ and $\beta\rightarrow0^+$. The continuity properties of the Moreau envelope (see Lemma \ref{lem:Mconvex}) guarantee that those limits are well-defined
%

When $\alpha_*>0$ and there also exist optimal values $\beta,\taug,\tauh$, all of them strictly positive, then \eqref{eq:4eq_0} holds with equalities. In this case, a little bit of algebra, and, an appropriate change of variables from $\taug,\tauh$ to $\kappa,\nu$, shows that the optimality conditions can be expressed as follows:
\begin{equation}\label{eq:4eq}
 \left\{
 \begin{aligned}
        \al^2&=\Exp\left[\left(\frac{\la}{\nu} \cdot{\mathrm{e}^\prime_f}\left(\frac{\beta}{\nu}H + X_0;\frac{\la}{\nu}\right)-\frac{\beta}{\nu}H\right)^2\right],\\
        \beta^2&=\delta\cdot\Exp\left[ \left(\mathrm{e}_\ell^\prime(\GZ,\kappa)\right)^2\right],\\
        \nu\al &= \delta\cdot \Exp\left[ \mathrm{e}_\ell^\prime(\GZ,\kappa) \cdot G\right],\\
        \kappa\beta &= \frac{\beta}{\nu} - \frac{\la}{\nu}\cdot \Exp\left[ \mathrm{e}_f^\prime\left(\frac{\beta}{\nu}H + X_0;\frac{\la}{\nu}\right) \cdot H\right].
       \end{aligned}
 \right.
 \end{equation}
Here, $e_f^\prime$ and $e_\ell^\prime$, denote the first derivatives of the Moureau envelopes with respect to their first argument. 

\subsubsection{Remarks}
\begin{remark}[Reformulations]
The system of equations in \eqref{eq:4eq} can be easily reformulated in terms of the proximal operator of $f$ and $\ell$, using
$$
\mathrm{e}_\ell^\prime(\chi,\tau) = \frac{1}{\tau}(\chi-\prox{\ell}{\chi}{\tau}),
$$
and similar for $f$ (see Lemma \ref{lem:Mconvex}(iii)). In the case of additional smoothness assumptions on the loss function and/or the regularizer, further reformulations are possible. For example, if $\ell$ is two times differentiable, then using Stein's formula for Normal random variables we can make the following substitution in \eqref{eq:4eq}:
\begin{align}\label{eq:Stein's}
\Exp\left[ \mathrm{e}_\ell^\prime(\GZ,\kappa) \cdot G\right] = \alpha\cdot\Exp\left[ \mathrm{e}_\ell^{\prime\prime}(\GZ,\kappa)\right],
\end{align}
where the double-prime superscript denotes the second derivative with respect to the first argument. Such reformulations, are often convenient for analysis purposes; see for example Remark \ref{rem:Stein_no}.
\end{remark}

\vp
\begin{remark}[Numerical Evaluations]\label{rem:num}
The system of equations in \eqref{eq:4eq} comprises of four nonlinear equations in four unknowns. Setting $\tb = (\alpha,\beta,\nu,\kappa)$ for the vector of unknowns, the system of equations in \eqref{eq:4eq} can be written as $\tb=S(\tb)$, for appropriately defined $S:\R^4\rightarrow\R^4$. We have empirically observed that a simple recursion $\tb_{k+1} = S(\tb_k), k=0,1,\ldots$ converges to a solution $\tb_*$ satisfying $\tb_*=S(\tb_*)$. This observation is particularly useful since it allows for efficient numerical experimentations, cf. Section \ref{sec:sim}. It is certainly an interesting and practically useful subject of future work to identify analytic conditions under which such simple recursive schemes provide efficient means of solving \eqref{eq:4eq}.
\end{remark}

\vp
\begin{remark}[Extensions]
The results of this section extend naturally, and without any extra effort, to the case of ``block-seperable" loss functions and/or regularizers. A popular example that falls in this category is $\ell_{1,2}$-regularization, which is typically used for the recovery of block-sparse signals. In such a case $f(\x)= \sum_{i=1}^{b}\|[\x]_i\|_2$, where $[\x]_i=[\x_{(i-1)t+1}, \x_{(i-1)t+2},\ldots,\x_{(i-1)t+t}],~i=1,\ldots,b$ is the $i^{\text{th}}$ block of $\x$. Here, $b$  is the number of blocks and $t$ is the length of each block. In the proportional high-dimensional regime, one would assume $b$ growing linearly with $n$ with a constant ratio of $1/t$.
\end{remark}

\section{Examples and Numerical Simulations}\label{sec:sim}


\subsection{No Regularization}\label{sec:no_reg}

Consider an M-estimator  without regularization, i.e.,
\begin{align}\label{eq:f=0_1}
\hat\x := \arg\min_{\x}\sum_{j=1}^m\ell(\y_j-\ab_j^T\x_j).
\end{align}
For simplicity, we consider $\z_j\simiid p_Z$ and a separable loss function. Assuming that $\ell$ and $p_Z$ satisfy the assumptions of Theorem \ref{thm:sep}, and, noting that $f=0\implies F(c,\tau)=0$, the squared error of \eqref{eq:f=0_1} is predicted by the minimizer $\alpha_*$ of the following (SPO) problem
\begin{align}\label{eq:AO_f=0}
\inf_{\substack{\alpha\geq 0\\\taug>0}}\sup_{\beta\geq 0} \frac{\beta\taug}{2} + \delta L(\alpha,\frac{\taug}{\beta}) - \alpha\beta,
\end{align}
where we have performed the (straightforward) optimization over $\tauh$: $\inf_{\tauh>0}\frac{\tauh}{2}+\frac{\beta^2}{2\tauh} = \beta.$ We may equivalently express $\alpha_*$ as the solution to the first-order optimality conditions of \eqref{eq:AO_f=0}. In particular, the stationary equations (see \eqref{eq:4eq}) simplify in this case to the following system of two equations in two unknowns:
\begin{equation}\label{eq:f=0}
 \left\{
 \begin{aligned}
        \al^2&=\delta \kappa^2 \Exp\left[\left(\mathrm{e}_\ell^\prime(\GZ,\kappa)\right)^2\right],\\
        \al &= \delta\kappa\cdot \Exp\left[ \mathrm{e}_\ell^\prime(\GZ,\kappa) \cdot G\right].
       \end{aligned}
 \right.
 \end{equation}

Starting from \eqref{eq:f=0}, some interesting conclusions can be drawn regarding the performance of M-estimators without regularization, which we gather in the following remarks.

\begin{remark}[Stable recovery]\label{rem:stable_f=0}
It follows from \eqref{eq:f=0} that in the absence of regularization, it is required that the number of measurements $m$ is  at least as large as the dimension of the ambient space $n$ ($\delta\geq 1$), in order for the recovery to be \emph{stable}, i.e. the error be finite. To see this, assume stable recovery, then  there exists  $(\al_*,\kappa_*)$ satisfying \eqref{eq:f=0}. Starting from the second equation, applying the \Cauchy inequality and substituting back the first equation we find:
\begin{align}\nn
\al_* = \delta\kappa_*\cdot \Exp\left[ \mathrm{e}_\ell^\prime(\al_*G+Z,\kappa_*) \cdot G\right] \leq \delta\kappa_*\cdot \sqrt{\Exp\left[ \left( \mathrm{e}_\ell^\prime(\al_*G+Z,\kappa_*) \right)^2)\right]}  = \delta\kappa_*\frac{\al_*}{\sqrt{\delta}\kappa_*} \Rightarrow \delta\geq 1.
\end{align}
\end{remark}

\begin{remark}[Stein's Formula]\label{rem:Stein_no}
Assume $e_\ell$ is two times differentiable (e.g., this is the case if $\ell$ is two times differentiable). Then, applying Stein's formula \eqref{eq:Stein's}, a simple rearrangement of \eqref{eq:f=0} shows that
\begin{align}\label{eq:Montanari}
\al_*^2 = \frac{1}{\delta}  \frac{\Exp\left[\left(\mathrm{e}_\ell^\prime(\al_*G+Z,\kappa_*)\right)^2\right]}{\left(\Exp\left[ \mathrm{e}_\ell^{\prime\prime}(\al_*G+Z,\kappa_*)\right]\right)^2}.
\end{align}
The formula above coincides with the corresponding expression in \cite[Thm.~4.1]{montanari13}, but the latter requires additional smoothness and strong-convexity assumptions on $\ell$, which are not necessary for \eqref{eq:f=0} to hold. 
The proof of \cite{montanari13} is based on the AMP framework \cite{AMP}.
\end{remark}

\begin{remark}[Least-Squares]
The simplest instance of the general M-estimator is the Least-squares, i.e. $\hat\x:=\min_{\x}\|\y-\A\x\|_2^2$. Of course, in this case, $\hat\x$ has a closed form expression which can be directly used to predict the error behavior (e.g.\cite{LS}). However, for illustration purposes, we show how the same result can be also obtained from \eqref{eq:f=0}. This is also one of the few cases where $\alpha_*$ can be expressed in closed form.
Assume $\delta>1$ and $\z_j\simiid p_Z$ with bounded second moment, i.e. $0<\Exp Z^2=\sigma^2<\infty$. Then, it can be readily checked that all assumptions hold for $\frac{1}{2}(\cdot)^2,p_z$. Also, $e_{\frac{1}{2}(\cdot)^2}^\prime({\chi};{\tau}) = \frac{\chi}{1+\tau}$ and $e_{\frac{1}{2}(\cdot)^2}^{\prime\prime}({\chi};{\tau}) = \frac{1}{1+\tau}$. Solving for the second equation in \eqref{eq:f=0} gives $\kappa_*=\frac{1}{\delta-1}$. Substituting this into the first, we recover the well-known formula
\begin{align}\label{eq:LS}
\alpha_*^2 = \sigma^2\frac{1}{\delta-1}.
\end{align}
\end{remark}

\subsection{Ridge Regularzation}\label{sec:ridge}

A popular regularizer in the machine learning and statistics literature is the ridge regularizer  (also known as Tikhonov regularizer), i.e. 
\begin{align}\label{eq:ridge_1}
\hat\x:= \arg\min_{\x}\sum_{j=1}^m\ell(\y_j-\ab_j^T\x_j) + \la \frac{\|\x\|_2^2}{2}.
\end{align}
We specialize Theorem \ref{thm:master} to that case. For simplicity, we assume a separable loss function, and, $\z_j\simiid p_Z$ and $\x_{0,i}\simiid p_X$. 

We will apply Theorem \ref{thm:sep}. Suppose that $\ell$ satisfies the assumptions. Also, assume $\E X_0^2=\sigma_x^2 <\infty$. Then, for $f=\frac{1}{2}(\cdot)^2$, it is easily verified that $\E[(f^\prime(cH + X_0))^2] = \E[(cH + X_0)^2]<\infty$. Hence, the squared-error of \eqref{eq:ridge_1} is predicted by $\alpha_*$, the unique minimizer to the (SPO) in \eqref{eq:AO_det_thm} with 
$$
F(c,\tau)=\frac{ c^2 + \sigma_x^2 }{2(\tau+1)}-\sigma_x^2.
$$

The first-order optimality conditions (see \eqref{eq:4eq}) of this problem simplify after some algebra to the following two equations in two unknowns:
\begin{equation}\label{eq:ridge}
 \left\{
 \begin{aligned}
        \al^2 &= \delta\kappa^2\cdot \Exp\left[ \mathrm{e}_\ell^\prime(\GZ,\kappa)^2 \right] + \la^2\kappa^2\sigma_x^2,\\
\alpha\left(1-\la\kappa\right) &= \delta\kappa\cdot \Exp\left[ \mathrm{e}_\ell^\prime(\GZ,\kappa) \cdot G\right].
       \end{aligned}
 \right.
 \end{equation}

\begin{remark}[Stein's Formula]
Assume $\prox{\ell}{x}{\tau}$ is two times differentiable with respect to $c$ (e.g., this is the case if $\ell$ is two times differentiable), and write $\mathrm{prox}^\prime_\ell(x,\tau)$ for the derivative with respect to $x$. Applying \eqref{eq:Stein's}, a simple rearrangement of \eqref{eq:ridge} yields the following equivalent system of equations
\begin{equation}\label{eq:karoui13}
 \left\{
 \begin{aligned}
\delta - 1 +\kappa\la &=\delta\cdot \Exp\left[ \mathrm{prox}^\prime_\ell(\GZ;\kappa) \right] ,\\
\alpha^2 &= \delta\Exp\left[\left(\GZ -  \mathrm{prox}_\ell(\GZ;\kappa) \right)^2\right] + \la^2\kappa^2\sigma_x^2.
       \end{aligned}
 \right.
 \end{equation}
The formula above coincides with the corresponding expression in \cite[Thm.~2.1]{karoui13}\footnote{In comparing \eqref{eq:karoui13} to \cite[Eqn.~(4)]{karoui13}, due to some differences in normalizations the following ``dictionary" needs to be used to match the results: $\alpha\leftrightarrow{r_{\rho}(\kappa)}$, $\kappa\leftrightarrow c_{\rho}(\kappa)$,  $\delta^{-1}\lambda\leftrightarrow\tau$ and $\delta^{-1}\leftrightarrow{\kappa}$.}. The result in \cite{karoui13} requires additional smoothness assumptions on $\ell$. Our result holds under relaxed assumptions and has been derived as a  corollary of Theorem \ref{thm:master}. On the other hand, \cite[Thm.~2.1]{karoui13} is shown to be true for design matrices $\A$ with iid entries beyond Gaussian, e.g. sub-gaussian. 
\end{remark}

\begin{remark}[Least-squares loss]\label{rem:ls_ridge}
Consider a least-squares loss function where $\ell(x)=\frac{1}{2}x^2$ and a noise distribution of variance $\E Z^2=\sig_z^2<\infty$. Then $\prox{\ell}{x}{\tau}=\frac{x}{1+\tau}$ and $\mathrm{prox}^\prime_\ell(x;\tau)=\frac{1}{1+\tau}$. Substituting  in \eqref{eq:karoui13} gives 
\begin{equation}\label{eq:Ridge_LASSO}
 \left\{
 \begin{aligned}
 1-\kappa\la&=\frac{\delta\kappa}{1+\kappa},\\
\alpha^2(1-\delta\cdot\frac{\kappa^2}{(1+\kappa)^2}) &= \delta\cdot\frac{\kappa^2}{(1+\kappa)^2}\sig_z^2 + \la^2\kappa^2\sigma_x^2.
       \end{aligned}
 \right.
\end{equation}
Now, we can solve these to get the following closed form expression for $\al^*$:
\begin{equation}\label{eq:R_Lasso_final}
\al^2=\left(\delta\cdot\frac{\kappa^2}{(1+\kappa)^2} \cdot\sig_z^2+\la^2\sig_x^2\kappa^2\right)\cdot\left(1-\delta\cdot\frac{\kappa^2}{(1+\kappa)^2}\right)^{-1},
\end{equation}
where 
\begin{equation}\label{eq:ridge_kappa}
\kappa=\frac{1-\delta-\la+\sqrt{(1-\delta-\la)^2+4\la}}{2\la}.
\end{equation}
Observe that letting $\la\rightarrow0$ (which would correspond to ordinary least-squares) and assuming $\delta>1$, $\kappa$ in \eqref{eq:ridge_kappa} approaches
$
{1}/{(\delta-1)}
$
and the optimal $\al^2$ in \eqref{eq:R_Lasso_final} becomes
${\sig_z^2}/({\delta-1}),$
which agrees with \eqref{eq:LS}, as expected.
\end{remark}

\begin{remark}[Achieving the MMSE] 
Let a Gaussian input distribution $\x_{0,i}\simiid\Nn(0,1)$ and any noise distribution of power $\E Z^2=\sigma_z^2<\infty$. We show that a ridge-regularized M-estimator with a least-squares loss function and optimally tuned $\la$ achieves asymptotically the Minimum Mean-Squared Error (MMSE) of estimating $\x_0$ from $\y=\A\x_0+\z$

 First, we use the results of Remark \ref{rem:ls_ridge} to calculate the achieved error of the M-estimator optimized over the values of the regularizer parameter:
%
\begin{align}\label{eq:prove_re}
o_*:= \inf_{\la>0}\lim_{n\rightarrow\infty}\frac{1}{n}\|\hat\x-\x_0\|_2^2 =  \inf_{\la>0}\left\{ \alpha^2(\kappa(\la),\la) \text{ as in  \eqref{eq:R_Lasso_final} } ~|~ \kappa(\la) \text{ satisfies } \eqref{eq:ridge_kappa} \right\}. 
\end{align}
The optimization over $\la$ is possible as follows. From \eqref{eq:Ridge_LASSO}, we find 
\begin{align}\label{eq:pou_na}
\delta\left(\frac{\kappa}{\kappa+1}\right)^2 = \frac{(1-\kappa\la)^2}{\delta}.
\end{align} Substituting this in \eqref{eq:R_Lasso_final}, and denoting $x=\kappa\la$, gives 
\begin{align}
\alpha^2 =  \frac{\delta x^2 + \sig^2(1-x)^2}{\delta - (1-x)^2}.\label{eq:perform}
\end{align}
Minimizing $\alpha^2$ over $\la>0$ in \eqref{eq:prove_re}  is equivalent to minimizing the fraction above over $0<x<1$, since there always exist $\kappa,\la$ satisfying $x=\kappa\la$ and \eqref{eq:pou_na}. Thus, performing the optimization over $0<x<1$ in \eqref{eq:perform} we find 
%
%
%

\begin{align}\label{eq:deal}
o_* = \frac{1}{2}\left( 1-\sig^2-\delta + \sqrt{(1-\delta)^2+2\sig^2(\delta+1)+\sig^4} \right).
\end{align}

Next, Wu and Verdu have shown in \cite[Thm.~8,~Eqn.~(56)]{wu2012optimal} that the MMSE is given by the expression in the right-hand side above as well. This, completes the proof of the claim.  
\end{remark}

\subsection{Cone-constrained M-estimators}\label{sec:cone}

\subsubsection{Motivation}
Constrained M-estimators solve
\begin{align}\label{eq:cone_LASSO_1}
\min_{\x\in\Cc} 
\sum_{j=1}^{m}\ell(\y_j-\ab_j^T\x),
\end{align} 
for some set $\x\in\Cc$. The role of the regularizer in \eqref{eq:genLASSO} is played here by the constraint $\x\in\Cc$. It is common that $\Cc$ takes the form $\Cc=\{ \x ~|~ g(\x)\leq g(\x_0)\}$, i.e. the set of descent directions of some convex function $g$, which is structure inducing for $\x_0$ \cite{Cha,Foygel,OTH13,plan2015generalized}. Of course, such a formulation assumes prior knowledge of the value of $g$ at $\x_0$. Also, in this case, there exists by Lagrangian duality a value of $\la$ for which the regularized M-estimator with $f(x)=g(x)$ is equivalent to \eqref{eq:cone_LASSO_1}.  

A relaxation that is often undertaken to facilitate the analysis of \eqref{eq:cone_LASSO_1}, involves substituting $\Cc$ by its conic hull, which is also known as the tangent cone of $g$ at $\x_0$ (e.g. \cite{Cha}). We call the resulting program, a cone-constrained M-estimator. For the special case of an $\ell_2$-loss function, the squared error performance of constrained M-estimators has been previously considered in \cite{StoLASSO,OTH13} (also, see Remark \ref{rem:LASSO_cone} below). The analysis was performed in the high-SNR regime, where noise variance approaches zero. In this regime it was shown that the conic-relaxation above is \emph{exact}. In this section, we analyze the error performance of cone-constrained M-estimators with general loss functions and derive some interesting conclusions. Also, as we will see, the example here corresponds to an instance of \eqref{eq:genLASSO} with a non-separable regularizer.

\subsubsection{Error Performance} 

We consider
\begin{align}\label{eq:cone_LASSO}
\hat\x := \arg\min_{\x\in\Cc} 
\sum_{j=1}^{m}\ell(\y_j-\ab_j^T\x),
\end{align} 
where $$\Cc=\Kc+\x_0:=\{ \la\h~|~ \la\geq 0,  g(\x_0+\h)\leq g(\x_0) \}+\x_0$$ and $g$ a proper, closed, convex function. Here, $\Kc$ is the tangent cone of $g$ at $\x_0$, which is assumed fixed. 
The constrained minimization above can be written in the general form of regularized M-estimators in \eqref{eq:genLASSO} by choosing the regularizer to be the indicator function for the cone, i.e. $f(\x) = \deltab_{\{\x\in\Cc\}}$ \footnote{Note that this is a non-separable regularizer function.}. Let $\dist{\Cc}{\vb}$, denote the distance of a vector $\vb$ to a set $\Cc$. 
We have,
$$
\env{\deltab_{\{\x\in\Cc\}}}{c \h + \x_0 }{\tau} = \frac{1}{2\tau}\min_{\vb\in\Cc-\x_0}\|c\h-\vb\|_2^2 =  \frac{1}{2\tau}\distS{\Kc}{c\h} = \frac{c^2}{2\tau}\distS{\Kc}{\h}.
$$
In the last equality above we have used the homogeneity of the cone $\Kc$.
Let $\Kc^\circ$ denote the polar cone of $\Kc$, and,
$$\DKP:= \Exp\left[ \distS{\Kc^\circ}{\h} \right]=\Exp\left[\|\h\|_2^2 -  \distS{\Kc}{\h} \right].$$
This quantity is known as the \emph{statistical dimension} \cite{TroppEdge} of the cone $\Kc$, or, as the \emph{Gaussian distance squared} \cite{OTH13}. It  can be though of as a measure of the size of the cone, and also, it is very closely  related to the \emph{gaussian width} of $\Kc$\cite{TroppEdge}. We assume that 
\begin{align}\label{eq:DKP}
\frac{\DKP}{n}\rightarrow {\DKPo}\in(0,1).
\end{align}
This translates to an assumption on the degrees of freedom of the structured signal $\x_0$ being proportional to its dimension. For example, for a $k$-sparse $\x_0$ and $g(\x)=\|\x\|_1$, \eqref{eq:DKP} is satisfied for $k=\rho n$, $\rho\in(0,1)$.

With \eqref{eq:DKP}, Assumption \ref{ass:tech}(a) holds with $F(c,\tau) = \frac{c^2}{2\tau}(1-\DKPo)$. For this, it is straightforward to check that Assumption \ref{ass:prop}(a) is also satisfied. Overall, if $\ell,p_Z$ satisfy the conditions of Theorem \ref{thm:sep} and $g,\x_0$ are such that \eqref{eq:DKP} holds, then Theorem \ref{thm:master} applies. Then, the squared error of the cone-constrained M-estimator in \eqref{eq:cone_LASSO} is predicted by the unique minimizer $\alpha_*$ of the (SPO) problem below:
%
\begin{align}\label{eq:DO_cone}
\inf_{\substack{\alpha\geq0 \\ \taug> 0}}~\sup_{\substack{\beta\geq0 }}~~\frac{\beta\taug}{2} + 
\delta\cdot\E\left[\env{\ell}{\alpha G+Z}{\taug/\beta}-\ell(Z)\right] - \alpha\beta\sqrt{\DKPo}.
\end{align}
Compared to \eqref{eq:AO_det_thm}, we have performed the (straightforward) optimization over $\tauh$: $\inf_{\tauh>0}\frac{\tauh}{2}+\frac{\beta^2\DKPo^2}{2\tauh} = \beta\DKPo.$ 

\subsubsection{Remarks}
\begin{remark}[Stable recovery]\label{rem:stable_cone}
Starting from \eqref{eq:DO_cone} we can conclude on the minimum number of measurements required for stable recovery. We show that the normalized number of measurements $\delta$ need to be  at least as large as $\DKPo$, in order for the  error to be finite. This is to be compared with the case where no regularization is used that required $\delta\geq 1>\DKPo$ (see Remark \ref{rem:stable_f=0}). To prove the claim, assume finite error, then the value where it converges is predicted by \eqref{eq:DO_cone}. Standard first-order optimality conditions give\footnote{ The three equations in \eqref{eq:cone_eq} correspond to differentiation of the objective of \eqref{eq:DO_cone} with respect to $\tau,\alpha$ and $\beta$, respectively. If any of the variables is zero at the optimal, then, the corresponding equation holding with an inequality is necessary and sufficient. On the other hand, if the optimal is strictly positive, then the equation should hold with equality. }
\begin{subequations}\label{eq:cone_eq}
\begin{align}
\beta - \frac{\delta}{\beta} \Exp\left[\left(\mathrm{e}^\prime_{\ell}(\GZ;\taug/\beta)\right)^2\right]&\geq 0,\label{eq:cone_1}\\
 \delta \Exp[\mathrm{e}^\prime_{\ell}(\GZ;\taug/\beta)\cdot G] - \beta\sqrt{\DKPo} &\geq 0, \label{eq:cone_2}
 \\
\frac{\tau}{2} + \frac{\delta\tau}{2\beta^2}\Exp\left[\left(\mathrm{e}^\prime_{\ell}(\GZ;\taug/\beta)\right)^2\right] - \alpha\sqrt{\DKPo}&\leq 0.
 \label{eq:cone_3}
 \end{align}
 \end{subequations}
Starting from the second equation, applying the \Cauchy inequality and substituting back the first equation we conclude as follows:
\begin{align}\nn
\beta\sqrt{\DKPo} &\leq \delta \Exp[\mathrm{e}_{\ell}(\GZ;\taug/\beta)\cdot G] \leq \delta \sqrt{\Exp[\left(\mathrm{e}^\prime_{\ell}(\GZ;\taug/\beta)\right)^2]} \leq \delta\frac{\beta}{\sqrt{\delta}} \Rightarrow \delta\geq \DKPo.
\end{align}
\end{remark}

\begin{remark}(Least-squares loss)\label{rem:LASSO_cone}
Consider a least-squares loss function and a noise distribution of variance $\E Z^2=\sigma^2<\infty$. Then, the solution to \eqref{eq:DO_cone} admits an insightful closed form expression. First, in \eqref{eq:DO_cone} perform the optimization over $\taug$. Equating \eqref{eq:cone_1} to 0, gives $\taug=\sqrt{\delta}\sqrt{\alpha^2+\sigma^2}-\beta$. Substituting this in \eqref{eq:DO_cone}, we are left to solve for
$$
\inf_{\alpha\geq 0}\sup_{\beta\geq 0}\beta\left( \sqrt{\delta}\sqrt{\alpha^2+\sigma^2} - \alpha\sqrt{\DKPo} \right) - \frac{\beta^2}{2}.
$$
It can be easily checked that if $\delta>\DKPo$, then the optimal $\alpha_*$ is
\begin{align}\label{eq:LS_cone}
\alpha_*^2 = \sigma^2\frac{\DKPo}{\delta-\DKPo}.
\end{align}
It is insightful to compare this with \eqref{eq:LS}, the corresponding error formula for least-suares: the only difference is that $1$ is substituted with the statistical dimension $\DKPo$. Also, verifying the conclusion of the previous remark, we now require $\delta>\DKPo$ instead of $\delta>1$, implying that recovery is in general possible with less measurements than the dimension of the signal. 

The result in \eqref{eq:LS_cone} was first proved for $\ell_1$-regularization in \cite{StoLASSO}, and, was later generalized in \cite{OTH13,TOH14} (also, \cite{LS}). Compared to the lengthy treatments in those references, the result followed here as a simple corollary of Theorem \ref{thm:master}.
\end{remark}

\begin{remark}(Lower Bound)
In \eqref{eq:cone_2} apply Stein's inequality and combine it with \eqref{eq:cone_1} to yield
\begin{align}\label{eq:above99}
\alpha^2\geq \frac{\DKPo}{\delta}\frac{{\beta^2}/{\delta}}{\Exp\left[\mathrm{e}^{\prime\prime}_{\ell}(\GZ;\taug/\beta)\right]}
\geq  \frac{\DKPo}{\delta}\frac{\Exp\left[\left(\mathrm{e}^\prime_{\ell}(\GZ;\taug/\beta)\right)^2\right]}{\Exp\left[\mathrm{e}^{\prime\prime}_{\ell}(\GZ;\taug/\beta)\right]}
\end{align}
For the first inequality above, we have assumed that at the optimal, ${\Exp\left[\mathrm{e}^{\prime\prime}_{\ell}(\GZ;\taug/\beta)\right]}<\infty$. When this holds, (see Remark \ref{rem:LAD_zero} for an instance where this is not the case) we can use the above to lower bound the error performance in terms of the Fisher information of the noise. Based on a result of \cite{madiman2007generalized}, Donoho and Montanari prove in \cite[Lem.~3.4,3.5]{montanari13} that  the right-hand side in \eqref{eq:above99} is further lower bounded by ${I(Z)}/(1+\alpha^2 I(Z))$, where $I(Z)=\E\left(\frac{\partial}{\partial z}\log p_Z(z)\right)^2$ denotes the Fisher information of the random variable $Z$, which is assumed to have a differentiable density. Using this and solving for $\alpha^2$, we conclude with 
\begin{align}\label{eq:lbbb}
\alpha^2 \geq \frac{\DKPo}{\delta-\DKPo} \frac{1}{I(Z)}.
\end{align}
For Gaussian noise of variance $\sigma^2$, we have $1/I(Z)=\sigma^2$. In this case the lower bound in \eqref{eq:lbbb} coincides with the error formula of the least-squares loss function, which then proves optimality of the latter.

\end{remark}

\begin{remark}(Consistent Estimators)\label{rem:LAD_zero}
The lower bound in \eqref{eq:lbbb} only holds if the optimal $\alpha_*$ in \eqref{eq:DO_cone} is strictly positive. This is not always the case: under circumstances, it is possible to choose the loss function such that the resulting cone-constrained M-estimator is consistent. Theorem \ref{thm:master} is the starting point to identifying such interesting scenarios.

Here, we illustrate this through an example: we assume a sparse gaussian-noise model and use a Least Absolute Deviations (LAD) loss function. More precisely, $p_Z(Z)= \bar{s}\delta_0(Z) + (1-\bar{s}) \frac{1}{\sqrt{2\pi}}\exp(-Z^2/2), \bar{s}\in(0,1)$ and $\ell(v)=|v|$. In Section \ref{app:LAD_zero} we prove that when $\bar{s},\delta$ and $\DKPo$ are such that  
\begin{equation}\label{eq:cond_per}
\delta\geq \DKPo+ \min_{\kappa>0} \left\{\bar{s}(1+\kappa^2)+(\delta-\bar{s})\sqrt\frac{2}{\pi}\int_\kappa^\infty (G-\kappa)^2\exp(-G^2/2)\mathrm{d}G \right\},
\end{equation}
then the first-order optimality conditions in \eqref{eq:cone_eq} are satisfied for $\alpha\rightarrow0,\taug\rightarrow0$ and some $\beta>0$. Thus, when the number of measurements is large enough such that \eqref{eq:cond_per} holds, then $\alpha_*=0$, and, $\x_0$ is perfectly recovered\footnote{
In the context that it appears here, the perfect recovery condition in \eqref{eq:cond_per} has been shown previously in \cite{Allerton14}. The problem is very closely related to the demixing problem in which one aims to extract  two (or more) constituents from a mixture of structured vectors \cite{mccoy2014convexity}. In that context, recovery conditions like the one in \eqref{eq:cond_per} have been generalized to other king of structures beyond sparsity \cite{mccoy2014sharp,mccoy2014convexity,Foygel}. Our purpose here has been to illustrate how Theorem \ref{thm:master} can be used to derive such results. Besides,  the generality of the paper's setup offers the potential of extending such consistency-type results beyond cone-constrained M-estimators and beyond fixed signals $\x_0$. This is an interesting direction of future research. }.

%
\end{remark}

\subsection{Generalized  LASSO}

The generalized LASSO solves
\begin{equation}\label{eq:LASSO}
\hat\x := \arg\min_{\x} \frac{1}{2}\|\y-\A\x\|_2^2 + \la f(\x).
\end{equation}
For simplicity, suppose that $f$ is separable and satisfies the assumptions of Theorem \ref{thm:sep}. Also, assume $\z_j\simiid p_Z$ such that $0< \E Z^2 =:\sigma^2 <\infty$. Then, for $\ell=\frac{1}{2}(\cdot)^2$, it is easily verified that $\E[(\ell^\prime(cG + Z ))^2] = \E[(cG + Z)^2]<\infty$. Hence, the squared-error of \eqref{eq:LASSO} is predicted by $\alpha_*$, the unique minimizer to the (SPO) in \eqref{eq:AO_det_thm} with 
$
L(c,\tau)=\frac{ c^2 + \sigma^2 }{2(\tau+1)}-\sigma^2.
$

Equivalently, the error is predicted by the solution to the stationary equations in \eqref{eq:4eq} with $e^\prime_{\frac{1}{2}(\cdot)^2}(\chi;\tau) = \frac{\chi}{1+\tau}.$
The second and third equations in \eqref{eq:4eq} give
\begin{align*}
\beta^2(1+\kappa)^2 &= \delta(\alpha^2+\sigma^2),\nonumber\\
\nu(1+\kappa) &= \delta.
\end{align*}
Solving these for $\kappa$ and $\nu$, and substituting them in the remaining two equations results in the following system of two nonlinear equations in two unknowns
\begin{align}
\begin{cases}\label{eq:gen_lasso_1}
\delta\frac{\al^2}{\al^2+\sig^2}=\E\left[\left(\frac{\la}{\beta} e^\prime_{f}\left(\frac{\sqrt{\al^2+\sig^2}}{\sqrt\delta}H+X_0,\la\frac{\sqrt{\al^2+\sig^2}}{\beta\sqrt\delta}\right)-H\right)^2\right] \\
\beta(1-\delta)+\beta^2\frac{\sqrt\delta}{\sqrt{\al^2+\sig^2}}=\la\E\left[e^\prime_{f}\left(\frac{\sqrt{\al^2+\sig^2}}{\sqrt\delta}H+X_0,\la\frac{\sqrt{\al^2+\sig^2}}{\beta\sqrt\delta}\right)\cdot H\right].
\end{cases}
\end{align}
For the special case of $\ell_1$-regularization, the result above was proved by Bayati and Montanari \cite{montanariLASSO} using the AMP framework. In the generality presented here, the result appears to be novel.

\begin{remark} (Not consistent) An interesting observation from \eqref{eq:gen_lasso_1} is that the generalized LASSO cannot achieve perfect recovery, irrespective of the choice of the regularizer function. 
To see this, the first equation in \eqref{eq:gen_lasso_1} for $\al=0$ gives 
$\E\left[\left( \frac{\la}{\beta}e^\prime_{f}(\frac{\sig}{\sqrt\delta}H+X_0,\frac{\la\sig}{\beta\sqrt\delta})-H\right)^2\right]=0.$
Then,  it must hold, almost surely, that the argument under the expectation sign be equal to zero.
Evaluating the derivative of the envelope function as in Lemma \ref{lem:Mconvex}(iii), this becomes equivalent to 
$
X_0=\prox{f}{\frac{\sig}{\sqrt\delta}H+X_0}{\frac{\la\sig}{\beta\sqrt\delta}}.
$
This, when combined with the optimality conditions for the Moreau envelope (see \eqref{eq:opt}) gives that almost surely
$
\frac{\sig}{\sqrt\delta}H\in\partial f(X_0).
$
Thus, we have reached a contradiction because $H$ can take any real value as a Gaussian random variable.
\end{remark}

%

\subsection{Square-root LASSO}\label{sec:sq_LASSO}
The (Generalized) Square-root LASSO (also known as $\ell_2$-LASSO \cite{OTH13}) solves\footnote{We refer the interested reader to \cite{Belloni,OTH13,TOH14} for a discussion on the similarities and differences between \eqref{eq:square-root_1} and the Generalized LASSO of \eqref{eq:LASSO}.} 
\begin{align}\label{eq:square-root_1}
\hat\x := \arg\min_{\x} \sqrt{n}\|\y-\A\x\|_2 + \la f(\x).
\end{align} 
 In contrast, to the other examples in this section,  the square-root LASSO is an instance of \eqref{eq:genLASSO} with a \emph{non}-separable loss function. Observe the normalization of the loss function with a $\sqrt{n}$-factor. This is to satisfy our condition of Section \ref{sec:model} that $(\forall c>0) (\exists C>0)~\left[\|\vb\|_2\leq c\sqrt{n}\implies \frac{1}{\sqrt{n}}\sup_{\s\in\partial\loss(\vb)}\|\s\|_2\leq C\right]$.
 
  In Section \ref{app:square} we show that when $\loss(\vb)=\sqrt{n}\|\vb\|_2$ and $\z\sim p_\z$ with $\E\left[{\|\z\|_2^2}/{m}\right]=\sigma^2\in(0,\infty)$, then Assumption \ref{ass:tech}(a) holds with 
\begin{align}\label{eq:sqr_L}
L(\al ,\tau)=\begin{cases}
\frac{1}{\sqrt{\delta}}(\sqrt{\al^2+\sig^2}-\sig) - \frac{\tau}{2\delta} &,\text{if}~ \sqrt\delta\sqrt{\al^2+\sig^2}\geq\tau,\\ 
\frac{1}{2\tau}(\al^2+\sig^2)-\frac{\sig}{\sqrt\delta} &,\text{otherwise}.
\end{cases}
\end{align}
Also, Assumption \ref{ass:tech}(b) is trivially satisfied, and, Section \ref{app:square} shows the same for Assumptions \ref{ass:prop}(b)-(d). Thus, considering any regularizer that satisfies Assumptions \ref{ass:tech}(a) and \ref{ass:prop}(a), Theorem \ref{thm:master} applies, and predicts the squared error of \eqref{eq:square-root_1} as the unique minimizer $\alpha_*$ to the following optimization:
\begin{align}\label{eq:square-root_4}
\inf_{\substack{\alpha\geq0 }}~\sup_{\substack{\beta\geq0 \\ \tauh> 0}}~~ 
- \frac{\alpha\tauh}{2} - \frac{\alpha\beta^2}{2\tauh} + \la\cdot \FFm{\frac{\alpha\beta}{\tauh}}{\frac{\alpha\la}{\tauh}}+
\begin{cases}
\beta\sqrt\delta\sqrt{\al^2+\sig^2}&,\text{if}~ \beta\leq1\\
\sqrt\delta\sqrt{\al^2+\sig^2}&,\text{otherwise}
\end{cases}.
\end{align}
To arrive to \eqref{eq:square-root_4} starting from \eqref{eq:AO_det_thm}, we have replaced $L$ with \eqref{eq:sqr_L} and have performed the minimization over $\taug$ as shown below:
\begin{align}\label{eq:square-root_3}
\inf_{\tau_g\geq0}\begin{cases}
\frac{\beta\tau_g}{2}-\frac{\tau_g}{2\beta} +\sqrt\delta\sqrt{\al^2+\sigma^2}&,\text{if}~ \delta(\al^2+\sigma^2)\geq\frac{\tau_g^2}{\beta^2}\\
\frac{\beta\delta}{2\tau_g}(\al^2+\sigma^2)+\frac{\beta\tau_g}{2}&,\text{otherwise}.
\end{cases}=\begin{cases}
\beta\sqrt\delta\sqrt{\al^2+\sig^2}&,\text{if}~ \beta\leq1\\
\sqrt\delta\sqrt{\al^2+\sig^2}&,\text{otherwise}
\end{cases}.
\end{align}

The optimization in \eqref{eq:square-root_3} can be simplified one step further. It is shown in Section \ref{app:square} that $-\frac{\alpha\beta^2}{2\tauh} + \la F\left(\frac{\alpha\beta}{\tauh},\frac{\alpha\la}{\tauh}\right)$ is a non-increasing function of $\beta$ for $\beta>0$. Therefore, the (SPO) becomes equivalent to the following
\begin{align}\label{eq:square-root_7}
\inf_{\substack{\alpha\geq0 }}~\sup_{\substack{0\leq\beta\leq1 \\ \tauh\geq 0}}~~ \beta\sqrt\delta\sqrt{\al^2+\sig^2}
- \frac{\alpha\tauh}{2} - \frac{\alpha\beta^2}{2\tauh} + \la\cdot \FFm{\frac{\alpha\beta}{\tauh}}{\frac{\alpha\la}{\tauh}}.
\end{align}

The fact that the optimization in \eqref{eq:square-root_7} predicts the squared error of \eqref{eq:square-root}, has been recently shown by the authors in \cite{NIPS15}. That work only considers the square-root LASSO\footnote{Note however, that \cite{NIPS15} considers a more general measurement model than the one of the current paper, one that allows for nonlinearities.}, while here, we have (re)-derived the result as a corollary of the general Theorem \ref{thm:master}. 

%


\subsection{Heavy-tails}
In this section, we investigate instances where the noise distribution has unbounded moments. In the presence of (say) heavy-tailed noise, it is a common practice to use a loss function that grows to infinity no faster than linearly. This is also suggested by Assumption \ref{ass:tech}(b) (cf. \eqref{eq:ass_main1} for the separable case), as has already been discussed.

For illustration, we assume $\z\simiid \mathrm{Cauchy}(0,1)$ and consider two examples of loss functions for which we show that Theorem \ref{thm:master} is applicable.

\subsubsection{LAD}\label{sec:LAD}
As a first example, consider the regularized-LAD estimator:
\begin{equation}\label{eq:eg1}
\hat{\x}=\arg\min_{\x}\|\y-\A\x\|_1+\la f(\x),
\end{equation}
The loss function is separable, with $\ell(v)=|v|$. Easily, for all $c\in\R$
$$
\Exp\left[ |\ell_+^\prime(cG+Z)|^2 \right] = \Exp\left[ |\mathrm{sign}(cG+Z)|^2 \right] = 1 <\infty,
$$
satisfying Assumption \eqref{eq:ass_main}. Also, $\E Z^2$ is undefined, but, $\sup_{v}\frac{\ell(v)}{|v|}=1<\infty$, thus, \eqref{eq:ass_main1} holds. Finally, $|\cdot|$ is not differentiable at zero satisfying the conditions of Lemma \ref{lem:strict_ell}. With these, Theorem \ref{thm:sep} is applicable.

\subsubsection{Huber-loss}\label{sec:Hl}
The Huber-loss function with parameter $\rho>0$ is defined as
\begin{align}\label{eq:Huber}
h_\rho(v) = \begin{cases} \frac{v^2}{2} &,|v|\leq\rho, \\ \rho|v| - \frac{\rho^2}{2} &,\text{otherwise}. \end{cases}
\end{align}
Consider a regularized M-estimator with $\ell(v) = h_\rho(v)$. We show here that this choice satisfies the Assumptions of Theorem \ref{thm:sep}. Indeed, for all $c\in\R$
\begin{align}\nn
\Exp\left[ |\ell_+^\prime(cG+Z)|^2 \right] \leq \Exp\left[ |cG+Z| ~\big|~|cG+Z|\leq \rho  \right] +    \Exp\left[ \rho ~\big|~ |cG+Z|> \rho \right]  <\infty,
\end{align}
satisfying Assumption \eqref{eq:ass_main}. Also, $\sup_{v}\frac{\ell(v)}{|v|}=\rho<\infty$, thus, \eqref{eq:ass_main1} holds. Finally, $h_\rho$ is differentiable with a strictly increasing derivative in the interval $[-\rho,\rho]$. With these, Theorem \ref{thm:sep} is applicable. Figure \ref{fig:Huber} illustrates the validity of the prediction via numerical simulations.


\begin{figure}[h!]
    \centering
    \includegraphics[width=0.65\textwidth]{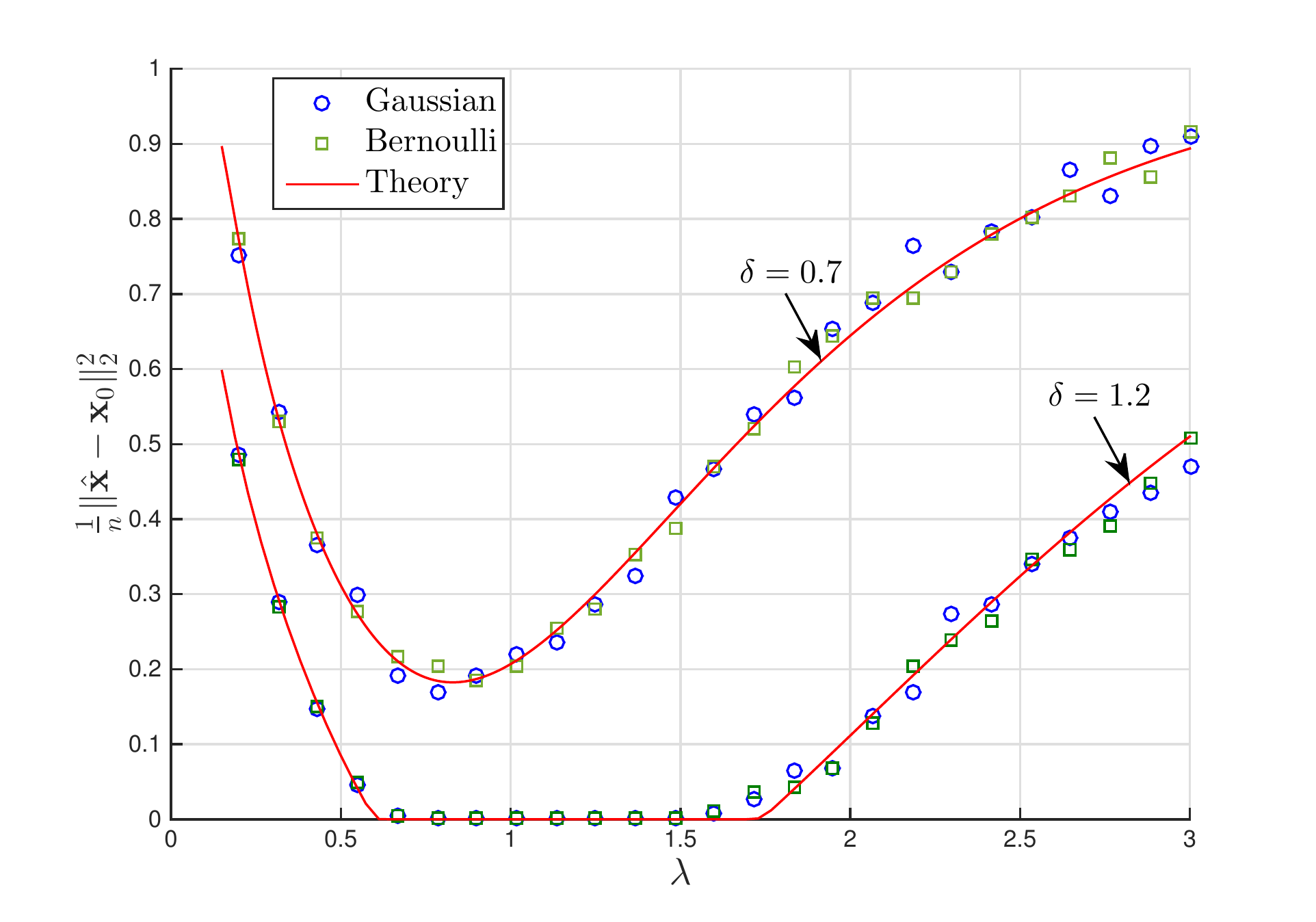}
    \caption{\footnotesize{
        Squared error of the $l_1$-Regularized LAD with Gaussian ($\circ$) and Bernoulli ($\square$) measurements as a function of the regularizer parameter $\la$ for two different values of the normalized number of measurements, namely $\delta=0.7$ and $\delta=1.2$. Also, $\x_{0,i}\simiid p_x(x)=0.9\delta_0(x)+0.1\phi(x)/\sqrt{0.1}$ and $\z_j\simiid p_{z}(z)=0.7\delta_0(z)+0.3\phi(z)$ for $\phi(x)=\frac{1}{\sqrt{2\pi}}e^{-x^2/2}$. For the simulations, we used $n=768$ and the data were averaged over 5 independent realizations.    }}
    \label{fig:bern}
\end{figure}

\begin{figure}[h!]
    \centering
    \includegraphics[width=0.65\textwidth]{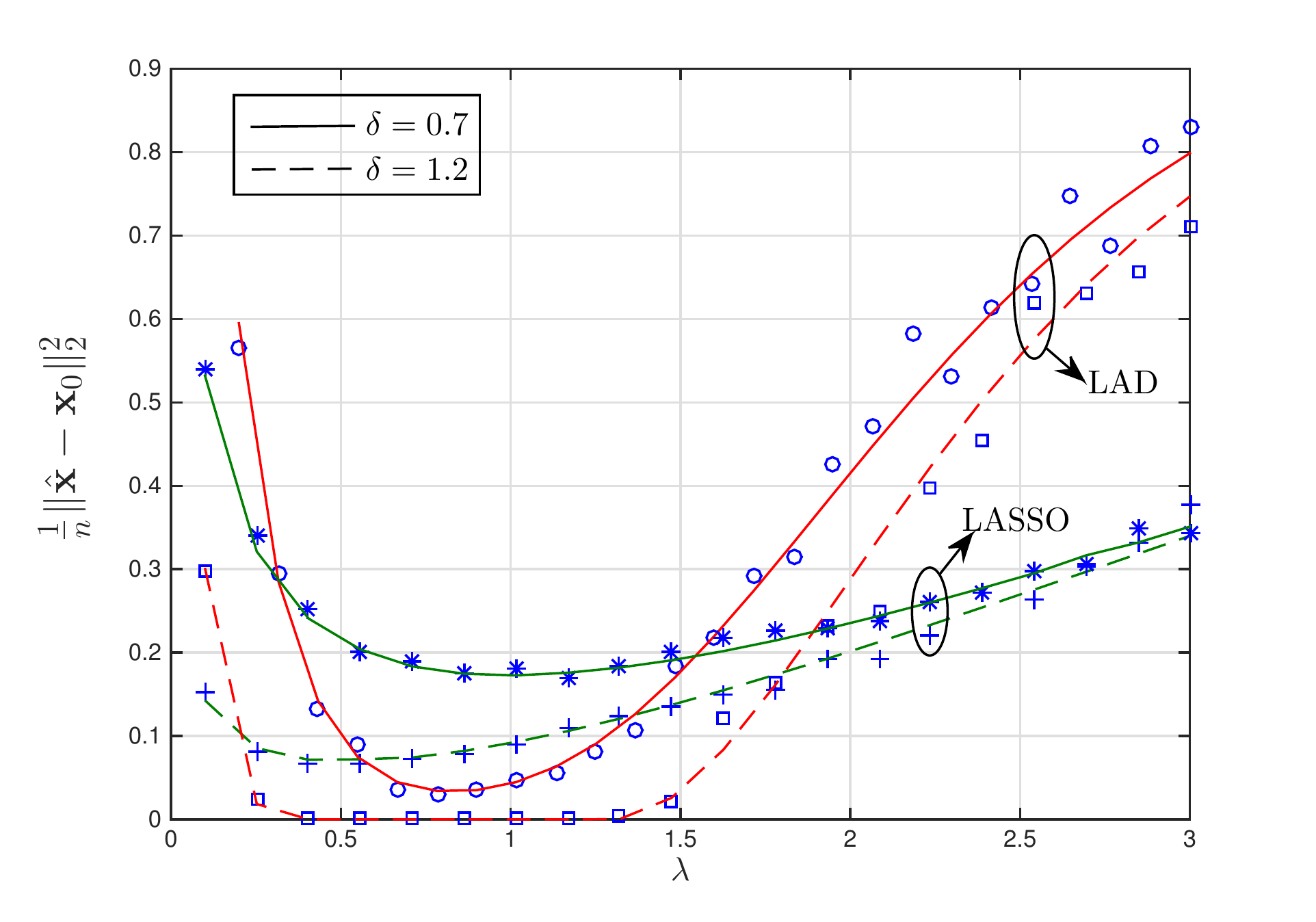}
    \caption{\footnotesize{Comparing the squared error of the $\ell_1$-Regularized LAD with the corresponding error of the LASSO. Both are plotted as functions of the regularizer parameter $\la$, for two different values of the normalized measurements, namely $\delta=0.7$ and $\delta=1.2$. The noise and signal are iid sparse-Gaussian as follows: $\x_{0,i}\simiid p_{x}(x)=0.9\delta_0(x)+0.1\phi(x)/\sqrt{0.1}$ and $\z_j\sim p_{z}(z)=0.9\delta_0(z)+0.1\phi(z)$ with $\phi(x)=\frac{1}{\sqrt{2\pi}}e^{-x^2/2}$. For the simulations, we used $n=768$ and the data were averaged over 5 independent realizations.   }}
    \label{fig:LAD_sparse}
\end{figure}

\begin{figure}[h!]
    \centering
    \includegraphics[width=0.65\textwidth]{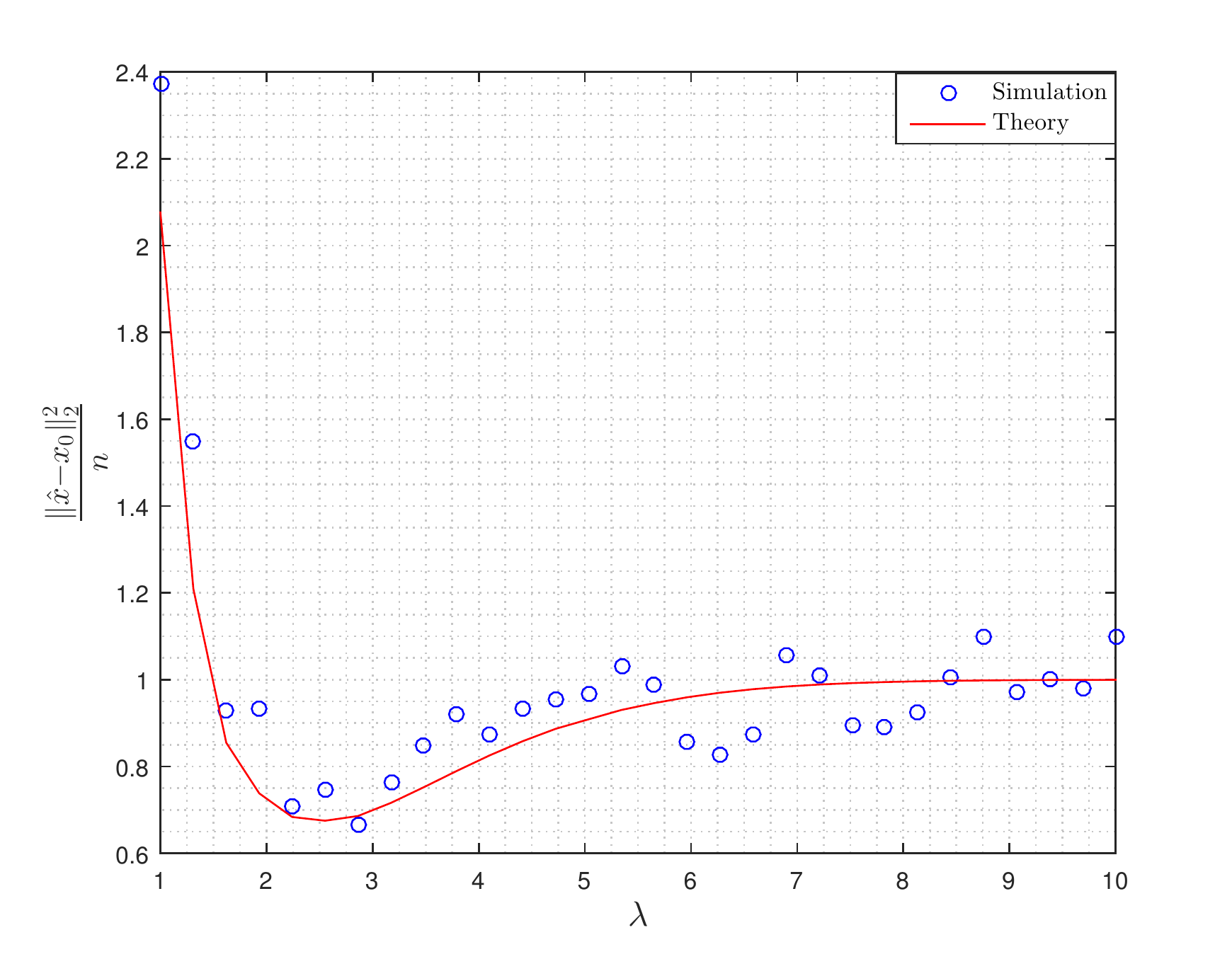}
    \caption{\footnotesize{Squared error of the $\ell_1$-Regularized M-Estimator with Huber-loss as a function of the regularizer parameter $\la$. Here, $\delta=0.7$, $\x_0\simiid p_{x}(x)=0.9\delta_0(x)+0.1\phi(x)/\sqrt{0.1}$ and $p_{z}(z)=0.9\delta(z)+0.1\eta(z)$ with $\phi(x)=\frac{1}{\sqrt{2\pi}}e^{-x^2/2}$ and $\eta(z)=\frac{1}{\pi (1+z^2)}$. For the simulations, we used $n=1024$ and the data are averaged over 5 independent realizations.  }}
    \label{fig:Huber}
\end{figure}

\begin{figure}[h!]
    \centering
    \includegraphics[width=0.65\textwidth]{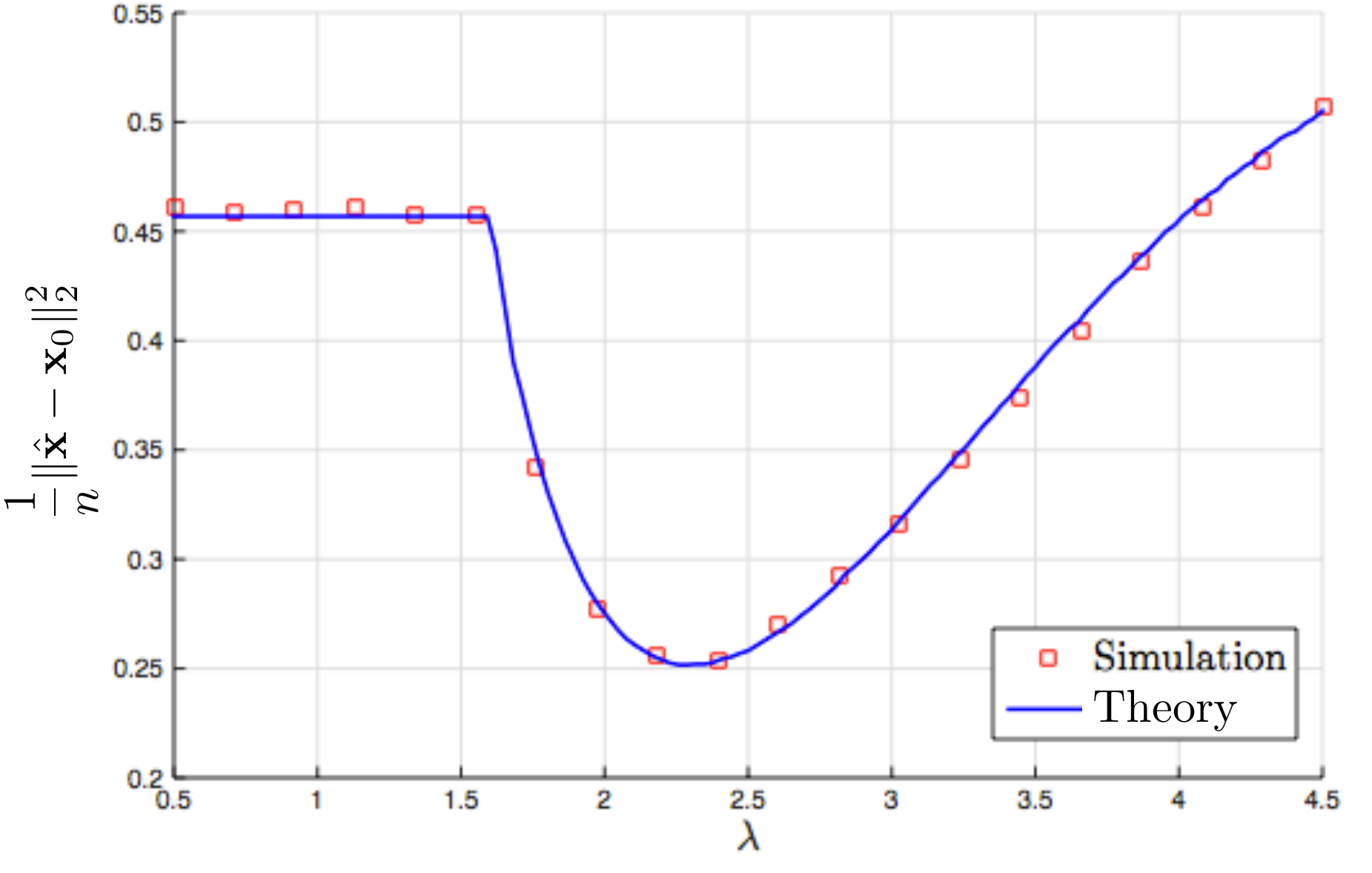}
    \caption{\footnotesize{Squared error of the $\ell_{1,2}$-Regularized Lasso for group sparse signal composed of $512$ blocks of  size $3$ each,  as a function of the regularizer parameter $\la$. Here, $\delta=0.75$, each block is zero with probability $0.95$, otherwise its entries are i.i.d. $\mathcal{N}(0,1)$ and $\z_j\simiid p_{z}(z)=0.3\phi(z)$ with $\phi(x)=\frac{1}{\sqrt{2\pi}}e^{-x^2/2}$. The simulations are averaged over 10 independent realizations.   }}
    \label{fig:group}
\end{figure}

\subsection{Numerical Simulations}\label{sec:num}


We have performed a few numerical simulations on specific instances of M-estimators that were previously discussed in Section \ref{sec:sim}. The purpose is to illustrate both the validity of the prediction of Theorem \ref{thm:master}, as well as, that of the remarks that followed as a consequence of it. 

\vp
\noindent{\underline{ Figure \ref{fig:bern} }.}~We consider the regularized LAD estimator of \eqref{eq:eg1} under an iid sparse-Gaussian noise model. The unknown signal is also considered sparse, which leads to the natural choice of $\ell_1$ regularization, i.e. $f(x) = \|\x\|_1$. Apart from the very close agreement of the theoretical prediction of Theorem \ref{thm:master} to the simulated data, the following facts are worth observing.
\begin{itemize}
\item[-] When the number of measurements $m$ gets large enough, then, for an appropriate range of values of the regularizer parameter, the estimator is consistent, i.e. the unknown signal $\x_0$ is perfectly recovered. This is relevant to Remark \ref{rem:LAD_zero} where we proved this to be the case for the closely related cone-constrained LAD estimator. For that, we were able to quantify how large $m$ should be as a function of the sparsities of the noise and of the signal, see \eqref{eq:cond_per}.

\item[-] The prediction of Theorem \ref{thm:master} remains accurate when the measurement matrix has entries iid Bernoulli ($\{\pm1\}$). This suggests that the error behavior (at least of this specific instant of M-estimator) undergoes some universality properties. See also the relevant discussion in Section \ref{sec:conc}.
\end{itemize}

\vp
\noindent{\underline{ Figure \ref{fig:LAD_sparse} }.}~ The model for both the noise and for  the unknown signal is here the same as in Figure \ref{fig:bern}, i.e. both are iid sparse. We use $\ell_1$-regularization, and, two different loss functions, namely, a least-absolute-deviations one and a least-squares one, corresponding to a LAD and a LASSO estimator, respectively. The figure aims to compare the performance of the two. Intuition suggests that the LAD is more appropriate for a sparse noise model, since $\ell_1$ promotes sparsity. This is indeed the case, in the sense that for good choices of the regularizer parameter $\la$, the LAD outperforms by far the LASSO. (In the extreme of a large enough number of measurements, the LAD is consistent and this is not the case for the LASSO.) However, it is worth observing that for a different and relatively big range of values of $\la$, the LASSO performs better. This indicates the importance of the tuning of the regularizer parameter, to which the predictions of Theorem \ref{thm:master} can offer valuable guidelines and insights. 

\vp
\noindent{\underline{ Figure \ref{fig:Huber} }.}~ For this figure, we have assumed an $\ell_1$-regularized estimator with Huber-loss $\ell(v)=H_1(v)$. The noise is iid $\mathrm{Cauchy}(0,1)$. In Section \ref{sec:Hl} it was shown that all the Assumptions of Theorem \ref{thm:sep} are satisfied in this setting. The figure, validates the prediction. To obtain the prediction we numerically solved the corresponding system of nonlinear equations (see \eqref{eq:4eq}) using the efficient iterative scheme described in Remark \ref{rem:num}.

\vp
\noindent{\underline{ Figure \ref{fig:group} }.}~ We include this as an example of an M-estimator with non-separable loss function. For the plot, we use the square-root LASSO with $\ell_{1,2}$-regularization. The analytical prediction was derived solving \eqref{eq:square-root_7}.


\section{Proof Highlights }\label{sec:outline}


Here, we outline the fundamental ideas behind the proof of Theorem \ref{thm:master}. The detailed proof is deferred to Appendix \ref{sec:proof}. Leaving some technical challenges aside, the mechanics are easy to explain and provide valuable intuition regarding both the assumptions required and the flavor of the final result. For instance, we will be able to show without much effort, how the Moreau envelope functions $\env{\loss}{c\g+\z}{\tau}$ and $\env{f}{c\h+\x_0}{\tau}$ appear in the final result. A key ingredient of the proof is the Convex Gaussian Min-max Theorem (CGMT). We include the statement  of the theorem in this section, as well.

\subsection{Starting Idea}\label{sec:CGMT}
Our goal is to characterize the nontrivial limiting behavior of $\|\hat\x-\x_0\|_2$, where $\hat\x$ is any solution to the following minimization,
\begin{align}\nn
 \min_\x \loss(\y-\A\x) + \la f(\x).
\end{align}
To get a direct handle on the error term, it is convenient to change the optimization variable to $\w:=\x-\x_0$, so then $\hat\w:=\hat\x-\x_0$ is a solution to (recall $\y=\A\x_0+\z$)
\begin{align}\label{eq:mini}
\hat\w :=\arg\min_\w \loss(\z-\A\w) + \la f(\x_0+\w)=: M(\w).
\end{align}

There is a simple but standard argument that is in the heart of  most analyses of such minimization estimators, and comes as follows. Suppose we knew that the error $\|\hat\w\|_2$ converges eventually to some deterministic value, call it $\alpha_*$. This is equivalent to $\hat\w$ belonging in the following set 
\begin{align}\label{eq:the_set}
\Sc_\eps=\{ \w~|~ | \|\w\|_2 - \alpha_* |<\eps \},
\end{align}
with probability one (w.p.1) for all $\eps>0$. Letting $\Sc_\eps^c$ denote the complement of that set, observe, that if w.p. 1,
\begin{align}\label{eq:simple}
M(\hat\w) < \inf_{\w\in\Sc_\eps^c} M(\w),
\end{align}
then $\hat\w$ must lie in $\Sc_\eps$. Note that with this standard trick we have translated a question on the optimal solution of the minimization problem in \eqref{eq:mini} to one regarding its optimal cost. One possible approach in comparing the two random processes in \eqref{eq:simple} would be to first identify the converging limits of both. If say \begin{align}\label{eq:comp54}
M(\hat\w)\rP \overline{M}\qquad\text{ and }\qquad\inf_{\w\in\Sc_\eps^c} M(\w) \rP \overline{M}_{\Sc_\eps^c},
\end{align}
then, \eqref{eq:simple} holds as long as 
\begin{align}\label{eq:comp55}
\overline{M}<\overline{M}_{\Sc_\eps^c},
\end{align} 
which is just a comparison between two deterministic quantities. 

This is exactly the approach we want to take here: show \eqref{eq:comp54} and \eqref{eq:comp55}. Unfortunately, directly working with the objective function $M$ and proving \eqref{eq:comp54} turns out to be rather challenging. Instead, we prove the desired indirectly, via working with an \emph{auxiliary} objective function which is simpler to analyze. What justifies this idea is the \emph{Convex Gaussian min-max Theorem} (CGMT), which we present next.

\subsection{The CGMT}\label{sec:CGMT}
The Convex Gaussian Min-max Theorem associates with a primary optimization (PO) problem a simplified auxiliary optimization
(AO) problem from which we can tightly infer properties of the original (PO), such as the
optimal cost, the optimal solution, etc.. 

Specifically, the (PO) and (AO) optimizations are given as follows:
\begin{subequations}\label{eq:POAO}
\begin{align}
\label{eq:PO_gen}
\Phi(\G)&:= \min_{\w\in\Sc_\w}~\max_{\ub\in\Sc_\ub}~ \ub^T\G\w + \psiubw,\\
\label{eq:AO_gen}
\phi(\g,\h)&:= \min_{\w\in\Sc_\w}~\max_{\ub\in\Sc_\ub}~ \|\w\|_2\g^T\ub + \|\ub\|_2\h^T\w + \psiubw,
\end{align}
\end{subequations}
where $\G\in\R^{m\times n}, \g\in\R^m, \h\in\R^n$, $\Sc_\w\subset\R^n,\Sc_\ub\subset\R^m$ and $\psi:\R^n\times\R^m\rightarrow\R$. We denote $\w_\Phi:=\w_\Phi(\G)$ and $\w_\phi:=\w_\phi(\g,\h)$ any optimal minimizers in \eqref{eq:PO_gen} and \eqref{eq:AO_gen}, respectively. 

Then, we have the following result.
%
%
\begin{thm}[CGMT]\label{thm:CGMT}
In \eqref{eq:POAO}, let $\Sc_\w,\Sc_\ub$ be compact sets, $\psi$ be continuous on $\Sc_\w\times\Sc_\ub$, and, $\G,\g$ and $\h$ all have entries iid standard normal. The following statements are true:
\begin{enumerate}[(i)]
\item For all $c\in\R$: 
$$\Pro(~\Phi(\G) < c~) \leq 2\Pro(~\phi(\g,\h) \leq c~).$$
\item Further assume that $\Sc_\w,\Sc_\ub$ are convex sets and $\psi$ is convex-concave on $\Sc_\w\times\Sc_\ub$. Then, for all $c\in\R$, $$\Pro(~\Phi(\G) > c~) \leq 2\Pro(~\phi(\g,\h) \geq c~).$$

In particular, for all $\mu\in\R, t>0$, $\Pro(~|\Phi(\G)-\mu|>t~) \leq 2\Pro(~|\phi(\g,\h)-\mu|\geq t~)$.
\item
Let $\Sc$ be an arbitrary open subset of $\Sc_\w$ and $\Scc=\Sc_\w/\Sc$ . Denote $\Phi_\Scc(\G)$ and $\phi_\Scc(\g,\h)$ the optimal costs of the optimizations in \eqref{eq:PO_gen} and \eqref{eq:AO_gen}, respectively, when the minimization over $\w$ is now constrained over $\w\in\Scc$. 
 If there exist constants $\phio, ~\phio_\Scc$ and $\eta>0$ such that
\begin{enumerate}[(a)]
\item  $\phio_\Scc \geq \phio + 3\eta$,
\item $\phi(\g,\h) < \phio + \eta$ with probability at least  $1-p$,
\item $\phi_\Scc(\g,\h) > \phio_{\Scc} - \eta $ with probability at least  $1-p$,
\end{enumerate}
then,
$$
\Pro( \w_\Phi(\G) \in \Sc ) \geq 1-4p.
$$
\end{enumerate}
\end{thm}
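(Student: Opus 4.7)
The plan is to establish the three parts in order, with parts (ii) and (iii) leveraging part (i). The main obstacle will be part (ii), where convexity is used to promote Gordon's classical \emph{one-sided} inequality to a tight two-sided comparison.

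For part (i), I would invoke Gordon's 1988 Gaussian min-max inequality applied to the two Gaussian processes
\[
X_{\w,\ub} := \ub^T \G \w + \psiubw, \qquad Y_{\w,\ub} := \|\w\|_2\,\g^T \ub + \|\ub\|_2\,\h^T \w + \psiubw,
\]
indexed by $(\w,\ub)\in\Sc_\w\times\Sc_\ub$. Both share the same deterministic shift $\psiubw$ and have matching variances at every index. A direct covariance computation gives $\E[Y_{\w,\ub} Y_{\w,\ub'}] - \E[X_{\w,\ub} X_{\w,\ub'}] = \|\w\|^2\|\ub\|\|\ub'\|\geq 0$ when the outer index is held fixed, while the opposite inequality across distinct outer indices is arranged via a standard lifting that augments the index sets with a norm-absorbing extra coordinate onto a common sphere. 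Gordon's theorem then yields the stochastic ordering $\Pro(\Phi(\G)\geq c)\geq \Pro(\phi(\g,\h)\geq c)$, which after passing to complements and accommodating weak vs.\ strict inequalities (whence the factor $2$) gives the claim.

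For part (ii) -- the main obstacle -- my plan is to dualize the PO and reapply part (i) to the dualized problem. Under the convex-concavity hypotheses Sion's minimax theorem applies, yielding
\[
\Phi(\G)=\max_{\ub\in\Sc_\ub}\min_{\w\in\Sc_\w}\bigl(\ub^T\G\w+\psiubw\bigr), \qquad\text{hence}\qquad -\Phi(\G)=\min_{\ub}\max_{\w}\bigl(\w^T(-\G^T)\ub - \psiubw\bigr).
\]
The right-hand side is precisely a PO of the form in part (i) with the roles of $\w$ and $\ub$ exchanged, Gaussian matrix $-\G^T\stackrel{d}{=}\G$, and deterministic term $-\psiubw$ (convex-concave in the swapped order). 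Applying part (i) to this ``dual PO'' produces a ``dual AO'' $\phi^{d}(\tilde\g,\tilde\h)$; straightforward bookkeeping with $-\tilde\g\stackrel{d}{=}\g$ and $-\tilde\h\stackrel{d}{=}\h$ shows that $-\phi^{d}$ shares the same integrand as $\phi(\g,\h)$ but with the order of $\min$ and $\max$ swapped. By weak duality $\max\min\leq\min\max$, therefore $-\phi^{d}\leq \phi(\g,\h)$, and chaining:
\[
\Pro(\Phi(\G)>c)=\Pro(-\Phi(\G)<-c)\leq 2\Pro(\phi^{d}\leq -c)=2\Pro(-\phi^{d}\geq c)\leq 2\Pro(\phi(\g,\h)\geq c).
\]
The two-sided deviation bound is then immediate by combining with (i) via a union bound. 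The conceptual point is that, while part (i) alone gives a one-sided comparison, convexity lets us legitimately apply the same Gordon inequality to the dual formulation, producing the reverse direction up to a factor $2$, with weak duality on the AO side being enough (no convexity of $\phi$ is needed).

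For part (iii), the strategy is to combine (ii) on the unrestricted problem with (i) on the restricted problem. Crucially, part (i) requires no convexity, so it applies to the possibly non-convex $\Scc$: hypothesis (c) then gives $\Pro(\Phi_\Scc(\G)\leq \phio_\Scc-\eta)\leq 2\Pro(\phi_\Scc(\g,\h)\leq \phio_\Scc-\eta)\leq 2p$. Hypothesis (b) combined with (ii) applied to the unrestricted PO (which satisfies the convexity hypothesis) gives $\Pro(\Phi(\G)\geq \phio+\eta)\leq 2p$. By hypothesis (a), $\phio_\Scc-\eta\geq \phio+2\eta>\phio+\eta$, so with probability at least $1-4p$ the strict chain $\Phi(\G)<\phio+\eta<\phio_\Scc-\eta<\Phi_\Scc(\G)$ holds, forcing every minimizer $\w_\Phi$ of the unrestricted $\Phi$ to lie in $\Sc$ rather than $\Scc$, completing the proof.
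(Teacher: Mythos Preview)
Your proposal is correct and follows essentially the same route as the paper. The paper defers parts (i) and (ii) to \cite{COLT15}, and your sketches for those (Gordon's comparison for (i); Sion's minimax plus reapplying (i) to the dualized problem for (ii)) are the standard arguments used there. For part (iii), your argument---apply (i) to the restricted problem over the possibly non-convex $\Scc$, apply (ii) to the unrestricted convex problem, and union-bound using the $3\eta$ gap in (a)---is exactly the paper's proof. One cosmetic slip: in (iii) you write $\Pro(\Phi_\Scc\leq\phio_\Scc-\eta)$ and $\Pro(\Phi\geq\phio+\eta)$, whereas (i) and (ii) only control the strict-inequality events $\{\Phi_\Scc<\phio_\Scc-\eta\}$ and $\{\Phi>\phio+\eta\}$; this is harmless since the complementary events still chain through the strict gap $\phio_\Scc-\eta>\phio+\eta$ supplied by (a).
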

%
%

The CGMT is an extension of a Gaussian comparison inequality proved by Gordon in 1988 \cite{GorLem,GorThm}. Starting with the works of Rudelson and Vershynin\cite{rudelson2006sparse} and of Stojnic \cite{Sto}, Gordon's original theorem has played a key role in the analysis of (underdetermined) noiseless linear inverse problems (also, \cite{oymRank,Cha}).  We refer the interested reader to \cite{COLT15} (also, Remark \ref{rem:CGMT})  for more details and a discussion on the relation of the CGMT to the result by Gordon.

 The first two statements of Theorem \ref{thm:CGMT} are identical to \cite[Thm.~3]{COLT15}, and, a proof is included therein. Statement (iii) as it appears here is novel. In particular, when compared to its counterpart in \cite[Thm.~3]{COLT15}, it holds for all problem dimensions $m,n$, and also, it holds for more general sets $\Sc$. We present a proof of the last statement of the theorem in Appendix \ref{sec:thm_app}.   
\begin{cor}[Asymptotic CGMT]\label{cor:asy_CGMT}
Using the same notation as in Theorem \ref{thm:CGMT}, suppose there exists constants $\phio<\phio_\Scc$ such that $\phi(\g,\h) \rP \phio$ and $\phi_\Scc(\g,\h) \rP \phio_{\Scc}$. Then, 
$$
\lim_{n\rightarrow\infty} \Pro( \w_\Phi(\G) \in \Sc ) = 1.
$$
\end{cor}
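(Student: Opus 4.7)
\textbf{Proof Proposal for Corollary \ref{cor:asy_CGMT}.}

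The plan is to verify directly that the three conditions (a)--(c) in Theorem~\ref{thm:CGMT}(iii) are satisfied in the asymptotic regime, and then invoke that theorem. All the heavy lifting has already been done in the statement of part~(iii); the corollary is essentially a translation of its finite-$n$ conclusion into a limiting statement under the convergence-in-probability hypotheses.

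First, fix any $\epsilon\in(0,1)$; our goal is to show $\Pro(\w_\Phi(\G)\in\Sc)\geq 1-\epsilon$ for all sufficiently large $n$, which clearly implies the claimed limit. Since $\phio<\phio_\Scc$, we set
\[
\eta := \frac{\phio_\Scc - \phio}{3} > 0,
\]
so that condition (a) of Theorem~\ref{thm:CGMT}(iii), namely $\phio_\Scc \geq \phio + 3\eta$, holds with equality by construction.

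Next, choose $p := \epsilon/4$. By the assumption $\phi(\g,\h)\rP\phio$, for all $n$ large enough we have
\[
\Pro\bigl(\,|\phi(\g,\h) - \phio| < \eta\,\bigr) \geq 1 - p,
\]
which in particular yields $\Pro(\phi(\g,\h) < \phio + \eta) \geq 1-p$, i.e.\ condition (b). Similarly, the assumption $\phi_\Scc(\g,\h)\rP\phio_\Scc$ gives, for all sufficiently large $n$,
\[
\Pro\bigl(\,\phi_\Scc(\g,\h) > \phio_\Scc - \eta\,\bigr) \geq 1 - p,
\]
which is exactly condition (c). (By taking the maximum of the two thresholds on $n$, both (b) and (c) hold simultaneously for $n$ large.)

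With (a), (b) and (c) in place, Theorem~\ref{thm:CGMT}(iii) directly delivers $\Pro(\w_\Phi(\G)\in\Sc)\geq 1-4p = 1-\epsilon$ for all such $n$. Since $\epsilon>0$ was arbitrary, this proves $\lim_{n\to\infty}\Pro(\w_\Phi(\G)\in\Sc) = 1$, as desired. There is no real obstacle here---the only subtlety is choosing $\eta$ strictly smaller than the gap $\phio_\Scc-\phio$ so that conditions (b) and (c) concern disjoint neighborhoods of the two limits and thus follow from convergence in probability; the factor of $3$ in $\eta = (\phio_\Scc-\phio)/3$ is exactly what Theorem~\ref{thm:CGMT}(iii) requires.
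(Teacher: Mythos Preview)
Your proof is correct and follows essentially the same approach as the paper's: set $\eta=(\phio_\Scc-\phio)/3$, use convergence in probability to obtain conditions (b) and (c) with failure probability at most $p$ for all large $n$, and invoke Theorem~\ref{thm:CGMT}(iii) to conclude $\Pro(\w_\Phi(\G)\in\Sc)\geq 1-4p$. The only cosmetic difference is that you fix $\epsilon$ first and set $p=\epsilon/4$, whereas the paper phrases it as ``for any $p>0$'', but the logic is identical.
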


\begin{remark}
Observe that the conditions of the corollary are the same as those in \eqref{eq:comp54}-\eqref{eq:comp55} only this time they hold for the objective function of the (AO). From that, we already know that $\w_\phi(\g,\h)\in\Sc$ with probability approaching 1. The statement of the corollary is stronger in that it concludes the same for $\w_\Phi(\G)$, which is the solution to a seemingly different optimization problem.
\end{remark}

The CGMT might be of individual interest and may have applications that go beyond the topic of this paper. With this in mind, we have chosen to present it above in its most general version. In the upcoming sections we specialize the result to the study of the error performance of M-estimators.

\subsection{Applying the CGMT}
Back to the problem of analyzing \eqref{eq:mini} and our goal of proving \eqref{eq:comp54}. As already hinted, the CGMT will be handy towards this direction. The M-estimator optimization in \eqref{eq:mini} will play the role of the (PO), and, we need to identify the corresponding (AO). To do so, we first need to birng \eqref{eq:mini} in the form of \eqref{eq:PO_gen} as required by the CGMT. 

The idea here is to use duality\footnote{A preliminary  version of this idea first appeared in \cite{ell22}, in which the authors analyzed the error performance of the Generalized-LASSO. We have extended the idea here to apply to any convex loss function $\loss$.}. Specifically, we can equivalently view the minimization in \eqref{eq:mini} as follows:
$$
\min_{\w,\vb} \loss(\vb) + \la f(\x_0+\w)\quad\text{sub.to}\quad \vb=\z-\A\w.
$$
Then, associating a dual variable $\ub$ with the equality constraint above, we have
\begin{align}
\min_{\w,\vb} \max_{\ub} \ub^T\A\w \underbrace{-\ub^T\z + \ub^T\vb +  \loss(\vb) + \la f(\x_0+\w)}_{\psi(\w,\vb,\ub)}.
\end{align}
Clearly, this is now in the desired format: we can identify the bilinear form $\ub^T\A\w$ and a function $\psi(\w,\vb,\ub)$ which is convex in $(\w,\vb)$ and concave in $\ub$. Thus, immediately, the corresponding (AO) problem becomes\footnote{When compared to \eqref{eq:AO_gen} it is more convenient in \eqref{eq:AO_pp} to write the two terms $\|\w\|\g^T\ub$ and $\|\ub\|\h^T\w$ with a minus sign instead. We can do this, since $\g$ and $\h$ are Gaussian vectors; thus, their distribution is sign independent.}:
\begin{align}\label{eq:AO_pp}
\min_{\w,\vb} \max_{\ub} -\|\w\|_2\g^T\ub - \|\ub\|_2\h^T\w -\ub^T\z + \ub^T\vb +  \loss(\vb) + \la f(\x_0+\w).
\end{align}
Now that we have identified the (AO) problem, we wish to apply Corollary \ref{cor:asy_CGMT} for the set $\Sc_\eps$ of \eqref{eq:the_set}. Applying the corollary amounts to analyzing the convergence of the (AO) problem (and that of its ``restricted" counterpart). This will be performed in two stages. The first involves a deterministic analysis, in which the optimization in \eqref{eq:AO_pp} is simplified and reduced to one which only involves scalar random variables. In the second stage, we analyze the convergence properties of this scalar optimization. 

Before proceeding with those, in all the above, we have been silent regarding any compactness requirements of Theorem \ref{thm:CGMT}. These technicalities are carefully handled in Appendix \ref{sec:proof}. (In particular, this is where Assumption \ref{ass:tech}(b) becomes useful.)

\subsection{Analysis of the Auxiliary Optimization}

\subsubsection{Scalarization}\label{sec:scal_illu}
A key idea that facilitates the analysis of the (AO) in \eqref{eq:AO_pp} is to reduce the optimization into one that only involves scalar optimization variables. 
The objective function of the (AO) is tailored towards this direction, and the only modification required is to express $f(\x_0+\w)$ via its variational form as $\sup_{\s}\s^T(\x_0+\w)-f^*(\s)$, where $f^*$ is the Fenchel conjugate function. 

This way, the variables $\ub$ and $\w$ appear in the objective only through either linear terms or through their magnitudes. This observation suggests that one can easily optimize over their directions while fixing the magnitudes. To illustrate this, fixing the magnitude of $\ub$ as $\|\ub\|_2=\beta\geq 0$, we can optimize over its direction by aligning it with $-\|\w\|_2\g - \z + \vb$. Then \eqref{eq:AO_pp} simplifies to the following,
\begin{align}\label{eq:not}
\min_{\w,\vb} \max_{\beta\geq0,\s}~ \beta \|\|\w\|_2\g+\z+\vb\|_2 - \beta\h^T\w+ \loss(\vb) + \la \s^T(\x_0 + \w) - f^*(s).
\end{align}
Suppose we could switch the order of min-max above. Then, it would be possible to do the same trick with $\w$, i.e. fix $\|\w\|_2=\alpha\geq 0$ and minimize over its direction to  get 
\begin{align}\label{eq:not2}
 \max_{\beta\geq0,\s}~\min_{\alpha\geq0,\vb}~ \beta \|\alpha\g+\z+\vb\|_2 + \loss(\vb) - \alpha\|\beta\h-\la\s\|_2 + \la \s^T\x_0- f^*(s).
\end{align}
Justifying that flipping in the order of min-max is not straightforward, since the objective function in \eqref{eq:not} is \emph{not} convex-concave; thus, what would otherwise be the arguments to be called upon, namely the Minimax Theorems (e.g. \cite{Sion}), are not directly applicable here. Yet, in Appendix \ref{sec:proof}, we show that such a minimax property holds asymptotically in the problem dimensions; thus, \eqref{eq:not2} is (for our purposes) equivalent to \eqref{eq:not}. We leave the details aside for the moment, and, proceed with the simplification of \eqref{eq:not2}. 

In \eqref{eq:not2}, we have reduced the optimization over $\w$ and $\ub$ to scalars $\alpha$ and $\beta$. Next, we wish to simplify the optimization over $\s$ and $\vb$. However, the same trick as the one we applied for the former two variables won't work. The new idea that we need here is to write the terms $\|\|\w\|_2\g+\z+\vb\|_2$ and $\|\beta\h-\la\s\|_2$ using
$$
\|\tb\|_2 = \inf_{\tau>0}~\frac{\tau}{2}+\frac{\|\tb\|_2^2}{2\tau}.
$$
What we achieve with this is that the corresponding terms become now \emph{separable} over the entries of the vectors $\vb$ and $\s$, which makes the optimization over them easier. The only price we have to pay is introducing just two more scalar optimization variables. That is \eqref{eq:not2} becomes
\begin{align}\nn
 \sup_{\substack{\beta\geq0 \\ \tauh>0}}~\inf_{\substack{\alpha\geq0 \\ \taug>0}}~ \frac{\beta\taug}{2} + \min_\vb \left\{ \frac{\beta}{2\taug} \|\alpha\g+\z+\vb\|_2^2 + \loss(\vb) \right\} - \frac{\alpha\tauh}{2} - \min_{\s}\left\{ \frac{\alpha}{2\tauh}\|\beta\h-\la\s\|_2^2 - \la \s^T\x_0 + f^*(s)\right\}.
\end{align}
It can be readily seen that the minimization over $\vb$ gives rise to the Moreau envelope function of $\loss$ evaluated at $\z+\alpha\g$ with index $\taug/\beta$. A rather straightforward completion of squares arguments and a call upon the relation between the Moreau envelopes of conjugate pairs, leads to a similar conclusion regarding the minimization over $\s$, as well. Deferring the details to the appendix, we have reached the following scalar optimization
\begin{align}\label{eq:not4}
 \sup_{\substack{\beta\geq0 \\ \tauh>0}}~\inf_{\substack{\alpha\geq0 \\ \taug>0}}~ \frac{\beta\taug}{2} + \env{\loss}{\alpha\g+\z}{\frac{\taug}{\beta}} - \frac{\alpha\tauh}{2} - \frac{\alpha\beta^2}{2\tauh}\|\h\|_2^2 + \la\cdot\env{f}{\frac{\beta\alpha}{\tauh}\h+\x_0}{\frac{\alpha\la}{\tauh}}.
\end{align}

\subsubsection{Convergence}
Once we have simplified the (AO), it is now possible to analyze the convergence of its optimal cost. We start with the objective function of \eqref{eq:not4}, which we shall denote $\Rc_n(\alpha,\taug,\beta,\tauh)$ for convenience. Fix\footnote{To be precise, an appropriate rescaling is required here. See Section \ref{sec:proof}.} $\alpha,\taug,\beta,\tauh$, then, 
\begin{align}\label{eq:ptw_cc}
\frac{1}{n}\Rc_n(\alpha,\taug,\beta,\tauh) \rP \frac{\beta\taug}{2} + L\left(\alpha,\frac{\beta}{\taug}\right) - \frac{\alpha\tauh}{2} - \frac{\alpha\beta^2}{2\tauh} + \la\cdot F\left(\frac{\alpha\beta}{\tauh},\frac{\alpha\la}{\tauh}\right)=:\Dc(\alpha,\taug,\beta,\tauh),
\end{align}
where we have assumed that $L$ and $F$ above are such that
$$
\frac{1}{n}\env{\loss}{c\g+\z}{\tau}\rP L(c,\tau)\qquad\text{and}\qquad \frac{1}{n}\env{f}{c\h+\x_0}{\tau}\rP F(c,\tau).
$$
This corresponds to Assumption \ref{ass:tech}(a), except that in the latter we have $\env{\loss}{c\g+\z}{\tau}-\loss(\z)$ instead of just $\env{\loss}{c\g+\z}{\tau}$, and, similar for $f$. The reason for this slight tweak, is to account for noise vectors $\z$ with unbounded moments. For those, $\env{\loss}{c\g+\z}{\tau}$ might not converge, but $\env{\loss}{c\g+\z}{\tau}-\loss(\z)$ will. We handle these issues in the Appendix.

Our next step is to use the \emph{point-wise} convergence of \eqref{eq:ptw_cc} in order to prove the following result:
\begin{align}\label{eq:ptw_dd}
\inf_{\substack{\alpha\geq0 \\ \taug>0}}~ \sup_{\substack{\beta\geq0 \\ \tauh>0}}~ \frac{1}{n}\Rc_n(\alpha,\taug,\beta,\tauh) \rP\inf_{\substack{\alpha\geq0 \\ \taug>0}}~ \sup_{\substack{\beta\geq0 \\ \tauh>0}}~\Dc(\alpha,\taug,\beta,\tauh)=:\phio.
\end{align}
This statement is of course much stronger than the one in \eqref{eq:ptw_cc}. The proof requires two main ingredients: (i) translating the point-wise convergence into a \emph{uniform} one over compact sets, (ii) proving that $\Dc$ is level-bounded with respect to its arguments, thus, the sets of optimizers in \eqref{eq:ptw_dd} are bounded. For the first point, convexity turns out to be critical, while the latter can be shown if Assumption \ref{ass:prop} holds.   

\subsection{Concluding}
The analysis of the (AO) problem led us to \eqref{eq:ptw_dd}. The same arguments also show that
\begin{align}\label{eq:ptw_dd_res}
\inf_{\substack{|\alpha-\alpha_*|\geq \eps \\ \taug>0}}~ \sup_{\substack{\beta\geq0 \\ \tauh>0}}~ \frac{1}{n}\Rc_n(\alpha,\taug,\beta,\tauh) \rP\inf_{\substack{|\alpha-\alpha_*|\geq \eps  \\ \taug>0}}~ \sup_{\substack{\beta\geq0 \\ \tauh>0}}~\Dc(\alpha,\taug,\beta,\tauh)=:\phi_{\Sc^c_\eps}.
\end{align}
Recall from Section \ref{sec:scal_illu} that the variable $\alpha$ plays the role of the magnitude of  $\w$, hence the random optimization in the LHS of \eqref{eq:ptw_dd_res} corresponds to the restricted (AO) problem $\phi_{\Sc^c_\eps}(\g,\h)$ of Corollary \ref{cor:asy_CGMT}. What remains for the corollary to apply is showing that $\phio_{\Sc^c_\eps}>\phio$. This follows by assumption of the theorem that the minimizer over $\alpha$ in the RHS of \eqref{eq:ptw_dd} is unique. Applying the corollary, shows the desired and concludes the proof.

\section{Prior Literature}\label{sec:prior}

In Section \ref{sec:rel} we gave a brief overview of the results most closely aligned with our work. Here, we expand on this discussion. 

\vp
\noindent{\textbf{Phase transitions}.}  The work on phase transitions of non-smooth convex optimization used to recover structured signals from noiseless linear measurements is an essential precursor for the follow-up work on the error behavior of  regularized M-estimators.  Hence, we discuss it here in some detail. This line of work attempts to characterize the minimum number of measurements, say $m_*$, as a function of the structural complexity of $\x_0$ and of the choice of $f$, such that $\x_0$ is the unique solution of the optimization $\min_{\A\x=\A\x_0} f(\x)$ with probability approacihing 1 if and only if $m>m_*$.

The early works in the field studied this question in the context of sparse signal recovery and $\ell_1$-minimization (i.e. $f(\x)=\|\x\|_1$); they showed that $\ell_1$-minimization can recover a sparse signal $\x_0$ from fewer observations than the ambient dimension $n$ \cite{candes2006near},\cite{donoho2006high,donoho2009counting}. On the one hand, 
Candes \& Tao assumed the measurement matrix $\A$ satisfies certain restricted isometry properties and provided an ``order-optimal" (with very loose constants) upper bound on $m_*$. On the other hand, when $\A$ has entries iid Gaussian, Donoho and Tanner  obtained an asymptotically precise upper bound on $m_*$, via polytope angle calculations and related ideas from combinatorial geometry. The results of Donoho and Tanner were latter extended to weighted $\ell_1$-minimization and were supplemented with robustness guarantees in \cite{xu2011precise}. However, the combinatorial geometry approach has proved hard to extend to regularizers whose set of sub-gradients is non-polyhedral (the most representative such example is nuclear-norm minimization for the low-rank recovery problem, see for example \cite{recht2011null} for some early loose performance bounds using this approach). 
 
 In early 2005, Rudelson \& Vershynin \cite{rudelson2006sparse} proposed a different approach to studying $\ell_1$-minimization
 that uses Gordon's Gaussian Min-max Theorem (GMT) (specifically, a corollary of it known as the ``escape through a mesh" lemma \cite{GorLem}). Stojnic  refined this approach and  obtained an empirically sharp upper bound on $m_*$ both for sparse and group-sparse vectors \cite{Sto,stoBlock}. This approach is simpler than that of Donoho \& Tanner and extends to very general settings. Oymak \& Hassibi \cite{oymRank} used it to study the low-rank recovery problem, and later, Chandrasekaran et al. \cite{Cha} developed a geometric framework and were able to analyze general structures and convex 
regularizers $f$, while clarifying the key role played in the analysis by the geometric concept of ``Gaussian width" \cite{GorLem}. See also \cite{mccoy2014sharp,Foygel} for extensions to
other signal recovery problems. 
 
The works discussed thus far only derive upper bounds on $m_*$. Matching lower bounds that prove the asymptotic tightness of the former (known as \emph{phase-transition}) are even more recent. Bayati et. al \cite{bayati2015universality} rigorously demonstrates the phase transition phenomenon for $\ell_1$-minimization. The analysis is based  on a state evolution framework for an iterative Approximate Message Passing (AMP)  algorithm inspired by statistical physics, which was earlier introduced by Donoho et. al \cite{AMP,bayati2011dynamics}. Amelunxen et. al. \cite{TroppEdge} took a different route, using tools from conic integral geometry they established for the first time that previous results of \cite{Cha} were tight. In particular, they showed that: (a) a phase transition almost always exists for general convex regularizers $f$; (b) that it can be located exactly by computing the ``statistical dimension" (which
is very related to the ``Gaussian width", but has some extra favorable
properties); and (c) that it is possible to give accurate upper and lower bounds for the statistical dimension. Subsequently, Stojnic \cite{stojnic2013upper} combined his earlier approach that was based on Gordon's GMT with a convex duality argument and used this to prove that his earlier bounds on $\ell_1$ and $\ell_{1,2}$ were asymptotically tight. (A similar observation was also reported in \cite[Rem.~2.9]{TroppEdge}.) Stojnic's approach deserves special credit under the prism of our work, since it essentially motivated the inspired all the subsequent developments on the study of the precise reconstruction error under noisy measurements using Gaussian process methods.

\vp
\noindent{\textbf{Precise reconstruction error}.}~ As mentioned in Section \ref{sec:rel},  there is a very long list of early results on the error performance of regularized M-estimators which derive ``order-wise" bounds that involve unknown scaling constants (e.g. \cite{candes2007dantzig,Belloni,bickel,negahban2012unified, wainwright2014structured, vershynin2014estimation, banerjee2014estimation,li2015geometric} and references therein). Nevertheless, in this discussion we focus entirely on more recent results that derive precise characterizations rather than  loose bounds. Unless otherwise stated, the
literature that we describe below takes  the random measurement matrix $\A$ to have  independent
Gaussian entries (but, see Remark \ref{rem:Gaussian}). Also,  it studies the high-dimensional asymptotic regime where $m$ and $n$ grow to infinity at a proportional rate.

Chronologically, the first such results  were derived using the AMP framework by Bayati, Donoho, Maleki and Montanari \cite{DMM,montanariLASSO}. Both references consider a least-squares loss function with $\ell_1$-regularization (a.k.a. LASSO) and gaussian noise distribution: \cite{DMM} developed formal expressions for the reconstruction error at high-SNR under optimal tuning of the regularizer parameter $\la>0$; \cite{montanariLASSO} explicitly characterizeed the reconstruction error for all values of $\la$ and all values of SNR. Subsequent works \cite{malekiComplexLASSO,Armeen,donoho2013accurate} involve extensions of the results to other separable regularizers (e.g. $\ell_{1,2}$-norm). In late 2013, Donoho and Montanari \cite{montanari13} introduced an extension of the AMP framework to analyze the error performance of loss functions other than  least-squares. Their analysis applies to separable, strongly-convex and smooth loss functions, to iid signal statistics, and to iid noise statistics with bounded second moments. Donoho \& Montanari consider no regularization, hence, their analysis restricts the normalized number of measurements to $\delta = m/n>1$. Very recently, Bradic \& Chen \cite{bradic2015robustness}  built upon the framework of \cite{montanari13} and extended the analysis to sparse signal recovery and $\ell_1$-regularization, under more general (but somewhat stringent) conditions on the loss function and on the  noise and signal statistics. Our work raises the assumptions on separability, smoothness and strong convexity of the loss function, considers general convex regularizers and more general signal and noise statistics. Also, our analytic approach via the CGMT framework is somewhat more direct and potentially more powerful. The AMP framework involves two steps of analysis: (a) it analyzes the error performance of the AMP algorithm based on a state evolution framework inspired by statistical physics; (b) it shows that the AMP algorithm has the same error performance as the M-estimator. This way it concludes about the behavior of the latter. In contrast, our approach directly analyzes the error behavior of the original M-estimator. Nevertheless, we remark on the algorithmic advantage of the AMP framework which (whenever applicable) comes with a fast(er) iterative algorithm with the same error performance guarantees as the convex M-estimator. Also, the AMP framework has been used for the analysis of other problems beyond noisy signal recovery from linear measurements (see \cite{montanari2015statistical} and references therein). It remains an open and potentially interesting question to study deeper connections between the two different frameworks  of analysis, namely the CGMT and the AMP frameworks.

A different approach that uses Gaussian process methods to study the precise reconstruction error was introduced by Stojnic \cite{StoLASSO} in 2013. Stojnic considered an $\ell_1$-constrained version of the LASSO under gaussian noise distribution in the high-SNR regime. Under this setting, he was the first to note that Gordon's GMT, which had been previously used to derive only upper bounds on the error performance, could be combined with a convex duality argument to yield bounds that are tight. Shortly after, Oymak et. al. \cite{OTH13} extended Stojnic's results to the regularized case by deriving tight high-SNR bounds for the square-root LASSO with general convex regularizers. Thrampoulidis et. al. \cite{ICASSP15} performed a more careful analysis further extending the results to arbitrary values of the SNR and Thrampoulidis and Hassibi \cite{Allerton14} were the first to use the GMT approach for loss functions beyond least-squares by analyzing the Least Absolute Deviations (LAD) algorithm. This line of work (and also \cite{ell22,IRO}) eventually led to a refined, clear and extended version of Stojnic's framework in \cite{COLT15}, under the name Convex Gaussian Min-Max Theorem (CGMT) framework. Thrampoulidis et. al. note in \cite{COLT15} that the framework can in principle be applied to analyze general convex loss functions and regularizers, but at the time it was not clear how to do this in a unifying way. Our work answers this question and applies the CGMT framework to regularized M-estimators with general loss and regularizer functions, and general signal and noise statistics. Also, it offers a strengthened version of the CGMT (cf. Theorem \ref{thm:CGMT}), which allows the study of performance measures beyonds the mean square error. 

Finally, a third approach to analyze the mean-squared error performance of high-dimensional M-estimators has been undertaken by El Karoui in \cite{karoui13,karoui15}. El Karoui uses leave-one-out and martingale ideas from statistics and ideas from random matrix theory to accurately predict the squared error of ridge-regularized (a.k.a. $f(\x)=\|\x\|_2^2$) M-estimators. The analysis can handle noise distributions with unbounded moments, but it requires a smooth and separable loss function. In our work, we drop both these assumptions and extend the results to general convex regularizers. In comparing the two works, we note that El Karoui's proof technique can deal with  more general assumptions  on the design matrix $\A$. (Nevertheless, please see Remark \ref{rem:Gaussian}). Beyond matrices with iid entries, El Karoui \cite{karoui15} further considers elliptical models. Even though we do not explicitely consider such an extension in the current paper, our proof technique is readily applicable to this more general scenario. Please also refer to the short discussion at the end of Section \ref{sec:conc}.

\begin{remark}[On Universality]\label{rem:Gaussian}
Since the works \cite{Sto,Cha,TroppEdge} we now have a very clear understanding of
the phase transitions of non-smooth convex signal recovery methods with iid Gaussian measurements. Under the same measurement model, the current paper extends this clear picture to the noisy setting by precisely characterizing the reconstruction error. Here, we briefly discuss relevant results that prove the universal behavior of iid Gaussian measurements over a wider class of distributions.  

Bayati et. al \cite{bayati2015universality} has rigorously demonstrated that the phase transition of $\ell_1$-minimization is universal over a wider class of iid random measurement matrices. See also \cite{donoho2009observed,korada2011applications}. Very recently, Oymak \& Tropp \cite{TroppUniversal} have significantly extended the universality result of phase-transitions to general convex regularizers and to very general distributions on the entries of $\A$ (see \cite[Prop.~5.1]{TroppUniversal}, for an exact statement). \cite{TroppUniversal} also yields conclusions for the noisy setting: it proves the universality of the error bounds of \cite{OTH13} for the constrained LASSO. It remains an open challenge to extend these results to the general setting of arbitrary loss and regularizer functions of the current paper. We remark that the results of \cite{TroppUniversal} use some of the ideas that were developed in \cite{OTH13,COLT15} and in the current paper. Also, note that the results of El Karoui \cite{karoui15} on the ridge-regularized M-estimators hold  for matrices with iid entries beyond Gaussian. 

From this discussion we have excluded random measurement models beyond ones with iid entries. An important example includes design matrices with orthogonal rows, e.g. Isotropically Random Orthogonal (IRO) matrices, randomly subsampled Fourier and Hadamard matrices, etc.. While the universality of phase transition appears to extend to such designs, this is not the case for the reconstruction error. Thrampoulidis \& Hassibi \cite{IRO} have proved that the error behavior of the LASSO is different for IRO and for Gaussian matrices. The same is true for the elliptical model considered by El Karoui in \cite{karoui15}.
\end{remark}

\begin{remark}[Heuristic results]
In parallel to the works referenced above, there have been a number of works that studied the same questions mixing heuristic-based arguments and extended simulations.  For example, \cite{guo2009single,kabashima2010statistical,rangan2009asymptotic,vehkapera2014analysis} use  the replica method from
statistical physics, which provides a powerful tool
for tackling hard analytical problems, but
still lacks mathematical rigor in some parts. Closer to the setting of our work,  the high-dimensional error performance of regularized M-estimators has been previously considered via heuristic arguments and simulations in \cite{el2013robust,bean2013optimal}. In particular,  Bean et. al. \cite{bean2013optimal} shows that maximum likelihood estimators are in general inefficient in high-dimension
and initiate the study of optimal loss functions. It is worth revisiting and extending those results in connection to the mathematically rigorous approach of the current paper. 
\end{remark}
\section{Conclusions and Future work}\label{sec:conc}

Theorem \ref{thm:master} predicts the squared error performance of general regularized M-estimators in the presence of noisy linear Gaussian measurements. The analysis is performed in the high-dimensional regime where both the number of measurements and the dimension of the signal grow large at a proportional rate. The theorem identifies the precise dependence of the error performance on the problem parameters, namely, the loss function $\loss$, the regularizer $f$, the noise and signal distributions $p_\z$ and $p_{\x_0}$, the value of the regularizer parameter $\delta$ and the normalized number of measurement $\delta$. 

We envision several interesting directions in which the results of this paper can operate as a starting point for future work, which we shall discuss next.

\vp
\noindent{\textbf{Other instances}.}~
In Section \ref{sec:sim}, all existing results in the literature were derived as special cases of Theorem \ref{thm:master} and several novel instances of M-estimators were also analyzed. The list of examples that was presented is far from being exhaustive; depending on the application in mind other choices of loss functions, regularizers, noise distributions might be of interest. As long as those satisfy the mild assumptions of Theorems \ref{thm:master} or \ref{thm:sep}, they can be analyzed using those. To give an example that was not treated here and might be of interest in applications, is evaluating the performance of an $\ell_\infty$-loss function in the presence of bounded noise.

\vp
\noindent{\textbf{Optimal tuning}.}~ Regularized M-estimators have been widely used in practice and a remaining challenging issue is that of optimally tuning the regularizer parameter $\la$. Theorem \ref{thm:master} establishes the precise dependence of the error performance on $\la$. Hence, in principle, it can be used to provide valuable insights and guidelines regarding its optimal choice. In Section \ref{sec:num} and Figure \ref{fig:LAD_sparse} we saw an example that highlights the importance of being able to choose $\la$ in the correct range of values, otherwise the performance can be significantly deteriorated.

\vp
\noindent{\textbf{Comparing performances}.}~
Theorem \ref{thm:master} can be used to evaluate the performance of general M-estimators under different settings. Figure \ref{fig:LAD_sparse} serves as a preliminary numerical illustration: under the specific setting, LAD outperforms the LASSO for appropriate choices of $\la$. The error expressions of Theorem \ref{thm:master} will allow quantifying such comparisons and yield \emph{analytic} such conclusions.

\vp
\noindent{\textbf{Optimal loss/regularizer functions}.}~ One of the most exciting (at the same time challenging) potential applications of the results of this paper is identifying optimal choices for the loss and regularizer functions under different settings. Since the error characterization differs from the corresponding results of classical statistics (where the signal dimension is fixed), we expect new phenomena to arise and the answers to differ in general. 
When it comes to the regularizer, the optimality question has been partially considered in the literature. When the structured signal $\x_0$ is considered  \emph{fixed}, then a good choice for the regularizer $f$ is one that minimizes the statistical dimension of the tangent cone of $f$ at $\x_0$ (cf. Section \ref{sec:cone}) \cite{Cha,TroppEdge,OTH13}\footnote{Based on this, Chandrasekaran et. al. have suggested the notion of ``atomic-norms" as a principled way for constructing appropriate convex regularizer functions for different kind of structures \cite{Cha}.}. The results of \cite{Cha} and \cite{TroppEdge} combined prove that  this is indeed the optimal choice in the noiseless case. The same is true  in the high-SNR regime when a least-squares loss function is used as shown in \cite{OTH13,ell22}. The more general setting of the current paper, will allow revisiting this question and extending the results to capture instances where $\x_0$ is associated with a prior distribution $p_{\x_0}$, the loss function differs from a least-squares one, and, the noise variance is not necessarily tending to zero. Theorem \ref{thm:master} suggests that the quantity that will be involved in the optimization is the Expected Moreau envelope, which is in fact a generalization of the statistical dimension (cf. Section \ref{sec:cone}). When it comes to the optimal choice of the loss function with respect to the noise distribution $p_\z$,  less is known. Again, the expected Moreau envelope will be central in the optimization, but is yet to be understood how this will translate into practical recipes for the design of optimal loss functions.

\vp
\noindent{\textbf{Consistency}.}~ Another important question that is also related to the optimal choice of loss/regularizer functions, asks for conditions under which the squared error is zero, if at all this is possible. In Remark \ref{rem:LAD_zero}, we discussed an example of a an M-estimator that under specific noise and signal distributions, becomes consistent provided that the normalized number of measurements is large enough and that the regularizer parameter is chosen on the correct range (also, see Figure  \ref{fig:LAD_sparse}). Answering questions regarding consistency, amounts to  identifying conditions under which $\alpha_*=0$ can be the optimal solution to the (SPO) of Theorem \ref{thm:master}.


\vp
\noindent{\textbf{Beyond squared error}.}~ The emphasis in this work has been on characterizing the squared error $\|\hat\x-\x_0\|_2^2$ of regularized M-estimators. This appears commonly in practice, but, depending on the application, other performance metrics might be more appropriate. A few representative examples might include $\|\hat\x-\x_0\|_p$,  $\sum_{i=1}^n \mathbf{1}_{\left\{\hat\x_i =0,{\x}_{0,i}\neq 0\right\}}$\footnote{This metric is known as the subset or variable selection criterion and measures the success in the recovery of the subset of nonzero indices of $\x_0$\cite{wainwright2009sharp}. }, $\sum_{i=1}^n \mathbf{1}_{\left\{\hat\x_i <0\right\}}$\footnote{This metric is appropriate in (say) the following wireless communications setting (see \cite{ICASSP16} for details). Assume that $\x_0\in\{\pm1\}^{n}$ is a BPSK signal. A popular algorithm for recovering $\x_0$, called the box-relaxation, produces an estimate $\tilde\x=\mathrm{sign}\left(\arg\min_{\x_i\in[-1,1]}\|\y-\A\x\|_2\right)$. Assuming wlog that $\x_0=\{+1\}^n$, then $\frac{1}{n}\sum_{i=1}^n \mathbf{1}_{\left\{\hat\x_i <0\right\}}$ measures the empirical \emph{probability of error} of the scheme.}, etc.. The principles and mechanics of this paper can be used to derive characterizations for those metrics, as well. In particular, the same key idea, that of analyzing an Auxiliary optimization problem instead of the original (PO), is applicable. Note that in the statement of the CGMT Theorem \ref{thm:CGMT}(iii), there is nothing constraining the set $\Sc$ to be chosen. Here, we chose $\Sc=\{\w~|~|\|\w\|_2-\alpha_*|<\eps\}$. If  we were interested in (say) $\sum_{i=1}^n \mathbf{1}_{\left\{\hat\x_i <0\right\}}$, then, it would be appropriate to  apply the theorem for a different set, namely $\Sc=\{\w~|~|\frac{1}{n}\sum_{i=1}^n \mathbf{1}_{\left\{\hat\x_i <0\right\}}-\alpha_*|<\eps\}$ (see \cite{ICASSP16}).


\vp
\noindent{\textbf{Beyond Gaussian Designs}.}~ Theorem \ref{thm:master} assumes that the entries of the design matrix  $\A$ are iid Gaussian. Yet, there are potentials of extending the results to other classes of distributions as discussed next.

\noindent\emph{Matrices with iid entries}.~ 
Preliminary numerical results (Figure \ref{fig:bern} is an example) suggest a \emph{universality} property of the prediction of Theorem \ref{thm:master} to design matrices with entries iid drawn from a wider
class of probability distributions, e.g. sub-gaussians. Besides simulation results, it is worth mentioning that El Karoui proves this to be the case for M-estimators with ridge-regulararization and a twice differentiable loss function \cite{karoui15}.
%


\noindent\emph{Isotropically Random Orthogonal (IRO) Matrices}.~ 
An IRO matrix  $\A$ is sampled uniformly at random from the manifold of row-orthogonal
matrices satisfying $\A\A^T = \I_m$. Studying the error performance of M-estimators under such designs is of practical interest\footnote{
Certain classes of orthogonal matrices
such as discrete-cosine and Hadamard allow for fast
multiplication and reduced complexity. Numerical simulations in \cite{IRO} suggest that the error prediction for IRO matrices is valid for random DCT and Hadamard matrices. 
}.
In  \cite{IRO}, we were able to extend the CGMT framework to  accurately predict the error performance of the LASSO when $\A$ is IRO and $\z$ is iid gaussian. 
 Extending those ideas to general M-estimators in a flavor similar to the setting of this paper is a possible direction for future research.
 
%

\noindent\emph{Elliptical Distributions}. Assume $\G$ with entries iid Gaussian, $\eps_i$'s be independent and independent of $\G$, and, $\A = \mathrm{diag}(\eps_1,\ldots,\eps_m)\G$. We are motivated to consider such ``elliptical-like" distributions by the relevant work \cite{karoui15}. 
It is rather straightforward how to extend the CGMT framework, and consequently the prediction of Theorem \ref{thm:master}, to account for such a class of distributions. We might consider explaining the details in future work.

\section*{Acknowledgement}
The authors would like to thank George Moustakides, Joel Tropp, P. P. Vaidyanathan and Panagiotis Vergados  for helpful conversations and suggestions. Christos Thrampoulidis would also like to thank Ashkan Panahi and Linqi (Daniel) Guo; some of the ideas that led to this work were born in collaboration with them, cf. \cite{ell22,ICASSP15}.
%


\bibliography{compbib}

\newpage
\appendix
\section{Proof of Theorem \ref{thm:master}}\label{sec:proof}

Here, we prove Theorem \ref{thm:master}. The proof consists of several steps and intermediate results, that are stated as Lemmas. The proofs of the latter are all deferred to Appendix \ref{sec:thm_app}.

\subsection{Preliminaries}

\begin{align}\nn
\hat\x := \arg\min_\x \loss(\y-\A\x) + \la f(\x).
\end{align}

Recall that $\y=\A\x_0+\z$. Our goal is to characterize the nontrivial limiting behavior of $\|\hat\x-\x_0\|_2/\sqrt{n}$. We start with a simple  change of variables $\w:=(\x-\x_0)/\sqrt{n}$, to directly get a handle on the \emph{error vector} $\w$. Also, we normalize the objective by dividing with ${n}$ so that the optimal cost is of constant order. Then,

\begin{align}\label{eq:PO}
\hat\w := \arg\min_\w \frac{1}{{n}}\left\{\loss\left(\z-\sqrt{n}\A\w\right) + \la f\left(\x_0+\sqrt{n}\w\right)\right\}.
\end{align}
Instead of the optimization problem above, we will analyze a simpler Auxiliary Optimization (AO) that is tightly related to the Primary Optimization (PO) in \eqref{eq:PO} via the CGMT.

\subsection{The CGMT for M-estimators}


In this section, we show how the CGMT Theorem \ref{thm:CGMT} can be applied to predict the limiting behavior of the solution $\|\hat\w\|_2$ to the minimization in \eqref{eq:PO}. The main challenge here is to express \eqref{eq:PO} as a (convex-concave) minimax optimization in which the involved random matrix (here $\A$) appears in a bilinear form, exactly as in \eqref{eq:PO_gen}.
 Also, some side technical details need to be taken care of. For example, in \eqref{eq:PO_gen} the optimization constraints are required by Theorem \ref{thm:CGMT} to be bounded, which is not the case with \eqref{eq:PO}. We start with addressing this immediately next.

\subsubsection{Boundedness of the Error}\label{sec:B}

The constraint set over which $\w$ is optimized in \eqref{eq:PO_gen} is unbounded. We will introduce ``artificial" boundedness constraints that allow applying Theorem \ref{thm:CGMT}, while they do not affect the optimization itself. For this purpose, recall our goal of proving that $\|\hat\w\|_2$ converges to some (finite) $\alpha_*$ defined in Theorem \ref{thm:master}. Define the set $\Sc_\w=\{ \w ~|~ \|\w\|_2\leq K_\alpha \}$, where
\begin{align}\label{eq:K_a}
K_\alpha:=\alpha_*+ \zeta
\end{align} for a constant $\zeta>0$, and, consider the ``bounded" version of \eqref{eq:PO}:
\begin{align}\label{eq:PO_B}
\hat\w^{B} := \arg\min_{\w\in\Sc_\w} \frac{1}{{n}}\left\{ \loss\left(\z-\sqrt{n}\A\w\right) + \la f\left(\x_0+\sqrt{n}\w\right)\right\}.
\end{align}
We expect that the additional constraint $\w\in\Sc_\w$ in \eqref{eq:PO_B} will not affect the optimization with high probability when $n$ is large enough. The idea here is that the minimizer of the original unconstrained problem in \eqref{eq:PO} satisfies $\|\hat\w\|_2\approx\alpha_*<K_\alpha$ w.h.p.. Of course, this latter statement is yet to be proven! 
Once this is done, we can return and confirm that our initial expectation is met. 
Lemma \ref{lem:PO_B} below shows that if $\|\hat\w^B\|\rP\alpha_*<K_\alpha$, then, the same is true for the optimal of \eqref{eq:PO}.
\begin{lem}\label{lem:PO_B}
For the two optimizations in \eqref{eq:PO} and \eqref{eq:PO_B}, let $\hat\w$ and $\hat\w^B$ be optimal solutions. Also, recall the definition of $K_\alpha$ in \eqref{eq:K_a}. If  $\|\hat\w^B\|\rP\alpha_*$, then $\|\hat\w\|\rP\alpha_*$.
\end{lem}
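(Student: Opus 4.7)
The plan is to exploit the strict gap $\alpha_* < K_\alpha$ together with the convexity of the objective $M(\w) := n^{-1}\{\loss(\z - \sqrt{n}\A\w) + \la f(\x_0 + \sqrt{n}\w)\}$ to transfer the convergence from $\hat\w^B$ to any unconstrained minimizer $\hat\w$. The one-line intuition is that a convex function minimized over a convex set at an interior point is already minimized globally, since the normal cone to $\Sc_\w$ collapses to $\{0\}$ there and the first-order optimality condition for \eqref{eq:PO_B} reduces to $0 \in \partial M(\hat\w^B)$.

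First I would fix $\epsilon \in (0, \zeta)$ and introduce the event $E_n := \{\|\hat\w^B\|_2 < \alpha_* + \epsilon/2\}$, which by the hypothesis $\|\hat\w^B\|_2 \rP \alpha_*$ satisfies $\Pro(E_n) \to 1$. Since $\alpha_* + \epsilon/2 < K_\alpha$, on $E_n$ the point $\hat\w^B$ lies in the interior of $\Sc_\w$, so by the above observation $\hat\w^B$ is in fact a global minimizer of $M$ over $\R^n$. Consequently, for any unconstrained minimizer $\hat\w$ of \eqref{eq:PO}, $M(\hat\w) = M(\hat\w^B)$, and by convexity of $M$ the entire segment $\w(t) := (1-t)\hat\w^B + t\hat\w$, $t \in [0,1]$, lies in the set of global minimizers of $M$.

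Next I would rule out the two possible deviations of $\|\hat\w\|_2$ from $\alpha_*$ by contradiction. If $\|\hat\w\|_2 > \alpha_* + \epsilon$, continuity of $t \mapsto \|\w(t)\|_2$ combined with the intermediate value theorem yields $t^\star \in (0,1]$ with $\|\w(t^\star)\|_2 = \alpha_* + 3\epsilon/4$; since this value is strictly less than $K_\alpha$, the point $\w(t^\star)$ lies in $\Sc_\w$ and is therefore an optimal solution of \eqref{eq:PO_B} whose norm is at least $\epsilon/4$ away from $\alpha_*$. If $\|\hat\w\|_2 < \alpha_* - \epsilon$, then $\hat\w \in \Sc_\w$ already and is an optimal solution of \eqref{eq:PO_B} with norm bounded away from $\alpha_*$. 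Both cases contradict the hypothesis $\|\hat\w^B\|_2 \rP \alpha_*$ when it is read as holding for all measurable selections of optimal solutions of \eqref{eq:PO_B}; hence on $E_n$ one gets $|\|\hat\w\|_2 - \alpha_*| \leq \epsilon$, giving the desired convergence since $\epsilon$ was arbitrary.

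The main obstacle is precisely this uniformity-in-selection issue: a priori the hypothesis only concerns the specific sequence $\hat\w^B$, whereas the convex-combination step produces alternative optimal selections whose norms need not be close to $\alpha_*$. The natural resolution is to read the hypothesis as applying uniformly over selections, which matches the $\arg\min$ convention stated in Section \ref{sec:pre} and, more importantly, is the form in which $\|\hat\w^B\|_2 \rP \alpha_*$ will eventually be established by the CGMT in Theorem \ref{thm:CGMT}(iii), applied with $\Sc := \{\w \in \Sc_\w : |\|\w\|_2 - \alpha_*| < \epsilon\}$. A cleaner alternative, if strict convexity of $M$ can be verified for the Gaussian design, is to conclude uniqueness of the unconstrained minimizer w.h.p., so that $\hat\w = \hat\w^B$ on $E_n$ and the proof collapses to one line.
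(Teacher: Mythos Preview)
Your proof is correct and follows essentially the same route as the paper's: both use convexity of $M$ to produce, from a putative unconstrained minimizer $\hat\w$ with $|\|\hat\w\|_2-\alpha_*|\geq\eps$, a point on the segment $[\hat\w^B,\hat\w]$ that lies in $\Sc_\w$, is optimal for \eqref{eq:PO_B}, yet has norm bounded away from $\alpha_*$, contradicting the hypothesis read uniformly over optimal selections. The paper's argument is marginally more direct in that it only invokes $M(\hat\w)\leq M(\hat\w^B)$ (which is immediate since $\hat\w$ minimizes over a larger set) rather than your interior/normal-cone step to force equality; conversely, you make explicit the uniformity-in-selection caveat, which the paper's own proof also relies on but leaves implicit.
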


%
%

Owing to the result of the lemma, henceforth, we work with the bounded optimization in \eqref{eq:PO_B}. Using some abuse of notation, we will refer to optimal solution of \eqref{eq:PO_B} as $\hat\w$, rather than $\hat\w^B$.

\subsubsection{Identifying the (PO)}

Here, we bring the minimization in \eqref{eq:PO_B} it in the form of the (PO) in \eqref{eq:PO_gen}. For this purpose, we will use Lagrange duality.
Note that the former can be equivalently expressed as
\begin{align}\nn
\hat\w &= \arg\min_{\w\in\Sc_\w,\vb}\frac{1}{{n}}\left\{ \loss(\sqrt{n}\vb) + \la f(\x_0+\sqrt{n}\w)\right\} \quad\text{subject~to}\quad \vb={\z}-{\sqrt{n}}\A\w.
\end{align}
Associating a dual variable $\ub$ to the equality constraint above, we write it as
\begin{align}
\hat\w=\arg\min_{\w\in\Sc_\w,\vb}\max_{\ub} \frac{1}{\sqrt{n}}\left\{-\ub^T(\sqrt{n}\A)\w +\ub^T\z - \ub^T\vb\right\} + \frac{1}{{n}}\left\{ \loss(\vb) + \la f(\x_0+\sqrt{n}\w)\right\}.
\label{eq:PO2a}
\end{align}
It takes no much effort to check that the objective function above is in the desired format of \eqref{eq:PO_gen}: the random matrix $\A$ appears in a bilinear term $\ub^T\A\w$, and, the rest of the terms form a convex-concave function in $\ub,\w$. Furthermore, we can use Assumption \ref{ass:tech}(b) to show that the optimal $\ub_*$ is bounded, which is a requirement of Theorem \ref{thm:CGMT}. In the same lines as in Section \ref{sec:B}, we henceforth work with the ``bounded" version of \eqref{eq:PO2a}, namely, 
\begin{align}
\hat\w=\arg\min_{\w\in\Sc_\w,\vb}\max_{\ub\in\Sc_\ub} \frac{1}{\sqrt{n}}\left\{-\ub^T(\sqrt{n}\A)\w +\ub^T\z - \ub^T\vb\right\} + \frac{1}{{n}}\left\{ \loss(\vb) + \la f(\x_0+\sqrt{n}\w)\right\}.\label{eq:PO2}
\end{align}
for $\Sc_\ub:=\{\ub~|~\|\ub\|_2\leq K_{\beta}\}$ and $K_\beta>0$ a sufficiently large constant.

\begin{lem}\label{lem:K_b}
If Assumption \ref{ass:tech}(b) holds, then there exists sufficiently large constant $K_\beta$, such that the optimization problem in \eqref{eq:PO2} is equivalent to that in \eqref{eq:PO_B}, with probability approaching 1 in the limit of $n\rightarrow\infty$.
\end{lem}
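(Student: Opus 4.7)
The plan is to show that the dual variable associated with the equality constraint $\vb = \z - \sqrt{n}\A\w$ in the Lagrangian underlying \eqref{eq:PO2a} has Euclidean norm bounded by a deterministic constant with probability approaching one. Any choice of $K_\beta$ strictly exceeding that constant then leaves the restriction $\ub\in\Sc_\ub$ inactive at the saddle point, so \eqref{eq:PO2} and \eqref{eq:PO_B} coincide on the same event. I would first invoke strong duality—which applies since the problem is convex with an affine equality constraint and the primal optimum is finite on the compact set $\Sc_\w$—to produce an optimal primal-dual triple $(\w_*,\vb_*,\ub_*)$ for \eqref{eq:PO2a}, with $\w_*$ also optimal for \eqref{eq:PO_B}.

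Next I would extract a quantitative bound on $\|\ub_*\|_2$ from first-order optimality. Stationarity of the Lagrangian in $\vb$ yields the subdifferential inclusion $\ub_* \in \tfrac{1}{\sqrt{n}}\partial\loss(\vb_*)$, and primal feasibility gives $\vb_* = \z - \sqrt{n}\A\w_*$ with $\w_*\in\Sc_\w$. Hence the task reduces to bounding $\sup_{\s\in\partial\loss(\vb_*)}\|\s\|_2$ by a quantity of order $\sqrt{n}$ with high probability.

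This is where Assumption \ref{ass:tech}(b) enters, and it splits the analysis into two regimes. In the regime where $\|\z\|_2/\sqrt{m}\leq C$ w.p.a.1, the classical operator-norm bound $\|\A\|_{\mathrm{op}} = O(1)$ for a matrix with iid $\mathcal{N}(0,1/n)$ entries, together with $\|\w_*\|_2\leq K_\alpha$, yields $\|\vb_*\|_2 \leq \|\z\|_2 + \sqrt{n}\|\A\|_{\mathrm{op}}K_\alpha = O(\sqrt{n})$ w.p.a.1, and the growth condition on the loss function imposed in Section \ref{sec:model} gives $\sup_{\s\in\partial\loss(\vb_*)}\|\s\|_2 = O(\sqrt{n})$. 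In the alternative regime where $\sup_{\vb}\sup_{\s\in\partial\loss(\vb)}\|\s\|_2$ is finite for every $m$, the same growth condition, applied to the projection of any $\vb$ onto the ball of radius $\sqrt{n}$, forces this global supremum to be at most $O(\sqrt{n})$, and the bound holds unconditionally in $\vb_*$. In either case $\|\ub_*\|_2 = O(1)$ w.p.a.1, so any $K_\beta$ strictly exceeding this constant is sufficient.

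The main obstacle is the second regime, where we lack a deterministic bound on $\|\vb_*\|_2$ and must rely instead on the uniform growth rate of $\partial\loss$ encoded by the normalization condition; this is precisely the case targeted by the ``linear growth at infinity'' requirement familiar from robust statistics. A secondary technical point is the existence of the optimal triple $(\w_*,\vb_*,\ub_*)$: compactness of $\Sc_\w$ and lower semi-continuity of $\loss$ and $f$ give primal attainment, after which strong duality yields an optimal $\ub_*$. Absent these structural ingredients, the proof would have to be run along approximating sequences with an extra limiting argument, but here the convex and closed structure makes attainment immediate.
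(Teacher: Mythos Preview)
Your proposal is correct and matches the paper's own argument almost step for step: obtain the saddle triple $(\w_*,\vb_*,\ub_*)$, read off $\ub_*\in n^{-1/2}\partial\loss(\vb_*)$ and $\vb_*=\z-\sqrt{n}\A\w_*$ from first-order optimality, then handle the two branches of Assumption~\ref{ass:tech}(b), using in the bounded-noise branch the Gaussian operator-norm bound on $\A$ together with the normalization condition from Section~\ref{sec:model}. The only quibble is your projection argument in the globally-bounded-subdifferential branch: controlling the subgradient at the projection of $\vb$ onto a ball does not by itself control the subgradient at $\vb$ itself, so that step as written does not actually force the global supremum down to $O(\sqrt{n})$; the paper here simply reads the bound off directly from $\ub_*\in n^{-1/2}\partial\loss(\vb_*)$, implicitly taking the finiteness in that branch of the assumption to already be of order $\sqrt{n}$ (as it is in all the paper's examples).
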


As a last step, before writing down the corresponding (AO) problem, it will be useful for the analysis of the latter, to express $f$ in a variational form through its Fenchel conjugate, which gives,
\begin{align}
\hat\w=\arg\min_{\w\in\Sc_\w,\vb}\max_{\ub\in\Sc_{\ub},\s} \frac{1}{\sqrt{n}}\left\{-\ub^T(\sqrt{n}\A)\w + \ub^T\z - \ub^T\vb\right\} + \frac{1}{{n}}\left\{\loss(\vb) + \la \s^T\x_0+\la\sqrt{n}\s^T\w - \la f^*(\s)\right\} \label{eq:PO3}.
\end{align}

\subsubsection{The (AO)}
Having identified \eqref{eq:PO3} as the (PO) in our application, it is straightforward to write the corresponding (AO) problem following \eqref{eq:AO_gen}:
\begin{align}
\min_{\w\in\Sc_\w,\vb}~\max_{\ub\in\Sc_\ub,\s} \frac{1}{\sqrt{n}}\left\{\|\w\|_2\g^T\ub - \|\ub\|_2\h^T\w + \ub^T\z - \ub^T\vb\right\} + \frac{1}{{n}}\left\{\loss(\vb) + \la \s^T\x_0+\la\sqrt{n}\s^T\w - \la f^*(\s)\right\}\label{eq:AO3}.
\end{align}
Once we have identified the (AO) problem, Corollary \ref{cor:asy_CGMT} suggests analyzing that one instead of the (PO). Our goal is showing that $\|\hat\w\|_2\rP\alpha_*$. For this, we wish to apply the corollary to the following set 
$$
\Sc= \{ \w ~ | ~ |\|\w\|_2 - \alpha_*| > \eps\},
$$
for arbitrary $\eps>0$.

\subsubsection{Asymptotic min-max property of the (AO)}
It turns out that verifying the conditions of the corollary for the (AO) as it appears in \eqref{eq:AO3} is not directly easy. In short, what makes the analysis cumbersome is the fact that the optimization in \eqref{eq:AO3} is not convex (e.g. if $\g^T\ub$ is negative, then $\|\w\|_2\g^T\ub$ is not convex). Thus, flipping the order of min-max operations that would simplify the analysis is not directly justified. 

At this point, recall that the (PO) in \eqref{eq:PO3} is itself convex. In fact, for it,  all conditions of Sion's min-max Theorem \cite{Sion} are met, thus, the order of min-max operations can be flipped. According to the CGMT, the (PO) and the (AO) are tightly related in an asymptotic setting. We use this, to translate the convexity properties of the (PO) to the (AO). In essence, we show that when dimensions grow, the order of min-max operations in the (AO) can be flipped. Thus, we will instead consider the following problem as the (AO):

\begin{align}\label{eq:AO_t}
\phi(\g,\h):=\max_{\substack{0\leq\beta\leq K_\beta \\ \s}}~ \min_{\substack{\|\w\|_2\leq K_\alpha \\ \vb}}~ \max_{\|\ub\|_2=\beta} ~& \frac{1}{\sqrt{n}}(\|\w\|_2\g+\z-\vb)^T\ub - \frac{1}{\sqrt{n}}\|\ub\|_2\h^T\w \nn\\
&\qquad\qquad+ \frac{1}{{n}}\loss(\vb) + \frac{\la}{{n}}\s^T{\x_0}+\frac{\la}{\sqrt{n}}\s^T\w -\frac{\la}{{n}} f^*(\s).
\end{align}
Observe that the objective function remains the same; it is only the order of min-max operations that is slightly modified compared to \eqref{eq:AO3}. Since the objective function is not necessarily convex-concave in its arguments, there is no immediate guarantee that the two problems in \eqref{eq:AO3} and \eqref{eq:AO_t} are equivalent for any realizations of $\g$ and $\h$. However, the lemma below essentially shows that such a strong duality holds with high probability over $\g$ and $\h$ in high dimensions. Hence, the problem in \eqref{eq:AO_t} can be as well used, instead of the one in \eqref{eq:AO3}, in order to analyze the (PO). For this reason, henceforth, we refer to \eqref{eq:AO_t} as the (AO) problem.
%
%

%
\begin{lem}\label{lem:CGMT_M}
Let $\hat\w(\A)$ denote an optimal solution of \eqref{eq:PO}. Consider the (AO) problem in \eqref{eq:AO_t}.
Let $\alpha_*$ be as defined in Theorem \ref{thm:master}. For any $\eps>0$ define the set $\Sc:=\{\w~|~|\|\w\|_2-\alpha_*|< \eps\}$, and,  $\phi_{\Sc^c}(\g,\h)$ be the optimal cost of the same optimization as in \eqref{eq:AO_t}, only this time the minimization over $\w$ is further constrained such that $\w\notin \Sc$.
Assume that for any $K_\alpha>\alpha_*$ and for any sufficiently large $K_\beta$, there exist constants $\phio<\phio_\Scc$ such that for all $\eta>0$, with probability approaching one in the limit of $n\rightarrow\infty$ the following hold:
\begin{enumerate}[(a)]
\item $\phi(\g,\h) < \phio+\eta$,
\item $\phi_\Scc(\g,\h) > \phio_{\Scc} - \eta$.
\end{enumerate}
 Then,
 $$
 \lim_{n\rightarrow\infty}\Pro\left(~ |\|\hat\w(\A)\|_2 - \alpha_*|<\eps ~\right) = 1.
 $$
\end{lem}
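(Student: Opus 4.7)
The plan is to invoke Theorem~\ref{thm:CGMT}(iii) on the (PO) \eqref{eq:PO3} with the open set
\[
\Sc := \{\w \in \Sc_\w : |\|\w\|_2 - \alpha_*| < \eps\},
\]
whose conclusion $\Pro(\w_\Phi(\A) \in \Sc) \geq 1 - 4p$ for arbitrarily small $p$ is precisely the assertion of the lemma. The CGMT, however, requires verifying concentration of the \emph{canonical} (AO) associated to \eqref{eq:PO3}, namely the optimization \eqref{eq:AO3} in which the order $\min_{\w,\vb}\max_{\ub,\s}$ mirrors that of the (PO); denote this $\phi_{\mathrm{can}}(\g,\h)$ and its $\Scc$-restriction $\phi_{\mathrm{can},\Scc}(\g,\h)$. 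The hypotheses (a),(b) of the present lemma concern instead the \emph{modified} (AO) $\phi(\g,\h)$ in \eqref{eq:AO_t}, in which the $\max$ over $(\beta,\s)$ has been pulled outside the $\min$ over $(\w,\vb)$. The bulk of the proof is therefore to transfer the assumed concentration from $\phi,\phi_\Scc$ to $\phi_{\mathrm{can}},\phi_{\mathrm{can},\Scc}$.

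The bridge is provided by the convex-concave structure of the (PO). The objective in \eqref{eq:PO3} is jointly convex in $(\w,\vb)$ and jointly concave in $(\ub,\s)$, and, after the truncations of Section~\ref{sec:B} and Lemma~\ref{lem:K_b}, the feasible sets are compact and convex. By Sion's minimax theorem, $\Phi(\A)$ coincides with the value of its max-min form. Applying Theorem~\ref{thm:CGMT}(ii), valid under the convex-concave assumption, yields $\Pro(|\Phi(\A) - \mu| > t) \leq 2\Pro(|\phi_{\mathrm{can}}(\g,\h)-\mu|\geq t)$, so that any asymptotic concentration of $\Phi(\A)$ transfers to $\phi_{\mathrm{can}}$. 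In the opposite direction, the canonical (AO) attached to the \emph{max-min} form of the (PO)---obtained by rewriting it as $-\min\max(-\cdot)$ so that the CGMT becomes directly applicable---matches \eqref{eq:AO_t} after accounting for the sign-symmetry of $\g,\h$. Hence the hypothesis $\phi \rP \phio$ implies $\Phi(\A) \rP \phio$ and thereby $\phi_{\mathrm{can}} \rP \phio$. The same argument applied to the restricted problem yields $\phi_{\mathrm{can},\Scc} \rP \phio_\Scc$; note that for the lower-bound direction needed on $\phi_{\mathrm{can},\Scc}$ one may invoke Theorem~\ref{thm:CGMT}(i), which requires no convexity of $\Scc$.

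Choosing any $\eta$ with $0 < \eta < (\phio_\Scc - \phio)/3$ then renders conditions (a), (b), (c) of Theorem~\ref{thm:CGMT}(iii) satisfied with probability at least $1 - p$ for any fixed $p > 0$ once $n$ is large enough; the theorem yields $\Pro(\hat\w \in \Sc) \geq 1 - 4p$, and letting $p \to 0$ completes the argument. The principal obstacle lies in the min-max swap in the previous paragraph: for a fixed realization of $(\g,\h)$ the objective of \eqref{eq:AO_t} is \emph{not} jointly convex-concave (the term $\beta\|\|\w\|_2\g + \z - \vb\|_2$ fails to be convex in $\w$), so Sion cannot be applied pointwise to the (AO). The remedy---pushing convexity through the (PO) and back to the (AO) via the two-sided CGMT bounds---accomplishes the interchange only in probability as $n \to \infty$, which is nevertheless sufficient since Theorem~\ref{thm:CGMT}(iii) asks only for probabilistic concentration statements.
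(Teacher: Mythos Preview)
Your plan to apply Theorem~\ref{thm:CGMT}(iii) as a black box requires verifying its condition (b), namely $\phi_{\mathrm{can}} < \phio + \eta$ with high probability, where $\phi_{\mathrm{can}}$ is the canonical min-max (AO) of \eqref{eq:AO3} (the paper calls it $\phi^P$). This is where the argument breaks. The inequality you quote from CGMT(ii), $\Pro(|\Phi - \mu| > t) \leq 2\Pro(|\phi^P - \mu| \geq t)$, transfers concentration \emph{from} $\phi^P$ \emph{to} $\Phi$; the converse implication you assert does not follow. Moreover, the (AO) attached to the max-min form of the (PO) is not the optimization $\phi$ of \eqref{eq:AO_t}: a direct computation shows it equals in distribution the fully reversed problem $\phi^D := \max_{\ub,\s}\min_{\w,\vb} A$, and the three quantities are related only by the deterministic min-max inequalities $\phi^D \leq \phi \leq \phi^P$. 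Since the lemma's hypothesis (a) is merely the one-sided bound $\phi < \phio + \eta$, no upper bound on $\phi^P$ is obtainable from it. (Condition (c) of CGMT(iii) \emph{is} verifiable---$\phi^P_{\Sc^c} \geq \phi_{\Sc^c}$ together with hypothesis (b) gives the needed lower bound directly---but this alone does not close the argument.)

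The paper's proof abandons CGMT(iii) as a black box and instead establishes $\Phi_{\Sc^c} > \Phi$ directly, re-running the proof of (iii) with each one-sided CGMT bound matched to the corresponding one-sided hypothesis. For the upper bound on $\Phi$ it uses precisely the dual relation you identified, $\Pro(\Phi > c) \leq 2\Pro(\phi^D \geq c)$, combined with the deterministic inequality $\phi^D \leq \phi$, into which hypothesis (a) feeds. For the lower bound on $\Phi_{\Sc^c}$ it uses CGMT(i) and the inequality $\phi^P_{\Sc^c} \geq \phi_{\Sc^c}$, into which hypothesis (b) feeds. No concentration statement about $\phi^P$ on the full feasible set is ever required. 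Your ingredients---Sion on the (PO), the dual (AO), the one-sided CGMT bounds---are all correct; they simply must be chained through $\Phi$ and $\Phi_{\Sc^c}$ rather than through $\phi^P$.
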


After Lemma \ref{lem:CGMT_M}, what remains in order to prove Theorem \ref{thm:CGMT} is satisfying the conditions of the lemma. This involves a thorough analysis of the (AO) problem in \eqref{eq:AO_t}, which is the subject of the next few sections.

\subsection{Scalarization}\label{sec:scalarize}

Observe that the optimization in \eqref{eq:AO_t} is over vectors. The purpose of this section is to simplify the (AO) into an optimization involving only scalar  variables. Of course, one of this has to play the role of the norm of $\w$, which is the quantity of interest. The main idea behind the ``scalarization" step of the (AO) is to perform the optimization over only the direction of the vector variables while keeping their magnitude constant. This is already hinted by the rearrangement of the order of min-max operations going from \eqref{eq:AO3} to \eqref{eq:AO_t}. Also, this process is facilitated by the following two:
\begin{enumerate}
\item The bilinear term $\ub^T\A\w$ that appears in the (PO) conveniently ``splits" into the two terms $\|\w\|_2\g^T\ub$ and $\|\ub\|_2\h^T\w$ in the (AO),
\item The term involving the regularizer, i.e. $f(\x_0+\w)$ has been expressed in a variational form as $\sup_\s \s^T\x_0 + \s^T\w - f^*(\s)$.
\end{enumerate}
The details of the reduction step are all summarized in Lemma \ref{lem:scalar} below which shows that the (AO) reduces to the following  \emph{convex} minimax  problem on four \emph{scalar} optimization variables:
\begin{align}
\inf_{\substack{0\leq\alpha\leq K_\alpha \\ \taug> 0}} \sup_{\substack{0\leq \beta\leq K_\beta \\ \tauh> 0} }\frac{\beta\taug}{2} + \frac{1}{n} \env{\loss}{\alpha\g + \z}{\frac{\taug}{\beta}}  - 
\begin{cases}
\frac{\alpha\tauh}{2}+ \frac{\beta^2\alpha}{2\tau_h}\frac{\|\h\|^2}{n} - \la\cdot\frac{1}{n}\env{f}{\frac{\beta\alpha}{\tauh}\h+\x_0}{\frac{\alpha\la}{\tauh}} &,\alpha >0\\
 \frac{\la}{n} f(\x_0) &,\alpha=0
 \end{cases}
 \label{eq:AO_scal},
\end{align}
where recall that  $$\env{\omega}{\ub}{\tau} := \min_{\vb} \{\frac{1}{2\tau}\|\ub-\vb\|_2^2 + \omega(\vb)\}$$
denotes the (vector) $\tau$-Moreau envelope of a function $\omega:\R^d\rightarrow\R$ evaluated at $\ub\in\R^d$. 
%
\begin{lem}[Scalarization of the (AO)]\label{lem:scalar}
 The following statements are true regarding the two minimax optimization problems in \eqref{eq:AO_t} and \eqref{eq:AO_scal}:
\begin{enumerate} [(i)]
\item They have the same optimal cost.
\item The  objective function in \eqref{eq:AO_scal} is continuous on its domain,  (jointly) convex in $(\alpha,\taug)$ and (jointly) concave in $(\beta,\tauh)$. 
\item The order of inf-sup in \eqref{eq:AO_scal} can be flipped without changing the optimization. 
\end{enumerate}
\end{lem}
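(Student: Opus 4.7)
The plan is to scalarize the vector (AO) in \eqref{eq:AO_t} through a chain of explicit inner optimizations, and then read off the stated properties. I would first carry out the innermost maximization over $\ub$ with $\|\ub\|_2=\beta$: the objective is affine in $\ub$, so the optimizer aligns with $\|\w\|_2\g+\z-\vb$ and produces $\frac{\beta}{\sqrt{n}}\|\,\|\w\|_2\g+\z-\vb\|_2-\frac{\beta}{\sqrt{n}}\h^T\w$. Next, I would parameterize the minimization over $\w$ by its norm $\alpha:=\|\w\|_2\in[0,K_\alpha]$. For $\alpha>0$ the only residual dependence on the direction of $\w$ is through the linear term $\frac{1}{\sqrt{n}}(\la\s-\beta\h)^T\w$, whose infimum on the sphere of radius $\alpha$ equals $-\frac{\alpha}{\sqrt{n}}\|\beta\h-\la\s\|_2$; the case $\alpha=0$ forces $\w=0$ and is handled separately. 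I then apply the identity $\|\tb\|_2=\inf_{\tau>0}\{\tau/2+\|\tb\|_2^2/(2\tau)\}$ to each remaining norm (using the sup-form for the negative one), rescaling auxiliary variables by $\sqrt{n}$ to introduce $\taug,\tauh>0$.

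At this stage the problem has become $\sup_{\beta,\tauh,\s}\inf_{\alpha,\taug,\vb}$ of an objective that is separable in $\vb$ and $\s$. The inner minimization over $\vb$ of $\frac{\beta}{2n\taug}\|\alpha\g+\z-\vb\|_2^2+\frac{1}{n}\loss(\vb)$ is by definition $\frac{1}{n}\env{\loss}{\alpha\g+\z}{\taug/\beta}$. For the maximization over $\s$, I would complete the square so that the residual becomes a Moreau envelope of $f^*$, and then invoke the Moreau conjugate identity (Lemma \ref{lem:Mconvex}) $\env{f^*}{\mu\bb}{\mu}=\mu\|\bb\|_2^2/2-\env{f}{\bb}{1/\mu}$ with $\mu=\tauh/(\alpha\la)$ and $\bb=(\alpha\beta/\tauh)\h+\x_0$. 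The quadratic cross-terms cancel exactly, leaving $\frac{\la}{n}\env{f}{(\alpha\beta/\tauh)\h+\x_0}{\alpha\la/\tauh}$ together with $-\alpha\beta^2\|\h\|_2^2/(2n\tauh)$. The boundary case $\alpha=0$ yields $\frac{\la}{n}f(\x_0)$ via $\sup_{\s}\{\la\s^T\x_0-\la f^*(\s)\}=\la f(\x_0)$, matching the second branch in \eqref{eq:AO_scal}. Combining the pieces shows that \eqref{eq:AO_t} has the same optimal value as the sup-inf version of the scalar $\Dc$, which part~(iii) will upgrade to the inf-sup form written in \eqref{eq:AO_scal}.

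For~(ii), continuity follows from continuity of Moreau envelopes (Lemma \ref{lem:Mconvex}(ix)) plus continuity of the explicit terms, with the boundary values at $\alpha=0$ and $\beta=0$ understood as the appropriate one-sided limits. Joint convexity in $(\alpha,\taug)$ at fixed $(\beta,\tauh)$ is almost immediate: $\env{\loss}{\alpha\g+\z}{\taug/\beta}$ is the jointly convex Moreau envelope composed with an affine map in $(\alpha,\taug)$, the regularizer envelope depends on $\alpha$ only through an affine map, and the remaining terms are affine in $(\alpha,\taug)$. For joint concavity in $(\beta,\tauh)$ at fixed $(\alpha,\taug)$, the $\env{\loss}$-contribution equals $\beta\taug/2+\inf_\vb\{\beta\|\alpha\g+\z-\vb\|_2^2/(2n\taug)+\loss(\vb)/n\}$, an infimum of functions affine in $\beta$, hence concave. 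The key trick is that the regularizer envelope and the $-\alpha\beta^2\|\h\|_2^2/(2n\tauh)$ term must be combined: unfolding $\env{f}$ and cancelling the shared $\alpha\beta^2\|\h\|_2^2/(2n\tauh)$ contribution leaves $\inf_\vb\{\frac{\beta}{n}\h^T(\x_0-\vb)+\frac{\tauh\|\x_0-\vb\|_2^2}{2\alpha n}+\frac{\la f(\vb)}{n}\}$, the pointwise infimum of a family jointly affine in $(\beta,\tauh)$, hence concave.

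Part~(iii) is a Sion-type min-max exchange, which (ii) makes applicable on the compact rectangle $[0,K_\alpha]\times[0,K_\beta]$ for $(\alpha,\beta)$. The main obstacle is that $\taug,\tauh$ range over the non-compact half-line $(0,\infty)$, so Sion's theorem does not apply verbatim. I would address this by using the growth conditions in Assumption \ref{ass:prop}, in particular the finite limits as $\tau\to 0^+$ and the appropriate growth of $L$ and $F$, to show that for every fixed $(\alpha,\beta)$ in the compact rectangle the outer extremizations in $\taug,\tauh$ are attained on a bounded subinterval that can be taken uniform in $(\alpha,\beta)$. Restricting all four variables to a common compact rectangle then permits a direct application of Sion's minimax theorem, simultaneously yielding the inf-sup form of \eqref{eq:AO_scal} and closing out~(i) and~(iii) together.
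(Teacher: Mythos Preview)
Your scalarization of (i)---directional optimization over $\ub$ and $\w$, the square-root trick, and the Moreau/conjugate identification---is exactly the paper's route, and your argument for (ii) (unfolding the envelopes to exhibit infima of affine families) is in fact a bit more explicit than the paper's.

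The gap is in how you handle the min-max exchanges. Two issues:
\begin{itemize}
\item In (i) you assert that after the square-root trick the problem ``has become $\sup_{\beta,\tauh,\s}\inf_{\alpha,\taug,\vb}$''. It has not: the $\sup_{\tauh}$ was introduced \emph{inside} $\min_\alpha$ (since $-\alpha\|\beta\h-\la\s\|_2$ is rewritten for each fixed $\alpha$), so you already owe a flip $\min_\alpha\sup_{\tauh}\to\sup_{\tauh}\min_\alpha$ before (iii).
\item Your proposed fix for (iii)---invoking Assumption~\ref{ass:prop} to force the $\taug,\tauh$ optimizers into a compact box---does not work. Assumption~\ref{ass:prop} is stated for the deterministic limit functionals $L,F$, not for the random envelopes $\env{\loss}{\alpha\g+\z}{\cdot}$ and $\env{f}{\cdot}{\cdot}$ that appear in \eqref{eq:AO_scal}; it gives you no control over the latter. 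Moreover, Lemma~\ref{lem:scalar} is stated and proved in the paper \emph{without} Assumption~\ref{ass:prop}.
\end{itemize}

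The paper avoids the whole difficulty by flipping \emph{earlier}: at the stage of \eqref{eq:ii} the objective $M(\alpha,\beta,\vb,\s)$ is jointly convex in $(\alpha,\vb)$ and jointly concave in $(\beta,\s)$, and $\alpha\in[0,K_\alpha]$ is compact, so Sion's theorem (which only needs \emph{one} side compact) yields $\max_{\beta,\s}\min_\alpha=\min_\alpha\max_{\beta,\s}$ directly. Only then is the square-root trick applied; one further flip $\max_\beta\inf_{\taug}=\inf_{\taug}\max_\beta$ uses compactness of $[0,K_\beta]$. If you prefer to flip at the end as you do, the clean justification is: (a) $\taug$ and $\tauh$ enter in \emph{additively separable} terms, so $\inf_{\taug}\sup_{\tauh}=\sup_{\tauh}\inf_{\taug}$ trivially; (b) every remaining exchange pairs one of $\alpha\in[0,K_\alpha]$ or $\beta\in[0,K_\beta]$ (compact) against a non-compact variable, and part (ii) supplies the needed convexity/concavity, so Sion applies at each step. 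No growth conditions are required.
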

%

\subsection{Convergence Analysis}
The goal of this section is to show that the (AO) satisfies the conditions of Lemma \ref{lem:CGMT_M}. This requires a convergence analysis of its optimal cost. We work with the scalarized version of the (AO) that was derived in the previous section:
\begin{align}
&\hspace{100pt}\phi(\g,\h,\z,\x_0) = \inf_{\substack{0\leq \alpha\leq K_\alpha \\ \taug> 0}} \sup_{\substack{0\leq \beta\leq K_\beta \\ \tauh> 0} } \Rcn(\alpha,\taug,\beta,\tauh;\g,\h,\z,\x_0), \label{eq:AO_randopt}\\
&\Rcn=
\frac{\beta\taug}{2} + \frac{1}{n}\left\{\env{\loss}{\alpha\g + \z}{\frac{\taug}{\beta}}-\loss(\z)\right\} - 
\begin{cases}
 \frac{\alpha\tauh}{2} + 
  \frac{\beta^2\alpha}{2\tau_h}\frac{\|\h\|^2}{n} - \frac{\la}{n}\left\{\env{f}{\frac{\beta\alpha}{\tauh}\h+\x_0}{\frac{\alpha\la}{\tauh}}-f(\x_0)\right\} &,\alpha>0
  \\ 
  0 &,\alpha=0
  \end{cases},
\nn 
\end{align}
Here, when compared to \eqref{eq:AO_scal}, we have  subtracted from the objective the terms $\loss(\z)$ and $f(\x_0)$, which of course does not affect the optimization. The optimization is of course random over the realizations of $\g,\h,\z$ and $\x_0$, and, by the WLLN, it is easy to identify the converging value of the objective function $\Rc_n$ for fixed parameter values $\alpha,\taug,\beta,\tauh$. Indeed, it converges to the objective function of the (SPO) problem in \eqref{eq:AO_det_thm}. For our goals, we need to show that minimax of the converging sequence of objectives converges to the minimax of the objective of the (SOP). 
Convexity of $\Rc_n$ plays a crucial role here since is being use to conclude local uniform convergence from the pointwise convergence. Uniform convergence is a requirement to conclude the desired.\footnote{We remark that the tools used for this part of the proof are similar to those classically used for the study of consistency of  $M$-estimators in the classical regime where $n$ is fixed and $m$ goes to infinity, cf. Arg-min theorems e.g. \cite[Thm.~7.70]{liese2008statistical}, \consist.}

\begin{lem}[Convergence properties of the (AO)]\label{lem:convergence}
Let $\Rcn(\alpha,\taug,\beta,\tauh):=\Rcn(\alpha,\taug,\beta,\tauh;\g,\h,\z,\x_0)$ be  defined as in \eqref{eq:AO_randopt}, and,
\begin{align}\label{eq:AO_rand2}
\phi^\mn_{\Ac}:=\phi^\mn_{\Ac}(\g,\h,\z,\x_0):=\inf_{\substack{\alpha\in\Ac 
\\
 \taug>0}}~\sup_{\substack{0\leq \beta\leq K_\beta\\ \tauh>0}}~ \Rcn(\alpha,\taug,\beta,\tauh),
\end{align}
for $\Ac\subseteq[0,\infty)$.
Further consider the following deterministic convex program
\begin{align}\label{eq:AO_det}
\phio_{\Ac}:=\inf_{\substack{\alpha\in\Ac\\
  \taug>0}}~\sup_{\substack{\beta\geq 0\\ \tauh>0}}~ \Dc(\alpha,\taug,\beta,\tauh):= 
\begin{cases}
\frac{\beta\taug}{2} + 
\delta\cdot\Lm{\alpha}{\frac{\taug}{\beta}}&, \beta>0\\
-\delta\cdot L_0 &,\beta=0
\end{cases}
-
\begin{cases}
 \frac{\alpha\tauh}{2} + \frac{\alpha\beta^2}{2\tauh} - \la\cdot \FFm{\frac{\alpha\beta}{\tauh}}{\frac{\alpha\la}{\tauh}} & ,\alpha>0\\
 0& ,\alpha=0
 \end{cases}.
\end{align}
where $L$ and $F$ as in Theorem \ref{thm:master}.  If Assumption \ref{ass:tech}(a) and \ref{ass:prop} hold, then,
\begin{enumerate}[(a)]
\item $\Rc_n(\alpha,\taug,\beta,\tauh)\rP\Dc(\alpha,\taug,\beta,\tauh)$, for all $(\alpha,\taug,\beta,\tauh)$, and, $\Dc(\alpha,\taug,\beta,\tauh)$ is convex in $(\alpha,\taug)$ and concave in $(\beta,\tauh)$. 
\item Assume $\alpha_*$ is the unique minimizer in \eqref{eq:AO_det} with $\Ac:=[0,\infty)$. For any $\eps>0$, define $\Sc_\eps:=\{\alpha~|~|\alpha-\alpha_*|<\eps\}$. Then, for any sufficiently large constants $K_\alpha>\alpha_*$ and $K_\beta>0$, and for all $\eta>0$, it holds with probability approaching 1 as $n\rightarrow\infty$:
%
%
\begin{enumerate}[(i)]
\item $\phi_{[0,K_\alpha]} < \phio_{[0,\infty)}+\eta$,
\item $\phi_{[0,K_\alpha]\setminus \Sc_\eps} \geq \phio_{[0,\infty)\setminus \Sc_\eps} - \eta$,
\item $\phio_{[0,\infty)\setminus \Sc_\eps}>\phio_{[0,\infty)}$.
\end{enumerate}
\end{enumerate}
\end{lem}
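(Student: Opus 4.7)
The proof proceeds in three stages corresponding to part (a), the two statements (i)--(ii) of part (b), and the separation statement (iii) of part (b).

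For part (a), pointwise convergence of $\Rc_n$ to $\Dc$ follows term by term. Assumption \ref{ass:tech}(a), together with $m/n\to\delta$, yields
$$\frac{1}{n}\left\{\env{\loss}{\alpha\g+\z}{\taug/\beta}-\loss(\z)\right\}\rP \delta\cdot L(\alpha,\taug/\beta),$$
and analogously $\frac{1}{n}\{\env{f}{(\beta\alpha/\tauh)\h+\x_0}{\alpha\la/\tauh}-f(\x_0)\}\rP F(\beta\alpha/\tauh,\alpha\la/\tauh)$, while $\|\h\|_2^2/n\rP 1$ by the WLLN. The degenerate boundary cases $\alpha=0$ and $\beta=0$ are handled through the continuity/limit properties in Assumptions \ref{ass:prop}(b),(c), which guarantee $\lim_{\beta\to 0^+} L(\alpha,\taug/\beta)=-L_0$ and $\lim_{\alpha\to 0^+} F(\alpha\beta/\tauh,\alpha\la/\tauh)=0$. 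Joint convexity of $\Dc$ in $(\alpha,\taug)$ and concavity in $(\beta,\tauh)$ is inherited from the corresponding property of each $\Rc_n$ (which is convex-concave thanks to Lemma \ref{lem:scalar}(ii)), since pointwise limits of convex (resp.\ concave) functions remain convex (resp.\ concave).

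For parts (b)(i)--(ii), the main tool is that pointwise convergence of convex functions on an open set is automatically upgraded to uniform convergence on compact subsets (Rockafellar's theorem, e.g., \cite[Thm.~10.8]{Roc70}); this, applied separately in the convex argument $(\alpha,\taug)$ and in the concave argument $(\beta,\tauh)$, will allow us to commute the minimax with the probabilistic limit. Before invoking it, one must reduce the optimization to a compact set. For $(\beta,\tauh)$, level-boundedness of the inner sup follows from Assumption \ref{ass:prop}(a) (the super-linear growth $c^2/(2\tau)-F(c,\tau)\to+\infty$ keeps $\tauh$ bounded away from $0$ and $\infty$ on compacta of $\alpha$), together with $\beta\in[0,K_\beta]$. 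For $\taug$, Assumption \ref{ass:prop}(b),(d) (finiteness of $L(\alpha,\tau)$ as $\tau\to 0^+$ and the sublinear-in-$\tau$ growth in the heavy-tail case) ensures the inner inf over $\taug$ is attained on a compact subinterval of $(0,\infty)$. Uniform convergence on these compact sets yields $\phi_{[0,K_\alpha]}\rP\phio_{[0,K_\alpha]}$ and $\phi_{[0,K_\alpha]\setminus\Sc_\eps}\rP\phio_{[0,K_\alpha]\setminus\Sc_\eps}$. Choosing $K_\alpha$ sufficiently large so that $\alpha_*$ lies in the interior of $[0,K_\alpha]$, and invoking once more the level-boundedness of the convex outer problem in $\alpha$ (via Assumption \ref{ass:prop}(a)), we obtain $\phio_{[0,K_\alpha]}=\phio_{[0,\infty)}$ and $\phio_{[0,K_\alpha]\setminus\Sc_\eps}=\phio_{[0,\infty)\setminus\Sc_\eps}$, delivering (i) and (ii).

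Claim (iii) is a consequence of convexity and uniqueness alone. The function $g(\alpha):=\sup_{\beta\geq 0,\tauh>0}\inf_{\taug>0}\Dc(\alpha,\taug,\beta,\tauh)$ is convex on $[0,\infty)$ (as a supremum of partial infima of a function convex in $(\alpha,\taug)$) and, by hypothesis, admits $\alpha_*$ as its unique minimizer; for any $\eps>0$, a convex function with a unique minimizer is strictly bounded below on the complement of any open neighborhood of that minimizer, giving $\phio_{[0,\infty)\setminus\Sc_\eps}>\phio_{[0,\infty)}$. The main technical obstacle in the entire argument is the compactification step underlying (i) and (ii): the optimization variables $\alpha,\taug,\tauh$ are a priori unbounded, and the growth assumptions in \ref{ass:prop} must be leveraged carefully --- and separately for the stochastic problem (where compactness must hold w.h.p.\ uniformly in $n$) and the deterministic one --- while also accommodating the boundary values $\alpha=0$ and $\beta=0$ where the objective is defined only as a limit.
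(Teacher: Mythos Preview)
Your outline for (a) and (b)(iii) matches the paper almost exactly. The substantive divergence is in (b)(i)--(ii), and it is precisely at the point you yourself flag as ``the main technical obstacle.''

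You propose to compactify first---for both the deterministic and the stochastic problem---and then invoke uniform convergence of convex functions. But Assumption~\ref{ass:prop} is phrased entirely in terms of the deterministic limits $L,F$, not the random envelopes, so there is no direct handle on level-boundedness of $\Rc_n$ itself. Your parenthetical ``compactness must hold w.h.p.\ uniformly in $n$'' names the problem without solving it; for instance, when $L_0=+\infty$, Assumption~\ref{ass:prop}(d) controls $L(c,\tau)/\tau$, not $\frac{1}{n}\env{\loss}{\alpha\g+\z}{\tau}/\tau$, so bounding the random $\taug$-optimizer directly is not straightforward.

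The paper avoids this by never compactifying the random problem. It proves a min-convergence lemma (Lemma~\ref{lem:argmin_open}) for sequences of convex stochastic functions on the open half-line, requiring only pointwise convergence and level-boundedness of the \emph{deterministic} limit; convexity of $M_n$ is used inside that lemma to transfer the level-boundedness back to $M_n$ for large $n$. This lemma is applied one variable at a time---$\taug$, then $\beta$, then $\tauh$, then $\alpha$---with the relevant growth condition from Assumption~\ref{ass:prop} checked at each stage for the deterministic object only. The boundary cases are treated separately: for $\alpha=0$ only the one-sided bound $M_n(0)<M(0)+\eps$ is established (sufficient for (i)); for $\beta=0$ and for the mismatch between $\beta\in[0,K_\beta]$ in $\phi$ versus $\beta\geq 0$ in $\phio$, the paper shows the deterministic optimal $\beta_*$ is bounded, so a large enough $K_\beta$ makes the two agree.
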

\subsection{Putting all the Pieces Together}
We are now ready to conclude the proof of Theorem \ref{thm:master}. 
\begin{proof}[Proof of Theorem \ref{thm:master}]
Fix any $\eps>0$. Consider the set $\Sc_\eps = \{\w~|~|\|\w\|_2-\alpha_*\|_2<\eps$ as in Lemma \ref{lem:CGMT_M}. We use the same notation as in the lemma. Let $K_\alpha>\alpha_*$ and arbitrarily large (but finite) $K_\beta>0$. From Lemma \ref{lem:scalar}(i) $\phi(\g,\h)$ is equal to the optimal cost of the optimization in \eqref{eq:AO_scal}. But, from Lemma \ref{lem:convergence}(b)(i), the latter converges in probability to some constant $\phio$ (see Lemma \ref{lem:convergence} for the exact value constant). The same line of arguments applies to $\phi_{\Sc_\eps^c}(\g,\h)$, showing that it converges to another constant $\phio_{\Sc_\eps^c}$. Again from Lemma \ref{lem:convergence}(iii): $\phio_{\Sc_\eps^c}>\phio$. Thus, the conditions of Lemma \ref{lem:CGMT_M} are satisfied, and, it implies that the magnitude of any optimal minimizer (say) $\hat\w^{(PO)}$ of the (PO) problem in \eqref{eq:PO3} satisfies $\hat\w^{(PO)}\in\Sc$ in probability, in the limit of $n\rightarrow\infty$. 
\end{proof}

\section{Proofs for Section \ref{sec:proof}}\label{sec:thm_app}

\subsection{Proof of Theorem \ref{thm:CGMT}(iii)}

Consider the following event 
$$
\Ec = \{ \Phi_{\Sc^c}(\G) \geq \phio_\Scc - \eta ~,~  \Phi(\G) \leq \phio + \eta \}.
$$
In this event, it is not hard to check using assumption (a) that $\Phi_\Scc>\Phi$, or equivalently $\w_\Phi\in\Sc$. Thus, it suffices to show that $\Ec$ occurs with probability at least $1-4p$.

Indeed, from statement (i) of the theorem and assumption (c),
\begin{align}
\Pro(\Phi_{\Sc^c}(\G)< \phio_\Scc - \eta) \leq 2\Pro( \phi_{\Sc^c}(\g,\h) \leq \phio_\Scc - \eta ) \leq 2p.\nn
\end{align}
Also, from statement (ii) of the theorem and assumption (b),
\begin{align}
\Pro(\Phi(\G) > \phio + \eta) \leq 2\Pro( \phi(\g,\h) \geq \phio + \eta ) \leq 2p.\nn
\end{align}
Combining the above displays the claim follows from a union bound.


\subsection{Proof of Corollary \ref{cor:asy_CGMT}}
Call $\eta:=(\phio_\Scc-\phio)/3>0$. By assumption, for any $p>0$ there exists $N:=N(\eta,p)$ such that the events $\{\phi<\phio+\eta\}$ and $\{\phi_\Scc>\phio_\Scc-\eta\}$ occur with probability at least $1-p$ each, for all $n>N$. Then, for all $n>N$, we can apply Theorem \ref{thm:CGMT}(iii) to conclude that $\w_{\Phi}(\G)\in\Sc$ with probably at least $1-4p$. Since this holds for all $p>0$, the proof is complete.


\subsection{Proof of Lemma \ref{lem:PO_B}}
For convenience, denote with $M(\w)$ the objective function in \eqref{eq:PO}. For some $\eps>0$ such that $\alpha+\eps<K_\alpha$ (e.g. $\eps=\zeta/2$ in \eqref{eq:K_a}), denote $\Dc:=\{\w~|~\alpha-\eps\leq \|\w\|_2\leq \alpha+\eps\}$. By assumption, with probability approaching 1 (w.p.a. 1).
\begin{align}\label{eq:wBass}
\hat\w^B\in\Dc.
\end{align}
For the shake of a contradiction, assume that there exists optimal solution $\hat\w$ of \eqref{eq:PO} such that $\hat\w\not\in\Dc$ w.p.a. 1. Clearly, 
\begin{align}\label{eq:clear121}
M(\hat\w)\leq M(\hat\w^B).
\end{align}
Suppose $\hat\w\in\Sc_\w$, then $\hat\w$ is optimal for \eqref{eq:PO_B} and satisfies \eqref{eq:wBass}, which contradicts our assumption. Thus, $\hat\w\not\in\Sc_\w$. 
Next, let $\w_\theta:=\theta \hat\w + (1-\theta) \hat\w^B$ for $\theta\in(0,1)$ such that $\w_\theta\not\in\Dc$ and $\w_\theta\in\Sc_w$ (always possible, by definition of $\Dc$). By the convexity of $F$ and \eqref{eq:clear121}, it follows that
$M(\hat\w_\theta) \leq M(\hat\w^B)$. Hence, $\hat\w_\theta$ is optimal for \eqref{eq:PO_B} and satisfies \eqref{eq:wBass}, which, again, is a contradiction. This completes the proof.


\subsection{Proof of Lemma \ref{lem:K_b}}
It suffices to prove the equivalence of the optimization \eqref{eq:PO2a} and \eqref{eq:PO2}. Let $\w_*,\vb_*,\ub_*$ be optimal in \eqref{eq:PO2a}. To prove the claim, we show that $\ub_*\in\Sc_\ub\left(\Leftrightarrow\|\ub_*\|_2\leq K_\beta\right)$ w.p.a. 1. From the first order optimality conditions in \eqref{eq:PO2a}, we find that 
\begin{align}
\ub_* \in \frac{1}{\sqrt{n}}\partial\loss(\vb_*)\label{eq:opt11}\\
\vb_* = {\z} - \sqrt{n}\A\w_*.\label{eq:opt22}
\end{align}
Recall Assumption \ref{ass:tech}(b) and consider two cases.
First, if $\sup_{\vb\in\R^m}\sup_{\s\in\partial\loss(\vb)}\|\s\|_2<\infty$, the claim follows directly by \eqref{eq:opt11}. 
Next, assume that w.h.p., $\|\z\|_2\leq C_1\sqrt{n}$ for constant $C_1>0$. Also, a standard high probability bound on the spectral norm of Gaussian matrices gives $\|\A\|_2\leq C_2$, e.g. \cite{Ver}. Using these, boundedness of $\w_*$ and \eqref{eq:opt22}, we find that $\|\vb_*\|_2\leq C_3\sqrt{n}$ w.h.p.. Then, the normalization condition $\frac{1}{\sqrt{n}}\sup_{\s\in\partial\loss(\vb)}\|\s\|_2\leq C$ for all $\|\vb\|_2\leq c\sqrt{n}$ and all $n\in\mathbb{N}$, yields the desired, i.e. $\|\ub_*\|_2\leq C$ holds with probability approaching 1 as $n\rightarrow\infty$.


\subsection{ Proof of Lemma \ref{lem:CGMT_M} }
Let $\w_*$ denote an optimal solution of the ``bounded" optimization in \eqref{eq:PO3}. It will suffice to prove that $\w_*\in\Sc$ in probability.  To see this, recall from Lemma \ref{lem:K_b} that \eqref{eq:PO3} is asymptotically equivalent to \eqref{eq:PO_B}. Then, Lemma \ref{lem:PO_B} and the assumption $\alpha_*<K_\alpha$ guarantee that $\hat\w(\A)\in\Sc$ in probability, as desired.

Denote $\Phi:=\Phi(\A)$ the optimal cost of the minimization in \eqref{eq:PO3}
 and $\Phi_{\Sc^c}:=\Phi_{\Sc^c}(\A)$ the optimal cost of the same problem when the minimization is further restricted to be over the set $\w\in\Sc^c$. Note that $\w_*\in\Sc$ iff $\Phi_{\Sc^c}(\A)>\Phi(\A)$; hence, it will suffice to prove that the latter event occurs in probability. 

We do so by relating the (PO) in \eqref{eq:PO3} to the Auxiliary Optimization (AO) in \eqref{eq:AO_t} using Theorem \ref{thm:CGMT}.
For concreteness, denote the objective function in \eqref{eq:AO_t} with $A(\w,\vb,\ub,\s)$, and, recall $\Sc_\w:=\{\w~|~\|\w\|_2\leq K_\alpha\}$, $\Sc_\ub:=\{\ub~|~\|\ub\|_2\leq K_\beta\}$. With these, define
\begin{align}\label{eq:AO_phi}
{\phi^P}:={\phi^P}(\g,\h):=\min_{\w\in\Sc_\w,\vb}\max_{\ub\in\Sc_{\ub},\s} A(\w,\vb,\ub,\s)\quad\text{and}\quad
{\phi^D}:={\phi^D}(\g,\h):=\max_{\ub\in\Sc_{\ub},\s}\min_{\w\in\Sc_\w,\vb} A(\w,\vb,\ub,\s).
\end{align}
Observe here that the order of min-max in $\phi^P$ is exactly as in the original formulation of the CGMT, cf. \eqref{eq:AO_gen}; $\phi^D$ is the dual of it, and $\phi$ in \eqref{eq:AO_t} involves yet another change in the order of the optimizations. The reason we prefer to work with the later problem, is that this particular order allows for a number of simplifications performed in  Section \ref{sec:scalarize}.

 As done before, denote with ${\phi^P}_{\Sc^c},{\phi^D}_{\Sc^c}$ the optimal cost of the optimizations in \eqref{eq:AO_phi} under the additional constraint $\w \in\Sc^c$. The two problems in \eqref{eq:AO_phi} are related to the one in \eqref{eq:AO_t} as follows:
\begin{align}
{\phi^P}_{\Sc^c} &= \min_{\substack{\w\in\Sc_\w,\vb \\ \w\in\Sc^c}}\max_{\ub\in\Sc_{\ub},\s} A(\w,\vb,\ub,\s) = \min_{\substack{\w\in\Sc_\w,\vb \\ \w\in\Sc^c}}\max_{\beta,\s}\max_{\|\ub\|_2=\beta}A(\w,\vb,\ub,\s) \nn
\\ &\geq  \max_{\beta,\s}\min_{\substack{\w\in\Sc_\w,\vb \\ \w\in\Sc^c}}\max_{\|\ub\|_2=\beta}A(\w,\vb,\ub,\s) = \phi_{\Sc^c}, \label{eq:ineq1}
\end{align}
where the inequality follows from the min-max inequality \cite[Lem.~36.1]{Roc70}.
Similarly, 
\begin{align}
{\phi^D} &= \max_{\ub\in\Sc_{\ub},\s}\min_{\w\in\Sc_\w,\vb} A(\w,\vb,\ub,\s) = \max_{\beta,\s}\max_{\|\ub\|_2=\beta}\min_{\w\in\Sc_\w,\vb}A(\w,\vb,\ub,\s)\nn \\&\leq  \max_{\beta,\s}\min_{\w\in\Sc_\w,\vb}\max_{\|\ub\|_2=\beta}A(\w,\vb,\ub,\s) = \phi.\label{eq:ineq2}
\end{align}

Furthermore, they are related to the (PO) via the CGMT. 
From Theorem \ref{thm:CGMT}(i), for all $c\in\R$:
\begin{align}\label{eq:COLT1}
\Pro({\Phi}_{\Sc^c}< c) \leq 2\Pro( {\phi^P}_{\Sc^c} \leq c ).
\end{align}
Also, from Theorem \ref{thm:CGMT}(ii)\footnote{more precisely, please refer to equation (32) in \cite{COLT15}.}:
\begin{align}\label{eq:COLT2}
\Pro({\Phi} > c) \leq 2\Pro( {\phi^D} \geq c ).
\end{align}

The remaining of the proof is in the same lines as the proof of \ref{thm:CGMT}(iii), but is included for clarity. 
Let $\eta:=(\phi_{\Scc}-\phi)/3>0$. We may apply \eqref{eq:COLT1} for $c=\xio_{\Sc^c}-\eta$ and combine with \eqref{eq:ineq1} to find that 
\begin{align}\label{eq:conc1}
\Pro( {\Phi}_{\Sc^c} < \xio_{\Sc^c} - \eta ) \leq 2 \Pro( {\phi^P}_{\Sc^c} \leq \xio_{\Sc^c} - \eta ) \leq 2 \Pro( \phi_{\Sc^c} \leq \xio_{\Sc^c} - \eta ).
\end{align}
From assumption (b) the last term above tends to zero as $n\rightarrow\infty$. In a similar way, combining \eqref{eq:COLT2}, \eqref{eq:ineq2} and assumption (a), we find that 
\begin{align}\label{eq:conc2}
\Pro( {\Phi} > \xio + \eta ) \leq 2 \Pro( {\phi^D} \geq \xio + \eta ) \leq 2 \Pro( \phi \geq \xio + \eta ) ,
\end{align}
goes to zero with $n\rightarrow\infty$.
Denote the event $\Ec = \{{\Phi}_{\Sc^c}\geq \xio_{\Sc^c} - \eta \text{~and~} {\Phi}\leq \xio + \eta$\}. From \eqref{eq:conc1} and \eqref{eq:conc2} the event occurs with probability approaching 1. Furthermore, in this event, after using assumption (a), we have ${\Phi^b}_{\Sc^c}\geq \xio_{\Sc^c} - \eta > \xio + \eta \geq {\Phi^b}$; equivalently, the optimal minimizer satisfies $\w_*\in\Sc$, which completes the proof.

\subsection{Proof of Lemma \ref{lem:scalar}}\label{app:scalar}

(i) We start by showing how the vector optimization in \eqref{eq:AO_t} can be reduced to the scalar one that appears in \eqref{eq:AO_scal}. This requires the following steps.

\vp
\noindent{\emph{Optimizing over the direction of $\ub$:}} Performing the inner maximization is easy. In particular, using the fact that $\max_{\|\ub\|_2=\beta}\ub^T\tb = \beta\|\tb\|_2$ for all $\beta\geq0$ the problem simplifies to a max-min one:
\begin{align}\nn
\max_{0\leq \beta\leq K_\beta,\s}~ \min_{\|\w\|\leq K_\alpha,\vb} ~ \frac{\beta}{\sqrt{n}}\|~\|\w\|_2\g+\z-\vb~\|_2 - \frac{\beta}{\sqrt{n}}\h^T\w + \frac{1}{{n}}\loss(\vb) +\frac{\la}{{n}}\s^T{\x_0}+\frac{\la}{\sqrt{n}}\s^T\w-\frac{\la}{{n}} f^*(\s),
\end{align}

\vp
\noindent{\emph{Optimizing over the direction of $\w$:}} Next, we fix  $\|\w\|_2=\alpha$, and, similar to what was done above, minimize over its direction:
\begin{align}
\label{eq:ii} \max_{0\leq \beta\leq K_\beta,\s}~ \min_{0\leq \alpha\leq K_\alpha,\vb} ~ \frac{\beta}{\sqrt{n}}\|~\alpha\g+\z-\vb~\|_2 + \frac{1}{{n}}\loss(\vb)  - \frac{\alpha}{\sqrt{n}} \| \beta\h-\la\s\|_2 + \frac{\la}{{n}}\s^T\x_0 -\frac{\la}{{n}} f^*(\s).
\end{align}

\vp
\noindent{\emph{Changing the orders of min-max:}} Denote with $M(\alpha,\beta,\vb,\s)$ the objective function above. It can be checked that $M$ is jointly convex in $(\alpha,\vb)$ and jointly concave in $(\beta,\s)$ (cf. Lemma \ref{lem:jointConv}).
Thus, $\min_{\vb}M$ is convex in $\alpha$ and jointly concave in $(\beta,\s)$. Furthermore, the constraint sets are all convex and the one over which minimization over $\alpha$ occurs is bounded. Hence, as  in \cite[Cor.~3.3]{Sion} 
 we can flip the order of $\max_{\beta,\s}\min_{\alpha}$, to conclude with
$$
\min_{0\leq\alpha\leq K_\alpha}\max_{0\leq\beta\leq K_\beta} \max_{\s}\min_{\vb} M(\alpha,\beta,\vb,\s).
$$
Also, observe that the order of optimization among $\vb$ and $\s$ does not affect the outcome.


\vp
\noindent{\emph{The square-root trick:}} We apply the fact that $\sqrt{\chi}=\inf_{\tau>0} \{ \frac{\tau}{2} + \frac{\chi}{2\tau}\}$ to both the terms $\frac{1}{\sqrt{n}}\|\alpha\g+\z-\vb\|_2$ and $\frac{1}{\sqrt{n}}\|\beta\h-\la\s\|_2$:
\begin{align}\nn
\min_{0\leq \alpha\leq K_\alpha} \max_{0\leq \beta\leq K_\beta}~ \inf_{\taug>0} ~\sup_{\tauh> 0}~ \frac{\beta\taug}{2}&+\frac{1}{n}\min_{\vb} \left\{\frac{\beta}{2\taug}\|~\alpha\g+\z-\vb~\|_2^2 + \loss(\vb)\right\} \\ & - \frac{\alpha\tauh}{2} - \frac{1}{n}\min_{\s}\left\{\frac{\alpha}{2\tauh} \| \beta\h-\la\s\|_2^2 - \la\s^T\x_0 +\la f^*(\s)\right\}.\label{eq:to flip}
\end{align}


\vp
\noindent{\emph{Identifying the Moreau envelope:}} Arguing as before, we can change the order of optimization between $\beta$ and $\taug$. Also, it takes only a few algebra steps and  using basic properties of Moreau envelope functions (in particular, Lemma \ref{lem:Moreau}(ii)) in order to rewrite the last summand in  \eqref{eq:to flip} as below. If $\alpha>0$, then,
\begin{align}
\min_{\s}\left\{\frac{\alpha}{2\tauh} \| \beta\h-\la\s\|_2^2 - \la{\s^T\x_0} +\la f^*\left({\s}\right)\right\} 
&=  -\frac{\tauh}{2\alpha}{\|\x_0\|_2^2} - \beta{\h^T\x_0} +\la\cdot \env{f^*}{\frac{\beta}{\la}\h+\frac{\tauh}{\alpha\la}\x_0}{\frac{\tauh}{\alpha\la}} 
\label{eq:beforelasthere}
\\
&= \frac{\beta^2\alpha}{2\tau_h}{\|\h\|^2} - \la\cdot\env{f}{\frac{\beta\alpha}{\tauh}\h+\x_0}{\frac{\alpha\la}{\tauh}}. \label{eq:lasthere}
\end{align}
Otherwise, if $\alpha=0$, then the same term equals $-\la f(\x_0)$ since $\max_{\s}\s^T\x_0- f^*(\s) =  f(\x_0)$.

\vp
\noindent~(ii) The continuity of the objective function in \eqref{eq:AO_scal} follows directly from the continuity of the Moreau envelope functions, cf. \cite[Lem.~1.25, 2.26]{RocVar}. In particular, regarding the two branches of the objective: 
  it can be checked, using the continuity of the Moreau envelope, that the limit of the RHS in \eqref{eq:lasthere} as $\alpha\rightarrow 0$ evaluates to $-\la f(\x_0)$. (In fact, this is the unique extension of the upper branch to a continuous finite convex function on the whole $\alpha\geq0, \tau>0$, as per \cite[Thm.~10.3]{Roc70}).

 Convexity of \eqref{eq:AO_scal} can be checked from \eqref{eq:to flip}. By applying Lemma \ref{lem:jointConv}, after minimization over $\vb$ the Moreau Envelope remains jointly convex with respect to $\al$ and $\taug$ and concave in $\beta$. The same argument (and similar lemma) holds for the last term of \eqref{eq:to flip} in which after minimization over $\s$ it remains jointly convex in $\beta$ and $\tauh$ and concave in $\al$. Then the negative sign before this term makes it jointly concave in $\beta$ and $\tauh$ and convex over $\al$.



\subsection{Proof of Lemma \ref{lem:convergence}}\label{app:convergence}

\noindent{\textbf{(a)}}~
 By Assumption \ref{ass:tech}(a) the normalized Moreau envelope functions in \eqref{eq:AO_randopt} converge in probability to $L$ and $F$, respectively. Also, $\|\h\|_2^2/n\rP 1$ by the WLLN. This proves the convergence part.

Lemma \ref{lem:scalar}(i) showed $\Rc_n$ to be convex-concave. Then, the same holds for $\Dc$ by point-wise convergence and the fact that convexity is preserved by point wise limits.

\vspace{5pt}
\noindent{\textbf{(b)}}~
Call 
\begin{align}\label{eq:M_a}
M_n(\alpha) = \sup_{\substack{0\leq \beta\leq K_\beta\\\tauh>0}}\inf_{\taug>0} \Rc_n(\alpha,\taug,\beta,\tauh)
\quad\text{ and }\quad
M(\alpha) = \sup_{\substack{0\leq \beta \\\tauh>0}}\inf_{\taug>0} \Dc_n(\alpha,\taug,\beta,\tauh).
\end{align}

The bulk of the proof consists of showing that the following two statements hold 
\begin{align}\label{eq:alpha>0}
\forall~\text{compact susets $\Ac\subset(0,\infty)$ and sufficiently large $K_\beta:=K_\beta(\Ac)>0$}:~ 
 \inf_{\alpha\in\Ac} M_n(\alpha) \rP \inf_{\alpha\in\Ac}M(\alpha)
\end{align}
and, 
\begin{align}\label{eq:alpha=0}
\forall\eps>0, \text{w.p.a.1}:~ M_n(0) < M(0) + \eps.
\end{align}
Before proceeding with the proof of those, let us show how the conclusion of the lemma is reached once \eqref{eq:alpha>0} and \eqref{eq:alpha=0} are established.

\vspace{5pt}
\noindent{\underline{Using \eqref{eq:alpha>0} and \eqref{eq:alpha=0} to prove the lemma}~}:
Fix $K_\alpha> \alpha_*$, any $\delta>0$ such that $\Ac:=[\alpha_*-2\delta,\alpha_*+2\delta]\subset(0,K_\alpha]$ and $K_\beta>0$ large enough such that \eqref{eq:alpha>0} and \eqref{eq:alpha=0} both hold. Then, for all $\eps>0$, w.p.a.1:
\begin{align}\label{eq:bhma_a}
\min_{\substack{0\leq\alpha\leq K_\alpha}}M_n(\alpha) \leq \min_{\substack{\alpha\in \Ac}}M_n(\alpha) \leq 
 M_n(\alpha_*) < M(\alpha_*) + \eps.
\end{align}
For the last inequality above: if $\alpha_*=0$, it follows from \eqref{eq:alpha=0}, or otherwise from \eqref{eq:alpha>0}.

Next, consider the compact set $\Ac_{l}=\{\alpha>0~|~\alpha\in[\alpha_*-2\delta,\alpha-\delta]~\}$ and $\Ac_{r}=\{\alpha>0~|~\alpha\in[\alpha_*+\delta,\alpha+2\delta]~\}$. (Note that if $\alpha_*=0$, then $\Ac_l$ is empty.) From \eqref{eq:alpha>0}, we know that for all $\eps>0$, w.p.a.1
$$
\min_{\alpha\in\Ac_l} M_n(\alpha) > \min_{\alpha\in\Ac_l}M(\alpha) - \eps \quad\text{ and }\quad  \min_{\alpha\in\Ac_u} M_n(\alpha) > \min_{\alpha\in\Ac_u}M(\alpha) - \eps.
$$
Let $\Ac_{lu}=\Ac_l\cup\Ac_u$ and combine the above to find
\begin{align}\label{eq:bhma_b}
\min_{\alpha\Ac_{lu}} M_n(\alpha) > \min_{\alpha\in\Ac_{lu}}M(\alpha) - \eps.
\end{align}

By assumption on uniqueness of $\alpha_*$ and on convexity of $M$, we have  \begin{align}\label{eq:bhma_c}
M(\alpha_*)<\min_{\alpha\in\Ac_{lu}}M(\alpha)
\end{align}
and $M(\alpha_*)=\min_{\alpha\in\Ac }M(\alpha)$. Thus,
Applying  \eqref{eq:bhma_a} and \eqref{eq:bhma_b} for $\eps= (\min_{\alpha\in\Ac_{lu}}M(\alpha) - M(\alpha_*))/3$ yields w.p.a.1 :
\begin{align}\label{eq:bhma_d}
\min_{\alpha\in\Ac_{lu}} M_n(\alpha) > \min_{\alpha\in\Ac_{lu}}M(\alpha) - \eps > M(\alpha_*) + \eps >  \min_{\alpha\in[\alpha_*-2\delta,\alpha_*+2\delta]} M_n(\alpha).
\end{align}

Thus, w.p.a.1,
$$
\hat\alpha_n := \arg\min_{\alpha\in\Ac}M_n(\alpha) \in (\alpha_*-\delta,\alpha_*+\delta).
$$
In this event, for any $\alpha\not\in\Ac$, there is a convex combination $\alpha_\theta:=\theta\hat\alpha_n + (1-\theta)\alpha$, ($\theta<1$) that equals either $\alpha_*-2\delta$ or $\alpha_*+2\delta$. By convexity,
$$
M_n(\alpha_\theta)\leq \theta M_n(\hat\alpha_n) + (1-\theta)M_n(\alpha)
$$
Also, from \eqref{eq:bhma_d}, $M_n(\hat\alpha_n)<M_n(\alpha_\theta)$. Combining those, we find $M_n(\hat\alpha_n)<M_n(\alpha)$, implying that $\hat\alpha_n$ is the minimizer of $M_n$ over the entire $[0,K_\alpha]$ w.p.a.1. In other words, for all $\eps$ w.p.a. 1,
\begin{align}\label{eq:bhma_e}
\min_{\alpha\in[0,K_\alpha]\setminus(\alpha_*-\delta,\alpha_*+\delta)}M_n(\alpha)\geq 
\min_{\alpha\in\Ac_{lu}} M_n(\alpha) > \min_{\alpha\in\Ac_{lu}}M(\alpha) - \eps.
\end{align}
To establish a connection with the three statements (i)-(iii) of the lemma, observe that $\phio_{[0,\infty)}=M(\alpha_*)$. Also, $\phio_{[0,\infty)\setminus\Sc_{\delta}}=  \min_{\alpha\in\Ac_{lu}}M(\alpha) $ (by convexity). With these, (i) corresponds directly to \eqref{eq:bhma_a}, (ii) to \eqref{eq:bhma_e}, and, (iii) to \eqref{eq:bhma_c}.



\vspace{5pt}
\noindent{\underline{Proof of \eqref{eq:alpha>0} and \eqref{eq:alpha=0}}~}:
From the first statement of the lemma, the objective function $\Rc_n$ of the (AO) converges point-wise to $\Dc$. We will use this to show that the minimax value of $\Rc_n$ converges to the corresponding minimax of $\Dc$.  
The proof is based on a repeated use of Lemma 
\ref{lem:argmin_open} below, about convergence of the infimum of a sequence of convex converging stochastic processes. This fact is essentially a consequence of what is known in the literature as \emph{convexity lemma}, according to which point wise convergence of convex functions implies uniform convergence in 
compact subsets. Please refer to Section \ref{sec:aux_Proofs} for the proof.


\begin{lem}[Min-convergence -- Open Sets]\label{lem:argmin_open}
Consider a sequence of proper, convex stochastic functions  $M_n:(0,\infty)\rightarrow\R$, and, a deterministic function $M:(0,\infty)\rightarrow\R$, such that:
\begin{enumerate}[(a)]
\item $M_n(x)\rP M(x)$, for all $x>0$,
\item there exists $z>0$ such that $M(x)>\inf_{x>0}M(x)$ for all $x\geq z$.
\end{enumerate}
Then, $\inf_{x>0} M_n(x) \rP  \inf_{x>0} F(x).$  
\end{lem}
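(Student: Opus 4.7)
The plan invokes two classical facts about convex functions: the \emph{convexity lemma} (\uniform), which upgrades pointwise convergence in probability of a sequence of convex functions to uniform convergence in probability on any compact subinterval of the common effective domain; and affine extrapolation, by which the values of a convex function at any point in $(0,\infty)$ are controlled from below by its values at two fixed reference points. The strategy is to replace the infimum of $M_n$ over the open set $(0,\infty)$ by its infimum over a well-chosen compact subinterval $K$ containing a near-minimizer, and then pass to the limit on $K$ via the convexity lemma.

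Fix $\epsilon>0$ and set $m:=\inf_{x>0}M(x)$. Using hypothesis~(b), pick $b\ge z$ and set $\gamma:=M(b)-m>0$. Using the definition of $m$, pick a near-minimizer $x_\ast\in(0,z)$ with $M(x_\ast)<m+\gamma/4$; finally choose $a\in(0,x_\ast]$ small enough that $\tfrac{a}{b-a}\bigl(M(b)-M(a)\bigr)<\epsilon/4$. The latter is possible because $M(a)\ge m$ is a uniform lower bound, $M(b)$ is a fixed finite value, and the prefactor $a/(b-a)$ vanishes as $a\to 0^+$. Set $K:=[a,b]$. The convexity lemma applied to $M_n\to M$ on $K$ yields $\sup_{x\in K}|M_n(x)-M(x)|\rP 0$, hence $\inf_{x\in K}M_n(x)\rP\inf_{x\in K}M(x)\in[m,\,m+\gamma/4)$ since $x_\ast\in K$.

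It remains to show that with probability approaching one, $M_n(x)\ge m-\epsilon$ uniformly in $x\in(0,a)\cup(b,\infty)$. For $x\in(0,a)$, convexity of $M_n$ applied at the triple $x<a<b$ gives
\[
M_n(x)\ \ge\ M_n(a)-\tfrac{a-x}{b-a}\bigl(M_n(b)-M_n(a)\bigr)_+\ \ge\ M_n(a)-\tfrac{a}{b-a}\bigl(M_n(b)-M_n(a)\bigr)_+,
\]
which, by $M_n(a)\rP M(a)\ge m$, $M_n(b)\rP M(b)$, and the choice of $a$, is $\ge m-\epsilon/2$ whp. For $x>b$, convexity applied at $x_\ast<b<x$ yields
\[
M_n(x)\ \ge\ M_n(b)+\tfrac{x-b}{b-x_\ast}\bigl(M_n(b)-M_n(x_\ast)\bigr),
\]
and since $M_n(b)-M_n(x_\ast)\rP M(b)-M(x_\ast)\ge 3\gamma/4>0$, this lower bound is whp non-decreasing in $x$ and therefore $\ge M_n(b)\rP M(b)>m$. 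Combining all three regions, $\inf_{x>0}M_n(x)$ lies in $[m-\epsilon,\,m+\gamma/4]$ whp, and $\epsilon>0$ was arbitrary.

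The main obstacle is producing the uniform lower bound on $M_n$ outside the compact $K$: the convexity lemma only supplies uniform control on compacts, yet the infimum we must track is over the open interval $(0,\infty)$ and could \emph{a priori} be driven down by behavior near the boundaries. Hypothesis~(b) is used precisely here. It supplies the right-hand reference point $b$ with $M(b)>m$, which via pointwise convergence makes the slope $(M_n(b)-M_n(x_\ast))/(b-x_\ast)$ strictly positive whp and renders the right-extrapolation past $b$ non-decreasing in $x$; while the freedom to take $a$ arbitrarily small tames the left-extrapolation near $0^+$ through the prefactor $a/(b-a)$.
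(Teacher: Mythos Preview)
Your approach is essentially the paper's: both proofs invoke the convexity lemma to upgrade pointwise convergence to uniform convergence on a compact $K=[a,b]$, and then use affine extrapolation from two reference points to control $M_n$ on $(0,a)$ and $(b,\infty)$. Your treatment of the two tails is in fact slightly cleaner than the paper's.

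However, there is a genuine slip in the upper bound. You conclude that $\inf_{x>0}M_n(x)\in[m-\epsilon,\,m+\gamma/4]$ with high probability and then write ``$\epsilon>0$ was arbitrary''. But $\gamma=M(b)-m$ is fixed once $b$ is chosen and does not shrink with $\epsilon$; since $M$ is convex and $M(x)>m$ for all $x\ge z$, one has $\gamma\ge M(z)-m>0$ uniformly, so the upper endpoint $m+\gamma/4$ cannot be made close to $m$. As written, the argument therefore does \emph{not} give $\inf_{x>0}M_n(x)\rP m$. The fix is immediate: when picking the near-minimizer, require $M(x_\ast)<m+\min(\gamma/4,\epsilon/2)$ rather than merely $M(x_\ast)<m+\gamma/4$. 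The first bound is what you actually use in the right-tail step, and the second gives $\inf_{x\in K}M_n(x)\le M_n(x_\ast)<m+\epsilon$ with high probability, closing the upper bound. (Equivalently, the upper bound is trivial and can be obtained directly from pointwise convergence at any $\epsilon$-near-minimizer, independently of the rest of the argument.)

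A smaller omission: you tacitly use $m:=\inf_{x>0}M(x)>-\infty$ throughout (to define $\gamma$, to invoke $M(a)\ge m$, and for the conclusion to make sense). The paper proves this first, via the same extrapolation idea you use for $(0,a)$: for any fixed $0<x_1<x_2<x_3$ and $0<x<x_1$, convexity bounds $M(x)$ below by a finite combination of $M(x_2)$ and $M(x_3)$, so $M$ cannot tend to $-\infty$ at $0^+$; together with hypothesis~(b) this forces $m>-\infty$.
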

\emph{1)}~
Fix $\alpha\geq 0,\beta>0$, and, $\tauh>0$. Consider
\begin{align}
M_n^{\alpha,\beta,\tauh}(\taug)&:=R_n(\alpha,\taug,\beta,\tauh),
\end{align}
\begin{align}\label{eq:M_taug}
M^{\alpha,\beta,\tauh}(\taug):= \Dc(\alpha,\taug,\beta,\tauh).
\end{align}
The functions $\{M_n\}$ are convex. Furthermore, $M_n^{\alpha,\beta,\tauh}(\taug)\rP M^{\alpha,\beta,\tauh}(\taug)$ point wise in $\taug$. Next, we show that $M^{\alpha,\beta,\tauh}$ is level-bounded, i.e. it satisfies condition (b) of Lemma \ref{lem:argmin_open}. In view of Lemma \ref{lem:lv_cvx}, it suffices to show that 
$
\lim_{\taug\rightarrow\infty} M^{\alpha,\beta,\tauh}(\taug) = +\infty,
$
or 
$
\lim_{\taug\rightarrow\infty} \left( \frac{\beta}{2} + \delta\cdot\frac{L(\alpha,\taug/\beta)}{\taug} \right) > 0.
$
By assumption \ref{ass:prop}(c), $\lim_{\taug\rightarrow\infty}L(\alpha,\taug/\beta) = -L_0$. There is two cases to be considered. Either $L_0<\infty$, or else, Assumption \ref{ass:prop}(d) holds. Either way, $\lim_{\taug\rightarrow\infty} L(\alpha,\taug/\beta)/\taug = 0$ and we are done. Now, we can apply Lemma \ref{lem:argmin_open} to conclude that
\begin{align}\label{eq:conv_taug}
\inf_{\taug>0}M_n^{\alpha,\beta,\tauh}(\taug) \rP \inf_{\taug>0}M^{\alpha,\beta,\tauh}(\taug).
\end{align}

\emph{2)}~
\vp
Next, again for fixed $\alpha\geq 0,\tauh>0$, consider (we use some abuse of notation here, with the purpose of not overloading notation)
$$
M_n^{\alpha,\tauh}(\beta):= \inf_{\taug>0} M^{\alpha,\beta,\tauh}_n(\taug)
$$
$$
M^{\alpha,\tauh}(\beta):= \inf_{\taug>0} M^{\alpha,\beta,\tauh}(\taug)
$$
The functions $\{M_n^{\alpha,\tauh}\}$ are concave in $\beta$, as the point wise minima of concave functions. Furthermore, $M_n^{\alpha,\tauh}(\beta)\rP M^{\alpha,\tauh}(\beta)$ point wise in $\beta>0$, by \eqref{eq:conv_taug}. 

\noindent{\underline{$\alpha>0$:}}~ For now and until further notice, restrict attention to the case $\alpha>0$.
Also, consider first $\beta>0$. We show that $M^{\alpha,\tauh}$ is level-bounded, i.e. it satisfies condition (b) of Lemma \ref{lem:argmin_open}. In view of Lemma \ref{lem:lv_cvx}, it suffices to show that 
$
\lim_{\beta\rightarrow+\infty} M^{\alpha,\tauh}(\beta) = -\infty,
$
or 
$
\lim_{\beta\rightarrow+\infty} \inf_{\taug>0}M^{\alpha,\beta,\tauh}(\taug) = -\infty.
$ This condition is equivalent to the following
\begin{align}\label{eq:cond_taug_22}
(\forall M>0)(\exists B>0)~\left[\beta>B\Rightarrow (\exists \{\taug\}_{k})~ [D(\alpha,\taug,\beta,\tauh)<-M] ~\right].
\end{align}

First, we show that
\begin{align}\label{eq:1lim}
\lim_{\beta\rightarrow+\infty}\frac{\alpha\beta^2}{2\tauh} - \la \cdot F\left( \frac{\alpha\beta}{\tauh} , \frac{\alpha\la}{\tauh} \right) = +\infty
\end{align}
%
This follows by Assumption \ref{ass:prop}(a) 
when applied for $c=\alpha\beta/\tauh$ and $\tau=\alpha\la/\tauh$ (recall here that $\alpha>0$). 

Next, choose $\{\taug\}_k\rightarrow 0$. For that choice, $\frac{\beta\taug}{2}+L(\alpha,\taug/\beta)\rightarrow \lim_{\tau\rightarrow 0}L(\alpha,\tau)<\infty$, where boundedness follows by Assumption \ref{ass:prop}(b). Thus, \eqref{eq:cond_taug_22} is correct and we may apply Lemma \ref{lem:argmin_open} to conclude that
 \begin{align}\label{eq:conv_beta0}
\sup_{\beta>0} M_n^{\alpha,\tauh}(\beta) \rP \sup_{\beta>0} M^{\alpha,\tauh}(\beta).
\end{align}

Now, we investigate the case $\beta=0$. We have,
$M_n^{\alpha,\tauh}(0)=-\frac{1}{n}\loss(\z)-\frac{\alpha\tauh}{2}+\frac{\la}{n}\left(\env{f}{\x_0}{\frac{\alpha\la}{\tauh}}-f(\x_0)\right)$ 
 and $M_n^{\alpha,\tauh}(0) = -\delta L_0 - \frac{\alpha\tauh}{2} + F(0,\frac{\alpha\la}{\tauh})$. 
 
 If $L_0<\infty$, then by assumption, $M_n^{\alpha,\tauh}(0)\rP M^{\alpha,\tauh}(0)$. Combined with \eqref{eq:conv_beta0}, we find
 \begin{align}\label{eq:conv_beta}
\sup_{\beta\geq0} M_n^{\alpha,\tauh}(\beta) \rP \sup_{\beta\geq0} M^{\alpha,\tauh}(\beta).
\end{align} 

Now, consider the case $L_0=+\infty$. Clearly, the optimal $\beta$ for $M^{\alpha,\tauh}$ is not at zero; thus, $\sup_{\beta\geq0} M^{\alpha,\tauh}(\beta) = \sup_{\beta>0} M^{\alpha,\tauh}(\beta)$. Also, by assumption,  for all $M$, $\lim_{n\rightarrow\infty}\Pro\left( \frac{1}{n}\loss(\z) > M \right) = 1$. Letting, $\eps>0$ and $M:=-\sup_{\beta>0}M^{\alpha,\tauh}(\beta) + \eps + \frac{\alpha\tauh}{2} - F(0,\frac{\al\la}{\tauh})$, then w.p.a.1, $M_n^{\alpha,\tauh}(0)<\sup_{\beta>0}M^{\alpha,\tauh}(\beta)-\eps\leq \sup_{\beta>0}M_n^{\alpha,\tauh}(\beta)$, where  the last inequality follows because of \eqref{eq:conv_beta0}. Again, this leads to \eqref{eq:conv_beta}. To sum up, \eqref{eq:conv_beta} holds for all $\alpha>0$.

\vp
\noindent{\underline{$\alpha=0$:}}~ We show that for all $\eps>0$, the following holds w.p.a.1:
\begin{align}\label{eq:conv_beta_a=0}
\sup_{\beta\geq 0} M_n^{\alpha=0,\tauh}(\beta) < \sup_{\beta\geq 0} M^{\alpha=0,\tauh}(\beta) + \eps.
 \end{align}
To begin with, note that for all $n$,
\begin{align}\label{eq:Fn_0}
\sup_{\beta\geq 0} M_n^{\alpha=0,\tauh}(\beta) \leq \sup_{\beta>0}\lim_{\taug\rightarrow 0}\frac{\beta\taug}{2} + \frac{1}{n}\min_{\vb}\left\{\frac{\beta}{2\taug}\|\z-\vb\|_2^2+\loss(\vb)-\loss(\z) \right\} = 0,
 \end{align}
 where we have used  Lemma 
 \ref{lem:Mconvex}(ix).
 Next, we show that 
 \begin{align}\label{eq:M_0}
 M^{\alpha=0,\tauh}(\beta)=0.
 \end{align}
  Using Assumption \ref{ass:prop}(c) on the non-negativity of $L_0$ and Assumption \ref{ass:prop}(b) that $\lim_{\tau\rightarrow0}L(c,\tau)=0$, it follows that  $M^{\alpha=0,\tauh}(\beta)\leq \sup_{\beta>0}\lim_{\taug\rightarrow0}{\frac{\beta\taug}{2}+L(0,\taug/\beta)} = 0$. Thus, it will suffice for the claim if we prove
\begin{align}\label{eq:lim_s}
 \lim_{\beta\rightarrow\infty}\inf_{\taug>0}{\frac{\beta\taug}{2}+L(0,\taug/\beta)} = 0,
\end{align}
 or equivalently,
$$
 \lim_{\beta\rightarrow\infty}\inf_{\kappa>0}{\kappa\left(\frac{\beta^2}{2}+\frac{L(0,\kappa)}{\kappa}\right)} = 0.
$$
Fix some $\beta>0$. Note that $\lim_{\kappa\rightarrow0}\frac{\kappa\beta^2}{2}+{L(0,\kappa)} = 0$, where we have used Assumption \ref{ass:prop}(b) that  $\lim_{\tau\rightarrow0}L(0,\tau) = 0.$ Also, $\lim_{\kappa\rightarrow\infty}\kappa\left(\frac{\beta^2}{2}+\frac{L(0,\kappa)}{\kappa}\right) = +\infty$, using Assumption \ref{ass:prop}(d) this time. Now, consider only $\beta>\sqrt{-L_{2,+}(0,0)}$ (see Assumption \ref{ass:prop}(b)). Then, 
 the function $\frac{\kappa\beta^2}{2}+{L(0,\kappa)}$ has a positive derivative at $\kappa\rightarrow0^+$. From this and convexity, it follows that for all $\kappa>0$, 
 $$\frac{\kappa\beta^2}{2}+{L(0,\kappa)}\geq \lim_{\kappa\rightarrow0}\frac{\kappa\beta^2}{2}+{L(0,\kappa)} = 0.$$ This proves \eqref{eq:lim_s} as desired. 

To complete the argument, \eqref{eq:conv_beta_a=0} follows by  \eqref{eq:Fn_0} and \eqref{eq:M_0}, and with this we have completed the proof of \eqref{eq:alpha=0}.

%
%


\vp
\emph{3)}~
Keep $\alpha>0$ fixed and consider
$$
M_n^{\alpha}(\tauh):= \sup_{\beta\geq 0} M_n^{\alpha,\tauh}(\beta),
$$
$$
M^{\alpha}(\tauh):= \sup_{\beta\geq 0} M^{\alpha,\tauh}(\beta),
$$
The functions $\{M_n^{\alpha}\}$ and $F$ are all concave  in $\tauh$, as the point wise maxima of jointly concave functions. Furthermore, $M_n^{\alpha}(\tauh)\rP M^{\alpha}(\tauh)$ point wise in $\tauh$, by \eqref{eq:conv_beta}. Next, we show that $M^\tauh$ is level-bounded, i.e. it satisfies condition (b) of Lemma \ref{lem:argmin_open}. In view of Lemma \ref{lem:lv_cvx}, it suffices to show that 
$
\lim_{\tauh\rightarrow\infty} M^{\alpha}(\tauh) = +\infty,
$
or 
$\lim_{\tauh\rightarrow\infty}\sup_{\beta>0} \inf_{\taug>0}\Dc(\alpha,\taug,\beta,\tauh) = -\infty.
$
This is equivalent to the following
\begin{align}\label{eq:cond_taug_2}
(\forall M>0)(\exists T>0)~\left[\tauh>T\Rightarrow (\forall \{\beta\}_k)(\exists\{\taug\}_k)~ [D(\alpha,\taug,\beta,\tauh)<-M] ~\right].
\end{align}

Consider the function
$$\Hc(\beta,\tauh):=\frac{\alpha\tauh}{2} + \frac{\alpha\beta^2}{2\tauh} - \la\cdot F(\frac{\alpha\beta}{\tauh},\frac{\alpha\la}{\tauh}).$$
We show that \begin{align}\nn
 \Hc(\beta,\tauh)\geq \frac{\alpha\tauh}{2} .
\end{align}
To see this note that 
$
\env{f}{c\h+\x_0}{\tau} \leq \frac{c^2\|\h\|^2}{2\tau} +  f(\x_0).
$
Thus, $\frac{1}{n}\left\{\env{f}{c\h+\x_0}{\tau} - f(\x_0)\right\}\leq \frac{c^2\|\h\|^2}{2\tau n}$. The LHS converges to  $F(c,\tau)$ by Assumption \ref{ass:tech}(a) and the RHS converges to $\frac{c^2}{2\tau }$. Therefore,
$
F(c,\tau)\leq \frac{c^2}{2\tau}.
$
Applying this for $c=\frac{\alpha\beta}{\tauh}$ and $\tau=\frac{\alpha\la}{\tauh}$, we have that $\frac{\alpha\beta^2}{2\tauh} - \la\cdot F(\frac{\alpha\beta}{\tauh},\frac{\alpha\la}{\tauh})\geq 0$, as desired.

Then, 
\begin{align}\nn
\Dc(\alpha,\taug,\beta,\tauh) \leq \frac{\beta\taug}{2} + \delta\cdot L\left(\alpha,\frac{\taug}{\beta}\right) - \frac{\alpha\tauh}{2} .
\end{align} 
Also, note that for all $\beta>0$, we can choose (sequence) of $\taug$, such that $\beta\taug,\frac{\taug}{\beta}\rightarrow0$. Then, $\frac{\beta\taug}{2} + \delta\cdot L\left(\alpha,\frac{\taug}{\beta}\right)\rightarrow \lim_{\tau\rightarrow0}L(\alpha,\tau)=:A<\infty$. 
It can then be seen that \eqref{eq:cond_taug_2} holds for (say) $T:=T(M)=4(A+M)/\alpha$.

We can apply Lemma \ref{lem:argmin_open} to conclude that
\begin{align}\label{eq:conv_tauh}
\sup_{\tauh>0}M_n^{\alpha}(\tauh) \rP \sup_{\tauh>0}M^{\alpha}(\tauh).
\end{align}


\vp
\emph{4)}~ Finally, consider
$$
M_n(\alpha):= \sup_{\tauh>0} M_n^{\alpha}(\tauh),
$$
\begin{align}\label{eq:F}
M(\alpha):= \sup_{\tauh>0} M^{\alpha}(\tauh).
\end{align}
The functions $\{M_n\}$ and $F$ are all convex in $\tauh$, as the point wise maxima of convex functions. Furthermore, $M_n(\alpha)\rP M(\alpha)$ point wise in $\alpha$, by \eqref{eq:conv_tauh}. By assumption of the lemma, $F$ has a unique minimizer $\alpha_*$, which of course implies level boundedness. Thus, we can apply Lemma \ref{lem:argmin_open} to conclude that
\begin{align}\label{eq:conv_alpha}
\inf_{\alpha>0}M_n(\alpha) \rP \inf_{\alpha>0}M(\alpha).
\end{align}
Besides, pointwise convergence $M_n(\alpha)\rP M(\alpha)$ translates to uniform convergence over any compact subset $\Ac\subset(0,\infty)$ by the Convexity lemma \uniform,\cite[Lem.~7.75]{liese2008statistical}. Hence,
$$
 \inf_{\alpha\in\Ac} M_n(\alpha) \rP \inf_{\alpha\in\Ac}M(\alpha).
 $$
This is of course same as the desired in \eqref{eq:alpha>0}. Recall, \eqref{eq:alpha=0} was established in \eqref{eq:conv_beta_a=0}. The only thing remaining is showing  that there exists an optimal $\beta_*$ in $\sup_{\beta\geq0}M^{\alpha,\tauh}(\beta)$ that is bounded by some sufficiently large $K_\beta(\Ac)$. This follows from the level-boundedness arguments above as detailed immediately next. 

\vspace{5pt}
\noindent{\underline{Boundedness of solutions}~}: 
For a compact subset $\Ac\subset(0,\infty)$, we argue that there exists \emph{bounded} $\beta_*$ and sequences $\{\taug_*\}_{k},\{\tauh_*\}_{k}$ such that $(\alpha_*,\{\taug_*\}_{k},\beta_*,\{\tauh_*\}_{k})$  approaches \\$\min_{\alpha\in\Ac}\sup_{\substack{\tauh>0 \\\beta\geq0}}\inf_{\taug>0} \Dc(\alpha,\taug,\beta,\tauh)$. This follows from the work above. In particular, at each step in the proof of \eqref{eq:alpha>0} above, we showed level-boundedness of the corresponding functions. For example, \eqref{eq:cond_taug_2}  shows that there exists (sufficiently large) $T_h(\alpha)>0$ such that $\sup_{\tauh>0}M^{\alpha}(\tauh)$ is equal to $\sup_{T_h(\alpha) \geq \tauh>0}M^{\alpha}(\tauh)$. This holds for all $\alpha$; so, in particular, is true for $T_h:=\max_{\alpha\in\Ac}T_h(\alpha)$. Next, from \eqref{eq:1lim}  there exists $K_\beta(\alpha,T_h)$, such that $\sup_{\beta\geq0}M^{\alpha,\tauh}(\beta)$ is equal to $\sup_{K_\beta(\alpha,T_h) \geq\beta\geq 0}M^{\alpha,\tauh}(\beta)$.  Again, this holds for all $\alpha\in\Ac$, thus there exists sufficiently large $K_\beta>0$ such that (see also Lemma \ref{lem:sp_compact})
$$
\min_{\alpha\in\Ac}\sup_{\substack{\tauh>0 \\\beta\geq0}}\inf_{\taug>0} \Dc(\alpha,\taug,\beta,\tauh) = \inf_{\alpha\in\Ac}\sup_{\substack{\tauh>0 \\K_\beta\geq \beta\geq0}}\inf_{\taug>0} \Dc(\alpha,\taug,\beta,\tauh)
$$
%
%
%
%
The objective function $\Dc$ above is convex-concave. Also, the constraint sets over $\alpha$ and $\beta$ are compact. Furthermore, the optimization of $\Dc$ over $\taug$ and $\tauh$ is separable. With these and an application of Sion's minimax theorem, the order of inf--sup between the four optimization variables can be flipped arbitrarily without affecting the outcome. Thus, for example,
$$\inf_{\alpha\in\Ac}\sup_{\substack{\tauh>0 \\ \beta\geq0}}\inf_{\taug>0} \Dc(\alpha,\taug,\beta,\tauh) = \inf_{\substack{\alpha\in\Ac\\\taug>0}}\sup_{\substack{\tauh>0 \\\geq \beta\geq0}}\Dc(\alpha,\taug,\beta,\tauh)$$
The same is of course true for the corresponding random optimizations (also, Lemma \ref{lem:scalar}(iii)).
\subsection{Auxiliary Lemmas}\label{sec:aux_Proofs}

%

%
%

%
\vp

\begin{proof}[Proof of Lemma \ref{lem:argmin_open}]
First,  convexity is preserved by point wise limits, so that $F(x)$ is also convex. Using this and level-boundedness condition (b) of the lemma, it is easy to show that $\inf_{x>0}F(x)>-\infty$. Since $F$ is proper and (lower) level-bounded, the only way $\inf_{x>0}F(x)=-\infty$ is if  $\lim_{x\rightarrow 0}F(x)=-\infty$. But, this is not possible as follows: Fix  $0<x_1<x_2<x_3$. Then, for any $0<x<x_1$ and $\theta:=\frac{x_3-x_2}{x_3-x}$, convexity gives
$$
F(x) \geq \frac{1}{\theta} F(x_2) - \left(1-\frac{1}{\theta}\right) F(x_3)\geq -\frac{x_3-x_1}{x_3-x_2} | F(x_2) | -  \frac{x_2-x_1}{x_3-x_2} | F(x_3) |.
$$

\vp
Next, we show that for sufficiently small $\eps>0$, there exist $ x_0>x_\eps>0$:  
\begin{align}\label{eq:bhma_1} 
\inf_{x>0}F(x) \leq F(x_\eps) \leq \inf_{x>0}F(x) + \eps \quad\text{ and }\quad F(x_\eps)<F(x_0).
\end{align}
We show the claim for all $0<\eps<\eps_1:= ({F(z)-\inf_{x>0}F(x)})$. Since $\inf_{x>0}F(x)$ is finite, there exists $x_\eps>0$ such that $F(x_\eps)-\eps\leq \inf_{x>0} F(x)$. Without loss of generality, $x_\eps<z$. Pick any $x_0>z$. For the shake of contradiction, assume $F(x_0)\leq F(x_\eps)$. Then, by convexity, for some $\theta\in(0,1)$
$$
F(z) \leq \theta F(x_\eps) + (1-\theta) F(x_0) \leq F(x_\eps)\leq \inf_{x>0} F(x) + \eps  < F(z).
$$
Thus, $F(x_\eps)< F(x_0)$.

\vp
In  order to establish the desired, it suffices that for all arbitrarily small $\delta>0$, w.p.a. 1,
\begin{align}\label{eq:goal}
|\inf_{x>0} F_n(x) -  \inf_{x>0} F(x)|<\delta.
\end{align}

Fix some $0<\eps<\min\{\eps_1,\delta\}$ such that \eqref{eq:bhma_1} holds, and, also some
\begin{align}\label{eq:eps}
0<\eps'<\min\{(F(x_0)-F(x_\eps))/4, \delta/4,\delta-\eps\}.
\end{align} 
Let $K=[a,b]\subset(0,\infty)$ be compact subset such that $a<x_\eps<x_0\leq b$
and $a=\frac{\delta-2\eps'}{2\delta-\eps'}x_\eps$
. The functions $\{F_n\}$ are convex and they converge point wise to $F$ in the open set $(0,\infty)$. This implies uniform convergence in compact sets by the Convexity lemma \uniform,\cite[Lem.~7.75]{liese2008statistical}. That is, there exists sufficiently large $N_1$ such that the event 
\begin{align}\label{eq:uni_K}
\sup_{x\in K}|F_n(x)-F(x)|<\eps'
\end{align}
 occurs w.p.a. 1, for all $n>N_1$. In this event,
\begin{align}
\inf_{x>0} F_n(x) \leq F_n(x_\eps) < F(x_\eps) + {\eps'} \leq \inf_{x>0} F(x) + \eps + \eps'\leq\inf_{x>0} F(x)  + \delta\nn
\end{align}
It remains to prove the other side of \eqref{eq:goal}. In what follows, take $n\geq N_1$ and condition  on the high probability event in \eqref{eq:uni_K}.

Let us first show level-boundedness of $F_n$. Consider the event 
$\inf_{x>x_0} F_n(x) < \inf_{x\leq x_0} F_n(x).$
If this happens, then, 
$
\inf_{x>x_0} F_n(x) < F_n(x_\eps),
$
in which case there exists (by continuity of $F_n$), $x_n>x_0$ such that 
$F_n(x_n) < F_n(x_\eps).$
But then, convexity implies that for some $0<\theta_n<1$,
\begin{align}
F_n(x_0)\leq \theta_n F_n(x_n) + (1-\theta_n) F_n(x_\eps)< F_n(x_\eps) \leq F(x_\eps) + \eps' < F(x_0)-\eps',
\end{align}
where we also used \eqref{eq:uni_K} and \eqref{eq:eps}. Of course, this contradicts \eqref{eq:uni_K}. Thus, 
\begin{align}\label{eq:bound}
\inf_{x>0} F_n(x) = \inf_{x\leq x_0} F_n(x).
\end{align}

Using \eqref{eq:bound}, convexity and properness of $\{F_n\}$, it can be shown that $\inf_{x>0}F_n(x)>-\infty$. The argument is the same as the one used in the beginning of the proof for $F$, thus is omitted for brevity.

Overall, for all $n>N_1$, conditioned on \eqref{eq:uni_K}, there is some $0<x_n\leq x_0$ such that 
\begin{align}\label{eq:dew_1}
\inf_{x>0}F_n(x) \geq F_n(x_n) -  \eps'.
\end{align}

If $a\leq x_n\leq b$, then a direct application of \eqref{eq:uni_K} gives the desired
$$
 F_n(x_n) \geq F(x_n) - \eps' \geq \inf_{x>0}F(x) - \eps' \Rightarrow \inf_{x>0} F_n(x) \geq \inf_{x>0}F(x) - 2\eps' \geq  \inf_{x>0}F(x) - \delta.
$$

Next, assume that $0< x_n< a$.
  There exists $\theta_n\in(0,1)$ such that $\theta_nx_n+(1-\theta_n)x_\eps = a$. In fact, 
\begin{align}\label{eq:dew_2}
 \theta_n = \frac{x_\eps-a}{x_\eps-x_n}\geq (1-a/x_\eps)=\frac{\delta-2\eps'}{2\delta-\eps'}.
\end{align}
 Then, by convexity and \eqref{eq:uni_K}, $F_n(a) \leq \theta_n F_n(x_n) + (1-\theta_n) F_n(x_\eps) $. Rearranging and using \eqref{eq:uni_K}
\begin{align*}
F_n(x_n) &\geq \frac{1}{\theta_n}F_n(a) - \frac{1-\theta_n}{\theta_n} F_n(x_\eps) \\ &\geq  \frac{1}{\theta_n}(F(a) - \eps') - \frac{1-\theta_n}{\theta_n} (F(x_\eps) + \eps')
\\ &\geq  \frac{1}{\theta_n}\left(\inf_{x>0}F(x) - \eps\right) - \frac{1-\theta_n}{\theta_n} \left(\inf_{x>0}F(x) + \delta\right)
\end{align*}
Combining this with \eqref{eq:dew_1} and \eqref{eq:dew_2}, yields the desired
$
\inf_{x>0}F_n(x_n) \geq \inf_{x>0}F(x) - \delta .
$
\end{proof}

\begin{lem}(Level-bounded convex fcns)\label{lem:lv_cvx}
Let $F:(0,\infty)\rightarrow\R$ be convex. Then, the following two statements are equivalent:
\begin{enumerate}[(a)]
\item There exists $z>0$ such that $F(x)>\inf_{x>0}F(x)$ for all $x\geq z$.
\item $\lim_{x\rightarrow\infty}F(x) = +\infty$.
\end{enumerate}
\end{lem}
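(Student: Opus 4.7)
The plan is to prove the two implications separately, with the main tool being the standard monotonicity of secant slopes of a convex function: for $F$ convex on $(0,\infty)$ and $s<t$ in $(0,\infty)$, the slope $m(s,t):=(F(t)-F(s))/(t-s)$ is non-decreasing in each argument. Throughout, I will write $I:=\inf_{x>0}F(x)\in[-\infty,+\infty)$ (it is $<+\infty$ because $F$ is real-valued).

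\textbf{Direction (b)$\Rightarrow$(a).} The first step is to show that $\lim_{x\to\infty}F(x)=+\infty$ forces $I>-\infty$. Pick $x_1<x_2$ with $F(x_1)<F(x_2)$, which exist because $F$ grows unboundedly. Slope monotonicity gives $m(s,x_1)\leq m(x_1,x_2)$ for every $s\in(0,x_1)$, hence $F(s)\geq F(x_1)-m(x_1,x_2)x_1$, a finite lower bound on $(0,x_1]$. On any compact $[x_1,N]$, $F$ is continuous, so bounded. On $[N,\infty)$ for $N$ large, (b) gives $F(s)>F(x_1)$. Combining the three regions yields $I>-\infty$. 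Once $I$ is finite, (b) allows choosing $z>0$ with $F(x)>I+1>I$ for all $x\geq z$, establishing (a).

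\textbf{Direction (a)$\Rightarrow$(b).} I plan to argue by contrapositive. Suppose $F(x)\not\to+\infty$; then there exist $M\in\mathbb{R}$ and $x_n\to\infty$ with $F(x_n)\leq M$. For any fixed $s>0$ and large $n$, slope monotonicity gives
\[
m(s,t)\;\leq\;m(s,x_n)\;=\;\frac{F(x_n)-F(s)}{x_n-s}\;\leq\;\frac{M-F(s)}{x_n-s}\;\longrightarrow\;0
\]
for all $s<t\leq x_n$, from which $m(s,t)\leq 0$ for every $t>s$. Hence $F$ is non-increasing on $(0,\infty)$, and therefore $I=\lim_{x\to\infty}F(x)\leq M<\infty$.

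\textbf{The main obstacle.} The last step is where the argument becomes delicate. A non-increasing convex function $F$ with $I=\lim_{x\to\infty}F(x)$ may still satisfy $F(x)>I$ strictly for every $x\geq z$ — the standard counterexample is $F(x)=1/x$, which has $I=0$, is strictly decreasing with $F(x)>0$ for all $x$, and yet does not blow up. Thus the bare equivalence as stated needs an extra hypothesis to close the (a)$\Rightarrow$(b) direction (for instance, that $I$ is attained, or that the sublevel sets of $F$ are bounded in the usual level-boundedness sense). Under any such condition, non-increasing-ness of $F$ forces $F$ to equal $I$ on some tail $[x^\ast,\infty)$, directly contradicting (a). I anticipate that this mild attainment/level-boundedness hypothesis is implicit in the intended application: in the body of the paper, Lemma \ref{lem:lv_cvx} is only invoked in the easy direction (b)$\Rightarrow$(a) through statements of the form ``it suffices to show that $\lim_{\tau\to\infty}M(\tau)=+\infty$'', so this subtlety does not affect the use made of the lemma.
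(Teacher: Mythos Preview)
Your diagnosis is correct, and in fact you have caught a genuine error in the paper. The function $F(x)=1/x$ on $(0,\infty)$ is convex, has $\inf_{x>0}F(x)=0$, satisfies $F(x)>0$ for every $x>0$ (so (a) holds with any $z>0$), yet $\lim_{x\to\infty}F(x)=0\neq+\infty$. Thus (a)$\Rightarrow$(b) is false as stated.

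The paper's own proof of (a)$\Rightarrow$(b) fails precisely at the first sentence: ``Clearly, there exists $0<x_0<z$ such that $F(z)>F(x_0)$.'' From (a) one only knows $F(z)>\inf_{x>0}F(x)$, so some $x_0$ with $F(x_0)<F(z)$ exists, but there is no reason it must lie in $(0,z)$; for $F(x)=1/x$ with $z=1$ every such $x_0$ satisfies $x_0>1$. Once that step collapses, the positive-slope secant argument the paper runs afterward is unavailable. Your contrapositive argument (showing $F$ must be non-increasing if it does not blow up) is the right way to see what is really going on, and it makes transparent why an extra hypothesis --- e.g.\ that the infimum is attained, or equivalently that the sublevel sets are bounded --- is needed to close the implication.

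Your treatment of (b)$\Rightarrow$(a) is fine, though slightly more laborious than necessary: the paper simply picks any $M>\inf_{x>0}F(x)$ (finite since $F$ is real-valued) and uses (b) to find $z$ with $F(x)\geq M>\inf F$ for $x\geq z$. Your preliminary verification that $I>-\infty$ is not needed for this direction, since if $I=-\infty$ then $F(x)>I$ holds trivially. Finally, your reading of how the lemma is actually used is accurate: every invocation in the proof of Lemma~\ref{lem:convergence} is of the form ``it suffices to show $\lim M(\cdot)=+\infty$'', i.e.\ only the valid direction (b)$\Rightarrow$(a) is ever applied.
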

\begin{proof}
\noindent{\underline{(a)$\Rightarrow$(b):}}~
Clearly, there exists $0<x_0<z$, such that $F(z)>F(x_0)$. Then, by convexity, for all $x>z$ it holds
$$
F(x) \geq F(z) + \underbrace{\frac{F(z)-F(x_0)}{z-x_0}}_{>0}(x-z).
$$
Taking limits of $x\rightarrow\infty$ on both sides above, proves the claim.

\noindent{\underline{(a)$\Leftarrow$(b):}}~ A a proper functions, $F$ has a nonempty domain in $(0,\infty)$. Hence, $\inf_{x>0}F(x)<\infty$ and can choose some $M>\inf_{x>0}F(x)$. From (b), there exists $z>0$ such that $F(x)\geq M$ for all $x\geq z$, as desired.
\end{proof}

\begin{lem}[Saddle-points]\label{lem:sp_compact}
For a convex-concave function $F:\R\times\R\rightarrow\R$, consider the minimax optimization $\inf_{x}\sup_{y}F(x,y)$. Let $C,D$ be compact subsets such that there exists at least one saddle point $(x_*,y_*)\in C\times D$. Then, 
$$
\inf_{x}\sup_{y}F(x,y) = \inf_{x\in C}\sup_{y\in D}F(x,y).
$$
\end{lem}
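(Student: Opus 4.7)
The plan is to exploit the defining property of a saddle point $(x_*,y_*)$, namely
$$F(x_*,y)\;\le\;F(x_*,y_*)\;\le\;F(x,y_*)\qquad\text{for all }x,y\in\R,$$
to sandwich both sides of the claimed equality by the common value $F(x_*,y_*)$. No convexity/concavity argument beyond the existence of the saddle point will actually be needed; those hypotheses only serve to guarantee that a saddle point can exist, which we are already assuming.

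First I would record the two one-sided consequences of the saddle-point inequality. Taking the supremum over $y$ in the left inequality gives $\sup_y F(x_*,y)=F(x_*,y_*)$, and taking the infimum over $x$ in the right inequality gives $\inf_x F(x,y_*)=F(x_*,y_*)$. In particular, the unrestricted problem satisfies
$$\inf_x\sup_y F(x,y)\;\le\;\sup_y F(x_*,y)\;=\;F(x_*,y_*)\;=\;\inf_x F(x,y_*)\;\le\;\inf_x\sup_y F(x,y),$$
so $\inf_x\sup_y F(x,y)=F(x_*,y_*)$.

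Next I would show the same identity for the restricted problem by using the crucial hypothesis $x_*\in C$ and $y_*\in D$. For the upper bound, since $x_*\in C$,
$$\inf_{x\in C}\sup_{y\in D} F(x,y)\;\le\;\sup_{y\in D} F(x_*,y)\;\le\;\sup_{y\in\R} F(x_*,y)\;=\;F(x_*,y_*).$$
For the matching lower bound, since $y_*\in D$,
$$\inf_{x\in C}\sup_{y\in D} F(x,y)\;\ge\;\inf_{x\in C} F(x,y_*)\;\ge\;\inf_{x\in\R} F(x,y_*)\;=\;F(x_*,y_*).$$
Combining the two displays gives $\inf_{x\in C}\sup_{y\in D}F(x,y)=F(x_*,y_*)$, which matches the value of the unrestricted problem computed above and closes the proof.

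There is no real obstacle here: the result is essentially the elementary observation that once a saddle point is known to lie inside any set $C\times D$, both the restricted and unrestricted min-max values collapse to $F(x_*,y_*)$. Compactness of $C,D$ is not invoked in the argument, and the convex-concavity of $F$ enters only implicitly through the standing assumption that a saddle point exists. The only care needed is to direct each inequality through $x_*$ or $y_*$ on the correct side of the min-max so as to get a two-sided sandwich rather than the trivial one-sided bound $\inf\sup\ge\inf_{C}\sup_{D}$, which is false in general but turns into an equality precisely because a saddle point is available inside the restriction.
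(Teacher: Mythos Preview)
Your proof is correct and in fact more elementary than the paper's. The paper argues in two stages: it first asserts $\inf_x\sup_y F=\inf_{x\in C}\sup_y F$, then invokes Rockafellar's saddle-point characterization together with Sion's minimax theorem (which uses compactness of $C$ and convexity-concavity of $F$) to flip the order of optimization, reducing the claim to $\sup_y\inf_{x\in C}F=\sup_{y\in D}\inf_{x\in C}F$, which it then verifies via the saddle point. Your argument bypasses all of this by directly sandwiching both the unrestricted and the restricted min-max values by $F(x_*,y_*)$, using nothing but the saddle-point inequalities and the membership $x_*\in C$, $y_*\in D$. As you correctly note, compactness of $C,D$ and convex-concavity of $F$ are never actually invoked in your argument; they are needed in the paper's route only because Sion's theorem requires them. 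The paper's approach is perhaps more in the spirit of the surrounding development (where min-max interchanges via Sion are used repeatedly), but yours is cleaner and shows that the lemma holds in greater generality.
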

\begin{proof}
First observe that, 
\begin{align*}
\inf_{x}\sup_{y}F(x,y) = \inf_{x\in C}\sup_{y} F(x,y)
\end{align*}
Since $F$ has a saddle-point, the LHS above is equal to $\sup_{y}\inf_{x}F(x,y)$ \cite[Lem.~36.2]{Roc70}. Also, from Sion's minimax theorem, the RHS is equal to $\sup_{y}\inf_{x\in C} F(x,y)$. Thus, it suffices to prove that
\begin{align*}
\sup_{y}\inf_{x\in C} F(x,y) = \sup_{y\in D}\inf_{x\in C} F(x,y).
\end{align*}
Clearly, this holds with a ``$\geq$" sign. To prove equality, let $(x_*,y_*)$ be a saddle point. Then,
$$
\sup_{y}\inf_{x\in C} F(x,y) = \inf_{x\in C}\sup_{y}f(x,y)\leq  \sup_{y} f(x_*,y) \leq f(x_*,y_*) = \sup_{y\in D}\inf_{x\in C} F(x,y).
$$
\end{proof}


\begin{lem}\label{lem:jointConv}
The function $h(\alpha,\tau,\vb)=\frac{1}{2\tau}\| \alpha\x + \z - \vb \|_2^2$ is jointly convex in its arguments.
\end{lem}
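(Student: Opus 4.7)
The natural approach is to recognize $h$ as the composition of a known jointly convex function with an affine map. Specifically, I will show first that the auxiliary function
\[
q(\ub,\tau) := \frac{\|\ub\|_2^2}{2\tau}, \qquad (\ub,\tau)\in\R^m\times\R_{>0},
\]
is jointly convex in its arguments, and then deduce joint convexity of $h$ by noting that $(\alpha,\tau,\vb)\mapsto(\alpha\x+\z-\vb,\tau)$ is affine.

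For the first step, the cleanest route is to invoke the \emph{perspective} construction: given any convex function $g:\R^m\rightarrow\R$, its perspective $\tilde g(\ub,\tau):=\tau\,g(\ub/\tau)$ is jointly convex on $\R^m\times\R_{>0}$ (a standard fact in convex analysis, e.g.\ \cite[Sec.~3.2.6]{Roc70}). Applied to $g(\ub)=\tfrac{1}{2}\|\ub\|_2^2$, one obtains exactly $\tilde g(\ub,\tau)=\|\ub\|_2^2/(2\tau)=q(\ub,\tau)$, which establishes joint convexity of $q$.

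For the second step, define the affine map $T:\R\times\R_{>0}\times\R^m\rightarrow\R^m\times\R_{>0}$ by $T(\alpha,\tau,\vb):=(\alpha\x+\z-\vb,\tau)$. Since $h(\alpha,\tau,\vb)=q(T(\alpha,\tau,\vb))$ and the composition of a convex function with an affine map is convex, the desired joint convexity of $h$ follows.

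There is no genuine obstacle here; the only mild subtlety is to state the perspective fact in a form that directly applies (i.e., quoting it for $g=\tfrac{1}{2}\|\cdot\|_2^2$ rather than reproving it via, say, a direct Hessian computation on $\R^{m+1}_{>0}$). If one prefers to avoid citing the perspective lemma, an elementary alternative is to verify convexity along an arbitrary line segment: for $(\alpha_i,\tau_i,\vb_i)$, $i=0,1$, and $\theta\in[0,1]$, set $\ub_i:=\alpha_i\x+\z-\vb_i$ and apply the Cauchy–Schwarz inequality to bound $\|\theta\ub_0+(1-\theta)\ub_1\|_2^2/(\theta\tau_0+(1-\theta)\tau_1)$ by $\theta\|\ub_0\|_2^2/\tau_0+(1-\theta)\|\ub_1\|_2^2/\tau_1$; but the perspective route is shorter and more transparent.
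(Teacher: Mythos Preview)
Your proof is correct and follows essentially the same route as the paper's: both invoke the perspective construction (the paper applies it to $(\alpha,\vb)\mapsto\|\alpha\x-\vb\|_2^2$, you to $\ub\mapsto\tfrac{1}{2}\|\ub\|_2^2$) and then use invariance of convexity under an affine change of variables to insert $\z$ and the remaining arguments. The only cosmetic difference is that you separate the perspective step from the affine composition more explicitly, whereas the paper folds them together.
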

\begin{proof}
The function $\|\al \x-\vb\|_2^2$ is trivially jointly convex in $\al$ and $\vb$. So its perspective function which is  $\frac{1}{\tau}\| \alpha\x - \vb \|_2^2$ is also jointly convex in all its arguments, same as its shifted version which is $h(\alpha,\tau,\vb)$.
\end{proof}
\begin{lem}\label{lem:Moreau}
Let $f:\R^n\rightarrow\R$ be convex. Then,
\begin{enumerate}[(i)]
\item $\prox{f}{\x}{\tau} + \tau\cdot\prox{f^*}{\x/\tau}{\tau^{-1}}=\x$,
\item $\env{f}{\x}{\tau}+\env{f^*}{\x/\tau}{1/\tau} = \frac{\|\x\|^2}{2\tau}$.
\end{enumerate}
\end{lem}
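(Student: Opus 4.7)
\textbf{Proof proposal for Lemma \ref{lem:Moreau}.} The plan is to prove (i) via the Fenchel subgradient inversion rule applied to the first-order optimality conditions of the two prox problems, and then deduce (ii) by direct substitution using the fact that the Fenchel--Young inequality holds with equality along the prox pair.

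\emph{Step 1 (Optimality conditions).} Let $p := \prox{f}{\x}{\tau}$ and $q := \prox{f^*}{\x/\tau}{\tau^{-1}}$. By the first-order condition for the minimization defining $p$,
\begin{equation*}
0 \in \tfrac{1}{\tau}(p-\x) + \partial f(p),
\qquad\text{i.e.,}\qquad
\tfrac{\x-p}{\tau}\in\partial f(p).
\end{equation*}
Similarly, writing out the optimality condition for the minimization defining $q$ gives
\begin{equation*}
0\in \tau(q - \x/\tau) + \partial f^*(q),
\qquad\text{i.e.,}\qquad
\x - \tau q \in \partial f^*(q).
\end{equation*}

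\emph{Step 2 (Matching via Fenchel inversion).} By the Fenchel inversion rule for closed proper convex functions, $y\in\partial f(p) \Longleftrightarrow p\in\partial f^*(y)$. Apply this to $y = (\x-p)/\tau$ from the first inclusion: it gives $p\in\partial f^*((\x-p)/\tau)$. Comparing with the inclusion obtained for $q$, we see that $u:=(\x-p)/\tau$ satisfies $\x-\tau u = p \in \partial f^*(u)$, which is precisely the optimality condition characterizing $q$. Since the prox of a convex function is uniquely defined, $q = (\x-p)/\tau$, i.e., $p+\tau q = \x$, which is claim (i). The main (and only) subtlety here is making sure the Fenchel inversion is legitimately invoked; this requires $f$ to be closed proper convex, which is implicit in the setup of the paper.

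\emph{Step 3 (Envelope decomposition).} Using $p+\tau q = \x$, evaluate
\begin{align*}
\env{f}{\x}{\tau} &= \tfrac{1}{2\tau}\|\x-p\|_2^2 + f(p) = \tfrac{\tau}{2}\|q\|_2^2 + f(p),\\
\env{f^*}{\x/\tau}{\tau^{-1}} &= \tfrac{\tau}{2}\|\x/\tau - q\|_2^2 + f^*(q) = \tfrac{1}{2\tau}\|p\|_2^2 + f^*(q).
\end{align*}
Because $(\x-p)/\tau = q \in \partial f(p)$, the Fenchel--Young inequality holds with equality: $f(p) + f^*(q) = \langle p, q\rangle$. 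Summing the two displays yields
\begin{equation*}
\env{f}{\x}{\tau} + \env{f^*}{\x/\tau}{\tau^{-1}} = \tfrac{1}{2\tau}\|p\|_2^2 + \langle p,q\rangle + \tfrac{\tau}{2}\|q\|_2^2 = \tfrac{1}{2\tau}\|p+\tau q\|_2^2 = \tfrac{\|\x\|_2^2}{2\tau},
\end{equation*}
which is claim (ii). I do not expect any substantial obstacle in this proof; the only item to double-check is the correct bookkeeping of the parameter $\tau$ in the prox/envelope definitions (since the paper's convention places $\tau$ in the denominator of the quadratic), but Steps 1--3 above are already written in that convention.
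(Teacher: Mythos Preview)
Your proof is correct and follows the standard route to the Moreau decomposition: derive the optimality conditions for both prox problems, match them via the Fenchel inversion $y\in\partial f(p)\Leftrightarrow p\in\partial f^*(y)$, invoke uniqueness of the prox to get (i), and then use Fenchel--Young equality along the prox pair to sum the envelopes and recover (ii). The bookkeeping with the parameter $\tau$ in the paper's convention is handled correctly.

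The paper itself does not supply a proof of this lemma; it is stated as an auxiliary fact (the classical Moreau decomposition) and used in the scalarization step, specifically in passing from \eqref{eq:beforelasthere} to \eqref{eq:lasthere}. So there is nothing to compare against beyond noting that your argument is the standard one and would serve perfectly well as the omitted proof.
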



\section{Proofs for Separable M-Estimators}\label{app:sep}

\subsection{Satisfying Assumptions \ref{ass:tech} and \ref{ass:prop}}\label{app:ass_sep}

\subsubsection{Proof of Lemma \ref{lem:ass_sep}}
Recall,
$$
\ell_+^\prime(v) = \max_{s\in\partial\ell(v)} |s|.
$$
and that \eqref{eq:ass_main} gives for all $c\in\R$,
\begin{align}\label{eq:we have}
\Exp|\ell_+^\prime(c G + Z)|<\infty\quad\text{ and }\quad \Exp|\ell_+^\prime(c G + Z)|^2<\infty.
\end{align}

We make repeated use of Lemma \ref{lem:Mconvex} on properties of the Moreau envelope function.

\vp
\noindent$\bullet$~
First, we show that 
\begin{align}\label{eq:1a_prepre}
\Exp\left[\Big| \frac{\partial \env{\ell}{\alpha G + Z}{\tau}}{\partial\tau} \Big| \right]<\infty, \quad \text{ for all } \alpha\in\R, \tau>0.
\end{align}
From \eqref{eq:e_2}
$
\Big|\frac{\partial \env{\ell}{\alpha G + Z}{\tau}}{\partial\tau}\Big| \leq |\ell_+^\prime( \prox{\ell}{\alpha G + Z}{\tau})|^2.
$
Lemma \ref{lem:Mconvex}(viii) shows that this is no larger than $ |\ell_+^\prime( {\alpha G + Z})|^2$. Then, \eqref{eq:1a_prepre} follows from \eqref{eq:we have}.

\vp
\noindent$\bullet$~
It is also useful to prove 
\begin{align}\label{eq:1a_pre}
\Exp\left[| \ell({\alpha G + Z}) - \ell(Z) | \right]<\infty, \quad \text{ for all } \alpha\in\R.
\end{align}
From convexity of $\ell$,
\begin{align*}
| \ell({\alpha G + Z}) - \ell(Z) | \leq \max\{|\ell_+^\prime({\alpha G + Z})|,|\ell_+^\prime(Z)|\}\cdot |\alpha G| \leq  \left(|\ell_+^\prime({\alpha G + Z})|+|\ell_+^\prime(Z)|\right)\cdot |\alpha G|,
\end{align*}
and the desired follows by taking expectations and applying \eqref{eq:we have} for $c=\alpha$ and $c=0$.

\vp 
\noindent$\bullet$~Let us now show 
\begin{align}\label{eq:1a}
\Exp\left[ |\env{\ell}{\alpha G + Z}{\tau} - \ell(Z) | \right]<\infty, \quad \text{ for all } \alpha\in\R, \tau>0
\end{align}
We have,
$
|\env{\ell}{\alpha G + Z}{\tau} - \ell(Z) | \leq |\env{\ell}{\alpha G + Z}{\tau} - \ell(\alpha G + Z) |  + |\ell(\alpha G + Z)- \ell(Z) |.
$
In view of \eqref{eq:1a_pre}, it suffices for \eqref{eq:1a} to show integrability of the first term. We argue as follows 
\begin{align*}
|\env{\ell}{\alpha G + Z}{\tau} - \ell(\alpha G + Z) | &= \lim_{\rho\rightarrow0}|\env{\ell}{\alpha G + Z}{\tau} - \env{\ell}{\alpha G + Z}{\rho} | \\&=
 \lim_{\rho\rightarrow0} \Big|\frac{\partial\env{\ell}{\alpha G + Z}{\tau}}{\partial\tau}\Huge|_{\tau=\xi(\rho)} \Big | \cdot |\tau - \rho |.
\end{align*}
It remains to take expectations of both sides and apply the argument below \eqref{eq:1a_prepre} to yield \eqref{eq:1a}.

\vp 
\noindent$\bullet$~ \underline{Assumption \ref{ass:tech}(a).} 
We have
$
\frac{1}{m}\{\env{\loss}{\alpha\g+\z}{\tau}-\loss(\z)\} = 
\frac{1}{m}\sum_{j=1}^m\left(\env{\ell}{\alpha\g_j+\z_j}{\tau}-\ell(\z_j)\right).
$
Then rom the WLLN (e.g. \cite[Thm.~2.2.9]{dukeProb}) the expression above converges in probability to 
\begin{align}\label{eq:Lm_sep}
\Lm{\alpha}{\tau} = \E\left[\env{\ell}{\alpha G+Z}{\tau}-\ell(Z)\right],
\end{align}
where we have also used \eqref{eq:1a} to verify integrability. 

\vp 
\noindent$\bullet$~ \underline{Continuity and convexity of $L$.} The Moreau envelope function is convex in its arguments (see Lemma \ref{lem:Mconvex}(ii)). Convexity is preserved under affine transformations and nonnegative weighted sums; thus, $L(\alpha,\tau)$ is jointly convex in $\alpha,\tau$. Continuity then follows as a consequence of convexity \cite[Thm.~10.1]{Roc70}.

\vp 
\noindent$\bullet$~ \underline{Assumption \ref{ass:prop}(c).} 
To compute $\lim_{\tau\rightarrow+\infty}\Exp\left[\env{\ell}{\alpha G+Z}{\tau}-\ell(Z)\right]$, we first apply the Dominated Convergence Theorem to pass the limit inside the expectation. This is justified since \eqref{eq:1a} shows integrability, and the limit exists as follows (see Lemma \ref{lem:Mconvex}(vii))
\begin{align}
\lim_{\tau\rightarrow+\infty}\env{\ell}{\alpha G+Z}{\tau} = \min_v \ell(v) = 0,\nn
\end{align}
for all $\alpha,\tau>0$. Taking expectation of this shows $\lim_{\tau\rightarrow+\infty}L(\alpha,\tau)=-L_0$, where $L_0=\Exp[\ell(Z)]$ by the WLLN. 
Also, we need to show that if $\Exp[\ell(Z)]<\infty$, then $\Exp\left[\env{\ell}{\alpha G+Z}{\tau}\right]\geq 0$. This follows easily since $\env{\ell}{\alpha G+Z}{\tau}\geq \min_{v}\ell(v)=0$. Finally, the property $L(\alpha,\tau)\geq\lim_{\tau\rightarrow\infty}L(\alpha,\tau)$ follows by the non increasing nature of $e_\ell$ with respect to $\tau$ (cf.  \ref{lem:Mconvex}(v)).

\vp 
\noindent$\bullet$~ \underline{Assumption \ref{ass:prop}(d).} 
If $\lim_{\tau\rightarrow+\infty}L(\alpha,\tau)<\infty$, the claim is immediate. Otherwise, we apply de l'hospital rule and \eqref{eq:der_L} to get
\begin{equation}
\lim_{\tau\rightarrow\infty}\frac{\E[\env{\ell}{\GZ}{\tau}-\ell(Z)]}{\tau}=\lim_{\tau\rightarrow\infty}\frac{\partial}{\partial \tau}\E[\env{\ell}{\GZ}{\tau}-\ell(Z)]\nn
\end{equation}
An application of the Dominated Convergence Theorem in Lemma \ref{lem:strict}(i) ,shows that we can interchange the order of differentiation and expectation above. 
We will prove that
\begin{align}\label{eq:will we}
\lim_{\tau\rightarrow\infty}\frac{\partial}{\partial \tau}(\env{\ell}{\GZ}{\tau}-\ell(Z))=0
\end{align}
for all $G$ and $Z$. Then, we can also utilize dominated convergence theorem to pass the limit in the expectation and conclude with the desired.

From standard properties of the Moreau envelopes (cf. \eqref{eq:e_2}),
\begin{equation}
\frac{\partial}{\partial \tau}(\env{\ell}{\GZ}{\tau}-\ell(Z))=\frac{1}{-2\tau^2}(\GZ-\prox{\ell}{\GZ}{\tau})^2\nn
\end{equation}
%
Thus, it suffices to prove $\lim_{\tau\rightarrow\infty}\frac{1}{\tau}(x-\prox{\ell}{x}{\tau})=0$ for all $x$. This is shown in Lemma \ref{lem:Mconvex}(vii).

\vp 
\noindent$\bullet$~ \underline{Assumption \ref{ass:prop}(b).}
We apply the Dominated Convergence Theorem to compute $\lim_{\tau\rightarrow0^+}\E[\env{\ell}{\al G+Z}{\tau}-\ell(Z)]$ and exchange limit and expectation. Then, because $\lim_{\tau\rightarrow0^+}\env{\ell}{\al G+Z}{\tau}=\ell(\al G+Z)$ we have
$$
\lim_{\tau\rightarrow0^+}\E\left[\env{\ell}{\al G+Z}{\tau}-\ell(Z)\right]=\E\left[\lim_{\tau\rightarrow0^+}\env{\ell}{\al G+Z}{\tau}-\ell(Z) \right]=\E\left[ \ell(\al G+Z)-\ell(Z) \right]<\infty,
$$
Boundedness follows from \eqref{eq:1a}. The same argument shows that 
$$
\lim_{\tau\rightarrow0^+}L(0,\tau)=\lim_{\tau\rightarrow0^+}\E\left[\env{\ell}{Z}{\tau}-\ell(Z)\right]=\E\left[\lim_{\tau\rightarrow0^+}\env{\ell}{Z}{\tau}-\ell(Z) \right]=\E\left[ \ell(Z)-\ell(Z) \right]=0.
$$
Finally, to compute $\lim_{\tau\rightarrow0^+}L_2(0,\tau)$, we apply Dominated Convergence Theorem twice as was done for the proof of Assumption \ref{ass:prop}(d).  With this we have,
$$
\lim_{\tau\rightarrow0^+}L_2(0,\tau)=\E\left[ \lim_{\tau\rightarrow 0^+}\frac{\partial}{\partial\tau}\left(\env{\ell}{\al G +Z}{\tau}-\ell(Z) \right)\big|_{\al=0}\right]=-\frac{1}{2}\E\left[\lim_{\tau\rightarrow0^+}\left(\ell^\prime_{ \prox{\ell}{Z}{\tau},\tau}\right)^2 \right]\leq0.
$$
The second equality above follows by Lemma \ref{lem:Mconvex}(iii) (please see \eqref{eq:opt} for the notation $\ell^\prime_{\chi,\tau}$). Besides, due to lemma \ref{lem:Mconvex}(viii), $(\ell^\prime_{\prox{\ell}{Z}{\tau},\tau})^2\leq(\ell^\prime_+(Z))^2 $ which implies
$$
-\E\left[\lim_{\tau\rightarrow0^+}\left(\ell^\prime_{ \prox{\ell}{Z}{\tau},\tau}\right)^2 \right]\geq-\E\left[\lim_{\tau\rightarrow0^+} (\ell_+^\prime(Z))^2 \right]=-\E\left[ (\ell_+^\prime(Z))^2 \right]>-\infty.
$$
Boundedness follows by \eqref{eq:we have}.


\vp
\subsubsection{Proof of Lemma \ref{lem:ass_sep2}}

\vp 
\noindent$\bullet$~ \underline{Assumption \ref{ass:tech}(a).}~
Assumption \ref{ass:tech}(a) is satisfied for $F(c,\tau)=\E\left[\env{f}{c H+X_0)}{\tau}-f(X_0)\right].$ The proof is exact same as in Lemma  \ref{lem:ass_sep}.

\vp 
\noindent$\bullet$~ \underline{$\lim_{\tau\rightarrow0^+}F(\tau,\tau)=0$.}~ This will follow from continuity of the Moreau envelope. In particular, 
using Lemma \ref{lem:Mconvex}(ix), we find that 
 for all $H,X_0$:
 $$
 \lim_{\tau\rightarrow0}\env{f}{\tau H + X_0}{\tau} = f(X_0).
 $$
Then, the desired claim follows from this and an application of the Dominated Convergence Theorem.\\

\vp 
\noindent$\bullet$~ \underline{$\lim_{c\rightarrow\infty}\frac{c^2}{2\tau}- \E[\env{f}{cH+X_0}{\tau}-f(X_0)]=\infty$.}~
We have
\begin{align}\label{eq:limit_fs_1}
&\frac{c^2}{2\tau}-\E[\env{f}{cH+X_0}{\tau}-f(X_0)]=\E[\frac{(c H+X_0)^2}{2\tau}- \env{f}{cH+X_0}{\tau}]+\E[f(X_0)-X_0^2]\nonumber\\
&=\E[\env{f^*}{(c H+X_0)/\tau}{1/\tau}]+\E[f(X_0)-X_0^2]\nonumber\\
&=\frac{1}{2}\E[\env{f^*}{(c H+X_0)/\tau}{1/\tau}\big|H>0]+\frac{1}{2}\E[\env{f^*}{(c H+X_0)/\tau}{1/\tau}\big|H<0]
+\E[f(X_0)-X_0^2]\nonumber\\
&\geq\frac{1}{2}\env{f^*}{\E[(c H+X_0)/\tau\big| H>0]}{1/\tau}+\frac{1}{2}\env{f^*}{\E[(c H+X_0)/\tau\big|H<0]}{1/\tau}
+\E[f(X_0)-X_0^2]\nonumber\\
&=\frac{1}{2}\env{f^*}{c/\tau\sqrt\frac{2}{\pi}+\E[X_0]}{1/\tau}+\frac{1}{2}\env{f^*}{-c/\tau\sqrt\frac{2}{\pi}+\E[X_0]}{1/\tau}+\E[f(X_0)-X_0^2].
\end{align}
The second equality above follows from Lemma \ref{lem:Moreau}. For the inequality, $\env{f^*}{c}{\tau}$ is convex in $c$, thus it follows from Jensen's inequality.
From \eqref{eq:limit_fs_1}, observing that $\E[f(X_0)-X_0^2]>-\infty$ and $|\E[X_0]|<\infty$ by boundedness of $\Exp[X_0^2]$ and non-negativity of $f$, it suffices to show that 
$$
\lim_{|c|\rightarrow\infty}\env{f^*}{c/\tau}{1/\tau}=\infty.
$$
First, assume that $f(x)$ is defined for some positive value $a>0$ and $f(a)<\infty$, then
\begin{equation}\label{eq:limit_fs_2}
\forall M,\quad \forall x>X_M=\frac{f(a)}{a}+\frac{M}{a}:\quad f^*(x)=\max_{y}xy-f(y)\geq ax-f(a)>M.
\end{equation}
Which means that $\lim_{x\rightarrow\infty}f^*(x)=\infty$. Now in order to show that the limit in \eqref{eq:limit_fs_1} goes to infinity we prove that
\begin{align}
\forall M\quad \forall x>\tau(X_M+\sqrt{2 M/\tau})&\quad \forall v, \quad \frac{\tau}{2}(x/\tau-v)^2+f^*(v)>M\nonumber.\\ 
&\Longleftrightarrow\forall u\quad \frac{\tau}{2}u^2+f^*(u+x/\tau)>M
\end{align}
This is easy to show. For the cases that $|u|>\sqrt{2M/\tau}$ we have
$$
 \frac{\tau}{2}u^2+f^*(u+x/\tau)>M+f^*(u+x/\tau)\geq M.
$$
Note that $f(0)=0$ implies $f^*(x)\geq0$ for all $x$. On the other hand, for the cases that $|u|\leq\sqrt{2M/\tau}$, 
$$
x/\tau+u\geq x/\tau-|u|\geq x/\tau-\sqrt{2M/\tau}>X_M.
$$
Thus due to \eqref{eq:limit_fs_2},
$$
 \frac{\tau}{2}u^2+f^*(u+x/\tau)> \frac{\tau}{2}u^2+M\geq M,
$$
which shows that $\lim_{c\rightarrow\infty}\env{f^*}{c/\tau}{1/\tau}=\infty$.\\
On the other hand, if $f(x)$ is also defined for some negative value $a<0$ and $f(a)<\infty$, the same set of arguments proves that $\lim_{c\rightarrow-\infty}f^*(c)=\infty$ and also $\lim_{c\rightarrow-\infty}\env{f^*}{c/\tau}{1/\tau}=\infty$.
\subsection{Strict Convexity of the Expected Moreau Envelope}\label{sec:Moreau}

In this section, we prove Lemmas \ref{lem:smooth_ell} and \ref{lem:strict_ell}. We have combined the statements in Lemma \ref{lem:strict} below.

\begin{lem}[Lemmas \ref{lem:smooth_ell} and \ref{lem:strict_ell}]
\label{lem:strict}
Let $\ell:\R\rightarrow\R$ be a proper, closed, convex function, $G~\sim\Nn(0,1)$ and $Z\sim p_z$ such that \eqref{eq:ass_main} holds. The function $L:\R\times\R_{>0}\rightarrow\R$:
\begin{align*}
L(\al,\tau)&:=
\E_{G,Z}\left[\env{\ell}{\al G+ Z}{\tau} - \ell(Z)\right].
\end{align*}
has the following properties:
\begin{enumerate}[(i)]
\item It is differentiable with
\begin{align}\label{eq:der_L}
\frac{\partial L}{\partial\alpha} = \Exp\left[ \frac{\partial \env{\ell}{\al G + Z}{\tau}}{\partial \alpha} \right]\quad\text{ and }\quad
\frac{\partial L}{\partial\tau} = \Exp\left[ \frac{\partial \env{\ell}{\al G + Z}{\tau}}{\partial \tau} \right],
\end{align}
\item If the conditions (a) and (b) of Lemma \ref{lem:strict_ell} also hold, then it is jointly strictly convex in $\R_{>0}\times\R_{>0}$.
\item If $\ell(x)\geq\ell(0)=0$ and $\ell(x_+)>0$ for some $x_+>0$, then, the function $F(\alpha):=\lim_{\tau\rightarrow0^+}L(\alpha,\tau)=\Exp\left[ \ell( \alpha G + Z ) - \ell(Z) \right]$ is strictly convex in $\alpha>0$.

\end{enumerate}
\end{lem}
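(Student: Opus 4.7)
My plan is to handle (i), then (iii), then the main claim (ii).

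For (i), I will pass the differentiation under the expectation sign via the Dominated Convergence Theorem. For fixed $(G,Z)$, the map $(\alpha,\tau)\mapsto e_\ell(\alpha G+Z,\tau)$ is $C^1$ with $\partial_\alpha e_\ell = G\,e'_\ell$ and $\partial_\tau e_\ell=-\tfrac12(e'_\ell)^2$, where $e'_\ell=e'_\ell(\alpha G+Z,\tau)$ is the derivative in the first argument. By Lemma \ref{lem:Mconvex}(viii), $|e'_\ell(x,\tau)|\le\ell'_+(x)$, and the monotonicity of $\ell'_+$ (from convexity of $\ell$) lets me dominate the difference quotients on a small neighborhood of $(\alpha,\tau)$ by envelopes of the form $(|G|+1)\ell'_+(\alpha G+Z\pm C|G|)$ and $\ell'_+(\alpha G+Z\pm C|G|)^2$. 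Both envelopes are integrable by hypothesis \eqref{eq:ass_main}, yielding the stated formulas.

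For (iii), $F$ is convex since $\alpha\mapsto\ell(\alpha G+Z)$ is convex pointwise. I will argue strict convexity by contradiction. If $F$ were affine on some interval $(\alpha_1,\alpha_2)\subset(0,\infty)$, then since the expectation of pointwise-convex integrands is affine only if each integrand is a.s.\ affine on the interval, $\alpha\mapsto\ell(\alpha G+Z)$ must be affine on $(\alpha_1,\alpha_2)$ for almost every $(G,Z)$. For each such realization with $G\ne 0$, this forces $\ell$ itself to be affine on the segment $\{\alpha G+Z:\alpha\in(\alpha_1,\alpha_2)\}$. As $G\sim\Nn(0,1)$ has full support on $\R$, varying $G$ alone makes these segments cover $\R$, so $\ell$ must be affine on all of $\R$; but $\ell\ge\ell(0)=0$ together with $\ell(x_+)>0$ at some $x_+>0$ is incompatible with any affine function, a contradiction.

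For (ii), my approach is to compute the Hessian of $L$ and show it is strictly positive definite on $\R_{>0}\times\R_{>0}$. A smoothing step (e.g.\ Moreau-Yosida regularization $\ell_\epsilon:=e_\ell(\cdot,\epsilon)$ or Gaussian convolution) reduces to the smooth case, with the limit $\epsilon\to 0^+$ carried through by the Dominated Convergence machinery of (i) together with the continuity $e_{\ell_\epsilon}\to e_\ell$. In the smooth case, differentiating the formulas from (i) once more under the expectation yields
\begin{align*}
\partial_\alpha^2 L = \E[G^2 w],\quad \partial_\alpha\partial_\tau L = -\E[G\,e'_\ell\,w],\quad \partial_\tau^2 L = \E[(e'_\ell)^2 w],
\end{align*}
where $w=\ell''(\mathrm{prox}_\ell)/(1+\tau\ell''(\mathrm{prox}_\ell))\ge 0$ is the curvature weight obtained by implicit differentiation of $x=\mathrm{prox}_\ell(x,\tau)+\tau\ell'(\mathrm{prox}_\ell(x,\tau))$. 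The Hessian determinant is then exactly the Cauchy-Schwarz defect of the pair $(G\sqrt{w},\,e'_\ell\sqrt{w})$, hence nonnegative, with strict positivity iff these two random variables are not $\Pro$-a.s.\ proportional on the set $\{w>0\}$.

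The main obstacle is verifying both $\Pro(w>0)>0$ and the non-proportionality. Condition (a) of the lemma ensures $\Pro(w>0)>0$: in the strictly-increasing-derivative sub-case, $\ell''>0$ on some interval $\Ic$ that $\mathrm{prox}_\ell(\alpha G+Z,\tau)$ hits with positive probability because $\alpha G+Z$ has a strictly positive density on $\R$ (as $\alpha>0$); in the kink sub-case at some $x_0$, on the wedge $\{(x-x_0)/\tau\in\mathrm{int}\,\partial\ell(x_0)\}$ one has $e_\ell(x,\tau)=\tfrac{(x-x_0)^2}{2\tau}+\ell(x_0)$, yielding the analogous weight $w=1/\tau$ on a region of positive probability (recovered via the regularization $\ell_\epsilon$). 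To rule out the Cauchy-Schwarz degeneracy $G = c\,e'_\ell(\alpha G+Z,\tau)$ on $\{w>0\}$, I substitute $e'_\ell=(\alpha G+Z-\mathrm{prox}_\ell(\alpha G+Z,\tau))/\tau$, which rewrites the degeneracy as $\mathrm{prox}_\ell(\alpha G+Z,\tau)$ being a fixed affine function of $(G,Z)$ on a set of positive measure. Condition (b) on the nondegeneracy and piecewise-continuous/Dirac structure of $p_Z$, combined with the Gaussianity of $G$, precludes this because $\mathrm{prox}_\ell$ is not affine in its argument on the relevant sub-region (an affine prox would force $\ell$ to be exactly quadratic there, contradicting either of the two sub-cases of condition (a)). Carrying out this last contradiction cleanly across both sub-cases of (a) is where I expect the main technical work.
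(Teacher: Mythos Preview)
Your treatment of (i) is correct and matches the paper's. Your argument for (iii) is also correct and actually slicker than the paper's more constructive one (the paper builds an explicit $(G_0,Z_0)$ where the Jensen gap is strictly positive). Note, however, that your covering argument implicitly needs at least two distinct points in the support of $Z$: with a single $Z_0$ you only conclude that $\ell$ is affine on each of $(-\infty,Z_0)$ and $(Z_0,\infty)$, which allows a kink at $Z_0$ (e.g.\ $\ell=|\cdot|$, $Z\equiv 0$ gives $F(\alpha)=\alpha\sqrt{2/\pi}$, linear). The paper's proof of (iii) uses the same implicit assumption.

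For (ii) there are two genuine gaps.

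\emph{Smoothing and the limit.} Moreau--Yosida regularization $\ell_\epsilon=e_\ell(\cdot,\epsilon)$ is only $C^1$, not $C^2$ (e.g.\ $\ell=|\cdot|$ gives the Huber function), so your Hessian formulas do not follow for $\ell_\epsilon$; and while the semigroup identity $e_{\ell_\epsilon}(\cdot,\tau)=e_\ell(\cdot,\tau+\epsilon)$ makes the limit step trivial, it does not help with the $C^2$ deficit. Gaussian convolution gives $C^\infty$, but then $L_\epsilon\to L$ is only pointwise and strict convexity does not survive pointwise limits. One can in fact argue that $L$ itself is $C^2$ directly by exploiting the Gaussian density of $G$ (differentiate the $\phi$-factor rather than $e'_\ell$, then integrate by parts using that $e'_\ell$ is Lipschitz), but this is not what you wrote.

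\emph{The final contradiction is wrong.} Even granting the Hessian formulas, your closing step fails. You reduce the Cauchy--Schwarz degeneracy to ``$\mathrm{prox}_\ell$ is affine on the relevant region, hence $\ell$ is quadratic there, contradicting condition (a)''. But a locally quadratic $\ell$ \emph{satisfies} the second sub-case of condition (a) ($\ell'$ strictly increasing on an interval); it does not contradict it. The actual obstruction is condition (b), not (a): writing $X=\alpha G+Z$, the degeneracy reads $e'_\ell(X,\tau)=cG$ on $\{w(X)>0\}$, and since the left side is a function of $X$ alone, this forces $G$ (hence $Z=X-\alpha G$) to be $\sigma(X)$-measurable on a set of positive probability. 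For $c\ne 0$ this contradicts the independence of $G,Z$ together with the non-degeneracy of $Z$; for $c=0$ one gets $e'_\ell\equiv 0$ on $\{w>0\}$, hence $w=0$ there, a contradiction.

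The paper sidesteps both issues by never invoking second derivatives. It proves strict convexity via the first-order (Bregman) criterion $L(\alpha+x,\tau+y)-L(\alpha,\tau)-\nabla L(\alpha,\tau)\cdot(x,y)>0$, using only the $C^1$ regularity from (i) together with a pointwise lower bound (Lemma~\ref{lem:Mconvex}(iv)) which reduces the problem to showing $\E\big[(\ell'_{(\alpha+x)G+Z,\tau+y}-\ell'_{\alpha G+Z,\tau})^2\big]>0$. This last inequality is then obtained by constructing a specific $(G_*,Z_*)$ (separately in the kink and strictly-increasing-derivative sub-cases of (a)) where the two subgradient values differ, crucially using two distinct support points of $Z$ from condition (b).
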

\begin{proof}

We make repeated use of the properties of the Moreau envelope function as listed in Lemma \ref{lem:Mconvex}. Also, we use the same notation as in that lemma; in particular, recall   \eqref{eq:opt}, \eqref{eq:e_1} and \eqref{eq:e_2}. For ease of reference we summarize the notation used throughout this section below:
\begin{align}\nn
\vh_{\chii,\tau} := \prox{\ell}{\chii}{\tau}, \quad \ell'_{\chii,\tau} := \frac{1}{\tau}(\chii-\vh_{\chii,\tau}), \quad
\end{align}
\begin{align}\nn
E_1(\alpha,\tau):=\frac{\partial \env{\ell}{\al G + Z}{\tau}}{\partial \alpha}, \quad E_2(\alpha,\tau):=\frac{\partial \env{\ell}{\al G + Z}{\tau}}{\partial \tau}.
\end{align}

\vp
\noindent{\emph{\textbf{(i):}}}
The claim follows by the Dominated Convergence Theorem, since the following hold:
\begin{itemize}
\item  $\env{\ell}{\al G+Z}{\tau}$ is
continuously differentiable with respect to both $\alpha$ and $\tau$ (cf. Lemma \ref{lem:Mconvex}(iii)),
\item In Section \ref{app:ass_sep} (see \eqref{eq:1a}) we use \eqref{eq:ass_main} to show  that $\Exp[|\env{\ell}{\al G + Z}{\tau}-\ell(Z)|]<\infty$ for all $\al$ and $\tau>0$.
\item for all $\alpha\in\R$ and $\tau>0$:
\begin{align*}
\E\left[\left| E_1(\al,\tau)\right|\right] &=
\frac{1}{\tau}\E\left[\left| \al G + Z -  \prox{\ell}{\al G + Z}{\tau} \right| \cdot \left| G \right|\right] \\
&\leq \frac{1}{\tau}\sqrt{\E\left[\left| \al G + Z -  \prox{\ell}{\al G + Z}{\tau} \right|^2\right] } = 
\sqrt{\E\left[\left|E_2(\al,\tau)\right|\right]},
\end{align*} 
where we have used Lemma \ref{lem:Mconvex}(iii), the \Cauchy inequality. In Section \ref{app:ass_sep} (see \eqref{eq:1a_prepre}) we use \eqref{eq:ass_main} to show  that 
$\E\left[\left|E_2(\al,\tau)\right|\right]<\infty$.
\end{itemize}

\vp
\noindent{\emph{\textbf{(ii):}}}
For any $\al>0,\tau>0$, it suffices to show that
\begin{align}\label{eq:2show_sc}
\Gamma(x,y):=L(\al+x,\tau+y) - L(\al,\tau) - L_1(\al,\tau)x - L_2(\al,\tau)y > 0, \quad \text{for all } x\in\R,y>-\tau,
\end{align}
where we use numerical subscript notation to denote derivation with respect to the corresponding argument, i.e. $L_1=\partial L/\partial\al$ and $L_2=\partial L/\partial\tau$.

Observe that $\Gamma(x,y)$ defined in \eqref{eq:2show_sc} is differentiable; denote its partial derivatives with respect to $x$ and $y$ as $\Gamma_1$ and $\Gamma_2$, respectvely.  Furthermore, $\Gamma$ is jointly convex in $(x,y)$ (see Lemma \ref{lem:Mconvex}(ii)) and $\Gamma(0,0) = 0$. Thus, it suffices for \eqref{eq:2show_sc} to prove strict positivity of the following expression
\begin{align}
\Gamma_1(x,y) x + \Gamma_2(x,y) y &= (L_1(\al+x,\tau+y) - L_1(\al,\tau) )x + (L_2(\al+x,\tau+y) - L_2(\al,\tau) )y\nn\\
&= \E\left[(E_1(\al+x,\tau+y) - E_1(\al,\tau) )x + (E_2(\al+x,\tau+y) - E_2(\al,\tau) )y\right].\nn
\end{align}
In the last equality above we have interchanged the order of expectation and differentiation. Lemma \ref{lem:Mconvex}(iv) lower bounds the expression inside the expectation above. To be specific, using \eqref{eq:lowerb_M}, we find that
$$
\Gamma_1(x,y) x + \Gamma_2(x,y) y \geq \left(\tau+\frac{y}{2}\right)\E\left[(\ell'_{\al+x,\tau+y}-\ell'_{\al,\tau})^2\right].
$$
Therefore, it will suffice for our purposes to show that for any fixed $x,y$,
\begin{align}\label{eq:E>0}
\E\left[\left(\ellaG-\ellaxG\right)^2\right] > 0.
\end{align}
For this it is enough to prove the existence of $(G_*,Z_*)$ with $p(Z_*)>0$ such that
\beq
\ellaGar{G_*}{Z_*}\neq\ellaxGar{G_*}{Z_*}\label{eq:desire},
\eeq
Indeed, if this is the case, by continuity of the mapping $(G,Z)\rightarrow \al G + Z$ and of the prox operator (cf. Lemma \ref{lem:Mconvex}(i)) there exists an open neighborhood $\Nc$ around $(G_*,Z_*)$ such that
$\ellaG\neq\ellaxG$ for all $(G,Z)\in\Nc$. Furthermore, there exists subset $\Jc_1\times\Jc_2\subseteq\Nc$ of nonzero measure such that: (i) $\Jc_1$ is a closed interval with $p(G)>0$ for all $G\in\Jc_1$, (ii) if $Z$ has a point mass at $Z_*$, then $\Jc_2=Z_*$; otherwise, $\Jc_2$ is a closed interval with $p(Z)>0$ for all $Z\in\Jc_2$. In all cases, $\Jc_1\times\Jc_2$ is a set of nonzero measure, with which we conclude \eqref{eq:E>0} as desired. In what follows, we prove \eqref{eq:desire}.


\vp
\noindent{\textit{Case 1}}: Assume that there exists an open interval $\Ic$ on which $\ell$ is  differentiable with strictly increasing derivative:
\begin{align}\label{eq:inc}
\ell'(v_1)<\ell'(v_2),\quad \text{ for all } v_1<v_2 \in\Ic.
\end{align}
In particular, since $\ell$ is convex in its entire domain it further holds that 
\begin{align}\label{eq:further}
v_1\in\Ic, v_2\neq v_1 \Rightarrow \ell'(v_1)\neq \ell'(v_2).
\end{align}
Consider the set
\begin{align}
\label{eq:SI}
 \Sc:=\{ (G,Z) ~|~ \vaG\in\Ic\}.
 \end{align}
Clearly, $\Sc$ is a nonempty open set (by continuity of the prox operator). 
Next, we show that there exists $(G_*,Z_*)\in\Sc$, such that
\begin{align}\label{eq:kala8atan}
\vaGar{G_*}{Z_*}\neq\vaxGar{G_*}{Z_*}
\end{align}
and $p(Z_*)>0$.
This suffices for proving \eqref{eq:desire}, since when combined with $\vaGar{G_*}{Z_*}\in\Ic$ and \eqref{eq:further} it implies that $\ellaGar{G_*}{Z_*}\neq\ellaxGar{G_*}{Z_*}$. 

Choose any two distinct $Z_i, i=0,1$ with $p(Z_i)>0$. This is possible since by assumption $\mathrm{Var}[Z]\neq 0$. Denote, $\Sc_{\Z_i}:=\{G~|~(G,Z_i)\in\Sc\}$. Clearly, $\Sc_{\Z_i}$ are nonempty open sets. If there exists $G_i\in\Sc_{\Z_i}$ such that $(G_*,Z_*)=(G_i,Z_i)$ satisfies \eqref{eq:kala8atan}, there is nothing else to prove.

 Otherwise, we would have $\vaGar{G}{Z_i}=\vaxGar{G}{Z_i}\in\Ic$ and consequently $\ellaGar{G}{Z_i}=\ellaxGar{G}{Z_i}$, for all $G\in\Sc_{Z_i}$ and $i=0,1$.  But, Lemma \ref{lem:impossible} below proves that this cannot happen under our assumptions on the sets $\Sc_{Z_i}$.



\vp
\noindent{\textit{Case 2}}: Let $v_0$ be a point where $\ell$ is not differentiable and consider $\Ic\subset\partial\ell(v_0)$ a non-empty open subset of the subdifferential of $\ell$ at $v_0$. Further consider the nonempty open sets
\begin{align}
\label{eq:SI}
 \Sc:=\{ (G,Z) ~|~ \ellaG\in\Ic\} \quad \text{ and } \quad  \tilde\Sc:=\{ (G,Z) ~|~ \ellaxG\in\Ic\}.
 \end{align}
Clearly, $\vaG=v_0$ for all $(G,Z)\in\Sc$ and similar for $\tilde\Sc$. Choose any two distinct $Z_i, i=1,2$ with $p(Z_i)>0$. This is possible since by assumption $\mathrm{Var}[Z]\neq 0$. Denote, $\Sc_{\Z_i}:=\{G~|~(G,Z_i)\in\Sc\}$ and $\tilde\Sc_{\Z_i}:=\{G~|~(G,Z_i)\in\tilde\Sc\}$, which are all nonempty sets. Consider,
$$
\Nc_{Z_i}=\Sc_{Z_i}\setminus \tilde\Sc_{Z_i}, i=0,1.
$$ 

\noindent If (say) $\Nc_{Z_0}\neq\emptyset$, then for any $G_0\in\Nc_{Z_0}$, it holds
 $$\vaxGar{G_0}{Z_0}\neq\vaGar{G_0}{Z_0}\in\Ic \Rightarrow \ellaxGar{G_0}{Z_0}\neq\ellaGar{G_0}{Z_0},$$
 where the last implication follows because of monotonicity of the subdifferential.  This shows \eqref{eq:desire} as desired.
 
Otherwise, $\Nc_i=\emptyset\Rightarrow$ for $i=0,1$. In case there exists $G_i\in\Sc_{Z_i}$ such that $(G_*,Z_*)=(G_i,Z_i)$ satisfies \eqref{eq:desire}, there is nothing else to prove.
If this was not the case, then we would have $\ellaGar{G_i}{Z_i}=\ellaxGar{G_i}{Z_i}\in\Ic$ and $\vaGar{G_i}{Z_i}=\vaxGar{G_i}{Z_i}=v_0$, for all $G\in\Sc_{Z_i}$. But, Lemma \ref{lem:impossible} below proves that this cannot happen under our assumptions on the sets $\Sc_{Z_i}$.

\vp
\noindent{\emph{\textbf{(iii):}}}
Suppose that the statement of the lemma is false. Then, there exist $\alpha_1\neq \alpha_2>0$, and, $\alpha_\theta:=\theta\alpha_1 + (1-\theta)\alpha_2$ for $\theta\in(0,1)$ such that $F(\theta\alpha_1 + (1-\theta)\alpha_2) = \theta F(\alpha_1) + (1-\theta)F(\alpha_2)$, or, 
\begin{align}\label{eq:wouldBe33}
\Exp\left[ \theta\ell( \alpha_1 G + Z ) + (1-\theta)\ell( \alpha_2 G + Z ) - \ell( \alpha_\theta G + Z ) \right]= 0
\end{align}
The convexity of $\ell$ ensures that, for each $\al G + Z$ , the argument in the expectation is nonnegative. Therefore, the relation above holds if and only if the argument under the expectation is zero almost surely with respect to the distribution of $\al G + Z$. Next, we prove that this leads to a contradiction.

Let $x_+$ as in the statement of the lemma, and  $x_0=\max\{ x\in[0,x_+] ~|~ \ell(x)=0 \}<x_+$. For some $\eps>0$ to be specified later in the proof, let $x_1 = x_0 + \eps$. Note that $\ell(x_1)>0$ by definition of $x_0$ and by convexity.  Without loss of generality assume $\alpha_1>\alpha_2$. Fix $Z_0$ such that $p(Z_0)>0$ and $x_0\neq Z_0$ (always possible since $\mathrm{Var}[Z]\neq 0$).  Consider two cases based on the sign of  $x_0-Z_0$.

\vp
\noindent{\underline{ $x_0>Z_0$:}}~ Define $G_0=(x_1-Z_0)/\alpha_1>0$. Note that $\alpha_1 G_0 + Z_0=x_1$ and call $x_2:=\alpha_2 G_0 + Z_0$.  Choose $\eps=(\frac{\alpha_1}{\alpha_2}-1)(x_0-Z_0)/2>0$. Then, it is not hard to check that $x_2<x_0<x_1$; thus, for some $\theta\in(0,1)$, $\alpha_\theta G_0+Z_0=x_0$. 
But, $\theta \ell(x_1) + (1-\theta)\ell(x_2) > 0 = \ell(x_0)$, or,
\begin{align}
\ell(\alpha_\theta G_0+Z_0) < \theta \ell(\alpha_1 G_0 + Z_0)  + (1-\theta) \ell(\alpha_2 G_0 + Z_0).
\end{align}
There exists an open ball (of non-zero measure) around $\alpha G_0 + Z_0$, where the same relation as above holds. This contradicts \eqref{eq:wouldBe33} and concludes the proof.

\vp
\noindent{\underline{ $x_0<Z_0$:}}~
Define $G_0:={x_1-Z_0}/{\alpha_2}>0$.  Note that $\alpha_2 G_0 + Z_0=x_1$ and call $x_2:=\alpha_1 G_0 + Z_0$.  Choose $\eps=(\frac{\alpha_2}{\alpha_1}-1)(x_0-Z_0)/2>0$. Then, it is not hard to check that $x_2<x_0<x_1$ and the same argument as above leads to a contradiction of \eqref{eq:wouldBe33}.

\end{proof}

\vspace{5pt}
 \begin{lem}[Auxiliary]\label{lem:impossible}
 Suppose $Z_0\neq Z_1$. For some nonempty set $\Jc\subset\R$ assume that the sets
\begin{align}\label{eq:lemass1}
 \Gc_{Z_i} := \{ G~|~ \vaGar{G}{Z_i}\in\Jc \}, \quad i=0,1
\end{align}
 are non-empty and have at least two elements each. Further suppose that for all $G,G'\in\Gc_i, i=0,1$ the following holds
\begin{align}\label{eq:lemass2}
\ellaGar{G}{Z_i}=\ellaGar{G'}{Z_i} \Rightarrow \vaGar{G}{Z_i}=\vaGar{G'}{Z_i}.
\end{align}
 Then, it cannot be true that  for all $G\in\Gc_i$ and $i=0,1$:
 \begin{align}\label{eq:do they?}
 \vaGar{G}{Z_i} =  \vaxGar{G}{Z_i} \quad\text{ and }\quad  \ellaGar{G}{Z_i}=\ellaxGar{G}{Z_i}.
 \end{align}
 \end{lem}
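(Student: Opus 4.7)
The plan is to assume \eqref{eq:do they?} for contradiction and combine the fixed-point characterization of the prox from Lemma~\ref{lem:Mconvex}(iii) with \eqref{eq:lemass2} and the hypothesis $Z_0\neq Z_1$. For $G\in\mathcal{G}_{Z_i}$ write $v(G):=\hat v_{\alpha G+Z_i,\tau}$ and $s(G):=\ell'_{\alpha G+Z_i,\tau}$. The identity $\chi=\hat v_{\chi,\tau}+\tau\ell'_{\chi,\tau}$ applied at both $(\alpha G+Z_i,\tau)$ and $((\alpha+x)G+Z_i,\tau+y)$ subtracts, under \eqref{eq:do they?}, to the single relation
\[ xG \;=\; y\,s(G),\qquad G\in\mathcal{G}_{Z_i},\ i=0,1. \]

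First I would dispose of the degenerate perturbations (the implicit regime of interest has $(x,y)\neq(0,0)$). If $y=0$, then $xG=0$ for every admissible $G$ and, since each $\mathcal{G}_{Z_i}$ contains at least one non-zero element, this forces $x=0$, a contradiction. If $y\neq 0$ and $x=0$, then $s\equiv 0$ on each $\mathcal{G}_{Z_i}$; hypothesis \eqref{eq:lemass2} then makes $v$ constant on $\mathcal{G}_{Z_i}$, but the defining identity gives $v(G)=\alpha G+Z_i$, whose non-constancy on any two distinct points (using $\alpha>0$) contradicts this.

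The main case is $x\neq 0$ and $y\neq 0$. Setting $\beta:=\alpha-\tau x/y$, one obtains $s(G)=(x/y)G$ and $v(G)=\beta G+Z_i$. If $\beta=0$, then $v(G)\equiv Z_i$, so both $Z_0,Z_1\in\mathcal{J}$, and the inclusion $s(G)\in\partial\ell(v(G))$ combined with $s(G)=(\alpha/\tau)G$ forces $(\alpha/\tau)\mathcal{G}_{Z_i}\subseteq\partial\ell(Z_i)$, so each $\partial\ell(Z_i)$ is a non-degenerate interval. If $\beta\neq 0$, then $v$ is injective on $\mathcal{G}_{Z_i}$, and the change of variable $w=\beta G+Z_i$ turns $s(G)\in\partial\ell(v(G))$ into
\[ \gamma(w-Z_i)\in\partial\ell(w),\qquad w\in v(\mathcal{G}_{Z_i}),\ i=0,1,\qquad \gamma:=x/(y\beta)\neq 0. \]
Thus on two subsets of $\mathcal{J}$ (each of cardinality $\geq 2$) the multivalued map $\partial\ell$ must coincide with one of two parallel linear selections $\gamma(\,\cdot\,-Z_i)$ whose vertical gap $\gamma(Z_1-Z_0)$ is non-zero.

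The hard part will be producing a contradiction in this last sub-case, since in the abstract setting nothing prevents the two subsets $v(\mathcal{G}_{Z_0})$ and $v(\mathcal{G}_{Z_1})$ from being disjoint. The argument closes by exploiting the structure of $\mathcal{J}$ inherited from each of the two invocations of the lemma: when $\mathcal{J}=\{v_0\}$ is a singleton (as in Case~2 of the outer proof) the $\beta\neq 0$ branch is impossible because $v$ is injective into a singleton, and the $\beta=0$ branch forces $Z_0=Z_1=v_0$, contradicting $Z_0\neq Z_1$; when $\mathcal{J}$ is an open interval on which $\ell$ is differentiable with strictly increasing derivative (as in Case~1) the $\beta=0$ branch fails because $\partial\ell(Z_i)$ is then a singleton unable to contain the multi-element set $(\alpha/\tau)\mathcal{G}_{Z_i}$, and in the $\beta\neq 0$ branch any common point of $v(\mathcal{G}_{Z_0})$ and $v(\mathcal{G}_{Z_1})$ forces $\partial\ell$ to take two distinct values there, contradicting differentiability. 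Such a common point is produced via continuity of the prox operator, which upgrades $\mathcal{G}_{Z_i}$ to open sets mapped by the affine $v$ to open sub-intervals of $\mathcal{J}$, together with a direct comparison argument based on the strict monotonicity of $\ell'$ on $\mathcal{J}$ to rule out the disjoint configuration of these two sub-intervals.
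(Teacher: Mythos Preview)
Your derivation of the identity $xG = y\,s(G)$ for $G\in\mathcal G_{Z_i}$ is correct, and your treatment of the degenerate cases $y=0$ and $(y\neq 0,\,x=0)$ matches the paper. The gap is in the main case $x\neq 0$: you split on $\beta=\alpha-\tau x/y$, and in the $\beta\neq 0$ branch you are left with two a~priori unrelated subsets $v(\mathcal G_{Z_0}),v(\mathcal G_{Z_1})\subset\mathcal J$ on which $\partial\ell$ admits two distinct affine selections. You then try to close this by importing the outer case distinction on $\mathcal J$ (singleton versus open interval with strictly increasing $\ell'$), but the final step---ruling out a disjoint configuration of the two open images via ``a direct comparison argument''---is not actually carried out, and it is not clear it can be without further work.

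The paper avoids all of this with one observation you overlooked: both $\hat v_{\alpha G+Z_i,\tau}$ and $\ell'_{\alpha G+Z_i,\tau}$ depend only on the combination $\chi=\alpha G+Z_i$, not on $G$ and $Z_i$ separately. Hence for any $G_0\in\mathcal G_{Z_0}$, setting $G_1:=G_0+(Z_0-Z_1)/\alpha$ gives $\alpha G_1+Z_1=\alpha G_0+Z_0$, so automatically $G_1\in\mathcal G_{Z_1}$ with $s(G_1)=s(G_0)$. The identity $xG=y\,s(G)$ then yields $xG_0=xG_1$, whence $G_0=G_1$, contradicting $Z_0\neq Z_1$. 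This handles the entire case $x\neq 0$ (any $y$) in two lines, without splitting on $\beta$ and without any hypothesis on the structure of $\mathcal J$. In your language, the shift $G\mapsto G+(Z_0-Z_1)/\alpha$ is a bijection $\mathcal G_{Z_0}\to\mathcal G_{Z_1}$ under which the images $v(\mathcal G_{Z_0})$ and $v(\mathcal G_{Z_1})$ coincide, so the ``disjoint configuration'' you worry about never occurs.
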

 \begin{proof}
Assume to the contrary of the lemma that the sets $\Gc_0$ and $\Gc_1$ satisfy \eqref{eq:do they?}. When combined with optimality conditions (cf. \eqref{eq:opt}), the properties of the sets give
\begin{align}\label{eq:cantbetrue}
y\ellaGar{G}{Z_i} = xG, \quad\text{for all } G\in\Gc_i, i=0,1.
\end{align}
Consider separately two cases on the possible values of $x$ and $y$:

\noindent$\bullet$~{$x=0,y\neq 0$}: Let $G\neq G'$ both belonging in $\Gc_0$ (such a pair exists since $\Gc_0$ is open). Starting from \eqref{eq:cantbetrue} and using  \eqref{eq:lemass2}, we have:
$$
\ellaGar{G}{Z_0} = \ellaGar{G'}{Z_0} = 0 \Rightarrow \vaGar{G}{Z_0} = \vaGar{G'}{Z_0} .
$$
Those equalities, when combined with optimality conditions of the prox (cf. \eqref{eq:opt}) they yield a contradiction: $G=G'.$

\noindent$\bullet$~{$x\neq 0$}:
Let any $G_0\in\Gc_0$, and, consider $G_1:=G_0+\frac{Z_0-Z_1}{\al}\neq G_0$. Note that $\al G_1 + Z_1 = \al G_0+Z_0$. Also, by uniqueness of the prox operator, $\vaGar{G_1}{Z_1}=\vaGar{G_0}{Z_0}\in\Ic$ and $G_1\in\Gc_1$. Furthermore,  $ \ellaGar{G_0}{Z_0}=\ellaGar{G_1}{Z_1}.$
Then, combining with \eqref{eq:cantbetrue} we reach the following contradiction:
$$
x G_0 = x G_1 \Rightarrow G_0 = G_1.
$$
\end{proof}

\subsection{Strict convexity $\implies$ uniqueness of $\alpha_*$}\label{app:uni}

\begin{lem}[]\label{lem:uni}
Suppose all assumptions of Theorem \ref{thm:sep} are satisfied. Then, \eqref{eq:AO_det_thm} has a unique optimal minimizer $\alpha_*$.
\end{lem}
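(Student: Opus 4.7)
The plan is to argue by contradiction, leveraging the joint strict convexity of the Expected Moreau envelope $L$ established in Lemma~\ref{lem:strict_ell}. By Lemma~\ref{lem:scalar}(iii) the order of inf-sup in \eqref{eq:AO_det_thm} can be rearranged so that the (SPO) equals $\inf_{\alpha\geq 0,\taug>0} H(\alpha,\taug)$ where
\[
H(\alpha,\taug) := \sup_{\beta\geq 0,\tauh>0}\Dc(\alpha,\taug,\beta,\tauh)
\]
is jointly convex on $[0,\infty)\times(0,\infty)$. The hypothesis that the set of $\alpha$-minimizers is bounded, combined with the level-boundedness arguments used in Lemma~\ref{lem:convergence}, allows us to show that for each minimizing $\alpha$ the inner infimum over $\taug$ is attained at some $\taug^\star(\alpha)>0$. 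Consequently the set of joint minimizers of $H$ is a bounded convex subset of $(\alpha,\taug)$-space whose projection onto $\alpha$ is the set of $\alpha$-minimizers.

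Suppose for contradiction that this projection contains two distinct points $\alpha_1<\alpha_2$. Then the segment joining $(\alpha_1,\taug^\star(\alpha_1))$ and $(\alpha_2,\taug^\star(\alpha_2))$ is entirely contained in the joint minimizer set of the convex function $H$, so $H$ equals its minimum value along this segment. I would then derive a contradiction by establishing that $H$ is in fact \emph{strictly} convex on this segment. For any fixed $(\beta,\tauh)$ with $\beta>0$, the map $(\alpha,\taug)\mapsto\delta L(\alpha,\taug/\beta)$ is the composition of the jointly strictly convex function $L$ (Lemma~\ref{lem:strict_ell}) with the linear bijection $(\alpha,\taug)\mapsto(\alpha,\taug/\beta)$, hence is jointly strictly convex on $(0,\infty)\times(0,\infty)$; since the remaining terms of $\Dc(\cdot,\cdot,\beta,\tauh)$ are linear in $\taug$ and convex in $\alpha$, the function $\Dc(\cdot,\cdot,\beta,\tauh)$ itself is strictly convex in $(\alpha,\taug)$. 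The pointwise supremum over an attained family of jointly strictly convex functions is again strictly convex, which yields the sought contradiction provided the supremum at points of the segment is attained at some $(\beta^\star,\tauh^\star)$ with $\beta^\star>0$.

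The main obstacles are two boundary phenomena. The first is that the outer supremum defining $H$ might be attained on the degenerate branch $\beta=0$, on which $\delta L(\alpha,\taug/\beta)$ collapses to the constant $-\delta L_0$ (by Assumption~\ref{ass:prop}(c)) and the strict convexity argument breaks down; handling this requires verifying, using the hypotheses $\Exp Z^2<\infty$ or $\sup_v|\ell_+'(v)|<\infty$ of Theorem~\ref{thm:sep}, that at a joint minimizer of $H$ the supremum in $\beta$ cannot be realized solely at $\beta=0$, or equivalently that the $\beta=0$ branch is dominated by the $\beta>0$ branch along the segment of interest. The second is the boundary case $\alpha_1=0$, where Lemma~\ref{lem:strict_ell} does not directly apply on the closed orthant; here I would invoke the supplementary strict convexity of $\alpha\mapsto\lim_{\tau\to 0^+}L(\alpha,\tau)$ on $(0,\infty)$ from Lemma~\ref{lem:strict}(iii), combined with the continuity of $L$ at the boundary from Assumption~\ref{ass:prop}(b), to extend the strict convexity of the relevant restriction of $\Dc$ up to the boundary of the feasible set and thereby close the argument.
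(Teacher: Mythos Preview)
Your overall strategy—pass to the convex function $H(\alpha,\taug)=\sup_{\beta,\tauh}\Dc$, exploit the joint strict convexity of $L$ from Lemma~\ref{lem:strict_ell}, and argue by contradiction on a segment of minimizers—is essentially the same as the paper's, which likewise reduces uniqueness to strict convexity inherited from $L$ and handles the boundary cases separately (Lemmas~\ref{lem:ptwstrict_inf}, \ref{lem:ptwstrict}, \ref{lem:beta=0}). The difference is only in the nesting order: the paper first takes $\inf_{\taug}$ for fixed $(\beta,\tauh)$ and then $\sup_{\beta,\tauh}$, whereas you take the supremum first.

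However, your treatment of the two boundary obstacles has gaps. For the $\beta^\star=0$ case, the moment conditions $\Exp Z^2<\infty$ or $\sup_v|\ell_+'(v)|<\infty$ are the wrong tool: on the branch $\beta=0$ the term $\delta L(\alpha,\taug/\beta)$ collapses to the constant $-\delta L_0$, so $\ell$ and $p_Z$ no longer enter at all and the remaining objective involves only $F$ (i.e.\ $f$ and $p_x$). The paper's resolution (Lemma~\ref{lem:beta=0}) is structural: when $\beta=0$ the residual problem $\inf_{\alpha\geq 0}\sup_{\tauh>0}\{-\frac{\alpha\tauh}{2}+\la F(0,\alpha\la/\tauh)\}$ has an \emph{unbounded} set of $\alpha$-minimizers, which directly contradicts the boundedness hypothesis of Theorem~\ref{thm:sep}. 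That is the argument you need to rule out $\beta^\star=0$.

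Second, your claim that the infimum over $\taug$ is attained at some $\taug^\star(\alpha)>0$ is not justified; the level-boundedness in Lemma~\ref{lem:convergence} only controls $\taug\to\infty$, not $\taug\to 0^+$. The paper explicitly allows the infimum to be achieved in the limit $\taug\to 0^+$ and handles it by extending $L$ to $\tau=0$ via $L(\alpha,0)=\lim_{\tau\to 0^+}L(\alpha,\tau)$ and invoking Lemma~\ref{lem:strict}(iii), which gives strict convexity of $\alpha\mapsto L(\alpha,0)$. You cite Lemma~\ref{lem:strict}(iii) for the $\alpha_1=0$ boundary, but its real role is at the $\taug=0$ boundary; the $\alpha_1=0$ case is handled simply by noting that strict convexity on $(0,\infty)$ together with convexity on $[0,\infty)$ already forces a unique minimizer.
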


\begin{proof}
During the proof, we borrow notation and results from the proof of Lemma \ref{lem:convergence} in Section \ref{app:convergence}. 
Under the assumption of the theorem, $L(\alpha,\tau)$ is jointly strictly convex in $\R_{>0}\times\R_{>0}$, by Lemma \ref{lem:strict}.
Also, by assumptions, the set of minimizers of $F$ in \eqref{eq:F} is bounded. With these, we will show that the set of optima actually consists of a unique point. 
Consider $M^{\alpha,\beta,\tauh}(\taug)$ as in \eqref{eq:M_taug}. We have shown in Section \ref{app:convergence} that $M^{\alpha,\beta,\tauh}$ is level bounded. Thus, the minimum is either attained at some $\taug_*$ or is achieved in the limit of $\taug\rightarrow0$. Now, consider extending the function at $\taug=0$, by setting $L(\alpha,0) = \lim_{\taug\rightarrow0^+}L(\alpha,\taug)$. By assumption, this latter is a strictly convex function of $\alpha$. Hence, similarly extending $M^{\alpha,\beta,\tauh}$ at $\taug=0$, the function is jointly strictly convex in $(\alpha,\taug)$ and the minimum over $\taug$ is now attained (can be $\taug_*=0$). Using those two, Lemma \ref{lem:ptwstrict_inf} shows that $\inf_{\taug>0}M^{\alpha,\beta,\tauh}(\taug)$ is strictly convex in $\alpha>0$. Next, consider taking the supremum over $\beta\geq 0$. From the results of Section \ref{app:convergence}, the optimal $\beta$ is attained at some value $\beta_*\geq 0$ (in other words, it does not approach infinity). Suppose $\beta_*=0$, then the optimal $\alpha$ solves
\begin{align*}
\inf_{\alpha\geq0}\sup_{\tauh>0} -\frac{\alpha\tauh}{2} + \la F(0,\alpha\la/\tauh).
\end{align*}
In Lemma \ref{lem:beta=0} we show that the set of minimizers of this optimization is unbounded. This contradicts our assumption on the boundedness of $\alpha_*$. Hence, $\beta_*\neq 0$, and we can apply Lemma \ref{lem:ptwstrict} to find that $M^{\alpha}(\tauh):=\sup_{\beta}\inf_{\taug>0}M^{\alpha,\beta,\tauh}(\taug)$, remains a strictly convex function of $\alpha>0$. Lastly, maximizing over $\tauh$ does not affect strict convexity since it is not involved in the term $\frac{\beta\taug}{2}+\delta L(\alpha,\taug/\beta)$. Overall,  $F(\alpha) = \sup_{\beta,\tauh}\inf_{\taug}\Dc(\alpha,\taug,\beta,\tauh)$ is strictly convex in $\alpha>0$. Using this it is straightfowrard to show that its minimizer over $\alpha\geq0$ is unique, thus, completing the proof. 
\end{proof}


\begin{lem}\label{lem:ptwstrict_inf} 
Let $\Xc,\Yc$ be convex sets and $F(\x,\y):\Xc\times\Yc\rightarrow\R$ be jointly strictly convex. If $F(\x,\cdot)$ attains its minimum value in $\Yc$ for all $\x\in\Xc$, then, $G(\x):=\inf_{\y\in\Yc} F(\x,\y)$ is strictly convex.
\end{lem}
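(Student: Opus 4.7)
The plan is to argue directly from the definition of strict convexity by exploiting the hypothesis that the infimum defining $G$ is attained. Concretely, fix distinct points $\x_1, \x_2 \in \Xc$ and $\theta \in (0,1)$, and set $\x_\theta := \theta \x_1 + (1-\theta)\x_2$. By the attainment hypothesis, I may pick minimizers $\y_i^* \in \Yc$ with $G(\x_i) = F(\x_i, \y_i^*)$ for $i = 1,2$, and define $\y_\theta := \theta \y_1^* + (1-\theta) \y_2^*$, which lies in $\Yc$ by convexity of $\Yc$.

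The key observation is that the pairs $(\x_1, \y_1^*)$ and $(\x_2, \y_2^*)$ are distinct in $\Xc \times \Yc$, since their first coordinates already differ. Hence joint strict convexity of $F$ yields
\begin{equation*}
F(\x_\theta, \y_\theta) < \theta F(\x_1, \y_1^*) + (1-\theta) F(\x_2, \y_2^*) = \theta G(\x_1) + (1-\theta) G(\x_2).
\end{equation*}
Combining this with the trivial bound $G(\x_\theta) \leq F(\x_\theta, \y_\theta)$, which follows from the definition of $G$ as an infimum and the fact that $\y_\theta \in \Yc$, gives the desired strict inequality $G(\x_\theta) < \theta G(\x_1) + (1-\theta) G(\x_2)$.

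There is no real obstacle here: the attainment hypothesis is exactly what is needed to avoid the technical annoyance that would otherwise arise from working with approximate minimizers (where one would only get $G(\x_i) + \eps \geq F(\x_i, \y_i^\eps)$ and the strict inequality could degrade in the limit). Since the first coordinates of the two selected points already differ, there is also no need to split into cases based on whether $\y_1^* = \y_2^*$. The convexity (rather than just non-emptiness) of $\Yc$ is used solely to ensure $\y_\theta \in \Yc$ so that $F(\x_\theta, \y_\theta)$ is an admissible upper bound for $G(\x_\theta)$.
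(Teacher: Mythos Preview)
Your proof is correct and follows essentially the same argument as the paper: select minimizers $\y_i^*$ at $\x_i$, form the convex combination, apply joint strict convexity of $F$ (using that $(\x_1,\y_1^*)\neq(\x_2,\y_2^*)$ since $\x_1\neq\x_2$), and bound $G(\x_\theta)$ from above by $F(\x_\theta,\y_\theta)$. The only cosmetic difference is that the paper additionally invokes attainment at $\x_\theta$ to write $G(\x_\theta)=F(\x_\theta,\arg\min)$, whereas you use only the infimum inequality---your version is slightly leaner and your explicit remark that distinctness of the first coordinates suffices for the strict inequality is a nice touch.
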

\begin{proof}
For $\theta\in(0,1)$, $\x_1,\x_2\in\Xc$, denote
$
\x_\theta=  (1-\theta)\x_1+\theta\x_2
$, 
$
\y_{\theta}:=\arg\inf_{\y\in\Yc}{F( (1-\theta)\x_1+\theta\x_2 , \y )}
$
and 
$\y_i:=\arg\inf_{\y\in\Yc}{F( \x_i , \y )}, i=1,2$. With these
\begin{align*}
G(\x_\theta)= F( \x_\theta, \y_\theta )\leq F( \x_\theta, \theta \y_1 + (1-\theta) \y_2 )  &< (1-\theta)F(\x_1,\y_1) + \theta F(\x_2,\y_2)\\
&= (1-\theta) G(\x_1) + \theta G(\x_2),
\end{align*}
where the first inequality follows from definition of $\y_\theta$, the second from the joint strict convexity of $F$, and, the third by definition of $\y_1,\y_2$.
\end{proof}

\begin{lem}\label{lem:ptwstrict} 
Consider $F:\R_{>0}\times\R_{\geq 0}\rightarrow\R$ and $G(x):=\min_{0\leq y\leq K}F(x,y)$. If $F$ is jointly strictly convex in $\R_{>0}\times\R_{>0}$ and $F(x,0)$ is also strictly convex, then $G$ is strictly convex.
\end{lem}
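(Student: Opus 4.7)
The plan is to adapt the argument of Lemma~\ref{lem:ptwstrict_inf} while carefully handling the boundary $y=0$. Since $[0,K]$ is compact and $F$ is (implicitly) continuous on $\R_{>0}\times\R_{\geq 0}$, minimizers $y_i\in\arg\min_{y\in[0,K]}F(x_i,y)$ exist for each $x_i\in\R_{>0}$. Given $x_1\neq x_2$ in $\R_{>0}$ and $\theta\in(0,1)$, set $x_\theta=(1-\theta)x_1+\theta x_2$ and $y_\theta=(1-\theta)y_1+\theta y_2\in[0,K]$. Exactly as in Lemma~\ref{lem:ptwstrict_inf}, $G(x_\theta)\leq F(x_\theta,y_\theta)$, so it suffices to prove the strict inequality
\[
F(x_\theta,y_\theta)<(1-\theta)F(x_1,y_1)+\theta F(x_2,y_2)=(1-\theta)G(x_1)+\theta G(x_2).
\]

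I would then split into three cases according to where the selected minimizers $y_1,y_2$ lie. If both $y_1,y_2>0$, then $(x_i,y_i)\in\R_{>0}\times\R_{>0}$ and joint strict convexity applies immediately. If $y_1=y_2=0$, then $y_\theta=0$ and the strict convexity of $F(\cdot,0)$ closes the case. The only nontrivial case is the mixed one: say $y_1=0$ and $y_2>0$, whence $y_\theta=\theta y_2>0$. Here the chord between $(x_1,0)$ and $(x_2,y_2)$ has one endpoint on the boundary, and joint strict convexity on the \emph{open} set $\R_{>0}\times\R_{>0}$ cannot be invoked directly on this chord.

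To handle the mixed case I would interpose an interior point. For any $\epsilon\in(0,\theta)$, put $(x_\epsilon,\epsilon y_2):=((1-\epsilon)x_1+\epsilon x_2,\,\epsilon y_2)\in\R_{>0}\times\R_{>0}$. A direct check shows that $(x_\theta,\theta y_2)=(1-\lambda)(x_\epsilon,\epsilon y_2)+\lambda(x_2,y_2)$ with $\lambda=(\theta-\epsilon)/(1-\epsilon)\in(0,1)$. Joint strict convexity on the interior then yields
\[
F(x_\theta,\theta y_2)<(1-\lambda)F(x_\epsilon,\epsilon y_2)+\lambda F(x_2,y_2),
\]
while plain convexity of $F$ on $\R_{>0}\times\R_{\geq 0}$ (inherited from strict convexity in the interior together with finiteness of $F(\cdot,0)$) gives $F(x_\epsilon,\epsilon y_2)\leq (1-\epsilon)F(x_1,0)+\epsilon F(x_2,y_2)$. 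Substituting the latter into the former and simplifying, the $\epsilon$-dependence cancels, leaving $F(x_\theta,\theta y_2)<(1-\theta)F(x_1,0)+\theta F(x_2,y_2)$, as required.

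The main (indeed only) obstacle is this mixed boundary/interior case; the resolution is the splitting trick above, which works because every point on the chord other than the boundary endpoint $(x_1,0)$ itself already lies in the interior. One caveat worth flagging in the write-up is the tacit assumption that $F$ is (weakly) jointly convex on the closed half-plane $\R_{>0}\times\R_{\geq 0}$; this is needed for the plain-convexity step above and is satisfied in the application to Lemma~\ref{lem:uni}, where $F$ arises as a limit/extension of a jointly convex function.
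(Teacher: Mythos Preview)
Your proof is correct and follows the same overall strategy as the paper's: split according to whether the minimizers $y_1,y_2$ lie on the boundary $y=0$ or in the interior, use strict convexity of $F(\cdot,0)$ when both are on the boundary, joint strict convexity when both are interior, and an interior-interpolation trick for the mixed case. The caveat you flag about needing plain joint convexity of $F$ up to the boundary $y=0$ is equally implicit in the paper's argument.

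The one difference worth noting is how the mixed case is handled. The paper works with the \emph{actual} minimizer $y_\theta$ at $x_\theta$ and splits the mixed case into two sub-cases according to whether $y_\theta=\theta y_1$ or not; in the first sub-case it uses strict convexity of $F(x_\theta,\cdot)$ on $[0,K]$, and in the second it parametrizes the chord and picks a specific intermediate point (namely $1-\theta/2$) to interpolate through. Your version instead takes $y_\theta$ to be the convex combination $(1-\theta)y_1+\theta y_2$ from the outset and treats the mixed case with a single $\epsilon$-interpolation whose arithmetic cancels cleanly. This avoids the sub-case split and is a bit more streamlined, though the underlying idea---pull the chord into the interior so that strict convexity can be invoked---is the same in both arguments.
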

\begin{proof}
Consider $x_1,x_2>0$ and $x_\theta=\theta x_1 + (1-\theta) x_2$ for some $\theta\in(0,1)$. Let $y_1,y_2$ and $y_\theta$ be defined such that $G(x_1)=F(x_1,y_1)$, $G(x_2)=F(x_2,y_2)$, and, $G(x_\theta)=F(x_\theta,y_\theta)$. We distinguish four cases. For each one we prove that $G(x_\theta)<\theta G(x_1) + (1-\theta) G(x_2)$, as desired.

\vp
\noindent{\underline{$y_1,y_2>0$:}}~ 
$$ G(x_\theta) \leq F(x_\theta,\theta y_1 + (1-\theta) y_2 ) < \theta F(x_1,y_1) + (1-\theta) F(x_2,y_2) =  \theta G(x_1) + (1-\theta) G(x_2). $$
The strict inequality follows from the joint convexity of $F$ in $\R_{>0}$.

\vp
\noindent{\underline{$y_1=0,y_2=0$:}}~ 
$$ G(x_\theta) \leq F(x_\theta, 0 ) < \theta F(x_1,0) + (1-\theta) F(x_2,0) =  \theta G(x_1) + (1-\theta) G(x_2). $$
The strict inequality follows from the joint convexity of $F(\cdot,0)$.

\vp
\noindent{\underline{$y_1>0,y_2=0,y_\theta\neq \theta y_1$:}}~ From the strict convexity of $F(x_\theta,\cdot)$ in $\R_{\geq 0}$ it follows that $G(x_\theta)<F(x_\theta,\theta y_1)$. But, from convexity $F(x_\theta,\theta y_1)\leq \theta G(x_1) + (1-\theta)G(x_2)$. 

\vp
\noindent{\underline{$y_1>0,y_2=0,y_\theta= \theta y_1$:}}~ Consider the restriction of $F$ on the line segment passing through points $(x_1,y_1)$, $(x_\theta,y_\theta)$ and $(x_2,0)$. Call it $H(\rho)$ and let be $H(0) = G(x_1)$ and $H(1) = G(x_2)$. Clearly, $H(1-\theta)=G(x_\theta)$. By strict convexity of $F$, it follows that $H(\rho)$ is strictly convex for $0\leq \rho<1$.  Hence,
\begin{align*}
H(1-\theta) = H\left(  \frac{2(1-\theta)}{2-\theta} \left(1-\frac{\theta}{2}\right) \right) &< \frac{\theta}{2-\theta}H(0) + \frac{2(1-\theta)}{2-\theta} H\left(1-\frac{\theta}{2}\right) \\
&\leq \frac{\theta}{2-\theta}H(0) + \frac{2(1-\theta)}{2-\theta} \left( \frac{\theta}{2} H(0) + \frac{2-\theta}{2}H(1)\right)\\
&=\theta H(0) + (1-\theta)H(1).
\end{align*}
The strict inequality follows from strict convexity of $H$ in $(0,1]$. The last inequality is a consequence of convexity of $H$ in $[0,1]$.
%

\end{proof}

\begin{lem}\label{lem:beta=0}
Consider the following optimization
\begin{align}
\inf_{\alpha\geq0}\sup_{\tauh>0} -\frac{\alpha\tauh}{2} + \la \E\left[\env{f}{X_0}{\alpha\la/\tauh}-f(X_0)\right].
\end{align}
The set of minimizers over $\alpha$ is unbounded.
\end{lem}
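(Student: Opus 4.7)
The plan is to make a change of variables that collapses the inner supremum to one dimension, and then to use the defining inequality of the Moreau envelope to exhibit an unbounded interval of minimizers. Denote the objective by
$$\Psi(\alpha):=\sup_{\tauh>0}\left\{-\frac{\alpha\tauh}{2}+\la\cdot\E\left[\env{f}{X_0}{\alpha\la/\tauh}-f(X_0)\right]\right\}.$$
For $\alpha>0$ the substitution $t=\alpha\la/\tauh$ is a bijection from $(0,\infty)$ onto itself, and I would use it to rewrite the problem as
$$\Psi(\alpha)=\la\sup_{t>0}\left\{-\frac{\alpha^2}{2t}+F(0,t)\right\},\qquad F(0,t):=\E\left[\env{f}{X_0}{t}-f(X_0)\right].$$
From this form monotonicity of $\Psi$ on $[0,\infty)$ is immediate, since the integrand is non-increasing in $\alpha$ for each fixed $t>0$; also $\Psi(0)=0$ using $\lim_{t\to 0^+}\env{f}{X_0}{t}=f(X_0)$.

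The heart of the argument is to show that $\Psi$ actually becomes \emph{constant} on an unbounded ray. Let $v^*$ be a minimizer of $f$; by \eqref{eq:ass_main21} we have $\E[(X_0-v^*)^2]<\infty$. Substituting $v=v^*$ into the definition of the Moreau envelope gives the pointwise bound $\env{f}{X_0}{t}\leq (X_0-v^*)^2/(2t)+f(v^*)$, and taking expectations yields
$$-\frac{\alpha^2}{2t}+F(0,t)\leq\frac{\E[(X_0-v^*)^2]-\alpha^2}{2t}+\big(\min f-\E[f(X_0)]\big).$$
When $\alpha^2\geq\E[(X_0-v^*)^2]$, the first term on the right is nonpositive for \emph{every} $t>0$, so $\Psi(\alpha)/\la\leq \min f-\E[f(X_0)]$.

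For the matching lower bound I would push $t\to\infty$ inside the supremum, using Lemma \ref{lem:Mconvex}(vii) (which says $\env{f}{X_0}{t}\downarrow\min f$ pointwise) together with monotone convergence to pass the limit through the expectation. This gives
$$\Psi(\alpha)/\la\geq\lim_{t\to\infty}\left\{-\alpha^2/(2t)+F(0,t)\right\}=\min f-\E[f(X_0)]$$
for every $\alpha\geq 0$. Combining the two bounds,
$$\Psi(\alpha)=\la\big(\min f-\E[f(X_0)]\big)\quad\text{for all}\quad\alpha\geq\sqrt{\E[(X_0-v^*)^2]},$$
so that every point of this unbounded interval attains the infimum of $\Psi$ and the set of minimizers is unbounded.

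The only technical subtlety I anticipate is the edge case in which $f$ does not attain its infimum on $\R$; in that scenario I would run the same chain of inequalities with a sequence $v_k$ satisfying $f(v_k)\downarrow\inf f$ and conclude that for every finite $K>0$ one has $\Psi(K)=\min_{0\leq\alpha\leq K}\Psi(\alpha)>\inf_{\alpha\geq 0}\Psi(\alpha)$, so that no bounded minimizer exists. Either way, the set of minimizers fails to be contained in any bounded subset of $[0,\infty)$, which is precisely the contradiction needed in Lemma \ref{lem:uni}.
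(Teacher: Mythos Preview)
Your proof is correct and takes a somewhat different, arguably cleaner route than the paper. Both arguments obtain the lower bound $\Psi(\alpha)\geq\la(\min f-\E[f(X_0)])$ the same way, by sending the Moreau-envelope parameter to infinity (your $t\to\infty$ is the paper's $\tauh\to 0^+$). The difference is in the matching upper bound: the paper studies the first-order optimality condition in $\tauh$, namely $\E[(X_0-\prox{f}{X_0}{\alpha\la/\tauh_*})^2]=\alpha^2$, and invokes Lemma~\ref{lem:Mconvex}(vii) to show that for large $\alpha$ this equation has no interior solution, forcing the inner maximizer to collapse to $\tauh\to 0^+$. You instead make the change of variables $t=\alpha\la/\tauh$, which exposes $\Psi$ as $\la\sup_{t>0}\{-\alpha^2/(2t)+F(0,t)\}$ and makes the monotonicity in $\alpha$ transparent, and then bound the envelope directly by plugging in $v=v^*\in\arg\min f$. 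This is more elementary (no differentiability or stationarity needed) and yields the explicit threshold $\alpha\geq\sqrt{\E[(X_0-v^*)^2]}$, so you actually show $\Psi$ is \emph{constant} on an unbounded ray rather than merely that $\lim_{\alpha\to\infty}\Psi(\alpha)$ hits the lower bound; this makes the conclusion ``the set of minimizers is unbounded'' immediate, whereas the paper's version implicitly relies on convexity of $\Psi$ to pass from the limit statement to the claim. The only small price you pay is the need for $f$ to attain its infimum, which you handle separately; the paper's optimality-condition route does not require this.
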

\begin{proof}
For convenience, denote the objective function as $O(\alpha,\tauh)$ and its optimal value as $O_*$. Let us first perform the optimization over $\tauh$ for fixed $\alpha$. 
We have 
$$
\lim_{\tauh\rightarrow0^+}O(\alpha,\tauh) = \la \Exp\left[\min_{x}f(x) - f(X_0)\right],
$$
where we have used \ref{lem:Mconvex}(vi). Also,
$$
\lim_{\tauh\rightarrow+\infty}O(\alpha,\tauh) = -\infty,
$$
with an appeal to \ref{lem:Mconvex}(ix).
What we learn from these is that 
\begin{align}\label{eq:opt_O}
O_*\geq \la \Exp\left[\min_{x}f(x) - f(X_0)\right]
\end{align}
and that the optimal $\tauh$ either approaches $0$ or is attained. In the latter case, the optimal $\tauh_*$ satisfies the first-order optimality condition:
$$
\frac{1}{\alpha^2}\Exp\left[  (X_0 - \prox{f}{X_0}{{\alpha\la}/{\tauh_*}})^2 \right] = 1.
$$
But for any $\tauh_*>0$, by \ref{lem:Mconvex}(vii), the left hand-side above tends to 0 as $\alpha\rightarrow\infty$. Thus, in the limit $\alpha\rightarrow\infty$, the optimal $\tauh$ approaches 0, giving
$$
\lim_{\alpha\rightarrow+\infty}\sup_{\tauh>0}O(\alpha,\tauh) = \la \Exp\left[\min_{x}f(x) - f(X_0)\right].
$$
When combined with \eqref{eq:opt_O}, this completes the proof of the lemma.
\end{proof}

\section{Useful Properties of Moreau Envelopes}\label{sec:M_prop}


In this section we have gathered some very useful properties of Moreau envelopes of convex functions. We have made heavy use of those results for the proofs in Appendix \ref{app:sep}. Some of the results are standard, while others are more tailored towards our interests. 

\begin{lem}[Properties of the Moreau envelope]\label{lem:Mconvex}
Let $\ell:\R\rightarrow\R$ be a proper, closed, convex function. For $\tau>0$, consider its Moreau envelope function and its proximal operator:
\begin{subequations}
\begin{align}
\env{\ell}{\chii}{\tau}&:= \min_{v} ~\frac{1}{2\tau}(\chii -v)^2+\ell(v)\label{eq:env},\\
\prox{\ell}{\chii}{\tau}&:=\arg\min~\frac{1}{2\tau}(\chii -v)^2+\ell(v)\label{eq:prox}
\end{align}
\end{subequations}
The following statements are true:
\begin{enumerate}[(i)]
\item $\prox{\ell}{\chii}{\tau}$ is single valued and continuous. Furthermore,
\begin{align}\label{eq:opt}
\ell'_{\chii,\tau}:=\frac{1}{\tau}(\chii-\prox{\ell}{\chii}{\tau})\in \partial\ell(\prox{\ell}{\chii}{\tau}).
\end{align}
\item $\env{\ell}{\chii}{\tau}$ is jointly convex in $(\chii,\tau)$.
\item $\env{\ell}{\chii}{\tau}$ is continuously differentiable with respect to both $x$ and $\tau$. The gradients are given by:
\begin{align}
E_1(\chii,\tau) :=\frac{\partial e_{\ell}}{\partial \chii}&=\frac{1}{\tau}(\chii-\prox{\ell}{\chii}{\tau}) = \ell'_{\chii,\tau},  \label{eq:e_1}\\
E_2(\chii,\tau) := \frac{\partial e_{\ell}}{\partial \tau}&=-\frac{1}{2\tau^2}(\chi-\prox{\ell}{\chii}{\tau})^2=-\frac{1}{2}\left(\ell'_{\chii,\tau}\right)^2\label{eq:e_2}.
\end{align}
\item Fix $\chii$ and $\tau>0$. Consider the function $\Delta:\R\times(-\tau,\infty)\rightarrow\R$:
$$
\Delta(x,y) := ( E_1(\chii+x,\tau+y) - E_1(\chii,\tau) )x + ( E_2(\chii+x,\tau+y) - E_2(\chii,\tau) )y
$$
Then,
\begin{align}\label{eq:lowerb_M}
\Delta(x,y) \geq \left(\tau+\frac{y}{2}\right)(\ell'_{\chii+x,\tau+y}-\ell'_{\chii,\tau})^2.
\end{align}

\item $\env{\ell}{x}{\tau}$ is non-increasing in $\tau$.

\item $\lim_{\tau\rightarrow\infty}\env{\ell}{x}{\tau} = \min_{v}\ell(v)$.

\item $\lim_{\tau\rightarrow\infty}\frac{1}{\tau}|x-\prox{\ell}{x}{\tau}|=0$.

\item If $0\in\arg\min_v\ell(v)$, then $\prox{\ell}{x}{\tau}x\geq 0$, $|\prox{\ell}{x}{\tau}|\leq |x|$ and $|\ell^\prime_{\prox{\ell}{x}{\tau},\tau}| \leq |\ell^\prime_{x,\tau}|$.

%
\item $\env{\ell}{x_n}{\tau_n}\rightarrow \ell(x)$
 whenever $x_n\rightarrow x$ while $\tau_n\rightarrow 0^+$ in such a way  that the sequence $\{{|x_n-x|}/{\tau_n}\}_{n\in\mathbb{N}}$ is bounded.


\end{enumerate}
\end{lem}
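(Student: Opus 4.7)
The plan is to treat the nine items in roughly the order they appear, front-loading the standard results so that the genuinely new work concentrates in (iv). First, for (i), strong convexity of $v\mapsto \tfrac{1}{2\tau}(x-v)^2 + \ell(v)$ (with modulus $1/\tau$) makes the minimizer unique, giving single-valuedness; continuity of the prox can be read off from its (firm) non-expansiveness, and the relation \eqref{eq:opt} is just the first-order condition $0\in -(x-v)/\tau + \partial\ell(v)$. For (ii), I will note that $(x,v,\tau)\mapsto (x-v)^2/(2\tau)$ is the perspective of the jointly convex map $(x,v)\mapsto (x-v)^2/2$, hence jointly convex on $\R\times\R\times \R_{>0}$; adding $\ell(v)$ and partial-minimizing over $v$ preserves joint convexity in $(x,\tau)$. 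For (iii), uniqueness of the prox brings us within the scope of the envelope theorem, which yields the two derivative formulas; substituting $\ell'_{x,\tau}=(x-\prox{\ell}{x}{\tau})/\tau$ gives the stated expressions \eqref{eq:e_1}--\eqref{eq:e_2}. Items (v) and (vi) are immediate corollaries: (v) from $E_2\le 0$, and (vi) by sandwiching $\env{\ell}{x}{\tau}$ between $\min_v\ell(v)$ and $\tfrac{(x-v^*)^2}{2\tau}+\ell(v^*)\to \ell(v^*)$ for a near-minimizer $v^*$.

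For (vii) I will leverage (vi): the identity $\env{\ell}{x}{\tau} = \tfrac{(x-\prox{\ell}{x}{\tau})^2}{2\tau} + \ell(\prox{\ell}{x}{\tau})$ gives $(x-\prox{\ell}{x}{\tau})^2/(2\tau) \le \env{\ell}{x}{\tau}-\min_v \ell(v)\to 0$ by (vi), so $|x-\prox{\ell}{x}{\tau}|=o(\sqrt\tau)$, and hence $|x-\prox{\ell}{x}{\tau}|/\tau=o(1/\sqrt\tau)\to 0$.

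The technical heart is item (iv), and this is where I expect to spend the most effort. Let $v_1:=\prox{\ell}{x}{\tau}$, $v_2:=\prox{\ell}{x+x'}{\tau+y}$, and abbreviate $s_1:=\ell'_{x,\tau}$, $s_2:=\ell'_{x+x',\tau+y}$. From \eqref{eq:opt} we have $v_1 = x-\tau s_1$ and $v_2=(x+x')-(\tau+y)s_2$, which gives $v_2-v_1 = x' - \tau(s_2-s_1) - y s_2$. Using the gradient formulas, I will expand
\[
\Delta(x',y) = (s_2-s_1)x' + \tfrac{1}{2}(s_1^2-s_2^2)y = (s_2-s_1)x' - \tfrac{y}{2}(s_2-s_1)(s_1+s_2),
\]
substitute $x' = (v_2-v_1) + \tau(s_2-s_1) + ys_2$ into the right-hand side, and collect. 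The $y$-dependent cross-terms combine cleanly via $y s_2(s_2-s_1) - \tfrac{y}{2}(s_2-s_1)(s_1+s_2) = \tfrac{y}{2}(s_2-s_1)^2$, producing the identity
\[
\Delta(x',y) = (s_2-s_1)(v_2-v_1) + \Bigl(\tau+\tfrac{y}{2}\Bigr)(s_2-s_1)^2.
\]
Since $s_i\in\partial\ell(v_i)$, monotonicity of $\partial\ell$ gives $(s_2-s_1)(v_2-v_1)\ge 0$, and \eqref{eq:lowerb_M} follows. This algebraic bookkeeping is the only step that does not reduce to a textbook fact.

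Finally, for (viii), the assumption $0\in\arg\min\ell$ gives $\prox{\ell}{0}{\tau}=0$, and firm non-expansiveness of the prox, $(\prox{\ell}{a}{\tau}-\prox{\ell}{b}{\tau})(a-b)\ge (\prox{\ell}{a}{\tau}-\prox{\ell}{b}{\tau})^2$, applied with $(a,b)=(x,0)$ yields $u x \ge u^2\ge 0$ (with $u:=\prox{\ell}{x}{\tau}$), which settles both $ux\ge 0$ and $|u|\le|x|$; applied with $(a,b)=(x,u)$, ordinary non-expansiveness gives $|\prox{\ell}{u}{\tau}-u|\le |x-u|$, which divided by $\tau$ is the third claim. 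For (ix), the upper bound comes from $\env{\ell}{x_n}{\tau_n}\le \tfrac{(x_n-x)^2}{2\tau_n} + \ell(x) = \tfrac{1}{2}|x_n-x|\cdot\tfrac{|x_n-x|}{\tau_n} + \ell(x)\to \ell(x)$, using boundedness of $|x_n-x|/\tau_n$ and $|x_n-x|\to 0$. For the matching lower bound, set $v_n:=\prox{\ell}{x_n}{\tau_n}$; the trivial bound $\env{\ell}{x_n}{\tau_n}\le \ell(x_n)$ forces $(x_n-v_n)^2/\tau_n$ to be bounded and hence $v_n\to x$, at which point lower semicontinuity closes the argument via $\liminf \env{\ell}{x_n}{\tau_n}\ge \liminf \ell(v_n)\ge \ell(x)$.
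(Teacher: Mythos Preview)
Your proposal is correct on all nine items and, for the core part (iv), reproduces exactly the paper's argument: the same algebraic identity $\Delta=(s_2-s_1)(v_2-v_1)+(\tau+\tfrac{y}{2})(s_2-s_1)^2$ followed by monotonicity of $\partial\ell$. Items (i)--(iii) and (v)--(vii) also match the paper's proofs closely (the paper handles the $\tau$-derivative in (iii) by an explicit sandwiching rather than invoking an envelope theorem, but the content is the same).

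Two local remarks. For (viii), your use of firm non-expansiveness is slicker than the paper's route via sign analysis and subdifferential monotonicity; note, however, that the paper's argument for the third inequality actually establishes the slightly different statement $\ell'_+(\prox{\ell}{x}{\tau})\le \ell'_+(x)$ about \emph{maximal} subgradients, which is the form invoked downstream in the proof of Lemma~\ref{lem:ass_sep}. For (ix), the paper simply defers to \cite[Thm.~1.25]{RocVar}; your self-contained lower-bound step has a small gap: from $\env{\ell}{x_n}{\tau_n}\le \ell(x_n)$ alone you cannot conclude that $(x_n-v_n)^2/\tau_n$ is bounded when $\ell$ is unbounded below. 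The easy fix is to use instead the upper bound $\env{\ell}{x_n}{\tau_n}\le C$ you already established together with an affine minorant $\ell(v)\ge av+b$, which yields $(v_n-x_n+a\tau_n)^2\le 2\tau_n(C-ax_n-b)+a^2\tau_n^2\to 0$ and hence $v_n\to x$.
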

\begin{proof}
~
\noindent\textit{(i)} From \cite[Thm.~2.26(a)]{RocVar}, $\prox{\ell}{\chii}{\tau}$ is known to be continuous single valued mapping. Besides, from standard optimality conditions:
\beq
\frac{1}{\tau}(\chii-\prox{\ell}{\chii}{\tau})\in\partial\ell(\prox{\ell}{\chii}{\tau})\nn
\eeq
For convenience, we have define $\ell'_{\chi,\tau}:=\frac{1}{\tau}(\chii-\prox{\ell}{\chii}{\tau}\in\partial\ell(\prox{\ell}{\chii}{\tau})$. Note that if $\ell$ is differentiable at $\prox{\ell}{\chii}{\tau}$, then $\ell'_{\chii,\tau}$ is the derivative of $\ell$ at that point.

\vp
\noindent\textit{(ii)} Trivially, $h(\chii,v):=(\chii-v)^2$ is a jointly convex function of $v$ and $x$. Thus, its perspective function $\tau h(\frac{\chii}{\tau},\frac{v}{\tau})=\frac{1}{\tau}(\chii-v)^2$ is also jointly convex over $\tau$, $x$ and $v$ and so after minimization over $v$, the function remains jointly convex over $x$ and $\tau$ (cf. \cite[Prop.~2.22]{RocVar}).

\vp
\noindent\textit{(iii)} See \cite[Thm.~2.26(b)]{RocVar} for differentiability with respect to $x$. Next, we mimic the argument to conclude about differentiability with respect to $\tau$. \
It suffices to show that $h(y):=\env{\ell}{\chii}{\tau+y}-\env{\ell}{\chii}{\tau}+\frac{y}{2\tau^2}(\chii-\prox{\ell}{\chii}{\tau})^2$ is differentiable at $y=0$ with $\frac{\partial h}{\partial y}=0$. We know $\env{\ell}{\chii}{\tau}=\frac{1}{2\tau}(\chii -\prox{\ell}{\chii}{\tau})^2+\ell(\prox{\ell}{\chii}{\tau})$, whereas $\env{\ell}{\chii}{\tau+y}\leq\frac{1}{2(\tau+y)}(\chii -\prox{\ell}{\chii}{\tau})^2+\ell(\prox{\ell}{\chii}{\tau})$. Thus,
\begin{align}
h(y)&\leq \frac{1}{2(\tau+y)}(\chii-\prox{\ell}{\chii}{\tau})^2- \frac{1}{2\tau}(\chii -\prox{\ell}{\chii}{\tau})^2 +\frac{y}{2\tau^2}(\chii-\prox{\ell}{\chii}{\tau})^2\nonumber\\
&=\frac{y^2}{2\tau^2(\tau+y)}(\chii-\prox{\ell}{\chii}{\tau})^2\label{eq:eq4}.
\end{align}
Besides because of convexity of $h(y)$, $0=h(0)\leq \frac{1}{2}h(y)+\frac{1}{2}h(-y)$ or equivalently $h(y)\geq -h(-y)$. Thus, \eqref{eq:eq4} gives:
\beq
h(y)\geq \frac{y^2}{2\tau^2(\tau-y)}(\chii-\prox{\ell}{\chii}{\tau})^2\label{eq:eq5}.
\eeq 
Combining \eqref{eq:eq4} and \eqref{eq:eq5} leads to the following
\beq
 \frac{y^2}{2\tau^2(\tau-y)}(\chii-\prox{\ell}{\chii}{\tau})^2\leq h(y)\leq \frac{y^2}{2\tau^2(\tau+y)}(\chii-\prox{\ell}{\chii}{\tau})^2
\eeq
Here, $h(y)$ is sandwiched between two continuously differentiable functions at 0 with zero derivatives. This completes the proof.

\vp
\noindent\textit{(iv)}
From \eqref{eq:e_1} and \eqref{eq:e_2}, we have
\begin{align}
\Delta(x,y) &= (\ell'_{\chii+x,\tau+y}-\ell'_{\chii,\tau})x - \left(\ell'^2(\chii+x,\tau+y)-\ell'^2(\chii,\tau)\right)\frac{y}{2} \nn\\
&=(\ell'_{\chii+x,\tau+y}-\ell'_{\chii,\tau})\left(x - \frac{y}{2}\left(\ell'_{\chii+x,\tau+y}+\ell'_{\chii,\tau}\right)\right).\nn
\end{align}
On the other hand, due to optimality conditions in \eqref{eq:opt},
\begin{align}
\prox{\ell}{\chii+x}{\tau+y}-\prox{\ell}{\chii}{\tau}&=x-(\tau+y)\ell'_{\chii+x,\tau+y}+\tau\ell'_{\chii,\tau}\nonumber\\
&=\left(x-\frac{y}{2}(\ell'_{\chii+x,\tau+y}+\ell'_{\chii,\tau})\right)-(\tau+\frac{y}{2})(\ell'_{\chii+x,\tau+y}-\ell'_{\chii,\tau})\nn.
\end{align}
Finally, from convexity of $\ell$, it follows from the monotonicity property of the subdifferential that
$$
(\ell'_{\chii+x,\tau+y}-\ell'_{\chii,\tau})(\prox{\ell}{\chii+x}{\tau+y}-\prox{\ell}{\chii}{\tau})\geq 0.
$$
Combining the three displays above gives the desired inequality.

\vp
\noindent\textit{(v)}~
This follows directly by non-positivity of the derivative as in \eqref{eq:e_2}.

\vp
\noindent\textit{(vi)}~
Using the decreasing nature of $\env{\ell}{x}{\tau}$ w.r.t. $\tau$, we have
\begin{equation}
\lim_{\tau\rightarrow\infty}\env{\ell}{x}{\tau}=\text{inf}_{\tau>0}\min_{v}\frac{1}{2\tau}(x-v)^2+\ell(v)= \min_{v}\text{inf}_{\tau>0}\frac{1}{2\tau}(x-v)^2+\ell(v)= \min_{v}\ell(v).\nn
\end{equation}

\vp
\noindent\textit{(vii)}~
Fix an $\epsilon>0$. Since $\lim_{\tau\rightarrow\infty}\env{\ell}{x}{\tau}=\min_v\ell(v)$, there exist $T^\prime_\eps$ such that for all $\tau\geq T_\eps:=\max\{2,T^\prime_\eps\}$, 
\begin{equation}\nn
|\env{\ell}{x}{\tau}-\min_v\ell(v)|=\frac{1}{2\tau}(x-\prox{\ell}{x}{\tau})^2+(\ell(\prox{\ell}{x}{\tau})-\min_v\ell(v))<\epsilon^2
\end{equation}
Then, $\frac{1}{2\tau}(x-\prox{\ell}{x}{\tau})^2<\epsilon^2$, which gives
\begin{equation}
\frac{1}{\tau}|x-\prox{\ell}{x}{\tau}|<\eps\sqrt{\frac{2}{\tau}}\leq\eps\sqrt{\frac{2}{T_\eps}}\leq \eps\nn
\end{equation}
Therefore, $\lim_{\tau\rightarrow\infty}\frac{1}{\tau}|x-\prox{\ell}{x}{\tau}|=0$.

\vp
\noindent\textit{(viii)}~
By \eqref{eq:opt} and the assumption $0\in\arg\min_v{\ell(v)}$, we find $\prox{\ell}{0}{\tau}=0$.
Monotonicity of the prox operator \cite[Prop.~12.19]{RocVar}, gives
$
(\prox{\ell}{x}{\tau} - \prox{\ell}{0}{\tau}) x\geq 0,
$
which then shows $\prox{\ell}{x}{\tau}x\geq 0$. Also, monotonicity of the subdifferential of $\ell$ gives $\ell^\prime_{x,\tau}x\geq 0$. Those two, when combined with optimality conditions in \eqref{eq:opt} give
$$
x-\prox{\ell}{x}{\tau} = \tau\ell^\prime_{x,\tau}\implies x^2\geq \prox{\ell}{x}{\tau}x \implies |x| \geq | \prox{\ell}{x}{\tau}x|.
$$
It remains to show that $\max_{s\in\partial\ell(\prox{\ell}{x}{\tau})}|s| \leq \max_{s\in\partial\ell(x)}|s|$.  Since $0\in\arg\min_v{\ell(v)}$, it follows by convexity that $$(0\leq x_1\leq x_2~\text{ or }~ x_2\leq x_1\leq 0)\implies \max_{s\in\partial\ell(x_1)}|s|\leq \max_{s\in\partial\ell(x_2)}|s|$$
Observe that the LHS of the implication above is equivalent to $(|x_2|\geq|x_1|~\text{ and }~x_1x_2\geq 0)$. Then apply it for $x_1=\prox{\ell}{x}{\tau}$ and $x_2=x$, to conclude.

\vp
\noindent\textit{(ix)}~ Please see \cite[Thm.~1.25]{RocVar}.

\end{proof}



\section{Proofs for Section \ref{sec:sim}}\label{sec:proofs_4}

%
%
%
%

\subsection{On Remark \ref{rem:LAD_zero}}\label{app:LAD_zero}
Substituting the envelope function of $|\cdot|$ in \eqref{eq:cone_eq} gives:
\begin{subequations}\label{eq:cone_eq_l2}
\begin{align}
\frac{\beta}{2}+\bar{s}\E\begin{cases}
-\frac{\beta(\GZ)^2}{2\tau^2}&, |\GZ|\leq\frac{\tau}{\beta}\\
-\frac{1}{2\beta}&,\text{otherwise}
\end{cases}+(\delta-\bar{s})\E\begin{cases}
-\frac{\beta\al^2 G^2}{2\tau^2}&, |\al G|\leq\frac{\tau}{\beta}\\
-\frac{1}{2\beta}&,\text{otherwise}
\end{cases}\geq0, \label{eq:cone_4}\\
\bar{s}\E\begin{cases}
\frac{\beta G (\GZ)}{2} &,|\GZ|\leq\frac{\tau}{\beta}\\
G~\sign(\GZ) &, \text{otherwise}
\end{cases} +(\delta-\bar{s})\E\begin{cases}
\frac{\al G^2 \beta}{\tau} &, |\al G\|\leq\frac{\tau}{\beta}\\
G~\sign(G) &, \text{otherwise}
\end{cases}-\beta\sqrt{\DKPo}\geq0, \label{eq:cone_5}\\
\frac{\tau}{2}+\bar{s}\E\begin{cases}
\frac{(\GZ)^2}{2\tau}&, |\GZ|\leq\frac{\tau}{\beta}\\
-\frac{\tau}{2}&,\text{otherwise}
\end{cases}+(\delta-\bar{s})\E\begin{cases}
\frac{\al^2 G^2}{2\tau}&, |\al G|\leq\frac{\tau}{\beta}\\
-\frac{\tau}{2}&,\text{otherwise}
\end{cases}-\al\sqrt{\DKPo} \leq0.\label{eq:cone_6}
 \end{align}
 \end{subequations}
Define $\kappa:=\frac{\tau}{\beta\al}$ and $\rho:=\frac{\tau}{\beta}$. In order to find a sufficient condition for $\al$ to be zero, we assume $\al\rightarrow 0$, $\tau\rightarrow 0$, $\rho\rightarrow 0$ and $\kappa\geq0$ and look for conditions under which the equations in \eqref{eq:cone_eq_l2} are consistent. Under these assumptions, one can check that  \eqref{eq:cone_6} is satisfied (the argument converges to zero), while, \eqref{eq:cone_5} and \eqref{eq:cone_4} become

\begin{subequations}\label{eq:cone_eq_int}
\begin{align}
2\frac{(\delta-\bar{s})}{\kappa}\int_0^\kappa G^2\phi(G)\mathrm{d}G+(\delta-\bar{s})\int_\kappa^\infty G\phi(G)\mathrm{d}G\geq\beta\sqrt{\DKPo},
\label{eq:cone_7}\\
\beta^2\geq\bar{s}+2\frac{\delta-\bar{s}}{\kappa^2}\int_0^\kappa G^2\phi(G)\mathrm{d}G+2(\delta-\bar{s})\int_\kappa^\infty \phi(G)\mathrm{d}G \label{eq:cone_8},
 \end{align}
 \end{subequations}
where $\phi(G)=e^{-G^2/2}/\sqrt{2\pi}$ and we multiplied \eqref{eq:cone_4} by $\beta^2$ to get \eqref{eq:cone_8}. Observe that \eqref{eq:cone_7} upper bounds $\beta$ while \eqref{eq:cone_8} derives a lower bound on it. Thus, consistency of the set of equations \eqref{eq:cone_eq_int} is achieved if the following holds:
\begin{equation}
\frac{1}{\DKPo}(2\frac{(\delta-\bar{s})}{\kappa}\int_0^\kappa G^2\phi(G)\mathrm{d}G+(\delta-\bar{s})\int_\kappa^\infty G\phi(G)\mathrm{d}G)^2
\geq
\bar{s}+2\frac{\delta-\bar{s}}{\kappa2}\int_0^\kappa G^2\phi(G)\mathrm{d}G+2(\delta-\bar{s})\int_\kappa^\infty \phi(G)\mathrm{d}G\nonumber.
\end{equation}
Or, equivalently,
\begin{equation}
\DKPo\leq\frac{(2\frac{(\delta-\bar{s})}{\kappa}\int_0^\kappa G^2\phi(G)\mathrm{d}G+(\delta-\bar{s})\int_\kappa^\infty G\phi(G)\mathrm{d}G)^2}{\bar{s}+2\frac{\delta-\bar{s}}{\kappa^2}\int_0^\kappa G^2\phi(G)\mathrm{d}G+2(\delta-\bar{s})\int_\kappa^\infty \phi(G)\mathrm{d}G}\label{eq:cone_9}.
\end{equation}
Thus if maximum of the right side of \eqref{eq:cone_9} with respect to $\kappa$ is greater than $\DKPo$, all our variables satisfy \eqref{eq:cone_eq} and the optimal value in \eqref{eq:DO_cone} occurs when $\al\rightarrow0$, $\tau\rightarrow0$ and $\frac{\tau}{\al\beta}\rightarrow\kappa$ which means $\al_*=0$. We will show that
\begin{equation}
\max_{\kappa>0} \frac{(2\frac{(\delta-\bar{s})}{\kappa}\int_0^\kappa G^2\phi(G)\mathrm{d}G+(\delta-\bar{s})\int_\kappa^\infty G\phi(G)\mathrm{d}G)^2}{\bar{s}+2\frac{\delta-\bar{s}}{\kappa^2}\int_0^\kappa G^2\phi(G)\mathrm{d}G+2(\delta-\bar{s})\int_\kappa^\infty \phi(G)\mathrm{d}G} \geq
\delta-\min_{\kappa>0} \bar{s}(1+\kappa^2)+2(\delta-\bar{s})\int_\kappa^\infty (G-\kappa)^2\phi(G)\mathrm{d}G\label{eq:cone_10}
\end{equation}

If both this and \eqref{eq:cond_per} are true then,  there will be a $\kappa$ for which \eqref{eq:cone_9} holds and as we discussed, this implies $\al_*=0$.

For convenience, we define $A_\kappa=\int_\kappa^\infty G^2\phi(G)\mathrm{d}G$, $B_\kappa=\int_\kappa^\infty G\phi(G)\mathrm{d}G$ and $C_\kappa=\int_\kappa^\infty \phi(G)\mathrm{d}G$. The optimal $\kappa$ for the right side of \eqref{eq:cone_10} satisfies the following due to the first optimality condition
\begin{equation}
2(\delta-\bar{s})\hat{\kappa} B_{\hat{\kappa}}-2(\delta-\bar{s})\hat{\kappa}^2 C_{\hat{\kappa}}=\hat{\kappa}^2\bar{s}\label{eq:cone_11}
\end{equation}
For this value of $\kappa$, the left side of \eqref{eq:cone_10} becomes
\begin{align}
&\frac{(2\frac{(\delta-\bar{s})}{\hat{\kappa}}\int_0^{\hat{\kappa}} G^2\phi(G)\mathrm{d}G+(\delta-\bar{s})\int_{\hat{\kappa}}^\infty G\phi(G)\mathrm{d}G)^2}{\bar{s}+2\frac{\delta-\bar{s}}{\hat{\kappa}^2}\int_0^{\hat{\kappa}} G^2\phi(G)\mathrm{d}G+2(\delta-\bar{s})\int_{\hat{\kappa}}^\infty \phi(G)\mathrm{d}G}=(\delta-\bar{s})(1-2A_{\hat{\kappa}}+2\hat{\kappa}B_{\hat\kappa})\nonumber\\
&\quad=\delta-\bar{s}-2(\delta-\bar{s})\hat\kappa B_{\hat\kappa}+2(\delta-\bar{s})\hat\kappa^2 C_{\hat\kappa}-2(\delta-\bar{s})A_{\hat\kappa}+4(\delta-\bar{s})\hat\kappa B_{\hat\kappa}-2(\delta-\bar{s})\hat\kappa^2 C_{\hat\kappa}\nonumber\\
&\quad=\delta-\bar{s}(1+\hat\kappa^2)-2(\delta-\bar{s})(A_{\hat\kappa}-2B_{\hat\kappa}+\hat\kappa^2 C_{\hat\kappa})=\delta- \bar{s}(1+\hat\kappa^2)+2(\delta-\bar{s})\int_{\hat\kappa}^\infty (G-\hat\kappa)^2\phi(G)\mathrm{d}G,\nn
\end{align}
where the first and third equalities follow after substituting $\bar{s}$ using \eqref{eq:cone_11}. This proves \eqref{eq:cone_10} as desired to conclude the claim of the remark.


\vp
\subsection{On Section \ref{sec:sq_LASSO}}\label{app:square}

\subsubsection{Satisfying Assumptions \ref{ass:tech}(a) and \ref{ass:prop}(b)-(d)}

It only takes a few calculations to show that \
\begin{align}\label{eq:square-root}
\frac{1}{m}\env{\sqrt{n}\|\cdot\|_2}{\al \g+\z}{\tau} = \begin{cases}
\frac{1}{\sqrt{m\delta}}\|\al \g+\z\|_2 - \frac{\tau}{2\delta} &,\text{if}~ \frac{\sqrt\delta\|\al \g+\z\|_2}{\sqrt{m}}\geq\tau,\\ 
\frac{1}{2\tau}\frac{\|\al \g+\z\|_2^2}{m} &,\text{otherwise}.
\end{cases}
\end{align}
Assume that $0< \E\frac{\|\z\|_2^2}{m} =:\sigma^2 <\infty$. From \eqref{eq:square-root}, it can be seen that $\frac{1}{m}\env{\sqrt{n}\|\cdot\|_2}{\al \g+\z}{\tau}$ is a Lipschitz convex function of $\frac{\|\al \g+\z\|}{\sqrt m}$. Also, $\frac{\|\al \g+\z\|}{\sqrt m}$ converges in probability to $\sqrt{\al^2+\sig^2}$, thus
\begin{align}
\frac{1}{m}(\env{\sqrt{n}\|\cdot\|_2}{\al \g+\z}{\tau}-\|\z\|_2\sqrt n) \rightarrow L(\al ,\tau)=\begin{cases}
\frac{\sqrt{\al^2+\sigma^2}-\sigma}{\sqrt\delta}- \frac{\tau}{2\delta} &,\text{if}~ \delta(\al^2+\sigma^2)\geq\tau^2,\\ 
\frac{1}{2\tau}(\al^2+\sigma^2)-\frac{\sigma}{\sqrt\delta} &,\text{otherwise}.
\end{cases}\nonumber
\end{align}
Finally, it remains to show this function satisfies assumption \ref{ass:prop}.

\vp
\noindent{\underline{Assumption \ref{ass:prop}(b):}}
$\lim_{\tau\rightarrow0}L(\al,\tau)=\frac{\sqrt{\al^2+\sig^2}-\sig}{\sqrt\delta}$ and $\lim_{\tau\rightarrow0}L(0,\tau)=0$. Besides
$$
L_{2,+}(\al,\tau)=\begin{cases}
-\frac{1}{2\delta}& ,\delta(\al^2+\sig^2)\geq\tau^2,\\
-\frac{\al^2+\sig^2}{2\tau^2}&,\text{otherwise}.
\end{cases}
$$
So, $L_{2,+}(0,0)=-\frac{1}{2\delta}$; thus, condition (b) is satisfied.

\vp
\noindent{\underline{Assumption \ref{ass:prop}(c):}}
 $\frac{1}{m}\mathcal{L}(z)\xrightarrow{P} \frac{\sig}{\sqrt\delta}=-\lim_{\tau\rightarrow\infty}L(\al,\tau)$. It is also easy to check that $L(\al,\tau)\geq-\frac{\sig}{\sqrt\delta}$ for all $\al$ and $\tau>0$ because
$$
L(\al,\tau)=\begin{cases}
\frac{\sqrt{\al^2+\sig^2}-\sig}{\sqrt\delta}-\frac{\tau}{2\delta}&,\delta(\al^2+\sig^2)\geq\tau^2\\
\frac{\al^2+\sig^2}{2\tau}-\frac{\sig}{\sqrt\delta}&,\text{otherwise}
\end{cases}\geq\begin{cases}
\frac{\tau}{2\delta}-\frac{\sig}{\sqrt\delta}&,\delta(\al^2+\sig^2)\geq\tau^2\\
\frac{\al^2+\sig^2}{2\tau}-\frac{\sig}{\sqrt\delta}&\text{otherwise}
\end{cases}\geq-\frac{\sig}{\sqrt\delta}.
$$
Therefore condition (c) is satisfied. Besides, since $L_0=\frac{\sig}{\sqrt\delta}<\infty$, there is nothing to check regarding condition \ref{ass:prop}(d).

\subsubsection{Proving \eqref{eq:square-root_4}$\Leftrightarrow$\eqref{eq:square-root_7}}

It suffices to show that $H(\beta):=-\frac{\al\beta^2}{2\tauh} +F(\frac{\al\beta}{\tauh},\frac{\al\la}{\tauh})$ is a non-increasing function of $\beta>0$. 
We prove this for the separable function $f$ satisfying the assumptions of Theorem \ref{thm:sep} where $F(c,\tau)=\E[\env{f}{cH+X_0}{\tau}-f(X_0)]$. Using Lemma \ref{lem:strict}(i) and \ref{lem:Mconvex}(iii), we find 
\begin{align}\nn
\lim_{c\rightarrow0^+}\frac{\partial}{\partial c}\left(\E[\env{f}{cH+X_0}{\tau}-f(X_0)]\right)=\frac{1}{\tau }\E[H(X_0-\prox{f}{X_0}{\tau})].
\end{align}
Because of independence of $\h$ and $\x_0$, the RHS above is zero. Thus $\lim_{c\rightarrow0^+}\frac{\partial}{\partial c}~F(c,\tau)=0$ which when combined with concavity of $H$, it shows that it is non-increasing for $\beta>0$.

\end{document}